\preprint{CALT-TH 2022-021}
\title{Branes and DAHA Representations}
\author[1]{Sergei Gukov}
\author[2,3]{Peter Koroteev}
\author[4]{Satoshi Nawata}
\author[5]{Du Pei}
\author[6]{Ingmar Saberi}
\date{\today}
\affiliation[1]{Walter Burke Institute for Theoretical Physics, California Institute of Technology,\\ Pasadena, CA 91125, USA}
\affiliation[2]{Department of Mathematics, University of California, 970 Evans Hall,
Berkeley, CA 94720, U.S.A.}
\affiliation[3]{NHETC at Rutgers University, Piscataway, NJ, 08854, U.S.A.}
\affiliation[4]{Department of Physics and Center for Field Theory and Particle Physics, Fudan University, \\
220 Handan Road, 200433 Shanghai, China}
\affiliation[5]{Center of Mathematical Sciences and Applications, Harvard University, \\ 20 Garden Street, Cambridge, MA 02138, U.S.A.}
\affiliation[6]{Ludwig-Maximilians-Universität München,
Fakultät für Physik,
Theresienstraße 37,\\
80333 München,
Deutschland}
\emailAdd{gukov@theory.caltech.edu}
\emailAdd{pkoroteev@math.berkeley.edu}
\emailAdd{snawata@gmail.com}
\emailAdd{dpei@cmsa.fas.harvard.edu}
\emailAdd{i.saberi@physik.uni-muenchen.de}
\abstract{
Using brane quantization, we study the representation theory of the spherical double affine Hecke algebra of type $A_1$  in terms of the topological $A$-model on the moduli space of flat $\SL(2,\bC)$-connections on a once-punctured torus. In particular, we provide an explicit match between finite-dimensional representations and $A$-branes with compact support; one consequence is the discovery of new finite-dimensional indecomposable representations. We proceed to embed the $A$-model story in an M-theory brane construction, closely related to the one used in the 3d/3d correspondence; as a result, we identify modular tensor categories behind particular finite-dimensional representations with $\PSL(2,\Z)$ action. Using a further connection to the fivebrane system for the class $\cS$ construction, we go on to study the relationship of Coulomb branch geometry and algebras of line operators in 4d $\cN=2^*$ theories to the double affine Hecke algebra.
}
\begin{document}
\setcounter{tocdepth}{3}
\maketitle

\newpage
\section{Introduction}
\subsection{Background}

In string theory, the term ``brane'' is used for certain extended objects.
As is typical of string theory, there are many different ways of seeing or defining these objects, depending on one's preferred point of view.
For example, from the target space perspective, where string theory can be thought of as modeling the motion of strings in a target space \MS,
one can picture branes as particular distinguished submanifolds of~\MS{} (decorated with additional data) on which open strings can end.
Relatedly, from the point of view of the string worldsheet, branes are simply boundary conditions of the two-dimensional worldsheet theory.
But branes can also be viewed as sources for higher-form gauge symmetries in the effective field theory of the target space.
In the supergravity approximation, such extended sources produce interesting solutions,
called ``black branes'' by analogy with familiar ``black hole'' solutions in standard general relativity.
This perspective is especially useful in eleven-dimensional M-theory, where a first-quantization perspective (which would replace the string worldsheet by an appropriate ``membrane'' theory) is currently unavailable.

Branes, or at least models of certain special versions of branes, have also made numerous appearances in the mathematics literature, where they may go by different names. For example, topological string theory (which, from the physical point of view, comes from a twist of the worldsheet sigma-model discussed above) comes in two flavors, known as the $A$- and $B$-models. The category of branes in each of these can be identified with a fairly well-defined mathematical structure associated to a Calabi--Yau target space $\MS$. For the $B$-model, this is the derived category of coherent sheaves on~\MS, whereas the $A$-model is expected to be some appropriately defined version of---or generalization of---the Fukaya category $\Fuk(\MS,\omega_\MS)$, where $\omega_\MS$ is the symplectic form. Since this generalization may be nontrivial, we will write $\ABrane(\MS,\omega_\MS)$ for the category of $A$-branes, in which $\Fuk(\MS,\omega_\MS)$ is expected to be a full subcategory.
The homological mirror symmetry proposal of Kontsevich~\cite{KontsevichICM} identifies the category of $A$-branes on a Calabi--Yau threefold with the category of $B$-branes on its mirror, and is the subject of ongoing intense mathematical research.

While the category of $B$-branes belongs squarely to the realm of algebraic geometry, the category of $A$-branes is much more subtle, and has appeared in numerous different guises in mathematical physics. To give another example, the proposed framework of \emph{brane quantization}~\cite{Gukov:2008ve} suggests that the problem of quantizing a symplectic manifold $M$ can be approached by studying the topological $A$-model on a \emph{different} target space \MS, which is chosen to be a so-called ``complexification'' of~$M$. (When $M$ is the set of real points of an algebraic symplectic manifold, this complexification can be taken to be the obvious one.) This complexification should, in any case, be a complex manifold whose dimension is twice that of $M$; $M$ should map to~\MS{}, and \MS{} should be equipped with a holomorphic symplectic form $\Omega$, whose real part $\Re \Omega$ restricts to the symplectic form on~$M$, and imaginary part $\Im\Omega$ restricts to zero on~$M$.

One is then instructed to consider the $A$-model of the complexification with respect to the \emph{imaginary} part of the holomorphic symplectic form, $\omega_\MS = \Im \Omega$. This gives rise to a category $\ABrane(\MS,\omega_\MS)$ of $A$-branes, which includes not only Lagrangian objects but also much more unfamiliar branes supported on \emph{coisotropic} submanifolds of~\MS. Coisotropic branes were introduced in~\cite{Kapustin:2001ij};  \cite{aldi2005coisotropic} conjectured that spaces of morphisms between $A$-branes should be identified with deformation quantizations of the functions on their intersections.
While coisotropic branes remain mysterious in general, and do not occur at all on simply-connected Calabi--Yau three-folds, they are needed for mirror symmetry to work, even on flat target spaces.

In fact, since the dimension of~\MS{} is always zero modulo four, one can define a particularly useful exotic $A$-brane on~\MS, known as the \emph{canonical coisotropic brane}. This brane was introduced in~\cite{Kapustin:2006pk}, where it played an important role in connecting $A$-branes to $D$-modules. Its support is the entire space \MS, and it is furthermore expected to have a very interesting algebra of endomorphisms. In fact, in keeping with the proposal of~\cite{aldi2005coisotropic}, one expects that
\deq{
\End(\Bcc) = \OO^q(\MS),
}
where the object on the right-hand side is the \emph{deformation quantization} of the ring $\OO(\MS)$ of holomorphic functions (with appropriate polynomial growth conditions at infinity) on the complexification, taken with respect to its holomorphic symplectic form. In the case of an affine variety, $\OO(\MS)$  is just the coordinate ring. (Although the $A$-model depends only on the symplectic form $\omega_\MS = \Im \Omega$, the real part of~$\Omega$ enters the definition of the boundary condition~$\Bcc$, which is only canonically definable on a holomorphic symplectic manifold.)

As with any category, there is an action of this algebra by precomposition (physically speaking, by joining strings at boundary conditions) on the space of morphisms from~\Bcc{} to any other $A$-brane \brane.
In other words, brane quantization naturally proposes a functor
\deq[eq:pre-functor]{
\Hom(\Bcc,-): \ABrane(\MS,\omega_\MS) \to \Rep(\OO^q(\MS)),
}
which allows us to generate a representation of this algebra from an $A$-brane.
A category is said to be \emph{generated} by an object $A$ if $\Hom(A,-)$ is an equivalence of categories.
In fact, Kapustin~\cite{Kapustin:2005vs} proposed that $\Bcc$ is a generating object of the category of $A$-branes, and that $\Rep(\OO^q(\MS))$ can be taken as a definition of the category $\ABrane(\MS)$, when \MS{} is a \HK space.
We remark that there are some subtleties here. The Fukaya category as typically studied in homological mirror symmetry~\cite{KontsevichICM} requires each object to carry a choice of grading, so that there is at least a family of $A$-branes supported on the same Lagrangian which are shifts of one another, forming a torsor over~$\Z$. There is typically no canonical choice of a preferred grading datum on an  $A$-brane. One should more properly expect \deq[eq:functor]{
\RHom(\Bcc,-): D^b\ABrane(\MS,\omega_\MS) \to D^b \Rep(\OO^q(\MS))
}
to provide a derived equivalence between the category of $A$-branes and the derived category of $\OO^q(\MS)$-modules. (From the physical perspective, this corresponds to working with the notion of equivalence appropriate to the twist, treating $A$-branes as boundaries for the $A$-twisted theory rather than boundaries for the full theory that are compatible with the twist.)
The relevance of derived categories to boundary conditions in topological string theory has been understood for a long time;  see~\cite[for example]{Douglas:2000gi}.

Returning briefly to the perspective of brane quantization, the gist now consists in the fact that $M$ is a Lagrangian submanifold in $(\MS,\omega_\X)$, so that the original symplectic manifold itself can be used to define an $A$-brane $\brane_M$ in $(\MS,\omega_\X)$. In fact, it is shown in~\cite{Gukov:2008ve,Gukov:2010sw} that the morphism space $\Hom(\Bcc,\brane_M)$ can be identified in a precise fashion with the geometric quantization of~$M$, at least under the assumption that $M$ is a K\"ahler manifold.
As such, brane quantization provides a bridge between deformation quantization---which is guaranteed to formally produce the algebra of quantum observables $\OO^q(\MS)$, but gives no candidate for a natural module or Hilbert space on which it acts---and standard geometric quantization.
(For a recent study of issues in geometric quantization from this perspective, see~\cite{Gaiotto:2021kma}.)
However, as we have already argued, the functor $\Hom(\Bcc,-)$ is \emph{much more} than this: assuming that it is an equivalence, it provides a natural description of the category of~$\OO^q(\MS)$-modules in geometric terms. Indeed, the role of $M$ in the story is no longer distinguished: it is just one $A$-brane among (at least potentially) many, each of which corresponds naturally to an $\OO^q(\MS)$-module.
This broader perspective was already appreciated in~\cite{Gukov:2008ve}, where a particular space $\X=T^*\CP^1$ was used to generalize the orbit method and give geometric constructions for all representations of~$\SL(2,\R)$.
Therefore, the proposed equivalence \eqref{eq:functor} between $A$-branes and $\OO^q(\MS)$-modules is the natural way to think about a geometric approach to representation theory for algebras that deformation-quantize \HK manifolds \X{}.

As the definition of the $A$-brane category is not available yet, much of this discussion is not at a mathematical level of rigor.
Nonetheless, with an appropriate choice of $(\MS,\omega_\MS)$, we can provide concrete evidence for the equivalence \eqref{eq:functor} if we restrict ourselves to Lagrangian objects belonging to the Fukaya category   $\Fuk(\MS,\omega_\MS)$ of $\MS$, which forms a subcategory in $\ABrane(\MS,\omega_\MS)$.
We will take the target space \MS{} of the 2d sigma-model to be the moduli space of complex flat connections (or parabolic Higgs bundles) on a once-punctured torus $C_p$. Then, as proved in \cite{oblomkov2004double}, the algebra $\OO^q(\MS)$ will be the spherical subalgebra of double affine Hecke algebra (DAHA in short) \cite{Cherednik-book}. One of our goals in this paper is to explore the idea described above in this setup, presenting solid evidence for the equivalence \eqref{eq:functor}.\footnote{A related functor of a similar kind is constructed in \cite{ben2016quantum,ben2016betti,ben2018integrating}. The constructions there give a description of the factorization homology of a particular $E_2$ algebra valued in categories in terms of modules.
(One may equivalently think of such an algebra as a braided tensor category).
Taking the braided tensor category to be $\Rep_q GL_n$ and applying the general result to a once-punctured torus, one obtains a Morita equivalence between the spherical DAHA of type $\mathfrak{gl}(N,\C)$ and the endomorphisms of a generating object of the factorization homology.}

\paragraph{Remark:}
In the past few years, Kontsevich and Soibelman \cite{KSFloer} have been developing a new formalism within the framework of `holomorphic Floer theory,' which among other things, allows for a rigorous formulation of brane quantization.
According to the {\it generalized Riemann--Hilbert correspondence} of Kontsevich--Soibelman, there is an embedding of the Fukaya category $\Fuk(\MS)$ into the right-hand side of \eqref{eq:functor} as the category of so-called holonomic $D_q$-modules. Their approach provides a realization of the category of representations of $\OO^q(\X)$ in terms of sheaves on its Lagrangian skeleton. Some of our results in this paper about DAHA representations can thus be interpreted as a particular example of the generalized Riemann-Hilbert correspondence.

\subsection{Results}
We first study the representation theory of spherical double affine Hecke algebra $\SH$ of type $A_1$ from the viewpoint of brane quantization in great detail.
We explicitly identify a compact Lagrangian brane in $\X=\MF(C_p,\SL(2,\C))$, the moduli space of flat $\SL(2,\C)$-connections on~$C_p$, for each finite-dimensional irreducible representation of $\SH$.
In particular, we match objects including the parameter spaces, dimensions and shortening conditions on both sides.
We also study the spaces of derived morphisms of the two categories. As a by-product, we find new finite-dimensional representations of $\SH$ that do not appear in \cite{Cherednik-book}. We see examples in which two irreducible branes can form bound states in more than one way, corresponding to a higher-dimensional $\Ext^1$; these bound states are related to subtleties defining $A$-branes supported on singular submanifolds. Hence, the careful study in \S\ref{sec:2d} in terms of brane quantization provides solid evidence for the following:

\begin{claim}\label{Th:Claim1}
For $\X=\MF(C_p,\SL(2,\C))$, the functor \eqref{eq:functor} restricts to a derived equivalence of the full subcategory of compact Lagrangian $A$-branes of $\MS$ and the category of finite-dimensional $\SH$-modules.
\end{claim}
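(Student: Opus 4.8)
The plan is to establish the equivalence \eqref{eq:functor} on this subcategory by an explicit object-by-object match rather than by an abstract argument: as no definition of $\ABrane(\X,\omega_\X)$ is available, the honest goal is to compute both sides in enough detail that the comparison becomes compelling. First I would pin down the right-hand side by recalling the classification of finite-dimensional representations of the spherical DAHA $\SH$ of type $A_1$. Such representations exist only along special loci in the space of Hecke and quantization parameters; along each such locus one tabulates the irreducibles, their dimensions, and the shortening conditions, and then computes the $\Ext$-quiver among the simples --- the dimensions of $\Ext^1$ between simple modules and the indecomposables built from them. This is a computation internal to the representation theory of $\SH$ and its specializations.

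Second, on the geometric side I would make $\X=\MF(C_p,\SL(2,\bC))$ completely explicit as the parameter-dependent affine cubic surface $x^2+y^2+z^2-xyz=\kappa$, with $\kappa$ a function of the conjugacy class of the holonomy around the puncture, and determine for which parameter values this surface becomes singular, of which ADE type, and which configuration of exceptional curves appears after its minimal resolution (equivalently, after a \HK rotation of the degenerate fiber). Any holomorphic curve is automatically Lagrangian for $\omega_\X=\Im\Omega$, so these exceptional curves --- together with their chains and unions, decorated with flat Chan--Paton bundles --- are precisely the candidate supports for compact Lagrangian $A$-branes; one must also match the modulus of the flat bundle with the remaining continuous parameter on the algebra side. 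The first consistency check is then that finite-dimensional $\SH$-modules occur exactly on the loci where $\X$ degenerates, with the dimension counts of the two parameter spaces agreeing.

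Third, I would evaluate $\RHom(\Bcc,-)$ on each such brane and check that the resulting $\SH$-module has the predicted dimension and shortening condition, using near the exceptional locus the local model of the corresponding ADE surface singularity, where morphism spaces between branes supported on the exceptional curves are governed by the intersection form and the associated preprojective algebra of the resolution graph. Comparing these $\Ext^\bullet$ groups with those found in the first step simultaneously yields full faithfulness on the subcategory and reproduces the bound-state structure: the cases where two irreducible branes bind in more than one way, i.e. a higher-dimensional $\Ext^1$, must match genuinely new indecomposable $\SH$-modules, and it is precisely here that the subtleties of defining $A$-branes on singular Lagrangian supports enter and must be handled with care. Essential surjectivity then follows, since the finite-dimensional irreducibles have all been accounted for and both derived categories are generated by their simple objects.

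The step I expect to be the main obstacle is this last geometric input. A fully rigorous proof would require a working definition and computation of the (derived) Fukaya-type category of the non-compact holomorphic-symplectic surface $\X$ near its degenerate fibers, including the canonical coisotropic brane $\Bcc$, together with a proof that $\Bcc$ generates --- none of which is presently available. Short of that, the content of the argument is the detailed dictionary assembled in \S\ref{sec:2d}: each finite-dimensional irreducible representation of $\SH$ is realized by a compact Lagrangian brane with matching parameters, dimension and shortening condition; the $\Ext$-quivers coincide; and the indecomposable and bound-state structure coincides, including the new indecomposables. That matching is what we put forward as evidence for Claim~\ref{Th:Claim1}.
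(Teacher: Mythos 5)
Your proposal follows essentially the same route the paper takes: the "proof" is not an abstract argument but a detailed dictionary, assembled in \S\ref{sec:2d}, between compact Lagrangian supports in $\X$ (generic Hitchin fibers, the irreducible components $\bfU_i$ of $I_2$ fibers, the section $\bfV$, the exceptional divisors $\bfD_i$, and the global nilpotent cone) and finite-dimensional $\SH$-modules ($\scF$, $\scU$, $\scV$, $\scD$, $\scN$), matching parameter spaces, dimensions via \eqref{dimension}, Bohr--Sommerfeld/shortening conditions (\eqref{case1}--\eqref{case3}), and the $\Ext^1$/bound-state structure via short exact sequences of quotients of the polynomial representation. You also correctly identify the same caveat the paper does: absent a working definition of $\ABrane(\X,\omega_\X)$ near degenerate fibers and a generation result for $\Bcc$, what is established is compelling evidence for the claim, not a theorem.

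One minor difference in framing is worth noting. You package the morphism computation through the local ADE model and its preprojective algebra, which is cleaner conceptually but implicitly assumes the relevant local Fukaya-category computations are available. The paper instead works directly with the Hitchin fibration, the Kodaira classification of the singular fibers ($I_0^*$ deforming to three $I_2$'s), period integrals of $\omega_I,\omega_J,\omega_K$ over $\bfV$ and $\bfD_i$, and spin/$\Spin^c$-structure bookkeeping when a torus fiber degenerates onto a nodal configuration. This more hands-on route is what lets the paper pin down the two-dimensional $\Ext^1(\frakB_{\UU_1},\frakB_{\UU_2})$ and the new indecomposables $\scN^I$ for $|I|=3$ without needing the preprojective-algebra machinery, and it is also what exposes the subtleties (half-integral fluxes, NS vs.\ R spin structures along pinched cycles) that you flag but do not resolve. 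So while both approaches buy the same object and morphism match, the paper's concrete period-integral computations are doing the real work, and a rigorous version of your ADE/preprojective step would in the end reduce to exactly those.
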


We also consider a particular example of a non-compact brane corresponding to the polynomial representation of~\SH{} studied by Cherednik. In fact, the brane perspective suggests straightforward generalizations of this representation.

\vskip.1in

While the brane quantization proposal---and thus the physics of the $A$-model---is our starting point,
many of the various other types of branes in string theory and M-theory, and the guises in which they appear, will have a role to play in this paper.
As was already emphasized, just for example, in the constructions of~\cite{Kapustin:2006pk}, the moduli space \MS{} plays an important role in higher-dimensional gauge theories, which allows for an embedding of the physics of $A$-branes into a richer system. We focus on one such construction: M5-branes on a once-punctured torus (or equivalently with $\Omega$-deformation orthogonal to M5-branes on a torus) in an appropriate setup of M-theory. This construction will provide many new angles to view the structure of the category of representations of (spherical) DAHA.

As such, branes will lead us to a geometric interpretation of previously known facts about $\SH$-modules,
as well as to new results, not previously known in the representation theory literature.
It is rather straightforward from the geometry of the target space $\MS$ to identify finite-dimensional $\SH$-modules that carry representations of~$\PSL(2,\Z)$. More interestingly, by connecting the M5-brane setup of the 3d/3d correspondence to the 2d $A$-model, we can naturally identify the corresponding $\PSL(2,\Z)$ representations.
Let us recall the fivebrane setup for the 3d/3d correspondence where M5-branes are located on $S^1\times D^2\times M_3$ with the $\Omega$-background. Then, a suitable compactification on $T^2\times T^2$ can relate this setup to the 2d $A$-model described above, where the center of $D^2$ is associated to $\Bcc$ and a boundary condition at the boundary of $D^2$ gives rise to $\brane_M$.

For various choices of boundary conditions $\brane_M$, the partition function of 3d $\cN=2$ theory $\cT[M_3]$ on $S^1 \times D^2$ computes the corresponding invariant of the 3-manifold $M_3$. In some cases, such topological invariants of 3-manifolds can be lifted to a 3d TQFT, \textit{i.e.}\ can be constructed via cutting-and-gluing. In turn, the algebraic structure underlying a 3d TQFT often can be encoded in a modular tensor category (that, in general, may be non-unitary or non-semisimple). In particular, in the present setup of 3d/3d correspondence, this algebraic structure itself can be viewed as a special case $\MTC [S^1 \times (S^2 \setminus \text{pt}), \brane_M]$ of a more general algebraic structure dubbed $\MTC [M_3]$ in \cite{Gukov:2016gkn} for its close resemblance to the structure of a modular tensor category.
We will explain how concrete instances of this algebraic structure can be realized via branes in the 2d $A$-model and the corresponding $\SH$-modules:
\be
\Hom(\Bcc,\brane_M)\cong K^0(\MTC)~.
\ee
In particular, one such boundary condition leads to a TQFT associated to a 4d Argyres-Douglas theory.
In general, branes supported on $M$ that are invariant under $\PSL(2,\Z)$ action (not pointwise) give rise to interesting $\PSL(2,\Z)$ representations $\Hom(\Bcc,\brane_M)$, and 3d/3d correspondence can help us to relate them to the modular data (and the Grothendieck group) of an $\MTC$-like structure.

\vskip.1in

Another relevant brane setting appears in the class $\cS$ construction \cite{Gaiotto:2009we,Gaiotto:2009hg} of a 4d $\cN=2^*$ theory $\cT[C_p]$ where M5-branes are placed on $S^1\times \bR^3\times C_p$. An algebra of line operators becomes the coordinate ring of the Coulomb branch of 4d $\cN=2^*$ theory on $S^1\times \bR^3$ \cite{Gaiotto:2010be}, and we can study it again in a rank-one case from the relation to $\SH$.
 As in \cite{Aharony:2013hda}, the spectrum of line operators in the 4d $\cN=2^*$ theory is sensitive to the global structure of the gauge group, which can be specified by imposing additional discrete data. In fact, the Coulomb branch of 4d $\cN=2^*$ theory of rank-one is given as the quotient of $\MF(C_p,\SL(2,\bC))$ by this additional discrete choice $\bZ_2\subset \bZ_2\oplus \bZ_2$, which can be interpreted as an automorphism group of $\SH$. Therefore, we can study the elliptic fibration of the Coulomb branch, and the algebra of line operators on the $\Omega$-background is a $\bZ_2$-invariant subalgebra of $\SH$. Furthermore, by introducing a surface operator of codimension two in the system, an algebra of line operators on the surface operator is related to the full (rather than spherical) DAHA. By compactifying the 4d theory to the 2d sigma-model, we propose a canonical coisotropic brane $\wh\frakB_{cc}$ of higher rank where the algebra of $(\wh\frakB_{cc},\wh\frakB_{cc})$-open strings realizes the full DAHA. In this way, the interplay among moduli spaces, algebras of line operators, and DAHA can be studied from the viewpoint of the compactification of fivebrane systems.

\subsection{Structure}

The structure of the paper follows a simple principle. We start in the world of two-dimensional physics, and we gradually proceed to higher-dimensional theories. One advantage of this approach is that lower-dimensional theories can be analyzed much more explicitly and often can be described in mathematically rigorous terms. For example, the two-dimensional sigma-model perspective is phrased in the language of the topological $A$-model, which is reasonably well understood in the mathematical literature. Likewise, many explicit calculations can be done easily and many questions can be answered more concretely in low-dimensional systems. The advantage of higher-dimensional systems, on the other hand, is that they reveal a much richer (higher categorical) structure, that helps to see a ``bigger picture,'' dualities and relations between various low-dimensional descriptions, which otherwise might seem worlds apart.

To give a concrete overview of what follows:
In \S \ref{sec:2d}, we provide a detailed study of the equivalence \eqref{eq:functor} between DAHA representations and $A$-branes. To this end, we study the 2d sigma-model on the moduli space of flat $\SL(2,\C)$-connections (or the Hitchin moduli space) on the punctured torus $C_p$ in this section. We begin by constructing the spherical DAHA $\SH$ in the 2d $A$-model. We review the relevant geometry of the target space in~\S\ref{sec:target}. Then we move on to study the algebraic side, reviewing the double affine Hecke algebra of type $A_1$ and its spherical subalgebra $\SH$ in~\S\ref{sec:DAHA-main}. We introduce the canonical coisotropic brane in \S\ref{sec:Bcc}, showing how the spherical DAHA $\SH$ arises as the algebra of $(\Bcc,\Bcc)$-strings. In the remainder of~\S\ref{sec:2d}, we discuss the match between representations of $\SH$ and open-string states between $A$-branes. To this end, \S\ref{sec:Lagrangian} reviews some details of the category \ABrane, explaining the correspondence between branes supported on Lagrangian submanifolds and modules of $\SH$. In particular, we will find branes for the polynomial representations in \S\ref{sec:poly-rep}.  \S\ref{sec:finite-rep} aims to show the match between branes with irreducible compact supports and finite-dimensional \SH{} representations. \S\ref{sec:bound-state} studies bound states of branes and the corresponding short exact sequences in representations, matching them between the two categories.

Some finite-dimensional $\SH$-modules carry $\PSL(2,\Z)$ representations. Taking the 3d/3d correspondence into account, we explore the geometric origin of these $\PSL(2,\Z)$ representations (and the conditions under which they are present) in \S\ref{sec:3d}. Moreover, the vantage point of three-dimensional physics reveals additional structure concealed behind these $\PSL(2,\Z)$ representations. We show in \S\ref{sec:modularity} that the fivebrane system of the 3d/3d correspondence connects the two-dimensional $A$-model to three-dimensional topological field theories on a 3-manifold $M_3$.
In particular, we show that the choice of an $\SH$-module with a $\PSL(2,\Z)$ action gives rise to a modular tensor category that describes such a 3d TQFT on $M_3$, whose Grothendieck group is identified with the chosen $\SH$-module. In \S\ref{sec:SL2Z-skein}, we propose that the categorification of the skein module of a closed oriented 3-manifold $M_3$ results in a modular tensor category so that there is a ``hidden'' $\SL(2,\Z)$ action on the skein module of $M_3$. We also explain the connection to $\SL(2,\C)$ Floer homology groups of $M_3$.

In \S\ref{sec:4d}, we move one more dimension up, and study our category of interest from the vantage point of four-dimensional physics, namely in the context of four-dimensional $\cN=2^*$ theories. $\cN=2^*$ theories can be constructed by wrapping a stack of M5 branes on the once-punctured torus $C_p$, labeled with some additional discrete data associated to $C_p$.
In~\S\ref{sec:4dCoulomb}, we study an elliptic fibration of the Coulomb branch of an $\cN=2^*$ theory of rank-one on $S^1\times \R^3$, based on the analysis of the Hitchin fibration performed in \S\ref{sec:target}. We also show that the algebra of line operators in the 4d $\cN=2^*$ theory in the $\Omega$-background is a subalgebra of $\SH$ specified by the discrete data. Here as well, the bird's-eye view provided by the fivebrane system connects 4d physics and 2d sigma-models.
We use this in \S\ref{sec:line} to sort out the relationships among line operators, Coulomb branches, and DAHA. Finally, we introduce a surface operator in the 4d $\cN=2^*$ theory and consider an algebra of line operators on the surface operator in \S\ref{sec:surface}. We also discuss a higher-rank bundle for the canonical coisotropic brane to realize the full DAHA and the Morita equivalence $\Rep(\HH)\cong\Rep(\SH)$.

In Appendix \ref{app:notation}, we list notations and symbols adopted in this paper.
A concise summary of some basics of DAHA is given in Appendix \ref{app:DAHA}. In Appendix \ref{app:qt}, we discuss the representation theory of the quantum torus algebra $\QT$ in terms of brane quantization. As a toy model, we show the match between representation theory of $\QT$ and $A$-branes on a flat space $\C^\times \times\C^\times$.  Then we consider an orbifold quotient by $\Z_2$, and we match $A$-branes to representations in this context in~Appendix \ref{app:SQT}.
Appendix \ref{app:3dN=4} is devoted to studying the relation between trigonometric and rational degenerations of the spherical DAHA and Coulomb branches of 3d $\cN=4$ theories.

\section{2d sigma-models and DAHA}
\label{sec:2d}
In this section, we study representation theory of DAHA, strictly speaking, the spherical subalgebra of DAHA of type $A_1$, in terms of \emph{brane quantization} in the 2d $A$-model \cite{Gukov:2008ve} on the moduli space of flat $\SL(2,\C)$-connections on a once-punctured torus. The brane quantization lends itself well to a geometric approach to representation theory of spherical DAHA, which provides novel viewpoints.
 The main goal of this section is to explicitly show the correspondence between $A$-branes with compact Lagrangian submanifolds and finite-dimensional representations of spherical DAHA with respect to dimensions, shortening conditions and morphisms. This matching enables us to find new finite-dimensional representations. The geometric picture also allows us to identify $\PSL(2,\Z)$ actions on some finite-dimensional modules. As another advantage, we generalize Cherednik's polynomial representation from a geometric viewpoint. These results play a crucial role in higher-dimensional physical theories and categorical structures in the subsequent sections.

\bigskip

DAHA associated to a root system $\sfR$ (or, equivalently, to a semisimple Lie algebra $\lie{g}$) can be constructed by beginning with the quantum torus algebra $\QT(\sfP\oplus \sfP^\vee,\omega)$ defined on the direct sum of the weight and coweight lattices of~\lie{g} with the symplectic pairing $\omega$ between~$\sfP$ and~$\sfP^\vee$.
More concretely, $\QT(\sfP\oplus \sfP^\vee,\omega)$ can be understood as the group algebra of the Heisenberg group with the relation
$$X^{\mu} Y^{\lambda}=q^{(\mu, \lambda)} Y^{\lambda} X^{\mu}, \quad \textrm{for} \quad \mu \in \sfP, \lambda \in \sfP^{\vee}~,$$
where $(\mu, \lambda)$ is the symplectic pairing.
 Note that this lattice is isomorphic to the standard pairing on~$\Z^{2\dim \sfP}\cong\Z^{2n}$, so that the algebra has outer automorphism group $\Out(\QT(\sfP\oplus\sfP^\vee,\omega))=\Sp(2n,\Z)$.

However, we have the additional data of the action of the Weyl group $W$ on~$\sfP$ and $\sfP^\vee$. This gives a distinguished embedding of~$\Weyl$ into $\Sp(2n,\Z)$, which therefore determines an extension
\deq[extension]{
0 \to \QT(\sfP\oplus \sfP^\vee,\omega) \to \HH_{t=1}(W) \to \C[\Weyl] \to 0
}
up to equivalence. The algebra $\HH_{t=1}(W)$ is known to be the group algebra of the \emph{double affine Weyl group} $\WW$: $\HH_{t=1}(W)\cong\C[\WW]$. Since the representation of~$\Weyl$ is just on~$\sfP$ (and contragredient on~$\sfP^\vee$), this extension leaves the ``diagonal'' $\Sp(2,\Z)$ subgroup unbroken as outer automorphisms of~$\HH_{t=1}(W)$. For the Cartan type $A_1$, this construction is equivalent to the algebra $\HH_{t=1}$ in Appendix \ref{app:SQT}.
Moreover, the algebra $\HH_{t=1}(W)$ can be further deformed by other formal parameters $t$, transforming the group algebra $\C[W]$ to the Hecke algebra. The result is DAHA $\HH(W)$. We will give a concrete description of the deformation in the Cartan type $A_1$ in this section. DAHAs of general Cartan types are explained in Appendix~\ref{app:DAHA}.
Through this construction, the quantum torus algebra and DAHA are closely related, and we can take the same approach to representation theory of the quantum torus algebra. Although the representation theory of the quantum torus algebra is well-known, it can be a useful guide for DAHA. Therefore, the reader can refer to Appendix \ref{app:qt} for the brane quantization of the quantum torus algebra and symmetrized quantum torus.

The algebra $\HH(W)$ is not commutative, even in the $q=1$ limit. Nonetheless, it contains the \emph{spherical subalgebra} $\SH(W)$, obtained by an idempotent projection, which is commutative as $q=1$. In the limit $t=1$, $\SH_{t=1}(W)$ is isomorphic to the Weyl-invariant subalgebra of~$\QT(\sfP\oplus \sfP^\vee,\omega)$ (after a lift of the Weyl group action is chosen). In the further specialization $q=1$, $\SH$ becomes precisely the algebra of Weyl-invariant functions on
$$
(\frakt_\bC /\sfQ^\vee) \times (\frakt^\vee_\bC/\sfQ)= T_\bC\times T_\bC~.
$$
Note that we take the coroot and root lattices $\sfQ^\vee \oplus\sfQ=   \Hom(\sfP,\Z)\oplus \Hom(\sfP^\vee,\Z)$ (namely the dual lattice) as the quotient lattice.
This space with group action is nothing other than the moduli space of flat connections on a two-torus $T^2$, valued in the corresponding complex Lie group $G_\C$:
\begin{equation}
\begin{aligned}
\MF(T^2,G_\C) &= \Hom(\pi_1(T^2),G_\C)/G_\C \\
&\cong \frac{ T_\bC\times T_\bC}{W}.
\end{aligned}
\end{equation}

We would like to consider an additional deformation of this moduli space to study the representation theory of spherical DAHA geometrically. Happily, for type $A$, this can be achieved just by adding a ``puncture'' on a two-torus $T^2$. Despite this rather simple ``addition'', the story becomes incredibly deeper and more interesting. This section focuses on DAHA of rank one to illustrate and highlight all the delicate features and interesting phenomena. In rank one, we can perform concrete computations as explicitly as possible. For that reason, we will first review some necessary background on the moduli space of
flat $\SL(2,\C)$-connections on a once-punctured torus, which will play the role of the target space $\X$ in the 2d sigma-model.
Then, we will carve out $A$-branes in $\X$ for salient modules of the spherical DAHA. This will give solid evidence of the functor \eqref{eq:functor} from the categories of $A$-branes in $\X$ to the representation category of the spherical DAHA.

\subsection{Higgs bundles and flat connections}
\label{sec:target}

Figuratively speaking, the target space of the 2d sigma-model is the stage where our main characters (branes) will make their appearance. Thus, let us begin by setting the stage.

The target space of our system will be the moduli space of $G=\SU(2)$ Higgs bundles on a genus-one curve $C_p$, ramified at one point $p$:
\be
\label{targetX}
\MS:=\MH(C_p,G).
\ee
Although the geometry of this space, also called the Hitchin moduli space, is a fairly familiar character in mathematical physics literature, we review those aspects that will be especially important for applications to DAHA representations.

Recall \cite{0887284,simpson1990harmonic}, that a ramified (or stable parabolic) Higgs bundle is a pair $(E,\varphi)$ of a holomorphic
$\SU(2)$-bundle $E$ over a curve $C$ and a holomorphic section $\varphi$, called the Higgs field, of the bundle $K_{C}\otimes {\rm ad}(E) \otimes \cO(p)$. Here, $K_{C}$ denotes the canonical bundle of ${C}$, and $\cO(p)$ is the line bundle whose holomorphic sections are functions holomorphic away from $p$ with a first-order pole at $p$. The ramification at $p$ --- more precisely called {\it tame ramification} since we are considering first-order pole --- is described by the following conditions on the connection $A$ on $E$ and the Higgs field
\bea\label{tame}
A &=  \alpha_p \,d\vartheta +\cdots \cr
\varphi& = {\frac12}(\beta_p+i\gamma_p){\frac{dz}{z}}+\cdots
\eea
Here, $z=re^{i\vartheta}$ is a local coordinate on a small disk centered at $p$, and the ramification data is a triple of continuous parameters, $(\alpha_p,\beta_p,\gamma_p)\in T\times \frakt \times \frakt$ where we denote the Cartan subgroup $T\subset G$ and the Cartan subalgebra $\frakt\subset\frakg$. With this prescribed behavior at $p$, the Hitchin moduli space is the space of solutions to the equations
\bea
\label{Hitchin-eq}
F-[\varphi, \overline \varphi]=&0\cr
\overline D_A\,\varphi=&0~,
\eea
modulo gauge transformations. We denote this moduli space $\MH(C_p,G)$, where $C_p$ is a Riemann surface $C$ with the tame ramification \eqref{tame} at $p\in C$. It is a \HK space and the corresponding \K forms are
\bea\label{kahler-form}
\omega_I & =
-{\frac{i}{2\pi}}\int_C|d^2z|\,\Tr\Bigl(\delta A_{\bar z}\wedge
\delta A_z-\delta \bar\varphi\wedge \delta\varphi\Bigr)~,\cr
\omega_J & ={\frac{1}{2\pi}}\int_C |d^2z|\,\Tr\Bigl(
\delta\bar\varphi\wedge \delta A_z+\delta\varphi\wedge
\delta A_{\bar z} \Bigr)~,\cr
\omega_K & =
{\frac{i}{2\pi}}\int_C|d^2z|\,\Tr\Bigl(\delta\bar\varphi
\wedge\delta A_z-\delta\varphi\wedge\delta A_{\bar z}\Bigr)~.
\eea
There is also a triplet of holomorphic symplectic forms $\Omega_I=\omega_J+i\omega_K$, $\Omega_J=\omega_K+i\omega_I$, and $\Omega_K=\omega_I+i\omega_J$, holomorphic in complex structures $I$, $J$, and $K$, respectively. In the absence of ramification, it is easy to check that $\omega_J$ and $\omega_K$ are cohomologically trivial \cite[\S4.1]{Kapustin:2006pk}, whereas $\omega_I$ is non-trivial and, if properly normalized, can be taken as a generator of $H^2 (\MS, \mathbb{Z})$. On the other hand, in the presence of ramification \eqref{tame}, the cohomology classes of $\omega_J$ and $\omega_K$ are proportional to $\beta_p$ and $\gamma_p$, respectively.

The description of $\MH(C_p,G)$ as the moduli space of Higgs bundles given above is in complex structure $I$. Another useful description, in complex structure $J$, comes from identifying a complex combination $A_\bC=A+i\phi$ with a $G_{\bC}$-valued connection, where $\phi=\varphi+\bar\varphi$. The Hitchin equations then become the flatness condition $F_\bC=dA_\bC+A_\bC\wedge A_\bC=0$ for this $G_{\bC}$-valued connection $A_\bC$. According to \eqref{tame}, it has a non-trivial monodromy around the point $p$:
\be \label{monodromy}
U=\exp(2\pi (\gamma_p+i \alpha_p))~.
\ee
which depends holomorphically on $\gamma_p+i\alpha_p$ and is independent of
$\beta_p$. Indeed, in complex structure $J$, $\beta_p$ is a \K parameter and $\gamma_p+i\alpha_p$ is a complex structure parameter. Another useful fact, also explained in \cite{Gukov:2006jk}, is that the cohomology class of the holomorphic symplectic form $\Omega_J=\omega_K+i\omega_I$ is proportional to $\gamma_p+i\alpha_p$ and independent of $\beta_p$.

Similarly, in complex structure $I$ the \K modulus is $\alpha_p$, while $\beta_p+i\gamma_p$ is a complex structure parameter. The cohomology class of the holomorphic symplectic form $\Omega_I=\omega_J+i\omega_K$ is $\beta_p+i\gamma_p$. There is a similar story for complex structure $K$ and all these statements are summarized in Table \ref{tab:complex-Kahler}.

\begin{table}[ht]\centering
	\begin{tabular}{c|c|c}
		Complex structure & Complex modulus & K\"ahler modulus\tabularnewline
		\hline
		$I$  & $\beta_p+i\gamma_p$ & $\alpha_p$\tabularnewline
		$J$  & $\gamma_p+i\alpha_p$ & $\beta_p$\tabularnewline
		$K$  & $\alpha_p+i\beta_p$ & $\gamma_p$\tabularnewline
\end{tabular}
\caption{Complex and \K moduli of the moduli space $\MH$ with one ramification point.}\label{tab:complex-Kahler}
\par\end{table}

In a supersymmetric sigma-model with target $\X$, the \K modulus of the target space is always complexified. This fact plays an important role in mirror symmetry. In the present setup, too, the \K moduli are all complexified by the periods of the 2-form field $B$. For example, in complex structure $I$, the complexified \K modulus is $\alpha_p+i\eta_p$, where $\eta_p\in T^\vee=\Hom(\Lcochar, \U(1))$ and $\Lcochar$ is the cocharacter lattice of $G$. Therefore, taking into account the ``quantum'' parameter $\eta_p$, the ramification data consists of the quadruple  of parameters $(\alpha_p, \beta_p,\gamma_p, \eta_p)$.

All of these structures can be made completely explicit in the case when~$C_p$ is a punctured torus. In complex structure $J$, where $\MS=\MH(C_p,G)$ is the moduli space of complex flat connections on $C_p$, we can then use an explicit presentation of the fundamental group
\begin{equation}
\pi_1(C_p) = \langle \frakm,\frakl,\frakc | \frakm \frakl \frakm^{-1}\frakl^{-1} = \frakc \rangle~.
\end{equation}
to describe flat connections concretely, in terms of holonomies along the $(1,0)$-cycle $\frakm$, the $(0,1)$-cycle $\frakl$, and the loop $\frakc$ around $p$:
\deq{
x = \Tr(\rho(\frakm)), ~  y = \Tr(\rho(\frakl)),\text{ and } z = \Tr(\rho(\frakm\frakl^{-1}))~.
}
In terms of these holonomy variables, the space of $\SL(2,\C)$-representations $\rho: \pi_1(C_p)  \to \SL(2,\C)$ is a cubic surface (see {\it e.g.} \cite{goldman2009trace,Gukov:2010sw}):
\begin{equation}
\label{toruscubic}
\MF (C_p, \SL(2,\C))=
\{ (x,y,z) \in \C^3 |x^2 + y^2 + z^2 - xyz - 2 = \Tr(\rho(\frakc)) = \tilde{t}^2+\tilde{t}^{-2}\}~.
\end{equation}
Here we used the fact that, according to~\eqref{monodromy}, the holonomy of the complex flat connection around $p$ is conjugate to
\be
\label{monodromy2}
\rho(\frakc)\sim  \begin{pmatrix}\tilde{t}^{-2}&0 \\ 0&\tilde{t}^{2}\end{pmatrix}~.
\ee
This section will be devoted to studying the deformation quantization $\OO^q(\X)$  of this coordinate ring holomorphic in complex structure $J$, which is generated by $x$, $y$, $z$, and its representations geometrically.

For a complex surface defined by the zero locus of a polynomial $f(x,y,z)$, the holomorphic symplectic form (a.k.a. Atiyah-Bott-Goldman symplectic form) can be written as
\be
\label{holomorphic-2-form}
\Omega_{J}=\frac{1}{2\pi i} \frac{dx\wedge dy}{\partial f/\partial z} = \frac{1}{2\pi i} \frac{dx\wedge dy}{2z-xy}~.
\ee
and the \K form is
\be
\label{K2form}
\omega_J =\frac{i}{4\pi} (dx \wedge d \bar x + dy \wedge d \bar y + dz \wedge d \bar z)~.
\ee

In the special case $\alpha_p=\b_p=\gamma_p=0$, the moduli space of flat $\SL(2,\C)$-connections on $C_p$ is simply a quotient space
\be
\label{Hitchin-torus}
(\C^\times\times\C^\times)/\Z_2
\ee
by the Weyl group $\Z_2$. It can be understood as a moduli space of flat $\SL(2,\C)$-connections on a torus (without ramification), such that holonomy eigenvalues along A- and B-cycles each parametrize a copy of $\C^\times$. The ``real slice'' $(S^1\times S^1)/\Z_2$ is the moduli space of flat $\SU(2)$-connections on the (punctured) torus, and it is sometimes called the ``pillowcase''. According to the theorem of~\cite{narasimhan1965stable} (resp.~\cite{mehta1980moduli}), it can be identified with the moduli space $\mathrm{Bun}(C_p,G)$ of stable (resp.~parabolic) $G$-bundles on $C_p$. It is easy to see that $\mathrm{Bun}(C_p,G)$ is a holomorphic submanifold of $\MH(C_p,G)$ in complex structure $I$. Furthermore, because $\delta \varphi=0$ on $\mathrm{Bun}(C_p,G)$, it follows from \eqref{kahler-form} that $\mathrm{Bun}(C_p,G)$ is a holomorphic Lagrangian submanifold with respect to $\Omega_I$ (in particular, Lagrangian with respect to $\omega_J$ and $\omega_K$). Following the notation in \S\ref{sec:Lagrangian}, we write it by $\bfV$ as a Lagrangian submanifold in the target $(\X,\omega_\X)$.

In addition to $\bfV$, other special submanifolds of $\MH(C_p,G)$ will play a role in what follows.
For example, in complex structure $I$, the Hitchin moduli space is a completely integrable Hamiltonian system \cite{0887284}, \textit{i.e.}\ a fibration
\deq[Hitchin-fibration]{
\pi:\MH(C_p,G)\to \BH
}
over an affine space, the ``Hitchin base'' \BH{},
whose generic fibers are abelian varieties (sometimes called ``Liouville tori''). For $G = \SU (2)$, the map $\pi$ takes a pair $(E,\varphi)$ to $\Tr\varphi^2$, which is holomorphic in complex structure $I$. Specializing further to the case where $C_p$ is a genus-one curve gives a particularly simple integrable system: its generic fiber $\F$ is a torus that, just like $\bfV$, is holomorphic in complex structure $I$ and Lagrangian with respect to $\omega_J$ and $\omega_K$. We also note that the only singular fiber of the Hitchin fibration $\pi:\MH(C_p,G)\to \BH$ is the pre-image $\NC =  \pi^{-1}(0)$ of $0\in \BH$ which, in the limit $\alpha_p=\b_p=\gamma_p=0$, is the ``pillowcase'' $\bfV \cong (S^1\times S^1)/\Z_2$ with four torsion (orbifold) points.

Now let us consider what happens when we go away from the limit $\alpha_p=\b_p=\gamma_p=0$ and consider generic values of the ramification parameters. From the viewpoint of the complex structure $J$, the equation \eqref{toruscubic} describes the \emph{deformation} of the four $A_1$ singularities of the singular surface \eqref{Hitchin-torus}, where $\tilde{t}^2$ (or, equivalently, $\g_p + i \alpha_p$) plays the role of the complex structure deformation. On the other hand, turning on $\beta_p \ne 0$ leads to a \emph{resolution} of the $A_1$-singularities. In other words, $\beta_p$ is the \K structure parameter in complex structure $J$, {\it cf.} Table~\ref{tab:complex-Kahler}.

\begin{figure}
	\centering
	\includegraphics[width=\textwidth]{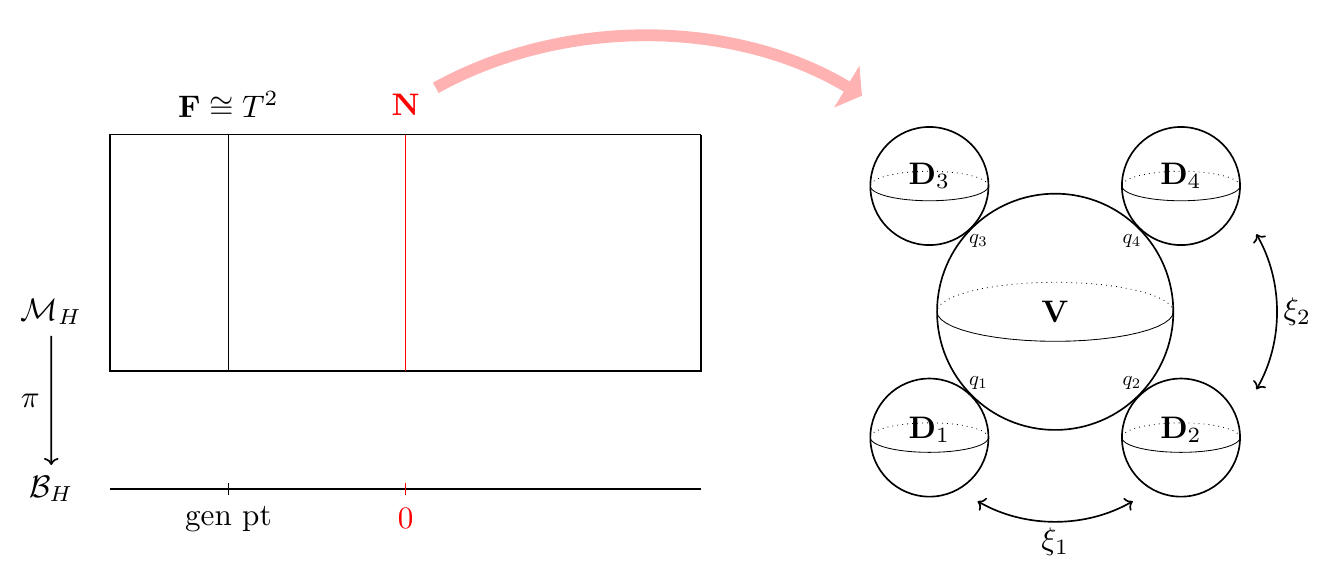}
	\caption{Schematic illustration of the Hitchin fibration $\MH(C_p,\SU(2))\to \cB_H$ and global nilpotent cone at $\beta_p=0=\gamma_p$ and a generic value of $\talpha_p$.}\label{fig:Hitchin-fibration}
\end{figure}

Recall that $\alpha_p$ is the \K structure parameter in complex structure $I$. If we turn on $\alpha_p$ while keeping $\beta_p=\gamma_p=0$, then the four torsion (orbifold) points are blown up in the Hitchin fibration. Consequently, the singular fiber in the Hitchin fibration, called the \emph{global nilpotent cone} $\bfN := \pi^{-1} (0)$, now contains five compact irreducible components (all rational) \cite{Hausel:1998qc,Gukov:2010sw}:
\be\label{nilconetoy}
\bfN  =  \bfV \cup \bigcup_{i=1}^{4} \bfD_i \,~.
\ee
In fact, it is a singular fiber of Kodaira type $I_0^*$ \cite{kodaira1964structure,kodaira1966structure}  in the elliptic fibration $\pi$. The irreducible components  $\bfV$ and $\bfD_i$ of the global nilpotent cone are holomorphic Lagrangians with respect to $\Omega_I$, sometimes called Lagrangians of type $(B,A,A)$. The homology classes of $\bfV$ and $\bfD_i$ provide a basis for the second homology groups $H_2 (\MH(C_p,G),\Z)$, and their intersection form is the affine Cartan matrix of type $\wh D_4$, as illustrated in Figure~\ref{fig:Hitchin-fibration}.
The intersection form has only one null vector, which
must be identified with the class of a generic fiber $\F$ of the Hitchin fibration, resulting in the relation
\be
\label{fiber-class-rel}
[\F]  =  2 [\bfV] + \sum_{i=1}^{4} [\bfD_i] \,.
\ee

Once we move away from $\beta_p=\gamma_p=0$, we are deforming the complex structure
modulus $\beta_p+i\gamma_p$ of complex structure~$I$, and so the structure of the Hitchin fibration drastically changes. For generic values of $(\beta_p,\gamma_p)$, the embeddings of the two-cycles $\bfV$ and $\bfD_i$ ($i=1,\ldots,4$) into $\MH(C_p,G)$ are no longer holomorphic with respect to complex structure $I$, and the singular fiber of type $I_0^*$ splits into three singular fibers of type $I_2$ ~\cite[\S3.4]{Frenkel:2007tx}. If we write the base genus-one curve $C_p$ of the Hitchin system by an algebraic equation $y^2=(x-e_1)(x-e_2)(x-e_3)$ with $e_1+e_2+e_3=0$ where the ramification point $p$ is located at infinity, then the singular fibers of type $I_2$ are preimages of points
\be\label{base-points}
\cB_H\ni b_i:=e_i \Tr\,(\beta_p+i\gamma_p)^2 \qquad  (i=1,2,3)~,
\ee
under the Hitchin fibration as depicted in Figure \ref{fig:3I_2}. In the singular fiber at $b_i\in \cB_H$, two irreducible components $\bfU_{2i-1}$ and $\bfU_{2i}$, which are topologically $\CP^1$, meet at two double points.

Hence, the two-cycles $\bfV$ and $\D_i$ ($i=1,\ldots,4$) are not projected to a point by the Hitchin fibration with a generic ramification, though they still give a basis of $H_2 (\MH(C_p,G),\Z)$ and satisfy the relation~\eqref{fiber-class-rel}. An analysis by the Mayer--Vietoris sequence tells us that the homology class of each irreducible component in a singular fiber $I_2$ can be expressed as
\begin{equation}\label{U-homology}
\begin{gathered}
	[\bfU_1]=[\bfV]+[\D_1]+[\D_2]~, \quad [\bfU_3]=[\bfV]+[\D_1]+[\D_3]~, \quad [\bfU_5]=[\bfV]+[\D_1]+[\D_4]~,\\
	[\bfU_2]=[\bfV]+[\D_3]+[\D_4]~, \quad [\bfU_4]=[\bfV]+[\D_2]+[\D_4]~, \quad[\bfU_6]=[\bfV]+[\D_2]+[\D_3]~,
\end{gathered}
\end{equation}
Additionally, there is another two-cycle $\bfW$, which projects to the blue curve as in Figure \ref{fig:3I_2} under the Hitchin vibration \eqref{Hitchin-fibration}, with homology class $[\bfW]=-[\D_1]$. Namely, the cycle $\bfW$ is suspended between $\pi^{-1}(b_1)$ and $\pi^{-1}(b_2)$ through $\pi^{-1}(b_3)$ where $\bfW$ intersects a Hitchin fiber by $S^1$ except the end points $b_1$ and $b_2$. With respect to the new basis
\deq{
[\bfU_1],[\bfU_2],[\bfU_3],[\bfU_5],[\bfW] \in H_2 (\MH(C_p,G),\Z)~,
}
the intersection form becomes
\begin{equation}
\begin{pmatrix}
	-2 & 2 & 0 & 0 & 1 \\
	2 & -2 & 0 & 0 & -1 \\
	0 & 0 & -2 & 0 & 1 \\
	0 & 0 & 0 & -2 & 1 \\
	1 & -1 & 1 & 1& -2
\end{pmatrix}.
\end{equation}
Note that the upper-left two-by-two matrix is the Cartan matrix of the affine type $\wh A_1$ up to minus sign as the intersection form of a singular fiber of type $I_2$.

\begin{figure}[t]
	\centering
	\includegraphics[width=\textwidth]{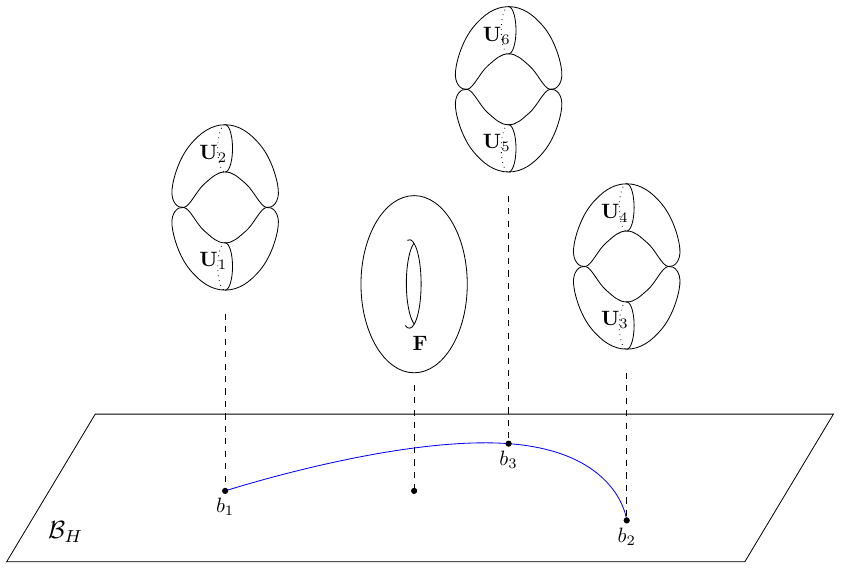}
	\caption{The Hitchin fibration with a generic ramification contains three singular fibers of Kodaira type $I_2$ at the base points $b_i$ ($i=1,2,3$).}
	\label{fig:3I_2}
\end{figure}

For our applications to branes and representations, we also need to know the type of the five compact two-cycles $\bfV$, $\D_i$ ($i=1,\ldots,4$) and periods of the \K forms over them. The integrals of $\Omega_J$ over $\bfV$ and over $\F$ were computed {\it e.g.} in \cite{Gukov:2010sw}. They can be expressed as the following set of relations:
\bea\label{integral-BunG}
\int_{\bfV}\frac{\omega_I}{2\pi}&=\frac12-|\talpha_p| ~,\quad & \textrm{diag}(\talpha_p,-\talpha_p)\sim \a_p\in T~,\cr
\int_{\bfV}\frac{\omega_J}{2\pi}&=-\mathrm{sign}(\talpha_p)\tbeta_p~,& \textrm{diag}(\tbeta_p,-\tbeta_p)\sim \b_p\in \frakt~, \cr
\int_{\bfV}\frac{\omega_K}{2\pi}&=-\mathrm{sign}(\talpha_p)\tgamma_p ~,& \textrm{diag}(\tgamma_p,-\tgamma_p)\sim \g_p\in \frakt~
\eea
and
\be
\label{generic-fiber-evaluation}
\int_{\F}\frac{\omega_I}{2\pi} =1~,\qquad \int_{\F} \frac{\omega_J}{2\pi}=0=\int_{\F}\frac{\omega_K}{2\pi}~,
\ee
where in the latter we used the fact that the Hitchin fiber $\F$ is holomorphic in complex structure $I$ and Lagrangian with respect to $\Omega_I$ for any $(\alpha_p,\beta_p,\gamma_p)$. We assume that $\talpha_p$ takes its value in the range $-\frac12<\talpha_p\le\frac12$.
Although we did not compute the periods of the 2-forms \eqref{holomorphic-2-form} and \eqref{K2form} over exceptional divisors $\D_i$ directly, we claim
\be
\label{integral-D}
\frac{|\talpha_p|}2 =\int_{\D_i}\frac{\omega_I}{2\pi}~,\qquad
\mathrm{sign}(\talpha_p)\frac{\tbeta_p}2 =\int_{\D_i}\frac{\omega_J}{2\pi}~,\qquad
\mathrm{sign}(\talpha_p)\frac{\tgamma_p}2 =\int_{\D_i}\frac{\omega_K}{2\pi}~,
\ee
independently of $i=1,2,3,4$.
One way to justify this claim is to compute the periods for small values of $\gamma_p + i \alpha_p \approx 0$, \textit{i.e.}\ for $\tilde{t} \approx 1$. Another way is to use \eqref{fiber-class-rel} together with the symmetries of $\MH(C_p,G)$ that we discuss next.
The formulae above are compatible with the fact that the Weyl group symmetry of the ramification parameters given by an overall sign change \be\label{ramification-Weyl}(\talpha_p, \tbeta_p, \tgamma_p) \to(-\talpha_p,-\tbeta_p,-\tgamma_p)\ee leaves the moduli space completely invariant.

Furthermore, the ``quantum'' parameter that complexifies a \K parameter can be understood as the period of the $B$-field in a 2d sigma-model over $\D_i$
\be\label{eta}
\mathrm{sign}(\talpha_p)\teta_p =\int_{\D_i}\frac{B}{2\pi}~~,\qquad\qquad \textrm{diag}(\teta_p,-\teta_p)\sim \eta_p\in T^\vee~.
\ee
In the following, we often use the parameters $(\talpha_p,\tbeta_p,\tgamma_p,\teta_p)\in S^1\times\bR\times\bR\times S^1$ and the quadruple $(\a_p,\b_p,\g_p,\eta_p)\in T\times\frakt\times\frakt\times T^\vee$ of the tame ramification \eqref{tame}  at $p\in C$ in the same meaning.

\paragraph{Symmetries}
The target space \eqref{targetX} of our sigma-model has the symmetry group\footnote{The symmetry of the $A$-model can be larger or smaller than the group of geometric symmetries. It can be larger due to quantum symmetries not directly visible from geometry, and it can be smaller if some geometric symmetries are $Q$-exact from the $A$-model viewpoint.}
\be\label{sym-group}
\Xi \times \MCG (C_p)  =  \mathbb{Z}_2 \times \mathbb{Z}_2 \times \SL(2,\mathbb{Z})
\ee
where $\Xi = \mathbb{Z}_2 \times \mathbb{Z}_2$ is the group of ``sign changes'' generated by twists of a Higgs bundle $E \to C_p$ by line bundles of order 2. Abusing notation, this group can be identified with $H^1(C,\Z_2)=\Z_2\oplus\Z_2$ where $\Z_2$ is the center of $\SU(2)$. Obviously, $\SL(2,\mathbb{Z})$ is the mapping class of the (punctured) torus:
\be
\MCG (C_p)  \cong  \SL(2,\mathbb{Z})~.
\ee
Both $\Xi$ and $\MCG (C_p)$ are symmetries in all complex and symplectic structures. In particular, in what follows, we will need their explicit presentations as holomorphic symplectomorphisms with respect to $\Omega_J$.

In complex structure $J$, the ``sign changes'' $\Xi= \mathbb{Z}_2 \times \mathbb{Z}_2$ are holomorphic involutions, and its generators $\xi_1$, $\xi_2$ and their combination $\xi_3:=\xi_1\circ \xi_2$ act as
\bea\label{sign-changes}
\xi_1 &: \quad (x,y,z) \mapsto (-x,y,-z)~,\cr
\xi_2 &: \quad (x,y,z) \mapsto (x,-y,-z)~,\cr
\xi_3&: \quad (x,y,z) \mapsto (-x,-y,z)~,
\eea
respectively. The ``sign changes'' symmetry plays a very important role to understand mirror symmetry \cite{Gukov:2010sw} and connections to 4d physics in \S\ref{sec:4d}.

The symmetry group $\Xi$ leaves $\bfV$ invariant (as a set, not pointwise) and acts on the exceptional divisors $\D_i$ as follows:
\bea\label{signchange-Di}
\xi_1 &: \D_1 \leftrightarrow \D_2 \quad \textrm{and} \quad  \D_3 \leftrightarrow \D_4~, \\
\xi_2 &: \D_1 \leftrightarrow \D_3 \quad \textrm{and} \quad  \D_2 \leftrightarrow  \D_4~, \\
\xi_3 &: \D_1 \leftrightarrow \D_4 \quad \textrm{and} \quad  \D_2 \leftrightarrow \D_3~.
\eea
This symmetry, illustrated in Figure~\ref{fig:Hitchin-fibration}, provides supporting evidence to our assumption in~\eqref{integral-D}.

In complex structure $I$, a point in the Hitchin base $\cB_H$ is invariant under $\Xi$ so that it acts on each fiber as translations of order two in the Hitchin fibration $\MH\to \cB_H$ \cite[\S3.5]{Frenkel:2007tx}. It acts freely on a generic fiber. On the other hand, $\xi_i$ acts on each irreducible component  of the singular fiber $\pi^{-1}(b_i)$, namely $\bfU_{2i-1}$ and $\bfU_{2i}$, respectively, where the fixed points are exactly the two double points. At the other singular fibers, it exchanges the two double points and swaps the two irreducible components
\be\label{signchange-Ui}
\xi_i: \bfU_{2i+1} \leftrightarrow  \bfU_{2i+2}\quad \textrm{and} \quad \bfU_{2i+3} \leftrightarrow  \bfU_{2i+4}~,
\ee
where the indices of $\bfU$ are counted modulo $6$. This is consistent with the homology classes \eqref{U-homology} and the actions \eqref{signchange-Di}.

The action of $\SL(2,\Z)$ on the eigenvalues of the holonomies $\rho(\frakm)$ and $\rho(\frakl)$ is indeed given in \eqref{SL2Z-coord}. In particular, the non-trivial central element $-1$ of $\SL(2,\mathbb{Z})$ indeed exchanges the eigenvalues of the holonomies $\rho(\frakm)$ and $\rho(\frakl)$ as well as the one around the puncture \eqref{monodromy2} to their inverses. Therefore, it acts as the Weyl group symmetry of $\SL(2,\C)$. Subsequently, the trace coordinates $x,y,z$ are invariant under the non-trivial central element $-1$ so that $\SL(2,\Z)$ acts projectively on the coordinate ring $\OO(\X)$ holomorphic in complex structure $J$.
However, the eigenvalues of the holonomy around the puncture are exchanged, which we denote
\be\label{iota-classical}
\iota:\tilde{t}\to\wt  t^{-1}~.
\ee
A quotient of $\MCG (C_p) \cong \SL(2,\mathbb{Z})$ by the center is $\PSL(2,\mathbb{Z})=\SL(2,\mathbb{Z}) / \pm 1$, which is the mapping class group of a 4-punctured sphere.
In order to find an explicit presentation of $\PSL(2,\mathbb{Z})$, it is convenient to note that $T^2 \to S^2$ is a double cover branched at 4 points, {\it cf.} \eqref{Hitchin-torus}
\be
\label{PSL2ZvsBr3}
\PSL(2,\mathbb{Z})  \cong  \Br_3 / \mathcal{Z}
\ee
where the second equality is a well-known relation to the Artin braid group $\Br_3$. In terms of standard generators $\tau_+$ and $\tau_-^{-1}$, which satisfy the braid relation $\tau_+ \tau_-^{-1} \tau_+ = \tau_-^{-1} \tau_+ \tau_-^{-1}$, the center $\mathcal{Z}$ of $\Br_3$ is generated by $(\tau_+ \tau_-^{-1})^3$. Under the surjective map onto $\PSL(2,\mathbb{Z})$, we have
\be\label{tau-pm}
\tau_+ \mapsto\begin{pmatrix} 1 & 0 \\ 1 & 1 \end{pmatrix},
\qquad
\tau_- \mapsto
\begin{pmatrix} 1 & 1 \\ 0 & 1 \end{pmatrix}
\ee
and
\be
\sigma:=\tau_+ \tau_-^{-1} \tau_+ = \tau_-^{-1} \tau_+ \tau_-^{-1} \mapsto
 \begin{pmatrix} 0 & -1 \\ 1 & 0 \end{pmatrix},
\qquad
\tau_+ \tau_-^{-1}  \mapsto
\begin{pmatrix} 1 & -1 \\ 1 & 0 \end{pmatrix}~.
\ee
In the quotient~\eqref{PSL2ZvsBr3}, the latter two elements have order 2 and 3, respectively.

Using \eqref{PSL2ZvsBr3}, we can relate our present problem to the mapping class group action on the character variety of the 4-punctured sphere\footnote{In the notations of \cite{Gukov:2007ck} we need to take $(x_1,x_2,x_3) = (-x,-y,-z)$, $\theta_1 = \theta_2 = \theta_3 = 0$, and $\theta_4 = -2-\tilde{t}^2 - \tilde{t}^{-2}$.} which is also a cubic surface of the form \eqref{toruscubic} and on various branes (submanifolds) on this surface \cite{Gukov:2007ck}:
\bea\label{MCG-classical}
\tau_+ &: \quad
(x,y,z) \mapsto (x,xy-z,y)~,\cr
\tau_- &: \quad
(x,y,z) \mapsto (xy-z,y, x)~,\cr
\sigma&:  \quad
(x,y,z) \mapsto (y,x,xy-z) ~.
\eea
It is easy to verify that these are indeed polynomial automorphisms of the cubic surface \eqref{toruscubic} and that they satisfy the braid relation.

Note, the action of $\PSL(2,\Z)$ leaves $\bfV$ invariant (as a set, not pointwise) and acts on the exceptional divisors $\D_i$ as on the set of $\Z_2$ torsion points on an elliptic curve, In other words, $\D_1$ is fixed by the $\PSL(2,\Z)$, also as a set, not pointwise, whereas $\D_2$, $\D_3$ and $\D_4$ transform as points $\frac12$, $\frac\tau2$, and $\frac12+\frac\tau2$, respectively. In terms of the generators of $\PSL(2,\Z)$, we have explicit transformation rules
\bea\label{PSL-Di}
\tau_+ &: \D_2 \leftrightarrow \D_4 \quad \textrm{and} \quad  \D_1,~ \D_3\quad \textrm{are fixed as a set}~, \\
\tau_- &: \D_3 \leftrightarrow \D_4 \quad \textrm{and} \quad  \D_1,~  \D_2\quad \textrm{are fixed as a set}~, \\
\sigma &: \D_2 \leftrightarrow \D_3 \quad \textrm{and} \quad  \D_1,~ \D_4\quad \textrm{are fixed as a set}~.
\eea
In addition, these generators permute the singular fibers of type $I_2$ in the Hitchin fibration as $\frakS_3$:
\bea\label{PSL-Ui}
\begin{tikzpicture}
  \node (A)  at (0,1.6) {$\pi^{-1}(b_2)$};
    \node (B) at (-1.5,0) {$\pi^{-1}(b_1)$};
  \node (C)  at (1.5,0)  {$\pi^{-1}(b_3)$};
  \draw[<->] (A)  to node [left,blue] {$\sigma$} (B);
  \draw[<->] (B)  to node [below,blue] {$\tau_+$} (C);
    \draw[<->] (C)  to node [right,blue] {$\tau_-$} (A) ;
\path[->] (A) edge  [loop above] node [blue]{$\tau_+$} (A);
\path[->] (B) edge  [out=175,in=185,loop] node [left,blue] {$\tau_-$} ();
\path[->] (C) edge  [out=5,in=-5,loop] node[right,blue] {$\sigma$} ();
\end{tikzpicture}
\eea

In the above, we pointed out that $\bfV$ is invariant under both symmetries $\Xi$ and $\PSL(2,\mathbb{Z})$ only as a set, not pointwise. Also, the same is true about $\PSL(2,\mathbb{Z})$ action on $\D_1$. While in the case of $\bfV$ the reason for both claims is fairly clear ({\it e.g.} it is manifest in the $\tilde{t} \to 1$ limit \eqref{Hitchin-torus}), the fact that $\PSL(2,\mathbb{Z})$ fixes $\D_1$ only as a set and not pointwise is less obvious.In order to explain it, let us consider the limit $\tilde{t} = 1 + \epsilon$, with $\epsilon \ll 1$, and take $(x,y,z) = (2 + a, 2 + b, 2 + c)$. Then, for small values of $(a, b, c)$, the surface \eqref{toruscubic} looks like a quadric
$$
a^2 + b^2 + c^2 - 2 (a b + b c + c a)  =  4 \epsilon^2~,
$$
on which the generators $\tau_\pm$ act as linear reparametrizations:
$$
\begin{aligned}
\tau_+ :
(a, b, c) &\mapsto (a , 2a + 2b - c, b)~,
\\
\tau_-:
(a,b,c) &\mapsto (2a + 2b - c, b, a)~.
\end{aligned}
$$

\subsection{DAHA of rank one and its spherical algebra}
\label{sec:DAHA-main}

Now let us review a few necessary details of DAHA of rank one here.
Much like the Hecke algebra sits, loosely speaking, between the Weyl group and the braid group---in the sense that the latter two can be obtained by either specialization or by omitting some of the relations---DAHA sits in between the double affine Weyl group and the double affine braid group. This perspective, reviewed in e.g.~\cite{Gukov:2014gja}, will be useful to us in what follows. In Cartan type $A_1$, the \emph{double affine braid group} (a.k.a. the  \emph{elliptic braid group}), denoted $\BB_{q=1}(\Z_2)$, is simply the orbifold fundamental group of the quotient space $(T^2\backslash p)/\Z_2$, the quotient of a once-punctured torus by $\Z_2$. It is generated by three generators $X$, $Y$, and $T$, illustrated in Figure \ref{fig:orb-fund}:
\be
\label{EBG}
\pi^\textrm{orb}_1\Bigl((T^2\backslash p)/\Z_2\Bigr) = \Bigl( T, X, Y | ~ TXT = X^{-1},\ \  TY^{-1}T = Y,  \ \  Y^{-1}X^{-1}YXT^2 = 1\Bigr)~.
\ee
Its central extension, denoted $\BB(\Z_2)$, is obtained by deforming the last relation to $Y^{-1}X^{-1}YXT^2=q^{-1}$.

\begin{figure}[ht]
\begin{center}
\subfloat[][]{
\begin{tikzpicture}[scale = 3]
\drawgrid{0}{0}{0.2}
\Ta{0}{0}{$T$}[black][black]
\Yb{-1}{0}{$Y$}[black][left]
\Xb{0}{-1}{$X$}[black][above]
\end{tikzpicture}
} \qquad
\subfloat[][]{
\begin{tikzpicture}[scale=3]
\drawgrid{0}{0}{0.2}
\Ta{1}{0}{$T$}[blue!60!black]
\Xb{0}{-1}{$X$}[blue!60!black]
\Tb{0}{0}{$T$}[blue!60!black]
\Xinva{0}{0}{$X^{-1}$}[blue!60!black]
\Ta{0}{0}{$T$}[red!60!black]
\Yinvb{-1}{0}{$Y^{-1}$}[red!60!black]
\Tb{0}{1}{$T$}[red!60!black]
\Ya{0}{0}{$Y$}[red!60!black]
\end{tikzpicture}
} \qquad
\subfloat[][]{
\begin{tikzpicture}[scale = 3]
\drawgrid{0}{0}{0.2}
\Xb{0}{-1}{$X$}[green!60!black]
\Yb{0}{0}{$Y$}[green!60!black]
\Xinvb{0}{0}{$X^{-1}$}[green!60!black]
\Yinvb{-1}{0}{$Y^{-1}$}[green!60!black]
\Ta{0}{0}{$T$}[green!60!black]
\Tb{0}{0}{}[green!60!black]
\end{tikzpicture}
}
\end{center}
\caption[Relations in the orbifold fundamental group.]{Generators and relations in the orbifold fundamental group of the once-punctured torus. On the left, generators and relations are drawn on the double cover. The relations depicted are \textcolor{blue!60!black}{$ TXT = X^{-1}$}, \textcolor{red!60!black}{$TY^{-1}T = Y$}, and \textcolor{green!60!black}{$Y^{-1}X^{-1}YXT^2 = 1$}.}
\label{fig:orb-fund}
\end{figure}
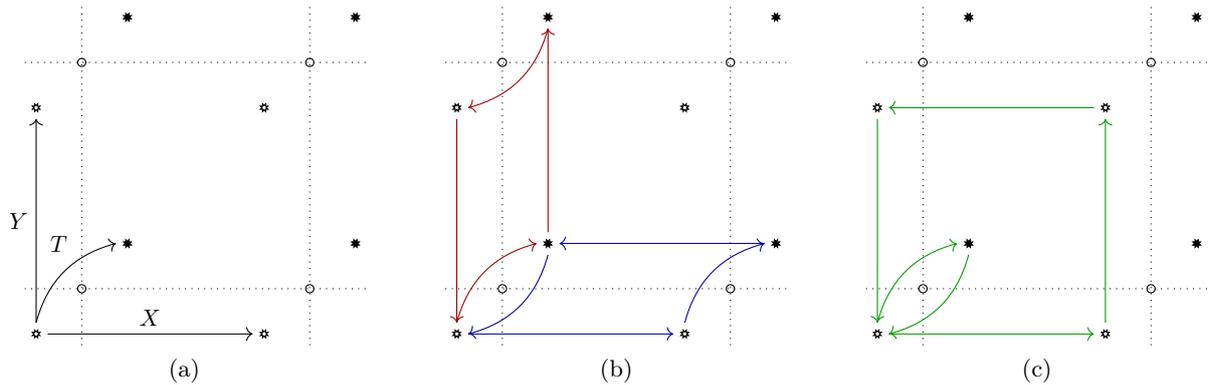

Then, rank-one DAHA $\HH(\Z_2)$ is obtained by imposing one more quadratic (``Hecke'') relation:
\be\label{DAHA1}
\HH(\Z_2)= \CR \bigl[ T^{\pm 1}, X^{\pm 1},Y^{\pm 1} \bigr] \bigg/
\left\{ \begin{array}{cc}TXT=X^{-1}~,
	& Y^{-1}X^{-1}YXT^2=q^{-1}~,\\
	TY^{-1}T=Y~,& (T-t)(T+t^{-1})=0\end{array}\right\} ~.
\ee
This involves the second deformation parameter $t$. Here $\CR$ is a ring of coefficients defined as follows.
Let $\C[q^{\pm\frac12},t^{\pm}]$ be the ring of Laurent polynomials in the formal parameters $q^{1/2}$ and~$t$, and consider a multiplicative system $M$ in $\C[q^{\pm\frac12},t^{\pm}]$ generated by elements of the form $(q^{\ell} t-q^{-\ell} t^{-1})$ for any non-negative integer $\ell \in \Z_{\ge0}$. We define the coefficient ring \CR{} to be the localization (or formal ``fraction'')\footnote{In other words, $\CR$ is the ring of rational functions in the formal parameters $q^{\frac12}$ and~$t$ where denominators are always elements in the multiplicative system $M$ such as
$$
\frac{f(X)}{( t- t^{-1})^{k_0}(q t-q^{-1} t^{-1})^{k_1}\cdots (q^{\ell} t-q^{-\ell} t^{-1})^{k_\ell}}~,\qquad f(X)\in \C[q^{\pm\frac12},t^{\pm},X^\pm]~.
$$} of the ring $\C[q^{\pm\frac12},t^{\pm}]$ at~$M$:
\deq[eq:coeffs]{
\CR =M^{-1}\C[q^{\pm\frac12},t^{\pm}]~.
}

This coefficient ring contains the two central generators of the algebra $\HH(\Z_2)$, $q$ and $t$, which can be thought of as continuous deformation parameters and start life (in any irreducible representation) as arbitrary complex numbers. Many remarkable things happen when these two parameters assume special values, as will be further discussed in the sequel. In a way, the behavior of the algebra and its representations under such specializations---and the match of this behavior to the $A$-brane category---is one of the most interesting aspects of the geometric/physical approach.

Another standard notation for the second deformation parameter (which is convenient for some of the specializations) is
\be
\label{central-charge}t=q^c~.
\ee
where $c$ is often called the ``central charge''. In what follows, we will use the shorthand notation $\HH=\HH(\Z_2)$ unless we wish to make a statement about DAHA of Cartan type other than~$A_1$.

For further details and properties of DAHA, we refer the reader to the fundamental book \cite{Cherednik-book}. The representation theory of DAHA there will be introduced throughout this section, as they emerge from physics and geometry.
Also, some basics of DAHA are assembled in Appendix \ref{app:DAHA}.

The construction of $\HH$ based on the punctured torus allows us to see the action of the symmetry group \eqref{sym-group}, and the symmetry plays a pivotal role in the geometric understanding of the representation theory of (spherical) DAHA in what follows. Under $\Xi$, the generators are transformed as
\bea\label{automorphisms1}
\begin{array}{cccccc}
\xi_1:&T\mapsto T,&X\mapsto -X,& Y\mapsto Y,&q\mapsto q,& t\mapsto t,\\
\xi_2:&T\mapsto T,&X\mapsto X,& Y\mapsto -Y,&q\mapsto q,& t\mapsto t~.\\
\end{array}
\eea
The mapping class group $\mathrm{SL}(2,\Z)$ acts on the generators of $\HH$ as follows\footnote{Although we follow the notation of \cite{Cherednik-book} for the transformations $\tau_\pm$ on the generators of DAHA here and in \eqref{DAHA-T}, we change matrix assignments to $\tau_\pm$ as in \eqref{tau-pm} and \eqref{tau-pm2} from \cite{Cherednik-book} since it is consistent with the projective action \eqref{PSL-Di}  of $\SL(2,\Z)$ on the exceptional divisors geometrically.}:
\bea\label{MCGloop}
\tau_+ &: \quad (X,Y,T)\mapsto (X , q^{-\frac12}XY , T) \cr
\tau_-  &: \quad (X,Y,T)\mapsto ( q^{\frac12}YX , Y , T)\cr
\sigma &: \quad  (X,Y,T)\mapsto (Y^{-1} , XT^2 , T)
\eea
Since $\sigma$ essentially exchanges the canonically conjugate variables $X$ and~$Y$, it is sometimes called the \emph{Fourier transform} of~$\HH$. Also,  $\HH$ enjoys the following (non-inner) involution,
\be\label{automorphisms2}
\tilde\iota:T\mapsto -T, \quad X\mapsto X,\quad  Y\mapsto Y,\quad q\mapsto q,\quad  t\mapsto t^{-1}~.\\
\ee

It is easy to check from the Hecke relation that $\mathbf{e}=({T+t^{-1}})/({t+t^{-1}})$ is  an idempotent element ($\mathbf{e}^2=\mathbf{e}$) of  $\HH$.
Then, the spherical subalgebra $\SH$ is defined by the idempotent projection
\deq{
\SH := \mathbf{e}\HH\mathbf{e}~.
}
The generators of $\SH$ can be identified with
\begin{align}\label{SH-gen}
x &=(1+t^2)\mathbf{e}X\mathbf{e}= (X+X^{-1})\mathbf{e}\\
y &=(1+t^{-2})\mathbf{e}Y\mathbf{e}= (Y+Y^{-1})\mathbf{e}\\
z &= (q^{-\frac12}Y^{-1}X + q^{\frac12}X^{-1}Y)\mathbf{e}=\frac{[x,y]_q}{(q^{-1}-q)}~,
\end{align}
and they satisfy relations
\bea\label{SH-rel}
[x,y]_q &= (q^{-1}-q)z\cr
{ } [y,z]_q &= (q^{-1}-q)x \cr
{ } [z,x]_q &= (q^{-1}-q)y \cr
q^{-1}x^2 + qy^2 + q^{-1}z^2-q^{-\frac12}xyz&=(q^{-\frac12}t-q^{\frac12}t^{-1})^2+(q^{\frac12} + q^{-\frac12})^2~,
\eea
where $q = e^{2\pi i\hbar}$ and the $q$-commutator is defined by
$$[a,b]_q:=q^{-\frac12}ab-q^{\frac12}ba~.$$
See {\it e.g.} \cite{terwilliger2012universal} for further details. The key point is that the spherical DAHA $\SH$ is commutative at the ``classical'' limit  $q=1$ while  the DAHA $\HH$ is not commutative even in the $q=1$ limit. Indeed, it is easy to see that in the ``classical'' limit $q \to 1$, the Casimir relation (the last one) in \eqref{SH-rel} becomes the equation for the cubic surface \eqref{toruscubic}:
\begin{equation}
\SH \xrightarrow[q\to 1]{} \OO(\MF(C_p,\SL(2,\C)))~.
\end{equation}
 Thus, $\SH$ is the deformation quantization $\OO^q(\X)$ of the coordinate ring~\eqref{toruscubic} of the moduli space of flat $\SL(2,\C)$-connections  $\X=\MF(C_p,\SL(2,\C))$ with respect to the Poisson bracket defined by~$\Omega_J$ \cite{Oblomkov:aa,oblomkov2004double}.

 Here, it is worth commenting on an important issue in the context of the deformation quantization of the coordinate ring on the affine cubic hypersurface of the form~\eqref{toruscubic}. It is clear that this equation is Weyl-group invariant, so that the monodromy parameter $\tilde{t}$ appears only through the symmetric combination $\tilde{t} + \tilde{t}^{-1}$, and that the same symmetry applies to the Poisson structure. Moreover, if we work with a specific value of $\tilde{t}$, we will obtain the deformation quantization at a specific value of the parameters, \textit{i.e.}\ for a specific choice of the central character (at least for the formal parameter $t$).

 Since the inputs to deformation quantization depend on~$\tilde{t}$ only in a $\Z_2$-invariant fashion, the output $\OO^q(\X_{\tilde{t}})$ will also have the corresponding symmetry. However, this clarifies that $\tilde{t} \neq t$, since the relations~\eqref{SH-rel} do \emph{not} depend symmetrically on $t$. The proper identification is
\begin{equation}\label{t-wtt}
   \tilde{t} = t q^{-\frac12}~,
\end{equation}
as will be made clear by the discussion of the formal outer automorphism $\iota$ below. There is no contradiction with the statement that $\SH$ is the deformation quantization of~$\OO(\X)$, since the classical limit of~$\SH$ still recovers the same commutative Poisson algebra.

 It is simple to check that the two involutions~\eqref{automorphisms1} straightforwardly reduce to the symmetry of $\SH$, which is the same as \eqref{sign-changes}. As in the classical case, the non-trivial central element $-1\in \SL(2,\Z)$ acts trivially on the generators of $\SH$, and the action of $\PSL(2,\Z)$ is quantized from \eqref{MCG-classical}
 \bea\label{MCG-quantum}
 \tau_+ &: \quad
 (x,y,z) \mapsto \Bigl(x,\frac{xy+yx}{q^{1/2}+q^{-1/2}}-z,y\Bigr)~,\cr
 \tau_- &: \quad
 (x,y,z) \mapsto \Bigl(\frac{xy+yx}{q^{1/2}+q^{-1/2}}-z,y, x\Bigr)~,\cr
 \sigma&:  \quad
 (x,y,z) \mapsto \Bigl(y,x,\frac{xy+yx}{q^{1/2}+q^{-1/2}}-z\Bigr) ~.
 \eea
 Thus, the symmetries $\Xi\times \PSL(2,\Z)$ can be seen in outer automorphisms of $\SH$. 
 The other outer automorphism
 $\tilde\iota$ in \eqref{automorphisms2} is somewhat more complicated; it does not preserve the idempotent, but it rather brings it into the other idempotent element
 \deq{
 \tilde\iota:\quad \mathbf{e} = \frac{T+t^{-1}}{t+t^{-1}} \mapsto \wt{\mathbf{e}}=\frac{-T+t}{t+t^{-1}}~.
 }
 Thus, $\tilde\iota$ maps $\SH$ to the other spherical subalgebra $\wt{\mathbf{e}}\HH\wt{\mathbf{e}}$ where the Casimir relations are different by $t\leftrightarrow t^{-1}$. However, the involution $\tilde\iota$ on $\HH$ does correspond in a sense to an outer automorphism of~\SH, which acts simply by
 \be\label{iota}
 \iota: t
  \mapsto qt^{-1}.
 \ee
 Indeed, it is easy to check that this map preserves the Casimir relation in~\eqref{SH-rel}. (Note that this automorphism only acts nontrivially when $q$ and $t$ are regarded as formal elements; it does not preserve the central character.)

In general, we are free to think of any commutative algebra as the coordinate ring of a certain affine space. In addition to the example above, we consider the case of $\MS=\C^\times\times\C^\times$ for the quantum torus algebra in Appendix \ref{app:qt}, and $\MS$ as 3d $\cN=4$ Coulomb branches in Appendix \ref{app:3dN=4} in this paper.  What is common between all of these examples are certain key properties of~\MS: First of all, it will always be a non-compact manifold, so that it has a large and interesting algebra $\OO(\MS)$ of holomorphic functions with polynomial growth at infinity. (In fact, in this paper, \MS{} will always be an affine variety over~$\C$.)
It will also be a \HK manifold, and an algebra is obtained by the deformation quantization of the coordinate ring of $\X$ with respect to a certain holomorphic symplectic form.
These conditions fit into the context of brane quantization \cite{Gukov:2008ve} in a  2d sigma-model. It is the central idea of this paper, and this will pave the way towards a geometric angle on the representation theory of $\SH$.

\subsection{\texorpdfstring{Canonical coisotropic branes in $A$-models}{Canonical coisotropic branes in A-models}}
\label{sec:Bcc}

Here, we will obtain the deformation quantization of the coordinate ring of $\X$ with respect to $\Omega_J$ by using the 2d $A$-model on a symplectic manifold $(\X,\omega_\X)$. The main character in our story is the \emph{canonical coisotropic brane}, denoted $\Bcc$. Eventually, we will investigate the representation theory of $\SH$ by the 2d $A$-model, but we begin by constructing the (presumably less familiar) canonical coisotropic brane $\Bcc$ here. Subsequently, we will discuss standard Lagrangian branes and some methods for computing spaces of morphisms in what follows. Our review is necessarily cursory;  for more details, we refer to the literature \cite{Gukov:2008ve,Gukov:2010sw}.

In  general, as was pointed out in~\cite{Kapustin:2001ij}, the $A$-model admits branes with support on coisotropic submanifolds which are equipped with a transverse holomorphic structure. The canonical coisotropic brane is supported on the target space $\X$ itself, which is a coisotropic submanifold of the target space $\X$ in a rather trivial way. More precisely, there is a family of such branes, labeled by a complex parameter
\be\label{hbar}
\hbar  =  |\hbar| e^{i \theta}~,
\ee
and we will identify it with the parameter of deformation quantizations by $q=e^{2\pi i\hbar}$. The fact that the support involves no additional choice is (at least part of) the reason for the term ``canonical.'' On a $2n$-dimensional target space, coisotropic branes can therefore be supported in dimension $n + 2j$ for integer $j$; when $n$ is even, there can be branes supported throughout the entire target. In our example, $n=2$,  so that no other coisotropic branes can occur just for dimension reasons.

In complex structure $\mathcal{I} =  I \cos \theta - K \sin \theta$, the data defining the brane $\Bcc$ is simply a holomorphic line bundle $\cL \to \X$, equipped with a connection whose curvature $F$ is of course equal to the first Chern class:
\be
\label{Bcc}\Bcc:\quad
\begin{tikzcd}
\cL \arrow[d ] \\
\X
\end{tikzcd}  \qquad\qquad c_1(\cL)=[F/2\pi]\in H^2(\X,\Z)~.
\ee
As usual, open strings ending on $\Bcc$ source the gauge-invariant combination $F + B$, where
\deq{
B \in H^2 (\X, \U(1))
}
is the 2-form $B$-field.
For our family of the canonical coisotropic branes $\Bcc$ parametrized by $\hbar$ on a symplectic manifold $(\X,\omega_\X)$, the values of $[B/2\pi]\in H^2(\X,\U(1))$ and the integral class $[F/2\pi]\in H^2(\X,\Z)$ are determined by the equation
\be\label{Bcc-Omega}
\Omega:=F+B+i\omega_\X=\frac{\Omega_J}{i\hbar}~,
\ee
so that at a generic value of $\hbar$ in \eqref{hbar} we can write
\begin{align}\label{generic-Bcc}
F + B & = \Re\,\Omega  =  \frac{1}{|\hbar|}( \omega_I\cos\theta - \omega_K\sin\theta)~,\cr
\omega_\X  &= \Im\,\Omega= - \frac{1}{|\hbar|}(\omega_I\sin\theta+\omega_K\cos\theta)~.
\end{align}
Since the \HK conditions ensure that $J=\omega_\X^{-1}(F+B)$, we have the condition for $\Bcc$ to be a coisotropic $A$-brane \cite{Kapustin:2001ij}
\be
\label{coisotropic-cond} \big( \omega_{\X}^{-1} (B + F) \big)^2  =  J^2  =  -1~.
\ee
In particular, when $\hbar$ is real, $\omega_\X = \omega_K$ and $\Bcc$ is a brane of type $(B,A,A)$, whereas for $\hbar$ purely imaginary, $\omega_\X = \omega_I$ and $\Bcc$ is an $(A,A,B)$-brane. \Bcc{} is also called ``canonical'' because its extra data corresponds in this fashion to the holomorphic symplectic structure.

Now comes the key point.
Under this circumstance, the space $\Hom(\Bcc,\Bcc)$ of open $(\Bcc,\Bcc)$ strings with both ends
on the canonical coisotropic brane $\Bcc$ is a non-commutative deformation of the Dolbeault cohomology $H_{\overline\partial}^{0,*}(\X)$ when $\X$ is regarded as a complex manifold with $J$, and we are interested in its zeroth degree, namely the space of holomorphic functions.
Moreover, for $\X$ an affine variety, a suitable condition at infinity for a ``good $A$-model'' is to allow only functions of polynomial growth.
In the presence of non-trivial background $F+B \ne 0$, $\Hom^0(\Bcc,\Bcc)$ is therefore the deformation quantization of the coordinate ring on $\X$,
holomorphic in complex structure $J$ \cite{aldi2005coisotropic,Gukov:2008ve}. \footnote{Since we are mainly interested in the zeroth degree of morphism spaces, we will usually omit the superscript $0$, meaning $\Hom=\Hom^0$ unless it is specified.}

In general, for any brane $\brane$, in either the $A$-model or the $B$-model, the space of open strings states $\End(\brane)$ forms an algebra. This can be easily understood by considering the process of joining open strings, illustrated in Figure~\ref{fig:Bcc-algebra-rep} (left).
However, generically, this algebra of $(\brane,\brane)$ strings is rather simple and not very interesting. Even if the brane $\brane$ is ``big enough,'' the algebra $\End(\brane)$ can be interesting, but may be hard to identify or relate to more familiar algebras. For example, various $(B,B,B)$ branes represented by hyper-holomorphic sheaves in \cite{Gukov:2010sw} lead to interesting endomorphism algebras, but apart from some special cases it is hard to recognize them in the world of more familiar algebras.
What makes the canonical coisotropic brane special is that the algebra~$\End(\Bcc)$ can be identified with the deformation quantization $\OO^q(\X)$ of the target manifold $\X$~\cite{Kapustin:2006pk}.

\begin{figure}[ht]
	 \centering
		 \includegraphics[width=4.5cm]{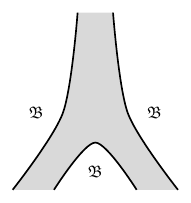} \hspace{3cm}
		 \includegraphics[width=4.5cm]{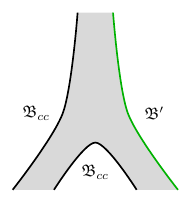}
 \caption{(Left) Open strings that start and end on the same brane $\brane$ form an algebra. \\ (Right) Joining a $(\Bcc,\Bcc)$-string with a $(\Bcc,\brane')$-string leads to another  $(\Bcc,\brane')$-string.}
 \label{fig:Bcc-algebra-rep}
\end{figure}

\subsubsection{\texorpdfstring{Spherical DAHA as the algebra of $(\Bcc,\Bcc)$-strings}{Spherical DAHA as the algebra of (Bcc,Bcc)-strings}}\label{sec:Bcc-SH}

In our example, the target space  $\X=\MF(C_p,\SL(2,\C))$  is the moduli space of flat $\SL(2,\C)$-connections over a punctured torus, which is a \HK manifold. Then, by construction, the algebra of $(\Bcc,\Bcc)$ strings is the deformation quantization $\OO^q(\X)$ of the coordinate ring on $\X$ with respect to $\Omega_J$, which is the spherical DAHA $\SH$.

The parameter $q$ of $\SH$ is identified with $\hbar$ in the data \eqref{Bcc-Omega} of $\Bcc$ via $q=\exp(2\pi i\hbar)$.
It is worth emphasizing that for a generic value of $q\in \bC^\times$, the $B$-field needs to be turned on in the sigma-model.  In fact, the target admits the Hitchin fibration \eqref{Hitchin-fibration} where a generic fiber is a two-torus $T^2$. Since a generic fiber $\bfF$ is Lagrangian with respect to $\omega_J$ and $\omega_K$ and it sees only $\omega_I$, the evaluation of $\Omega$ in \eqref{Bcc-Omega} over $\bfF$   yields
$$
\int_{\bfF}\frac{\Omega}{2\pi}=\frac1\hbar~,
$$
where $F+B$ is responsible for its real part.
Because $[F/2\pi]\in H^2(\X,\Z)$ is an element of the second integral cohomology class, the $B$-field needs to be switched on unless the real value  of $1/\hbar$ is an integer. Thus, a 2d $A$-model has to incorporate the $B$-field for a generic value of $\hbar$, and we will moreover witness that the $B$-field plays a more important role in the subsequent sections.

The parameter $t$ of $\SH$ is related to the ramification parameters of the target space. In fact, the monodromy parameter \eqref{monodromy2} around the puncture can be expressed by the ramification parameters \eqref{monodromy} as $$\tilde{t}=\exp(-\pi(\tgamma_p+i \talpha_p))~.$$ Furthermore, \eqref{t-wtt} compares the monodromy parameter $\tilde{t}$ with the central character $t$ of $\SH$. Then, it is easy to see from \eqref{integral-D} that the evaluation of \eqref{Bcc-Omega} on an exceptional divisor yields
\be \label{c-ramification}
\frac{1}{2\pi}\int_{\D_i}F+B+i\omega_\X=\int_{\D_i}\frac{\Omega_J}{2\pi i\hbar}=\frac{\tgamma_p+i \talpha_p}{2i\hbar}=-c+\frac12~.
\ee
where $c$ is the central charge in \eqref{central-charge}.

The canonical coisotropic brane enjoys the symmetries $\Xi\times \PSL(2,\Z)$ of the target space $\X$ analyzed in \S\ref{sec:target}, which become the outer automorphisms of $\SH$ given by \eqref{sign-changes} and \eqref{MCG-quantum}. The symmetry \eqref{iota} of $\SH$ is indeed the Weyl group symmetry $\tilde{t}\leftrightarrow \tilde{t}^{-1}$ of the monodromy matrix \eqref{monodromy2}. In fact, the Weyl group symmetry \eqref{ramification-Weyl} of the ramification parameters preserves the target space. Since the canonical coisotropic brane is sensitive only to $(\talpha_p,\tgamma_p)$ or $\tilde{t}$, the symmetry \eqref{iota} of $\SH$ is equivalent to the fact that the canonical coisotropic branes supported on $\X_{\tilde{t}}$ and $\X_{\tilde{t}^{-1}}$ related by the Weyl group symmetry give rise to the isomorphic algebra
\begin{equation}
  \End(\Bcc)\cong \SH\cong\End(\iota(\Bcc))~.
\end{equation}

\subsection{\texorpdfstring{Lagrangian $A$-branes and modules of $\OO^q(\X)$}{Lagrangian A-branes and modules of Oq(X)}}
\label{sec:Lagrangian}

Now we lay out the approach to the representation theory of $\OO^q(\X)$ by the 2d $A$-model on $(\X,\omega_\X)$.
This subsection also serves as a lightning review about the category of $A$-branes.

The approach to the representation theory of $\OO^q$ from the 2d $A$-model arises from a simple idea: given an open string boundary condition (or an $A$-brane) $\brane'$, the space of $(\Bcc,\brane')$ open strings forms a vector space. As in  the right of Figure~\ref{fig:Bcc-algebra-rep}, a joining of $(\Bcc,\Bcc)$ and $(\Bcc,\brane')$ string leads to another $(\Bcc,\brane')$ string. This implies that the space of $(\Bcc,\brane')$ strings receives an action of the algebra of $(\Bcc,\Bcc)$ strings \cite{Gukov:2008ve}. Namely, other $A$-branes $\brane'$ on~$\X$ give rise to modules for $\OO^q(\X)$:
\be\label{HHrep}
\begin{array}{ccc}
 \OO^q(\X) & = & \Hom(\Bcc,\Bcc) \\
 \rotatebox[origin=c]{180}{$\circlearrowright$} & & \rotatebox[origin=c]{180}{$\circlearrowright$}  \\
 \repB' & = & \Hom(\Bcc,\brane')
\end{array}
\ee
In our example, supports of other branes $\frakB'$ are always Lagrangian submanifolds so that we will review Lagrangian $A$-branes $\brL$ in the next subsection.
If the support of $\frakB'$ is a Lagrangian submanifold contained in the fixed point set of an antiholomorphic involution $\zeta:\X\to \X$ with $\zeta^{*}J=-J$, then the corresponding representation admits unitarity.

We now briefly recall a few standard facts about Lagrangian $A$-branes \cite{floer1988morse,floer1989witten} and their mathematical incarnation, the Fukaya category $\Fuk(\MS,\omega_\MS)$. For more detail, the reader is referred to the literature, which is  substantial; \cite{Auroux} is a good starting point, or~\cite{KontsevichICM} for the fundamentals of homological mirror symmetry.

The \emph{Lagrangian Grassmannian}, denoted \LGr, of a symplectic vector space parameterizes the collection of its Lagrangian subspaces. We can obtain a description of this space by thinking of the standard symplectic vector space $(\R^{2n},\omega)$, which can be equipped with a metric via a contractible choice. By the two-of-three property, the group preserving both the symplectic and orthogonal structures is $\U(n)$, which therefore acts on $\LGr(2n)$; the subgroup stabilizing a fixed Lagrangian subspace is $\Or(n)$, so that
\deq{
\LGr(2n) = \U(n) / \Or(n)~.
}
There is furthermore an obvious map
\deq{
{\det}^2: \LGr(2n) \to \U(1)
}
which can be shown to induce an isomorphism of fundamental groups. The \emph{Maslov index}~\cite{ArnoldMaslov} of a loop in~$\LGr(2n)$ is its image under this induced map in $\pi_1(\U(1)) \cong \Z$; it is responsible for both obstructions and gradings in the Fukaya category. The universal cover $\tLGr(2n)$ of $\LGr(2n)$ thus has deck group $\Z$, and the Maslov index of a loop is simply the element of $\Z$ that connects the endpoints of a lift to~$\tLGr(2n)$.

Let $(\X,\omega_\X)$ be a symplectic manifold with zero first Chern class (as is obviously  the case in our \HK examples). There is a bundle
\deq{
\LGr(\X) \to \X
}
whose fiber over $x\in\X$ is $\LGr(T_x\X)$. (We hope the reader will forgive the moderately abusive notation.)
We can furthermore define a bundle $\tLGr(\X)$, which is a covering space of the total space $\LGr(\X)$, such that the covering map is a bundle map and restricts over each fiber to the universal covering map.

A Lagrangian subspace $\L\subset \X$ comes with an obvious lift
\begin{equation}
\begin{tikzcd}
& \LGr(\X) \ar[d] \\
\L \arrow{r}{\subset} \arrow[hook]{ru}{} & \X
\end{tikzcd}
\end{equation}
defined by the Lagrangian subbundle $T\L \subset \left.T\X\right|_\L$. Lifting this canonical map to $\tLGr(\X)$ is obstructed by the image of $\pi_1(\L)$ under the Maslov map, which is an element of $H^1(\L,\Z)$ called the \emph{Maslov class}. Lagrangians with zero Maslov class admit so-called \emph{graded lifts}, which are maps
\begin{equation}
\begin{tikzcd}
\tLGr(\X) \arrow{r}{\cdot/\Z} & \LGr(\X) \ar[d] \\
\L \arrow[dashed]{u}{g} \arrow{r}{\subset} \arrow[hook]{ru}{} & \X
\end{tikzcd}
\end{equation}
making the square commute. The set of such maps is naturally a $\Z$-torsor under the action of deck transformations, but no canonical choice of graded lift exists. Given a Lagrangian object of $\ABrane(\X,\omega_\X)$, the set of graded lifts plays the role of its shifts.

  A (rank-one) Lagrangian object of $\ABrane(\X,\omega_\X)$ is supported on a Lagrangian submanifold $\L\subset \X$ of zero Maslov class, which is considered up to Hamiltonian isotopy. The additional data required to define a Lagrangian $A$-brane consists of a ``Chan-Paton'' bundle with unitary connection; a flat $\Spin^c$ structure on~$\L$; and a grade lift.
A Chan-Paton bundle for a Lagrangian $A$-brane is generally endowed with a flat $\Spin^c$ structure \cite{Witten:1998cd,Freed:1999vc,Katz:2002gh,Gukov:2008ve}.
A $\Spin^c$ structure arises if $\cL'$ does not exist as a line bundle, but is obstructed
by the same cocycle that obstructs the existence of the square root $K_\L^{-1/2}$ of the canonical bundle over the Lagrangian $\L$. Namely, putative transition functions $g_{i j}$ and $w_{i j}$ of $\cL'$ and $K_\L^{-1/2}$, respectively, obey  $g_{i j} g_{j k} g_{k i}=\phi_{i j k}=w_{i j} w_{j k} w_{k i}$ where $\phi_{i j k}=\pm1$. In this case, the cocycle cancels out in the transition functions $g_{i j} w_{i j}$ of an honest vector bundle $\cL'\otimes K_\L^{-1/2}\to \L$, called a $\Spin^c$ structure. The $K_\L^{-1/2}$ part in a $\Spin^c$ structure arises from boundary fermions of the open worldsheet \cite[\S5]{Hori:2000ic} \cite[\S3.2]{Herbst:2008jq}, which gives rise to a $\Spin^c$ structure of the normal bundle to the support of a brane. (This proposal is explicitly checked by Hemisphere partition functions in \cite{Kim:2013ola}.) Thus, the canonical coisotropic brane $\Bcc$ is endowed with an ordinary line bundle whereas a Lagrangian $A$-brane is equipped with a $\Spin^c$ structure.
Since most of the Lagrangian submanifolds in this paper are of real two dimensions, there always exists a spin bundle of $\L$, which is a square-root of the canonical bundle $K_\L^{\pm1/2}$ of $\L$, though it is not necessarily unique. Hence, both $\cL'$ and $K_\L^{-1/2}$ exist as genuine line bundles in most of the examples in this paper and we treat their tensor product $\cL'\otimes K_\L^{-1/2}$ as a $\Spin^c$ structure. However, a subtlety arises when an $A$-brane degenerates into a different spin structure, which will be considered in \S\ref{sec:bound-state}. Moreover, a Lagrangian $A$-brane is endowed with a flat $\Spin^c$ structure:
if $\cL'$ exists as a line bundle, the curvature $F'_\L$ of $\cL'$ must obey a gauge-invariant version of the flatness condition
\deq[deformed-flat]{
F'_\L + \left.B\right|_\L
=0~,
}
in the presence of a $B$-field. Even if $\cL'$ does not exist as a line bundle, its square $(\cL')^2$ does so that a half of the curvature of $(\cL')^2$ is subject to \eqref{deformed-flat}.
In sum, for a Lagrangian $A$-brane, we have a Chan-Paton bundle
\be
\label{BL}\brL:\quad
\begin{tikzcd}
\cL'\otimes K_\L^{-1/2} \arrow[d ] \\
\L
\end{tikzcd}
\ee
with a flat $\Spin^c$ structure \eqref{deformed-flat}.
We will sometimes denote a Chan-Paton bundle by $\brL\to \L$, abusing notation.
Morphisms between Lagrangian objects are defined in the usual way using the Floer--Fukaya complex generated by intersection points between the Lagrangians; see~\cite{Auroux} for details.

Defining the space of morphisms between Lagrangian and coisotropic objects is a bit more subtle, and is discussed in detail for flat targets in~\cite{aldi2005coisotropic}. The essential idea is that the morphism space should be thought of as related to the space of holomorphic functions on the intersection, with respect to the transverse holomorphic structure on coisotropic objects. For Lagrangian objects, this complex structure obviously plays no role, but instanton corrections can appear, in the guise of the contributions of holomorphic disks to the differential in the Floer--Fukaya complex. On the other hand, for \Bcc, the transverse holomorphic structure is just a standard complex structure and plays an essential role, but instanton corrections are forbidden. In the case of general coisotropic branes, both phenomena can be expected to be relevant. (For some further discussion of this fact from the worldsheet perspective, as well as generalizations to branes of higher rank, see~\cite{Herbst:2010js}.)

In a \HK manifold, we can make use of a $B$-model analysis as in~\cite{Gukov:2008ve,Gukov:2010sw} to compute the dimension of open strings. The dimension of the representation space $\scL:=\Hom(\Bcc,\brL)$ associated to a compact Lagrangian brane $\brL$ can be computed with the help of the Grothendieck--Riemann--Roch formula:
\bea\label{dimension2}
\dim \scL& = \dim H^0(\L,\Bcc\otimes \brL^{-1}) \\
&= \int_{\L} \text{ch} (\Bcc) \wedge \text{ch} (\brL^{-1}) \wedge \Td(T\bfL)~,
\eea
Here we denote, by $\frakB$, a bundle for the corresponding brane including an effect of the $B$-field, abusing notation.

If a Lagrangian $\L$ is of real two dimensions, then the Todd class $\Td(T\bfL)=\text{ch}(K_\L^{-1/2})\widehat A(T\bfL)$ is equivalent to $\text{ch}(K_\L^{-1/2})$. Consequently, the formula becomes a very simple form
\be \label{dimension}
\dim \scL=\int_{\L}\text{ch} (\Bcc)= \int_{\L}\frac{F+B}{2\pi}~,
\ee
for a real two-dimensional Lagrangian $\bfL$.

As explained in \cite{Gukov:2008ve}, for a Lagrangian brane $\brL$, the space of open strings $\Hom(\Bcc,\brL)$ can be understood as a geometric quantization of $\L$ with a curvature on a ``prequantum line bundle'' $\Bcc\otimes \brL^{-1}$. If \X{} is a complexification of~$\L$ in the sense of~\cite{Gukov:2008ve}, then the action of $\End(\Bcc)$ on the quantization $\Hom(\Bcc,\brL)$ plays the role of the quantized algebra of operators.

Finally, let us mention a brief word about coefficients. In general, the Fukaya category is defined with coefficients in the Novikov ring; this is necessary because the sums over instanton contributions that define the differential are formal and not necessarily guaranteed to converge. Similarly, deformation quantization of a Poisson manifold~\cite{Groenewold,Fedosov,KontsevichPoisson} is not guaranteed to produce convergent series, but only a formal deformation in general. We will restrict ourselves to target spaces \X{} for which a  ``good $A$-model'' is expected to exist, meaning that all the series involved should in fact converge. The existence of a complete \HK metric on~\X{} should be sufficient to ensure this; see~\cite{Gukov:2008ve} for further discussion of this issue.

We will proceed to compare the two categories $\ABrane(\X,\omega_\X)$ and $\Rep(\SH)$ via the brane quantization.\footnote{
Note, that spherical DAHA is Morita-equivalent to DAHA \eqref{DAHA1}, \textit{i.e} the category of representations of DAHA is equivalent to the category of representations of its spherical subalgebra \cite{oblomkov2004double}:
\be\label{Morita-equiv}
\Rep(\HH) \cong\Rep(\SH)~.
\ee
See also \eqref{Morita} for the explanation from the 2d sigma-model.}
For the comparison, the symmetries play a crucial role. In fact, the symmetries of the target space $\X$ become the group of auto-equivalences of the categories. More concretely, we will investigate the action of $\Xi\times \PSL(2,\Z)$ (\eqref{sign-changes} and \eqref{MCG-quantum}) and the Weyl group $\Z_2$ generated by $\iota$ \eqref{iota} on both categories.

Now we set up the framework so that we will start our expedition to ``see'' and ``touch'' representations of $\SH$ as if they were geometric objects in the target $\X$.

\subsection{\texorpdfstring{$(A,B,A)$-branes for polynomial representations}{(A,B,A)-branes for polynomial representations}}
\label{sec:poly-rep}
DAHA was introduced by Cherednik in the study of Macdonald polynomials from the viewpoint of representation theory \cite{Cherednik-daha} in which the distinguished infinite-dimensional representation on the ring $\PR := \CR[X^\pm]^{\Z_2}$ of symmetric Laurent polynomials, called \emph{polynomial representation}, plays an important role. (See also \cite{cherednik2017galois} for finite-dimensional modules.) Here, Laurent polynomials in a single variable $X$ over~$\CR$ are symmetrized under the inversion $\Z_2:X \mapsto X^{-1}$ so that the ring can also be expressed as $\PR = \CR[X+X^{-1}]$.
This polynomial representation of $\SH$ is defined by the following formulas:
\begin{equation}
\label{poly-rep-sym}
\begin{aligned}
x&\mapsto X+X^{-1}, \\
\pol:\SH \to \End(\PR),  \quad  y&\mapsto \frac{tX-t^{-1}X^{-1}}{X-X^{-1}}\varpi +\frac{t^{-1}X-tX^{-1}}{X-X^{-1}}\varpi^{-1},\\
z&\mapsto q^{\frac12}X\frac{tX-t^{-1}X^{-1}}{X-X^{-1}}\varpi +q^{\frac12}X^{-1}\frac{t^{-1}X-tX^{-1}}{X-X^{-1}}\varpi^{-1} ,
\end{aligned}
\end{equation}
where $\varpi^{\pm}(X)=q^{\pm}X$ is the exponentiated degree operator, often called the $q$-shift operator, that appeared in \eqref{shift-operator} for the quantum torus algebra. In particular, $\pol(y)$ is the so-called \emph{Macdonald difference operator}, whose eigenfunctions are \emph{symmetric Macdonald polynomials} \cite{macdonald1998symmetric,Cherednik-book}. The Macdonald functions of type  $A_1$ are labeled by spin-$\frac{j}{2}$ representations, and can be expressed in terms of the basic hypergeometric series
\be\label{sym-Macdonald-A1}
P_{j}(X;q,t):=X^j\, {}_2\phi_1(q^{-2j}, t^2;q^{-2j+2} t^{-2};q^2  ;q^2t^{-2}X^{-2})~.
\ee
They are acted on  diagonally by the Macdonald difference operator, with eigenvalues
\be\label{Macdonald-eigen}
\pol(y)\cdot P_{j}(X;q,t)=(q^j t+q^{-j} t^{-1})P_{j}(X;q,t)~.
\ee
Under this basis, the actions of the other generators are
\bea\label{x-z-action}
\pol(x)\cdot P_j(X; q, t)=&P_{j+1}(X; q, t)+\frac{\left(1-q^{2 j}\right)\left(1-q^{2 j-2} t^{4}\right)}{\left(1-q^{2j-2} t^{2}\right)\left(1-q^{2 j} t^{2}\right)} P_{j-1}(X; q, t)~,\cr
\pol(z)\cdot P_j(X; q, t)=&t q^{j+\frac{1}{2}} P_{j+1}(X; q, t)+t^{-1} q^{-j+\frac{1}{2}} \frac{\left(1-q^{2 j}\right)\left(1-q^{2 j-2} t^{4}\right)}{\left(1-q^{2 j-2} t^{2}\right)\left(1-q^{2 j} t^{2}\right)} P_{j-1}(X; q, t)~.
\eea

In fact, the Macdonald polynomials $P_j$ form a basis for the ring $\PR$ over~\CR, so that the polynomial representation can be studied with the help of raising and lowering operators~\cite{KN:1998}:
\bea\label{RL}
\sfR_j&:=x-q^{j-\frac12}tz=X(q^{j}t^{-1}Y-q^{2j}t^{2})+X^{-1}(q^{j}tY^{-1}-q^{2j}t^{2})~, \cr
\sfL_j&:=x-q^{-j-\frac12}t^{-1} z=X(q^{-j}t^{-3}Y-q^{-2j}t^{-2})+X^{-1}(q^{-j}t^{-1}Y^{-1}-q^{-2j}t^{-2})~.
\eea
These operators relate adjacent Macdonald polynomials, respectively increasing or decreasing the value of $j$:
\begin{align}
\pol(\sfR_j)\cdot P_{j}(X;q,t)&= (1-q^{2j}t^{2}) P_{j+1}(X;q,t)~,\label{raising} \\
\pol(\sfL_j)\cdot P_{j}(X;q,t)&= \frac{(1-q^{2j})(1-q^{2(j-1)}t^{4})}{q^{2j}t^{2}(q^{2(j-1)}t^{2}-1) }P_{j-1}(X;q,t)~.\label{lowering}
\end{align}
See Figure~\ref{fig:RLops} for a schematic diagram of this action.
\begin{figure}[ht]\centering
	\includegraphics[width=0.9\textwidth]{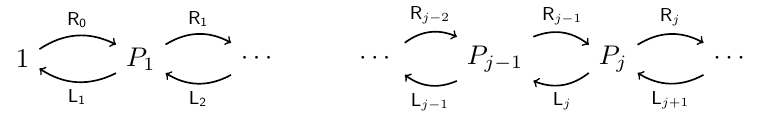}
	\caption{The action of raising and lowering operators on Macdonald polynomials}
	\label{fig:RLops}
\end{figure}
At $t=1$, this representation reduces to the pullback of the lift of $\PR^{y_1=1}$ in Proposition \ref{prop:liftPR} so that Cherednik's polynomial representation can be understood as its deformation from the symmetrized quantum torus to DAHA. Since the classical limit $(q=1)$ of the Macdonald eigenvalues \eqref{Macdonald-eigen} is always $t+t^{-1}$, the support of the corresponding $A$-brane $\brP$ is given by
\be\label{brane-poly}\bfP=\{y=\tilde{t}+\tilde{t}^{-1}~, \ z =\tilde{t}^{-1}x\}~.\ee
While the parameter $t$ in $\SH$ coincides with the monodromy parameter $\tilde{t}$ at the classical limit $(q=1)$ (see \eqref{t-wtt}), we use $\tilde{t}$ to specify the position of the brane because it is the geometric parameter of $\X$.
Since it is of type $(A,B,A)$, it is happily a Lagrangian submanifold with respect to $\omega_\X$ for any value of $\hbar$ or $q$.

To understand the brane $\brP$ for the polynomial representation $\repP$ of $\SH$ better, it is illuminating to consider its relation to the skein module.
The skein module of type $A_1$ \cite{turaev1990conway,przytycki2006skein}  of an oriented 3-manifold $M_3$ is defined as
\deq{
\Sk(M_3,\SU(2)) :=\Sk(M_3) = \frac{\bC[q^{\pm\frac12}] (\text{isotopy classes of  framed links in $M_3$})}%
{\Bigl( \cross = q^{-1/2} \resone\, + q^{1/2} \restwo\,  , ~ \unknot\, = -q-q^{-1} \Bigr) }
}
The skein algebra $\Sk(C)$ associated to an oriented closed surface $C$ is defined as
\deq{
\Sk(C):=\Sk(C\times [0,1],\SU(2))~~,
}
where the multiplication $\Sk(C)\times \Sk(C)\to \Sk(C)$ is given by stacking. As a result, $\Sk(C)$ is a $\bC[q^{\pm\frac12}]$-associative algebra \cite{turaev1991skein}.

At the $q=1$ specialization, the skein module $\Sk(M_3)$ becomes a commutative algebra.  Moreover, it was shown in \cite{bullock1997rings,przytycki1997skein} that by assigning a loop $\gamma:S^1\to M_3$ to $\Tr(\rho(\gamma))$ where $\rho:\pi_1(M_3) \to \SL(2,\C)$ is the holonomy homomorphism, the classical limit $q=1$ of $\Sk(M_3)$ is isomorphic to the coordinate ring of the character variety $\MF(M_3,\SL(2,\C))$.
Hence, the skein module $\Sk(M_3)$ can be understood as a BV quantization \cite{gunningham2019finiteness}
$$
\Sk(M_3)\cong \textrm{BV}^q(\MF(M_3,\SL(2,\C)))~.
$$
The skein module of a closed 3-manifold will be studied in \S\ref{sec:SL2Z-skein}.

If a 3-manifold has a boundary $\partial M_3 = C$, then we have a module  $\Sk(C)$ \rotatebox[origin=c]{-90}{$\circlearrowright$} $\Sk( M_3 )$ by pushing a framed  links in a thickened boundary $C\times [0,1]$ into the bulk $M_3$.
In fact, $\MF(M_3,\SL(2,\C))$ is a holomorphic Lagrangian submanifold of $\MF(C,\SL(2,\C))$ with respect to the holomorphic symplectic form $\Omega_J$. Therefore, it can be understood as an $(A,B,A)$-brane $\frakB_H$ on $\MF(C,\SL(2,\C))$, called a \emph{Heegaard brane}. From the viewpoint of brane quantization, the action of the skein algebra can be understood as
\be\label{skein-module}
\begin{array}{ccc}
	\Sk(C) & \cong & \Hom(\Bcc,\Bcc)\\
	\rotatebox[origin=c]{180}{$\circlearrowright$} & & \rotatebox[origin=c]{180}{$\circlearrowright$}  \\
	\Sk(M_3) &\cong & \Hom(\Bcc,\frakB_H) \\
\end{array}~.
\ee

Of our particular interest is the skein algebra $\Sk(T^2)$ of a torus, which is the $t=q$ specialization of $\SH$ \cite{bullock2000multiplicative}. Also, the skein module $\Sk(S^1\times D^2)$ of the solid torus is the Grothendieck ring of the category of finite-dimensional representations of $U_q(\fraksl(2))$
\deq{
\Sk(S^1\times D^2)\cong K^0 (\mathsf{Rep}\,U_q(\fraksl(2)))\otimes \bC[q^{\pm\frac12}]~~,
}
which is spanned by Chebyshev polynomials \(S_{j}(z)\) of the second kind \cite{frohman2000skein}. They are recursively defined by
\deq{
z S_{j}(z)=S_{j+1}(z)+S_{j-1}(z)
}
with the initial conditions \(S_{0}(z)=1, S_{1}(z)=z\), and they are actually the $t=q$ specialization of the Macdonald polynomials
\deq{
S_{j}(X+X^{-1})=P_{j}(X; q,t=q)=\frac{X^{j+1}-X^{-j-1}}{X-X^{-1}}~.
}
Consequently, the polynomial representation $\repP$  of $\SH_{t=q}$ is indeed the skein module $\Sk(T^2)$ \rotatebox[origin=c]{-90}{$\circlearrowright$} $\Sk(S^1\times D^2)$. In fact, the support of the Heegaard brane for the solid torus is given by $y=2$, which is the $A$-polynomial of the unknot complement in $S^3$. Indeed the eigenvalue of the $y$ operator on $S_{j}(X+X^{-1})$ under the polynomial representation $\repP$  at $t=q$ is $q^{j+1}+q^{-j-1}$ and its classical limit is $y=2$.
Thus, the polynomial representation $\repP$ of $\SH$ can be understood as the $t$-deformation of the skein module $\Sk(T^2)$ \rotatebox[origin=c]{-90}{$\circlearrowright$} $\Sk(S^1\times D^2)$ \cite{Hikami:2019jaw}.

Let us consider how the symmetries of $\SH$ act on the polynomial representation $\scP$. For instance, an action of $\PSL(2,\Z)$ on $\scP$ can be seen by using the maps \eqref{MCG-quantum} combined with \eqref{poly-rep-sym}. It is easy to see from \eqref{MCG-quantum} that the generators of $\PSL(2,\Z)$ maps $\frakB_\bfP$ to another $(A,B,A)$ brane
\bea\label{brane-poly-SL2Z}
\tau_{+}(\bfP)&=\{xy-z=\tilde{t}+\tilde{t}^{-1}~, ~ y=\tilde{t}^{-1} x\}~,\cr
\tau_{-}(\bfP)&=\{y=\tilde{t}+\tilde{t}^{-1}~, ~ z=\tilde{t}^{-1} x\}~,\cr
\sigma(\bfP)&=\{x=\tilde{t}+\tilde{t}^{-1}~, ~ y=\tilde{t}^{-1} z\}~.
\eea
Under the modular $T$-transformation $\tau_-$, the support does not change and the polynomial representation $\PR$ is invariant, $\tau_-(\PR)\cong\PR$ since the Macdonald polynomials are transformed under the modular $T$-transformation $\tau_-$ as
\be\label{Mac-T}
\tau_{- } ( P_{j} ) = q^{-\frac{j^{2 } }2}t^{-  j } P_{j } \quad \rightsquigarrow  \quad  T_{jj'}=q^{-\frac{j^{2 } }2}t^{-  j } \delta_{jj'}~.
\ee
The proof is given at the end \eqref{tau-p}  of Appendix~\ref{app:DAHA-poly}.
The image $\sigma(\scP)$ of the polynomial representation of $\SH$ under the $S$-transformation $\sigma$ is called the \emph{functional representation}, which is explained in Appendix~\ref{app:func-rep}. As for the group $\Xi$ of the sign changes, the image $\xi_1(\scP)$ is isomorphic to itself $\scP\cong \xi_1(\scP)$. On the other hand, the image under the involution $\xi_2$ can be obtained by multiplying the minus sign to $y$ and $z$ as in \eqref{sign-changes} and the support of the corresponding brane is
\be \label{zeta2-support}
\xi_2(\bfP)=\{y=-\tilde{t}-\tilde{t}^{-1}~, ~ z=-\tilde{t}^{-1} x\}~.
\ee
Finally, the outer automorphism \eqref{iota} changes the Chan-Paton bundle of $\brP$ as explained in \S\ref{sec:Bcc-SH} and the support becomes
\be \label{brane-poly-iota}
\iota(\bfP)=\{y=\tilde{t}+\tilde{t}^{-1}~, ~ z=\tilde{t} x\}~.
\ee
Note that the $\iota$ image $\iota(\PR)$ of the polynomial representation can be obtained by changing $t\to q/t$ in \eqref{poly-rep-sym}.

The perspective from the brane quantization also sheds new light on infinite-dimensional representations. We have seen that Cherednik's polynomial representation \eqref{poly-rep-sym} corresponds to the $A$-brane $\brP$ \eqref{brane-poly} at the particular value of $y$. It is natural to expect that it can be deformed in such a way that the corresponding brane is supported on a generic point of $y$.

This consideration leads us to the following.
Let us consider the multiplicative system $\widetilde M\subset \CR[X^{\pm}]$ generated by all elements of the form $(q^{\ell} X-q^{-\ell} X^{-1})$ for all integers $\ell \in \Z$. Then there is a family of representations of $\SH$ on the localization \footnote{In other words, $\PR^{y_1}$ is the ring of rational functions with coefficients in $\CR$ where denominators are always elements in the multiplicative system $\wt M$ such as
$$
\frac{f(X)}{(q^{-m} X-q^m X^{-1})^{k_{-m}}\cdots(X- X^{-1})^{k_0}\cdots (q^{\ell} X-q^{-\ell} X^{-1})^{k_\ell}}~,\qquad f(X)\in \CR[X^{\pm}]~.
$$} of the ring of Laurent polynomials by $\wt M$
\deq{
\PR^{y_1} = \widetilde M^{-1}  \CR[X^{\pm}]~,
}
labeled by a parameter $y_1 \in \C^\times$ where the representations are defined by
\begin{equation}
\label{poly-rep-y1}
\begin{aligned}
x&\mapsto X+X^{-1}, \\
\pol_{y_1}:\SH \to \End(\PR^{y_1}),  \quad  y&\mapsto y_1\frac{tX-t^{-1}X^{-1}}{X-X^{-1}}\varpi +y_1^{-1}\frac{t^{-1}X-tX^{-1}}{X-X^{-1}}\varpi^{-1},\\
z&\mapsto  q^{\frac12}y_1X\frac{tX-t^{-1}X^{-1}}{X-X^{-1}}\varpi +q^{\frac12}(y_1X)^{-1}\frac{t^{-1}X-tX^{-1}}{X-X^{-1}}\varpi^{-1} ~.
\end{aligned}
\end{equation}
Concretely, one is free to deform Cherednik's polynomial representation \eqref{poly-rep-sym} to this larger representation parametrized by $y_1$, as long as we allow denominators to be elements of the multiplicative system $\wt M$.
Only at $y_1=1$, it decomposes into two irreducible representations where one is Cherednik's polynomial representation, and the other irreducible representation is
$$
\widetilde M^{-1}  \CR[X^{\pm}]^{\Z_2}~.
$$
When $t=1$, the story reduces to the polynomial representations of the symmetrized quantum torus discussed in Appendix \ref{app:branes-SQT}. Thus, the support of the corresponding brane $\frakB_{\bfP}^{y_1}$ is expected to be
$$
\mathrm{supp}~\frakB_{\bfP}^{y_1}=\{y=y_1^{-1} \tilde{t}+y_1\tilde{t}^{-1}\}~.
$$

In fact, the eigenfunction  of $y$ under $\pol_{y_1}$ that generalizes the Macdonald polynomials is constructed in \cite{Bullimore:2014awa,Koroteev:2018azn,Koroteev:2018aa} \footnote{$\mathcal{Z}$ is the so-called uncapped vertex function in the quantum K-theory of $T^*\CP^1$.}
\begin{equation}
\mathcal{Z}(X,y_1,q,t)= {}_2\phi_1\left(y_1^2,t^2;q^2t^{-2}y_1^2;q^2;q^2t^{-2} X^{-2}\right)\,,
\label{eq:HolBlock}~
\end{equation}
where the eigenvalue is
\begin{equation}
\label{eq:MacdonaldEigen}
\pol_{y_1}(y)\cdot\mathcal{Z}= (y_1^{-1} t+y_1t^{-1})\mathcal{Z}\,.
\end{equation}
Thus, for a generic value of $y_1$, the eigenfunction is an infinite hypergeometric series \eqref{eq:HolBlock}. However, as easily seen from \eqref{sym-Macdonald-A1},
the series truncates to the symmetric Macdonald polynomial
\begin{equation}
\mathcal{Z}(X,y_1=q^{-j},q,t) = X^{-j} P_{j}(X;q,t)\,.
\end{equation}
at the specialization $y_1=q^{-j}$ $(j\in\mathbb{Z}_{\ge0})$.

A geometric interpretation of the multiplicative system $\widetilde{M}$ can be given by thinking about the $t=1$ limit, where we are interested in the quotient map $\C^\times \times \C^\times \rightarrow (\C^\times \times \C^\times) / \Z_2$. After deforming the target of this covering map, no natural ramified twofold cover by $\C^\times \times \C^\times$ exists. However, such a cover can be constructed once we extract the $\bZ_2$-invariant points $X=\pm 1$. In fact, $\OO(\C^\times\backslash \{X=\pm 1\} )$ admits the generator $\frac{1}{X-X^{-1}}$. A related story exists in the rational limit, where the relevant geometry is the deformation of the $A_1$ singularity $(\C \times \C)/\Z_2$ to the total space of $T^*\C P^1$; we discuss this in detail in Appendix~\ref{app:Rational-A1}.

\subsection{\texorpdfstring{Branes with compact supports and finite-dimensional representations:\\ object matching}{Branes with compact supports and finite-dimensional representations: object matching}}
\label{sec:finite-rep}

Cherednik's polynomial representation is of particular significance due to the theorems of Cherednik~\cite[\S2.8--9]{Cherednik-book}, which classify finite-dimensional representations of~\SH{} obtained as quotients of the polynomial representation paired with the action of outer automorphisms. Similar to the theory of Verma modules, the polynomial representation is generically irreducible. A raising operator \eqref{raising} never be null since the Macdonald polynomial $P_{2j}$ always has a factor $(1-q^{2j}t^{2})$ in the denominator. However, it can occur that a lowering operator $\sfL_j$ annihilates one of the Macdonald polynomials $P_{j}$ when  certain conditions on the central character are satisfied. If this occurs, $P_j$  generates a subrepresentation, and a finite-dimensional representation of the spherical DAHA appears as the quotient $\PR/(P_j)$.
We can therefore study finite-dimensional representations by asking that the condition $\pol(\sfL_j)\cdot P_{j}=0$ be satisfied for some $j$, \textit{i.e.}\ that the factor
\be
\label{factor-vanish}
\frac{(1-q^{2j})(1-q^{(j-1)}t^{2})(1+q^{(j-1)}t^{2})}{q^{2j}t^{2}(q^{2(j-1)}t^{2}-1) }
\ee
on the right hand side of \eqref{lowering} vanishes.

This amounts to the following three cases:
\begin{subequations}
\begin{align}
q^{2n}&=1~, \label{case1} \\
t^2&=-q^{-k}~, \label{case2} \\
t^2&=q^{-(2\ell-1)}~. \label{case3}
\end{align}
\end{subequations}
Here, the exponent in the right hand side of  \eqref{case3} must be an odd integer in order for  the denominators of Macdonald polynomials as well as \eqref{factor-vanish} to be non-zero; even exponents are excluded by the definition of the  coefficient ring~\CR{} in~\eqref{eq:coeffs}. We write this odd integer as $2\ell-1$.
Each of these separate shortening conditions will naturally appear as an existence condition of an $A$-brane with compact support in what follows; we will examine each of the resulting finite-dimensional representations and the corresponding compact Lagrangian branes in turn.

\subsubsection{Generic fibers of the Hitchin fibration}\label{sec:generic-fiber}

First we consider analogous $A$-branes in this setting; the ones supported on generic fibers in the Hitchin fibration. As explained in \S\ref{sec:target}, the Hitchin fibration \eqref{Hitchin-fibration} is completely integrable, and a generic Hitchin fiber  $\F$  is holomorphic in complex structure $I$ while it is a complex Lagrangian submanifold from the viewpoint of the holomorphic two-form $\Omega_I$  for a generic ramification data \eqref{tame}. Namely, it is a Lagrangian submanifold of type $(B,A,A)$ for any values of $(\a_p,\b_p,\g_p)$-triple. Therefore, a generic fiber $\F$ can be Lagrangian in a symplectic manifold $(\X,\omega_\X)$ only when the canonical coisotropic brane $\Bcc$ obeys the condition $\theta=0$ in \eqref{generic-Bcc} so that
\be\label{Bcc-BAA}\omega_\X=-\frac{\omega_K}{\hbar}~,\quad \textrm{and} \quad F+B=\frac{\omega_I}{\hbar}~.\ee
With $\theta\neq0$, there is no $A$-brane supported on $\F$ in the symplectic manifold $(\X,\omega_\X)$. Accordingly, $\hbar=|\hbar|$ is real (\textit{i.e.} $|q|=1$), and the canonical coisotropic brane $\Bcc$ is an $A$-brane of type $(B,A,A)$.

An analogous brane appears in the brane quantization of $\C^\times\times\C^\times$ for the quantum torus algebra. As in Appendix \ref{app:cyclic-rep-QT}, a brane is supported on a fiber $T^2$ of the elliptic fibration $T^*T^2 \cong\C^\times\times\C^\times$, which gives rise to a  finite-dimensional representation, called the cyclic representation. Therefore, we can study a brane supported on a generic fiber $\F$ of the Hitchin fibration, comparing with the case of the quantum torus algebra.

The branes are indexed by a position of the Hitchin base $\cB_H$ (see also Appendix \ref{app:cyclic-rep-QT}).  Also, the flatness condition \eqref{deformed-flat} of the line bundle $\cL'$ an $A$-brane supported $\brF$ is
$$
F'_{\bfF}+B\big|_{\bfF}=0~.$$
Since $\F$ is topologically a two-torus, the flat $\Spin^c$ structure $\cL'\otimes K_\L^{-1/2}$ of $\brF$ can have non-trivial $\U(1)^2$ holonomy with a choice of spin structure \cite{Gukov:2008ve}.
The branes $\brF^\lambda$ are parametrized by $\lambda=(x_m,y_m)\in\C^\times\times\C^\times$ where the absolute values $(|x_m|,|y_m|)$ describe its position and the angular phases illustrate the $\U(1)^2$ holonomy with a choice of spin structures. Namely, the angular phase $\U(1)$ encodes the holonomy $\U(1)$ and a choice of spin structures $\Z_2$ along a one-cycle of a Riemann surface via
$$
1\to \Z_2\to \U(1)\to \U(1)\to 1~.
$$
We assign the plus sign $+$ for $1\in \Z_2$ to the Ramond spin structure, and the minus sign $-$ for $-1\in \Z_2$ to the Neveu-Schwarz spin structure. The choice of spin structures appears in the representation of the symmetrized quantum torus discussed in Appendix \ref{app:branes-SQT}.

Consequently, the computation of the dimension \eqref{dimension} of the space $\Hom(\Bcc,\brF^\lambda)$ is reduced to the period integral \eqref{generic-fiber-evaluation}
\bea
\dim \Hom(\Bcc,\brF^\lambda) = \int_{\F} \frac{F+B}{2\pi} =\int_{\F}\frac{\omega_I}{2\pi \hbar}=\frac 1{\hbar}
\label{dim-generic-fiber}
\eea
for arbitrary $\lambda$. Hence, this leads to the Bohr-Sommerfeld quantization condition  $\hbar=1/m$, or equivalently that $q=e^{2\pi i/m}$ is a primitive $m$-th root of unity for $m\in \Z_{>0}$. In fact, since $[F/2\pi]$ is an integral cohomology class in $H^2(\X,\Z)$, the fiber class relation \eqref{fiber-class-rel} requires  $\int_\bfF F/2\pi$ to be an even integer. Thus, if $m$ is an odd positive integer, then we need non-trivial $B$-flux with
\be\label{odd-integral}
 \int_{\F} \frac{B}{2\pi} =-  \int_{\F} \frac{F'_\bfF}{2\pi}=1~,
\ee
up to an even integer shift. For instance, this can be achieved if the $B$-field flux over $\V$ is $1/2$ and those over the exceptional divisors $\D_i$ ($i=1,\ldots,4$) are zero.

In order for the $(\Bcc,\brF^\lambda)$-strings exist, $q$ has to be a primitive $m$-th root of unity whereas $t$ can be generic. Under this condition, the action of $\SH$ under the generalized polynomial representation $\pol_{y_1}$ in \eqref{poly-rep-y1} commute with $X^m-x_m$ for $x_m\in \C^\times$ because the shift operator $\varpi$ acts trivially on it. Consequently, the ideal $(X^m-x_m)$ is invariant under $\pol_{y_1}$ so that the quotient space
$$
\scF^{\lambda}_{m}=\scP^{y_1}/(X^m-x_m)~,
$$
is also a representation of $\SH$. Since the Taylor expansion of a denominator in the multiplicative system $\wt M$ always truncates under the condition $X^m=x_m$, this is indeed an $m$-dimensional representation parametrized by $\lambda=(x_m,y_m)$ where $y_1$ is any $m$-th root of $y_m$. Hence, we can identify $\Hom(\Bcc,\brF^\lambda)$ with $\scF^{\lambda}_{m}$ when $q$ is a primitive $m$-th root of unity where the parameter $\lambda \in \C^\times\times\C^\times$ exactly matches.

For generic values of $\lambda=(x_m,y_m)$, the support of a brane $\brF^\lambda$ is mapped to another Hitchin fiber up to Hamiltonian isotopy under the $\PSL(2,\Z)$ action, and the holonomy of the Chan-Paton bundle, which is a point in the dual torus $\Jac(\F)$, is also transformed appropriately. Namely, $\PSL(2,\Z)$ acts on $\lambda$. On the other hand, a generic fiber is invariant as a set under the group $\Xi$ of the sign changes as we have seen in \S\ref{sec:target}. Correspondingly, the representation $\scF^{\lambda}_{m}$ is invariant under $\Xi$ at a generic value of $\lambda$.

Setting $y_1=1$, we can symmetrize the story \cite[Thm 2.8.5 (iv)]{Cherednik-book}. Namely, since the ideal $(X^m+X^{-m}-x_m-x_m^{-1})$ is invariant under Cherednik's polynomial representation $\scP$ due to the same reason,  we have an $m$-dimensional representation
\be\label{Fxm}
\scF^{(x_m,+)}_{m}=\scP/(X^m+X^{-m}-x_m-x_m^{-1})~.
\ee
In this case, the corresponding brane $\brF^{(x_m,+)}$ supported on a Hitchin fiber intersects with the support $\bfP$ \eqref{brane-poly} of the polynomial representation. Also, the Chan-Paton bundle has the trivial holonomy and the Ramond spin structure around one generator, say the $(0,1)$-cycle, of $\pi_1(\bfF)\cong\Z\oplus\Z$. The parameter $x_m$ encodes its position in the $x$ coordinate and the holonomy around the other generator of $\pi_1(\bfF)$.

Therefore, the representations of this family are analogous to the finite-dimensional representations of both symmetrized and ordinary quantum torus in terms of $A$-branes on fibers of the elliptic fibration of the target in the 2d $A$-models as illustrated in  Appendix \ref{app:qt}.
As in the case of the symmetrized quantum torus Appendix \ref{app:SQT}, if a brane $\brF$ with trivial holonomies moves to a special position, we will see below that a special phenomenon occurs.

\subsubsection{\texorpdfstring{Irreducible components in singular fibers of type $I_2$}{Irreducible components in singular fibers of type I2}}\label{sec:Ui}

As in Figure~\ref{fig:3I_2}, the Hitchin fibration has three singular fibers of Kodaira type $I_2$ for generic ramification parameters of $(\a_p,\b_p,\g_p)$. Since they are still fibers in the Hitchin fibration, the irreducible components $\bfU_i$ ($i=1,\ldots,6$) in a singular fiber are also Lagrangian submanifolds of type $(B,A,A)$. Therefore, $\Bcc$ needs to satisfy \eqref{Bcc-BAA} in order for $\frakB_{\bfU_i}$ to be $A$-branes as in the previous subsection.

For instance, let us investigate a module that the brane $\frakB_{\bfU_1}$ gives rise to. The curvature of the line bundle $\cL'$ should obey the flatness condition \eqref{deformed-flat}
\be
F'_{\bfU_1}+\left.B\right|_{\bfU_1}=0~.
\ee
Since $\bfU_1$ is topologically $\bC \bfP^1$ and a position is fixed, there is no deformation parameter associated to the brane  $\frakB_{\bfU_1}$.
Subsequently, one can evaluate the dimension formula  \eqref{dimension}
\bea
\dim \Hom(\Bcc,\frakB_{\bfU_1}) = \int_{\bfU_1} \frac{F+B}{2\pi} =\int_{\bfU_1}\frac{\omega_I}{2\pi \hbar}=\frac 1{2\hbar}
\label{dim-U}
\eea
Consequently, the brane $\frakB_{\bfU_1}$ can exist only at $1/(2\hbar)=n\in \Z_{>0}$, or equivalently when $q$ is a primitive $2n$-th root of unity.

This is exactly one \eqref{case1} of the shortening conditions, and under this condition a lowering operator \eqref{lowering} annihilates the Macdonald polynomial
\be\label{PN}
\pol(\sfL_n)\cdot P_n(X;q,t)=0\qquad \textrm{where} \qquad P_n(X;q,t)=X^n+X^{-n}~~.
\ee
Therefore, the quotient space \be\label{Un} \scU_n^{(1)}:=\PR/(P_n)\ee by an ideal $(P_n)$ is an $n$-dimensional irreducible representation of spherical DAHA \cite[Thm 2.8.5 (ii)]{Cherednik-book} so that one can identify
$$
\scU_n^{(1)}=\Hom(\Bcc,\frakB_{\bfU_1})~.
$$

As seen in \S\ref{sec:target}, the irreducible component $\bfU_1$ is invariant under the sign change $\xi_1$ whereas it is mapped to $\bfU_2$ under $\xi_2$ \eqref{signchange-Ui}.
In fact, it follows from the form \eqref{PN} of $P_n(X)$ that the finite-dimensional module $\scU_n^{(1)}$ is invariant under the sign flip $\xi_1$.
On the other hand, the sign change $\xi_2$ leads to another non-isomorphic finite-dimensional module. Thus, the brane $\xi_2(\frakB_{\bfU_1})$ corresponds to a brane supported on the other irreducible component $\bfU_2$ in the same singular fiber from which the module comes from
$$\scU_n^{(2)}:=\xi_2(\scU_n^{(1)})=\Hom(\Bcc,\frakB_{\bfU_2})~.$$
In a similar fashion, a brane $\frakB_{\bfU_i}$ supported on another irreducible component in a singular fiber gives rise to an image of $\scU_n^{(1)}$ under $\PSL(2,\Z)$ and the sign changes $\xi_{1,2}$. The transformation rule can be read off from \eqref{PSL-Ui} so that the branes $\frakB_{\bfU_{1,2}}$ are invariant under $\tau_-$ whereas they are mapped as
\bea\label{SL2Z-singular}
\sigma(\frakB_{\bfU_{1}})=\frakB_{\bfU_{3}}~,&\qquad \sigma(\frakB_{\bfU_{2}})=\frakB_{\bfU_{4}}~,\cr
\tau_{+}(\frakB_{\bfU_{1}})=\frakB_{\bfU_{5}}~,&\qquad \tau_{+}(\frakB_{\bfU_{2}})=\frakB_{\bfU_{6}}~.
\eea
The corresponding modules $\scU_n^{(i)}$ are obtained from $\scU_n^{(1)}$ in the same way.

\subsubsection{\texorpdfstring{Moduli space of $G$-bundles}{Moduli space of G-bundles}}\label{sec:BunG}

  Next, we consider a brane $\frakB_{\bfV}$ supported on the moduli space $\bfV$ of $G$-bundles. For the sake of brevity, let us first see the case of $\b_p=0$. If $\hbar$ is real, only $\a_p$ can be turned on while $\g_p$ must vanish in order for $\bfV$ to be Lagrangian with respect to $\omega_K$. As $\hbar=|\hbar|e^{i\theta}$ is rotated $\theta\neq0$ in the complex plane, the symplectic form we are interested in is also rotated from $\omega_K$ to $\omega_\frakX$ according to \eqref{generic-Bcc}. However, this rotation can be actually compensated by switching on $\g_p$ so that $\bfV$ can stay Lagrangian with respect to $\omega_\frakX$.
 According to \eqref{integral-BunG} and \eqref{generic-Bcc}, the set $\bfV$ is Lagrangian with respect to $\omega_\frakX$ when the following condition holds:
\be
\label{Lag-BunG}
\Im \, \frac{\big(\frac{1}{2}-\talpha_p\big) +i\tgamma_p}{\hbar} \; = \; 0
\ee
As a simple check, one can easily see from \eqref{integral-BunG} and \eqref{c-ramification} that the integral of the symplectic form is zero
\bea\label{evaluation-BunG}
\int_{\bfV}\frac{\Im\,\Omega}{2\pi}&=\int_{\bfV}\frac{\omega_\frakX}{2\pi}=0~,\cr
\eea
In addition, if $\beta_p=0$, the submanifold $\V$ is also Lagrangian with respect to $\omega_J$. Namely, it is a complex Lagrangian submanifold with respect to a holomorphic two-form $\omega_\frakX+i\omega_J$.  When $\tbeta_p$ is varied, $\V$ stays as a Lagrangian submanifold with respect to $\omega_\frakX$ while they are no longer Lagrangian with respect to $\omega_J$. In fact, the variation of $\b_p$ does not change the holomorphic symplectic form $\Omega_J=\omega_K + i\omega_I$, and therefore keeps $\omega_\frakX$ fixed. In conclusion, $\V$ can be Lagrangian with respect to $\omega_\frakX$ only when \eqref{Lag-BunG} holds. Since our concern is the $A$-model in the symplectic manifold $(\frakX,\omega_\frakX)$, the value of $\b_p$ can be arbitrary.
For generic $(\beta_p,\gamma_p)$, $\bfV$ is no longer a Lagrangian of type $(B,A,A)$, and it is therefore not contained in a fiber of the Hitchin fibration.
Nonetheless, unlike a Hitchin fiber, we can consider the $A$-model in a generic symplectic form $\omega_\frakX$ in \eqref{generic-Bcc} where $\hbar$ can take any complex value.

Under the condition \eqref{Lag-BunG} with a generic value of $\hbar$, $\V$ is a unique compact Lagrangian submanifold, which is topologically $\bC\bfP^1$. Hence, there is no deformation parameter for $\brV$.
Consequently, we obtain the dimension of the space of $(\Bcc,\brV)$-strings from \eqref{c-ramification}
\be
\dim \Hom(\Bcc,\brV)  = \int_{\bfV} \frac{F+B}{2\pi}=\frac{1}{2\hbar}-\frac{\tgamma_p+i\talpha_p}{i\hbar}=\frac{1}{2\hbar}+2c-1~.
\label{dim-BunG}
\ee
The Bohr-Sommerfeld quantization condition imposes its dimension as a positive integer $1/2\hbar+2c-1=k+1 \in \Z_{>0}$, or equivalently $t^2=-q^{k+2}$.

One can observe that this quantization condition is equivalent to the image of the shortening condition \eqref{case2} under the involution $\iota$.
In fact, under the shortening condition $t^2=-q^{k+2}$, the lowering operator in the $\iota$-image of the polynomial representation becomes an annihilation operator
$$\pol(\sfL_{k+1})\cdot P_{k+1}(X;q,t)\Big|_{t\to\frac{q}{t}}=0~.$$
Consequently, the quotient space by an ideal $(P_{k+1})$
\be\label{quotient-V}
\iota(\scV_{k+1}):=\iota(\PR)/(P_{k+1}(X;q,\tfrac{q}{t}))~
\ee
is a $(k+1)$-dimensional irreducible representation of $\SH$ \cite{oblomkov2009finite}. This representation is called the \emph{additional series} in \cite[\S2.8.2]{Cherednik-book}, and we identify
$$
\iota(\scV_{k+1})=\Hom(\Bcc,\brV)~.
$$
In fact, the support \eqref{brane-poly-iota} of the brane $\iota(\brP)$ intersects with $\V$ at $t^2=-q^{k+2}$ so that $\Hom(\iota(\brP),\brV)\cong\bC$ becomes non-trivial. Hence, $\iota(\scV_{k+1})$ can be obtained as the quotient of $\iota(\PR)$ as in  \eqref{quotient-V}.

As we have seen at the end of \S\ref{sec:target}, the submanifold $\V$ is geometrically invariant under the sign changes $\xi_{1,2}$ so that we expect that the corresponding module $\scV_{k+1}$ is also endowed with the same property.
When $t^2=-q^{k+2}$, the Macdonald polynomials obey $$P_{k+1}(-X;q,\tfrac{q}{t})=(-1)^kP_{k+1}(X;q,\tfrac{q}{t})~,$$
which implies that $\iota(\scV_{k+1})$ is indeed invariant under $\xi_1$. In addition, it is easy to check that the full set of $y$-eigenvalues (the $\iota$-image of \eqref{Macdonald-eigen}) of $\iota(\scV_{k+1})$ is also invariant under $\xi_2$.

What makes the space of $(\Bcc,\brV)$-strings even more interesting is that it also carries a $\mathrm{PSL}(2,\Z)$ action. Indeed, as also explained in \S\ref{sec:target}, the submanifold $\bfV$ is invariant under $\mathrm{PSL}(2,\Z)$ symmetry and, as a result, the module $\iota(\scV_{k+1})$ is a $\mathrm{PSL}(2,\Z)$ representation.

Of course, it is then natural to ask \emph{which} representation it is, and in particular, what the corresponding $S$ and $T$ matrices are. To this end, it is more convenient to consider the space of $(\Bcc,\brV)$-strings in the target $\X_{\tilde{t}^{-1}}$ under \eqref{iota-classical} or \eqref{ramification-Weyl}. Then, the  corresponding representation  is given by
\be
\label{rCS-Hilbert-space} \scV_{k+1}:=\PR/(P_{k+1})
\ee
under the shortening condition  \eqref{case2}. Since the basis fo $\scV_{k+1}$ is spanned by the Macdonald polynomials $P_j(X)$ $(j=0,\ldots,k)$, the modular $T$-transformation $\tau_-$ acts diagonally in this basis due to \eqref{Mac-T}. Under the modular $S$ transformation, this basis is transformed to $P_j(Y)$ and the submanifold $\V$ intersects with both the support \eqref{brane-poly} of the branes $\brP$ and  that \eqref{brane-poly-SL2Z} of $\sigma(\brP)$.
Hence, the modular $S$-matrix can be written as
\be \label{S-poly}
S_{jj'}=\pol(P_j(Y^{-1}))\cdot P_{j'}(X)\big|_{X=t^{-1}}=P_j (tq^{j'};q,t) \, P_{j'} (t^{-1};q,t)~.
\ee
This is first introduced by Cherednik \cite{cherednik1995macdonald} as a symmetric bilinear pairing of Macdonald polynomials, which we also denote by $\textbf[P_j , P_{j'} \textbf]$ as in \eqref{Macdonald-pairing2}. Moreover, it becomes of rank $(k+1)$ when $t^2=-q^{-k}$, and it acts on $\scV_{k+1}$. Therefore, we find explicit forms of the $S$ and $T$ matrices as follows, and we will also find a 3d interpretation of this $\PSL(2,\Z)$ representation in \S\ref{sec:SU2}.

\begin{conjecture}\label{conj:rcs}
    The space $\repV_{k+1}$ is a $(k+1)$-dimensional $\mathrm{PSL}(2,\Z)$ representation, with modular $S$ and $T$ matrices given by
\begin{equation}
\begin{aligned}
T_{jj'}\big|_{\repV_{k+1}} &= e^{\frac{\pi i k}{12}}q^{-\frac{k(k-1)}{12}}~ i^{-j}q^{\frac{j(k-j)}2} \delta_{jj'}
\qquad 0 \le j,j' \le k \label{STBunG} \\
S_{jj'}\big|_{\repV_{k+1}} &=a_k^{-1}g_j(q,t=iq^{-k/2})^{-1} P_j (i q^{j'-k/2};q,t=iq^{-k/2}) \, P_{j'} (i q^{k/2};q,t=iq^{-k/2})~.
\end{aligned}
\end{equation}
	These matrices provide the $\mathrm{PSL}(2,\Z)$ representation for ``refined Chern-Simons theory''.
\end{conjecture}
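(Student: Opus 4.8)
The plan is to produce the $\PSL(2,\Z)$ action abstractly, compute it in the Macdonald basis, fix the scalars, and then recognize the result. Since $\PSL(2,\Z)\cong\MCG(C_p)/(\pm 1)$ acts on $\X$ fixing $\Bcc$ up to isomorphism and, by~\S\ref{sec:target}, fixing $\bfV$ (hence $\brV$) up to isomorphism as well, functoriality of $\Hom(\Bcc,-)$ equips $\repV_{k+1}=\Hom(\Bcc,\brV)$ with a $\PSL(2,\Z)$ action; because $\repV_{k+1}$ is irreducible, each intertwiner is unique up to scalar, so a priori one obtains a projective representation whose scalars must be pinned down. Equivalently, on the algebraic side one checks that the ideal $(P_{k+1})\subset\PR$ defining $\repV_{k+1}=\PR/(P_{k+1})$ is stable under the outer automorphisms $\tau_\pm,\sigma$ of~$\SH$ of~\eqref{MCG-quantum} at $t^2=-q^{-k}$: for $\tau_-$ this is immediate from~\eqref{Mac-T}, and for $\sigma$ it follows since $\bfV$ meets both $\bfP$ and $\sigma(\bfP)$, forcing $P_{k+1}$ to be sent to a multiple of itself. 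With the action in hand, the remaining steps are to (i) compute $T=\rho(\tau_-)$, (ii) compute $S=\rho(\sigma)$, (iii) verify the modular relations and fix the overall phase, and (iv) match with refined Chern--Simons.

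\textbf{The $T$-matrix.} Applying~\eqref{Mac-T} to the basis $\{P_j\}_{j=0}^{k}$ and substituting $t=iq^{-k/2}$ gives $\tau_-\cdot P_j=q^{-j^2/2}t^{-j}P_j=i^{-j}q^{j(k-j)/2}P_j$, which is the claimed $T_{jj'}$ up to the overall constant $e^{\pi i k/12}q^{-k(k-1)/12}$. This constant is not visible in the $\SH$-module structure; it is the modular/framing anomaly, to be fixed in step (iii) by requiring that $S$ and $T$ close into an honest (at worst sign-) representation of $\PSL(2,\Z)$, equivalently by matching the central charge of refined $SU(2)$ Chern--Simons.

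\textbf{The $S$-matrix.} The generator $\sigma$ essentially exchanges $X$ and $Y$ (see~\eqref{MCGloop}), so on the polynomial representation it implements the difference Fourier transform carrying Macdonald polynomials in $X$ to those in $Y$. Writing $\rho(\sigma)\cdot P_j=\sum_{j'}S_{jj'}P_{j'}$ and using the change of basis from $\{P_j(Y)\}$ to $\{P_{j'}(X)\}$ dictated by the evaluation at $X=t^{-1}$ --- the locus where $\bfV$ meets the support~\eqref{brane-poly} of $\brP$ --- yields, as recorded in~\eqref{S-poly}, the Cherednik pairing $S_{jj'}=[P_j,P_{j'}]=P_j(tq^{j'};q,t)\,P_{j'}(t^{-1};q,t)$. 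This matrix is symmetric by Cherednik's theorem and, at $t^2=-q^{-k}$, of rank $k+1$ because the Macdonald basis of $\PR$ truncates modulo $(P_{k+1})$; dividing by the Macdonald norm factors $g_j$ and an overall constant $a_k$ (both nonvanishing precisely for the chosen square root $t=iq^{-k/2}$, which is the reason even exponents were excluded in~\eqref{case3}) produces the symmetric matrix of the conjecture.

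\textbf{Relations and identification (the hard part).} It remains to verify $\rho(\sigma)^2=C$ and $(\rho(\sigma)\rho(\tau_-))^3=(\text{scalar})\,C$ with $C^2=1$ --- the defining relations of $\PSL(2,\Z)$ up to a possible central-charge phase --- which simultaneously pins down the anomaly prefactor and upgrades the projective action to the stated representation. This is where the real work lies: the verification reduces to classical identities for type-$A_1$ Macdonald polynomials at a root of unity --- Macdonald's evaluation formula for $P_j(t^{-1})$, the norm formula for $g_j$, the self-duality $P_j(tq^{j'})/P_j(t)=P_{j'}(tq^{j})/P_{j'}(t)$, and a Gaussian/theta summation producing $(ST)^3$ --- i.e.\ ultimately to $_2\phi_1$ summations such as the $q$-Gauss and $q$-Saalsch\"utz identities specialized at $t^2=-q^{-k}$. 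Carrying the constants $a_k,g_j$ and the phase consistently through these identities, and checking they do not collide with the excluded poles, is the main obstacle. Once this is done, comparison with the Aganagic--Shakirov modular data of refined $SU(2)$ Chern--Simons --- which is assembled from exactly these Macdonald evaluations --- identifies the two, and the $3$d realization of~\S\ref{sec:SU2} provides an independent consistency check.
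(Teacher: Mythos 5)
Your proposal takes essentially the same route as the paper: the $\PSL(2,\Z)$ action is obtained from the geometric invariance of $\bfV$, the $T$-matrix is the diagonal action~\eqref{Mac-T} of $\tau_-$ on Macdonald polynomials at $t=iq^{-k/2}$, the $S$-matrix is the Cherednik pairing~\eqref{S-poly} of rank $k+1$, the scalars $a_k$, $g_j$ and the overall phase are fixed by demanding $S^2=1$ and $(ST)^3=1$, and the 3d/refined Chern--Simons realization of \S\ref{sec:SU2} serves as the identification and cross-check. The one thing worth saying explicitly is that the statement is labeled a \emph{conjecture} precisely because the paper does not carry out what you flag as ``the hard part'': it simply asserts the normalizations are chosen so that the $\PSL(2,\Z)$ relations hold (and checks low $k$), so your listing of the Macdonald identities --- evaluation formula, norm formula, self-duality, Gaussian/theta summation --- that would be needed to close the argument is correct, and is exactly the gap separating the paper's motivating discussion from a proof.
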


Here we normalize the modular $S$-transformation \eqref{S-poly} by the Macdonald norm of type $A_1$ (See \eqref{Macdonald-norm} for the definition)
\be\label{Mac-norm-A1}
g_j(q,t):=\frac{(q^{2j};q^{-2})_j(t^4;q^2)_j}{(q^{2j-2};q^{-2})_j(t^2q^2;q^2)_j}
\ee
and
$$
a_k=\begin{cases} \sqrt{2}\prod_{i=0}^{\frac{k-3}2}(q^{\frac14+\frac i2}+q^{-\frac14-\frac i2}) & \quad k:~ \textrm{odd}\\ 2 \prod_{i=0}^{\frac{k-4}2}(q^{\frac12+\frac i2}+q^{-\frac12-\frac i2})& \quad  k :~\textrm{even} \end{cases}
$$
so that $S^2=1$. We also normalize the $T$-transformation \eqref{Mac-T} by $e^{\pi i k/12}q^{-k(k-1)/12}$ so that $(ST)^3= 1$.
For example, the first non-trivial case occurs at $k=1$
$$
T\big|_{\repV_2}=e^{\pi i/12}
\begin{pmatrix}
1 & 0 \\
0 &  -i
\end{pmatrix}~,\qquad S\big|_{\repV_2}=\frac1{\sqrt{2}}
\begin{pmatrix}
    1 & -i (q^{\frac12}-q^{-\frac12}) \\
    i(q^{\frac12}-q^{-\frac12})^{-1} & -1
\end{pmatrix}~.
$$
Next, we turn to less familiar and more interesting modular representation that arises from another Lagrangian $A$-brane in a similar fashion.

\subsubsection{Exceptional divisors}
\label{sec:divisorTQFT}
Now let us consider an interesting $A$-brane $\brDi$ supported on an exceptional divisor $\bfD_i$, $i=1,\ldots,4$. As we reviewed in the earlier part of this section, the ramification parameters $(\talpha_p, \tbeta_p, \tgamma_p)$ play the role of resolution/deformation parameters for $\bfD_i$. In particular, when $\b_p=0$ and $\hbar$ is real, only $\a_p$ can be turned on while $\g_p$ must vanish in order for $\bfD_i$ to be Lagrangian with respect to $\omega_K$. As $\hbar=|\hbar|e^{i\theta}$ is rotated $\theta\neq0$ in the complex plane, the exceptional divisors $\bfD_i$ stay Lagrangian with respect to $\omega_\X$ if the deformation parameter $\tgamma_p+i\talpha_p\in \bC$ in complex structure $J$ is proportional to $i\hbar$, namely,
\be\label{Lag-D}\Im\frac{\tgamma_p+i\talpha_p}{2i\hbar}=0~.\ee
Here the value of $\tbeta_p$ can be arbitrary as in the previous case.
It is easy to verify from \eqref{integral-D} and \eqref{c-ramification} that
$$
\int_{\bfD_i}\frac{\Im\,\Omega}{2\pi}=\int_{\bfD_i}\frac{\omega_\frakX}{2\pi}=0~.
$$

The story goes as before.
The flatness condition  \eqref{deformed-flat} of the Chan-Paton bundle for the brane $\brDi$ is
$$
F'_{\bfD_i}+B\big|_{\bfD_i}=0~,
$$
Since it is topologically $\CP^1$, there is no holonomy and no deformation parameter for $\brDi$. Subsequently, the dimension can be computed as
\bea
\dim \Hom(\Bcc,\brDi) =\int_{\bfD_i}\frac{F+B}{2 \pi}=-c+\frac12\label{dim-D}~.
\eea
The Bohr-Sommerfeld quantization condition imposes its dimension as a positive integer $-c+1/2=\ell \in \Z_{>0}$, or equivalently $t^2=q^{-(2\ell-1)}$, which is \eqref{case3}.

When $t=q^{-(2\ell-1)/2}$, the lowering operator annihilates the Macdonald polynomial
\be\label{null-2}
\pol(\sfL_{2\ell})\cdot P_{2\ell}(X;q,t)=0~.
\ee
Therefore, the quotient space
\begin{equation}\label{quotient-D}
  \scD_{2\ell}:=\PR/(P_{2\ell})
\end{equation}
by an ideal $(P_{2\ell})$ is a $2\ell$-dimensional representation of $\SH$. In fact, it is not irreducible, and decomposes into two irreducible representations
\be\label{finite-case2}
\scD_{2\ell}=\scD_\ell^{(1)}\oplus \scD_\ell^{(2)}~.
\ee
Because $P_j$ and $P_{2\ell-j-1}$ have the same eigenvalue of the Macdonald difference operator \eqref{Macdonald-eigen}
when $t=q^{-(2\ell-1)/2}$, their combinations indeed form bases of $\scD_\ell^{(1,2)}$
\be\label{D-basis}
\scD_\ell^{(1)}=\bigoplus_{j=0}^{\ell-1} \C_{q,t}\Bigl[\frac{P_j(X)}{P_j(t^{-1})}+ \frac{P_{2\ell-j-1}(X)}{P_{2\ell-j-1}(t^{-1})} \Bigr]~,\quad \scD_\ell^{(2)}=\bigoplus_{j=0}^{\ell-1} \C_{q,t}\Bigl[\frac{P_j(X)}{P_j(t^{-1})}- \frac{P_{2\ell-j-1}(X)}{P_{2\ell-j-1}(t^{-1})} \Bigr]~.
\ee
Consequently, they are related by the sign change $\scD_\ell^{(2)}=\xi_1(\scD_\ell^{(1)})$. In fact, the support \eqref{brane-poly} of the brane $\brP$ intersects with $\D_{1,2}$ at $t=q^{-(2\ell-1)/2}$ so that  $\scD_\ell^{(1)}\oplus \scD_\ell^{(2)}$ can be obtained as the quotient of $\PR$ as in \eqref{quotient-D}.

Even when  $t=-q^{-(2\ell-1)/2}$, the shortening condition \eqref{null-2} holds, but the eigenvalues \eqref{Macdonald-eigen} of the $y$-operator have the opposite sign as in \eqref{zeta2-support}. Therefore, the corresponding irreducible representations can be obtained by the sign change $\xi_2$ in \eqref{sign-changes} from $\scD_\ell^{(1,2)}$.

As a result, for $t^2= q^{-(2\ell-1)}$, there are four irreducible finite-dimensional representations \cite[Thm 2.8.1]{Cherednik-book} that are obtained from $\scD_\ell^{(1)}$ by the sign changes $\xi_{1,2}$.
This is analogous to the relationship among the exceptional divisors under the sign changes \eqref{signchange-Di}. Therefore, we identify these modules to the spaces of open $(\Bcc,\brDi)$-strings as
\bea \label{4-irrep}
\scD_\ell^{(1)}=\Hom(\Bcc,\brD1)~,& \qquad  \scD_\ell^{(2)}=\xi_1(\scD_\ell^{(1)})=\Hom(\Bcc,\brD2)~,& \cr
\scD_\ell^{(3)}:=\xi_2(\scD_\ell^{(1)} )=\Hom(\Bcc,\brD3)~,& \qquad  \scD_\ell^{(4)}:=\xi_2(\scD_\ell^{(2)})=\xi_3(\scD_\ell^{(1)} )=\Hom(\Bcc,\brD4)~.&
\eea
The modules $\scD^{(1,2)}_\ell$ can be obtained as the quotient of the polynomial representation because the support \eqref{brane-poly} of $\frakB_\bfP$ intersects with $\D_1$ and $\D_2$. On the other hand, its $\xi_2$-image \eqref{brane-poly-iota} intersects with  $\D_3$ and $\D_4$. (See also Figure \ref{fig:branes-reps}.)

Under the $\PSL(2,\Z)$ action, the four irreducible representations are transformed as in \eqref{PSL-Di}. Namely, the modular $T$-transformation $\tau_-$ exchanges $\scD_\ell^{(3)}$ and $\scD_\ell^{(4)}$ whereas $\scD_\ell^{(1)}$ and $\scD_\ell^{(2)}$ are invariant. Also, the modular $S$-transformation  $\sigma$ exchanges $\scD_\ell^{(2)}$ and $\scD_\ell^{(3)}$ whereas the modules $\scD_\ell^{(1)}$ and $\scD_\ell^{(4)}$ are invariant.

\bea\label{PSL-Di-2}
\tau_+ &: \scD_\ell^{(2)} \leftrightarrow \scD_\ell^{(4)} \quad \textrm{and} \quad  \scD_\ell^{(1)},~ \scD_\ell^{(3)}\quad \textrm{are invariant}~, \\
\tau_- &: \scD_\ell^{(3)} \leftrightarrow \scD_\ell^{(4)} \quad \textrm{and} \quad  \scD_\ell^{(1)},~  \scD_\ell^{(2)}\quad \textrm{are invariant}~, \\
\sigma &: \scD_\ell^{(2)} \leftrightarrow \scD_\ell^{(3)} \quad \textrm{and} \quad  \scD_\ell^{(1)},~ \scD_\ell^{(4)}\quad \textrm{are invariant}~.
\eea
Thus, only the module $\scD_\ell^{(1)}=\Hom(\Bcc,\brD1)$ among the four modules becomes a $\PSL(2,\Z)$ representation.

Let us find the modular $S$ and $T$ matrices for this $\PSL(2,\Z)$ representation.
As we have seen, the polynomial representation $\PR$ captures both $\scD_\ell^{(1)}$ and $\scD_\ell^{(2)}$ so that the $S$-matrix \eqref{S-poly} truncates a matrix of size $2\ell\times 2\ell$ under the shortening condition \eqref{case3}. However, the matrix has rank $\ell$ and it acts non-trivially only on $\scD_\ell^{(1)}$  under the change \eqref{D-basis} of basis
\be\label{base-change-S}
\wt S_{jj'}:= G^{-1} S_{jj'} G(q,t=q^{-(2\ell-1)/2})\big|_{\scD^{(1)}_{\ell}}~,\qquad 0 \le j,j' \le \ell-1
\ee
where $G$ is a matrix of size $2\ell\times 2\ell$ that changes the basis according to \eqref{D-basis}. This gives the geometric interpretation of the basis change in \cite[\S4.1]{Kozcaz:2018usv}.
As a result, we find the following explicit forms of the $S$ and $T$ matrices, and a 3d interpretation of our $A$-model setup in \S\ref{sec:SU2} will identify an intrinsic physical meaning of the $\PSL(2,\Z)$ representation:

\begin{conjecture}\label{conj:MTC-D1}
    The space $\scD^{(1)}_{\ell}$ is an $\ell$-dimensional $\mathrm{PSL}(2,\Z)$ representation, with modular $S$ and $T$ matrices given by
\begin{equation}
\begin{aligned}
T_{jj'} \big|_{\scD^{(1)}_{\ell}}&= e^{\frac{(\ell-1)\pi i}6}q^{-\frac{(2 \ell - 1) (\ell - 1)}{6}} ~q^{\frac{j(k-j)}2} \delta_{jj'}
\qquad 0 \le j,j' \le \ell-1 \label{STD1} \\
S_{jj'} \big|_{\scD^{(1)}_{\ell}}&=b_{\ell}^{-1} g_{j}(q, t=q^{-(2\ell-1)/2})^{-1}\wt S_{jj'}~.
\end{aligned}
\end{equation}
The $\mathrm{PSL}(2,\Z)$ representation comes from a modular tensor category associated to the Argyres-Douglas theory of type $(A_1,A_{2(\ell-1)})$. These matrices coincide with those of the $(2,2\ell+1)$ Virasoro minimal model at $q=e^{-2\pi i /(2\ell+1)}$.
\end{conjecture}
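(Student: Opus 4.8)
The statement splits into three parts: (a) that $\scD^{(1)}_{\ell}$ carries an honest $\PSL(2,\Z)$ action with the asserted $S$ and $T$; (b) that this modular data coincides with that of the $(2,2\ell+1)$ Virasoro minimal model at the root of unity $q=e^{-2\pi i/(2\ell+1)}$; and (c) that it is realized by the modular tensor category of the $(A_1,A_{2(\ell-1)})$ Argyres--Douglas theory. Parts (a) and (b) are purely algebraic and, I expect, provable in full; (c) is the physical input and can only be made rigorous modulo an intrinsic construction of the MTC. For (a), the starting point is that $\bfD_1$ is fixed \emph{as a set} by $\PSL(2,\Z)$, equation \eqref{PSL-Di}, so that each of $\tau_\pm,\sigma$ induces an auto-equivalence of $\ABrane(\X,\omega_\X)$ preserving the isomorphism class of $\brD1$, and hence acts on $\Hom(\Bcc,\brD1)=\scD^{(1)}_{\ell}$ --- a priori only projectively.

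To compute the matrices I would work in the basis \eqref{D-basis} built from the Macdonald polynomials $P_j$. The operator $\tau_-$ acts diagonally on $P_j$ by \eqref{Mac-T}, and after imposing the shortening condition \eqref{case3} this produces the diagonal $T$ of \eqref{STD1} up to an overall phase; that phase is pinned down uniquely by demanding $(ST)^3=\mathbf 1$. For $S=\sigma$ I would restrict Cherednik's symmetric pairing \eqref{S-poly} to the $2\ell$-dimensional truncation of $\PR$, conjugate by the base-change matrix $G$ of \eqref{D-basis}, and observe --- using that $P_j$ and $P_{2\ell-1-j}$ share a $y$-eigenvalue at this specialization, cf.\ \eqref{null-2} --- that the result is block-diagonal with an $\ell\times\ell$ block acting on $\scD^{(1)}_{\ell}$; dividing by the Macdonald norms $g_j$ and by the constant $b_\ell$ makes $S^2=\mathbf 1$. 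It then remains to verify the braid relation $STS=TST$. I would deduce this from the fact that $\tau_\pm,\sigma$ act on $\OO^q(\X)=\SH$ by the automorphisms \eqref{MCG-quantum}, which satisfy the braid relation, together with the observation that $\scD^{(1)}_{\ell}$ is a cyclic $\SH$-module, so the intertwiner implementing each automorphism is unique up to the scalar already fixed; consistency of these scalars forces $(\tau_+\tau_-^{-1})^3$ to act by $1$, i.e.\ the projective action descends to an honest $\PSL(2,\Z)$ representation.

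For (b), specialize further to $q=e^{-2\pi i/(2\ell+1)}$ (keeping $t^2=q^{-(2\ell-1)}$ as in \eqref{case3}) and evaluate. Setting $j=s-1$ with $s=1,\dots,\ell$, the diagonal $T$ reproduces $e^{2\pi i(h_{1,s}-c/24)}$ with central charge $c=-\frac{2(6\ell-1)(\ell-1)}{2\ell+1}$ and weights $h_{1,s}=\frac{(s-1)(s-2\ell)}{2(2\ell+1)}$; one checks directly that the prefactor $e^{\frac{(\ell-1)\pi i}{6}}q^{-\frac{(2\ell-1)(\ell-1)}{6}}$ equals $e^{-2\pi i c/24}$, so no further adjustment is needed. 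For $S$, I would use the terminating ${}_2\phi_1$ expression \eqref{sym-Macdonald-A1}: at this root of unity the specialized Macdonald polynomials collapse to ratios of $q$-integers, and Cherednik's pairing, after normalization by $g_j$ and $b_\ell$, becomes proportional to $q^{(j+1)(j'+1)/2}-q^{-(j+1)(j'+1)/2}$, hence to $\sin\!\left(\frac{2\pi(j+1)(j'+1)}{2\ell+1}\right)$ --- precisely the $(2,2\ell+1)$ minimal-model $S$-matrix $S_{(1,s)(1,s')}$. Matching the number of primaries is automatic, since $(2,2\ell+1)$ has exactly $\ell$ of them. \textbf{The main obstacle is this last evaluation}: establishing the basic-hypergeometric identity that turns the root-of-unity limit of Cherednik's Macdonald pairing into the minimal-model sine matrix. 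I expect it to follow from the known evaluations of Macdonald polynomials at $t=\pm q^{\pm(2\ell-1)/2}$ (where they become proportional to $\mathfrak{sl}_2$-characters or their ``dual'' partners) together with Verlinde-type orthogonality of the $(2,2\ell+1)$ characters, but carrying this through cleanly is the crux.

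For (c), the identification is physical. Via the compactification of the M5-brane system reviewed in \S\ref{sec:modularity}--\S\ref{sec:SU2}, the brane $\brD1$ corresponds to a boundary condition $\brane_M$ for which $\cT[M_3]$ on $S^1\times D^2$ computes the invariant of $M_3=S^1\times(S^2\setminus\textrm{pt})$ decorated by the $(A_1,A_{2(\ell-1)})$ theory, so that $\Hom(\Bcc,\brD1)\cong K^0(\MTC)$ identifies $\scD^{(1)}_{\ell}$ with the Grothendieck group of the (non-unitary) modular tensor category whose modular data is that of the $(2,2\ell+1)$ minimal model --- the latter being the known chiral algebra of that Argyres--Douglas theory. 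A mathematically complete proof would additionally require constructing this MTC intrinsically (for instance as a suitable category of modules for the $(2,2\ell+1)$ vertex algebra, or via the skein-categorification program of \S\ref{sec:SL2Z-skein}) and exhibiting a braided equivalence compatible with the $\SH$-module structure; I would regard this as beyond the present scope and present (c) as the conjectural bridge that (a) and (b) make precise at the level of modular data.
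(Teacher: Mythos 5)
Your proposal follows essentially the paper's own route, and where it goes beyond, it correctly identifies what is left open. Bear in mind that the paper states this as a \emph{conjecture}: it constructs the matrices, checks $\ell=2$, and supplies the 3d interpretation, but does not prove the modular-data match or exhibit the MTC intrinsically. Concretely, your part (a) is exactly the paper's construction: $\bfD_1$ is set-wise $\PSL(2,\Z)$-invariant by \eqref{PSL-Di}, the $S$-matrix is the truncation of Cherednik's pairing \eqref{S-poly} conjugated by the matrix $G$ implementing the change of basis \eqref{D-basis}, and the normalizations $b_\ell$ and the phase in $T$ are \emph{chosen} so that $S^2=1$ and $(ST)^3=1$. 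Your appeal to uniqueness of intertwiners on the cyclic module is a reasonable remark, but by itself Schur's lemma only gives the relations up to scalars; one must still exhibit the two-parameter rescaling solving both constraints, which the paper does by naming $b_\ell$ and the $T$-phase explicitly. Part (c) is the same physical input as \S\ref{sec:SU2}, where $\Hom(\Bcc,\brD1)$ is matched to the $\U(1)_\beta$ fixed-point set of the wild Hitchin moduli space for the $(A_1,A_{2(\ell-1)})$ Argyres--Douglas theory, categorified to $\MTC[A_1,A_{2(\ell-1)}]$. Where you genuinely go further is part (b): the paper does \emph{not} carry out the root-of-unity evaluation; it cites \cite{Kozcaz:2018usv,Dedushenko:2018bpp} for the minimal-model coincidence and remarks, just after the $\ell=2$ example, that an \emph{additional change of basis} is still required to bring $S$ into the standard Kac form---a wrinkle your ${}_2\phi_1$ sketch should account for. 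Your $T$-matrix check is correct: the central charge $c=-\tfrac{2(6\ell-1)(\ell-1)}{2\ell+1}$ and weights $h_{1,s}=\tfrac{(s-1)(s-2\ell)}{2(2\ell+1)}$, with the prefactor identity $e^{\frac{(\ell-1)\pi i}{6}}q^{-\frac{(2\ell-1)(\ell-1)}{6}}=e^{-2\pi i c/24}$ at $q=e^{-2\pi i/(2\ell+1)}$, all work out. Your identification of the $S$-matrix evaluation---turning the specialized Macdonald pairing into the sine matrix---as the crux is the right diagnosis; that is precisely why the statement remains a conjecture in the paper.
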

Here we normalize \eqref{base-change-S} by the Macdonald norm \eqref{Mac-norm-A1} and
$$
b_{\ell}=2\prod_{i=0}^{\ell-2}(q^{1/2+ i}-q^{-1/2-i})
$$
so that $S^2=1$. We also normalize \eqref{Mac-T} by $e^{(\ell-1)\pi i/6} q^{-(2 \ell - 1) (\ell - 1)/6}$ so that $(ST)^3=1$.

For instance, when $\ell=2$, these matrices become
$$T\big|_{\scD^{(1)}_{\ell=2}}=e^{\frac{\pi i}6}\begin{pmatrix}
q^{-\frac12} & 0 \\
0 & q^{\frac12}
\end{pmatrix}~,\qquad S\big|_{\scD^{(1)}_{\ell=2}}=\frac{i}{q^{\frac12}-q^{-\frac12}}
\begin{pmatrix}
    1 & -(q-1+q^{-1}) \\
    1 & -1
\end{pmatrix}~.
$$
When $q=e^{-2 \pi i /5}$, they coincide with the modular matrices of the $(2,5)$ Virasoro minimal model although an appropriate change of basis is required to bring the $S$-matrix into the standard form.

\begin{table}[ht]
  \begin{center}
    {\renewcommand{\arraystretch}{1.4}\begin{tabular}{c|c|c}
  \hline
     finite-dim irrep  & shortening condition & $\displaystyle A$-brane condition \\
  \hline
     $\displaystyle \mathscr{F}_{m}^{(x_m,y_m)}$ & $\displaystyle q^{m} =1$ & $ m=\frac{1}{\hbar }$ \\
    \hline
     $\displaystyle \mathscr{U}_{n}$ & $\displaystyle q^{2n} =1$ & $ n=\frac{1}{2\hbar }$ \\
    \hline
     $\displaystyle \mathscr{V}_{k+1}$ & $\displaystyle t^{2} =-q^{-k}$ & $k=\frac{1}{2\hbar } +\frac{\tgamma _{p} +i\talpha _{p}}{i\hbar }$ \\
    \hline
     $\displaystyle \mathscr{D}_{\ell }$ & $\displaystyle t =q^{-\ell +1/2}$ & $ \ell =\frac{\tgamma _{p} +i\talpha _{p}}{2i\hbar }$ \\
    \hline
  \end{tabular}}
    \end{center}
  \caption{A summary of finite-dimensional irreducible representations of $\protect\SH$ with corresponding shortening and $A$-brane conditions.}
  \label{fig:summary}
\end{table}

\subsection{Bound states of branes and short exact sequences:   morphism matching}\label{sec:bound-state}
We have hitherto studied generic conditions when an individual $A$-brane supported on a compact irreducible Lagrangian can exist. Next, we will figure out the situation in which two distinct $A$-branes are present at a singular fiber of the Hitchin fibration. When two distinct $A$-branes intersect at a singular fiber, they will form a bound state. In this section, we will study a bound state of compact $A$-branes and identify the corresponding $\SH$-module. This provide evidence of the equivalent morphism structure under the functor \eqref{eq:functor}, restricting to the subcategory of compact Lagrangian $A$-branes with that of finite-dimensional $\SH$-modules.

\subsubsection{\texorpdfstring{At singular fiber of type $I_2$}{At singular fiber of type I2}}
As seen in \S\ref{sec:generic-fiber} and \S\ref{sec:Ui}, the compact branes $\brF$ and $\brUi$ can exist when $q$ is a root of unity and $t$ is generic. As Figure \ref{fig:I_2} shows, the irreducible components $\bfU_1$ and $\bfU_2$ at  the singular fiber $\pi^{-1}(b_1)$ of type $I_2$ intersect at two points $p_1$ and $p_2$. Therefore, the Floer complex \cite{floer1988morse,floer1989witten} (or morphisms) of the two $A$-branes $\frakB_{\UU_1}$ and $\frakB_{\UU_2}$ is
\begin{equation}
\Hom^*(\frakB_{\UU_1},\frakB_{\UU_2}):=CF^*(\frakB_{\UU_1},\frakB_{\UU_2})\cong\bC\langle p_1\rangle\oplus \bC\langle p_2\rangle~.
\end{equation}
Note that the Floer complexes $CF^*(\frakB_{\UU_1},\frakB_{\UU_2})$ and $CF^*(\frakB_{\UU_2},\frakB_{\UU_1})$ and the differentials on them are Poincar\'e-dual to each other. Namely,
each intersection point $p_i$ defines generators of both complexes, whose degrees sum to $2$ (the complex dimension of the target).

\begin{figure}[ht]\centering
  \includegraphics{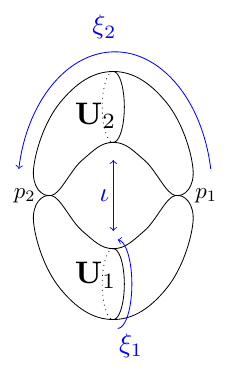}
  \caption{At the singular fiber $\pi^{-1}(b_1)$, $\xi_2$ exchanges the irreducible components, $\UU_1$ and $\UU_2$, by the $180^{\circ}$ rotation along the $(0,1)$-circle (longitude). Therefore, $\xi_2$ exchanges $p_1$ and $p_2$. On the other hand, $\iota$ exchanges $\UU_1$ and $\UU_2$ by fixing $p_1$ and $p_2$. Besides, $\xi_1$ maps each irreducible component to itself by the  $180^{\circ}$ rotation along the $(1,0)$-circle (meridian).}
  \label{fig:I_2}
\end{figure}

This implies that there are two bound states of $\frakB_{\UU_1}$ and $\frakB_{\UU_2}$ as $A$-branes. Let us consider one natural candidate for them: a brane $\brF^{\lambda}$ degenerating into the singular fiber $\pi^{-1}(b_1)$ of type $I_2$. First of all, the dimension $m$ of $\Hom(\Bcc,\brF^{\lambda})$ needs to be even $m=2n$ in order for the brane to be supported on a singular fiber because the evaluation of the integral cohomology class $[F'_\bfF/(2\pi)]$ over a singular fiber cannot be odd like \eqref{odd-integral}. There is also a topological constraint to be a bound state of the branes $\frakB_{\UU_1}$ and $\frakB_{\UU_2}$. As illustrated in Figure \ref{fig:I_2},
a one-cycle, say the $(1,0)$-cycle, of a torus is pinched to a double point at two locations so that the singular fiber $\pi^{-1}(b_1)$ topologically consists of two copies of $\bC\bfP^1$. Therefore, it has the unique bounding spin structure along the $(1,0)$-cycle, which is Neveu-Schwarz. Consequently, only a brane $\brF^{(-,+)}$ with trivial holonomy and the Neveu-Schwarz spin structure along the $(1,0)$-cycle of $\F$ can degenerate to a bound state of the branes $\frakB_{\UU_1}$ and $\frakB_{\UU_2}$ at the singular fiber $\pi^{-1}(b_1)$.

There is indeed a corresponding representation of $\SH$.
We see that the support $\UU_{1} \cup \UU_{2}$  is invariant under $\tau_-$ (as a set). Thus, a brane $\brF^{(x_{2n},+)}$ can enter the singular fiber when the corresponding module $\scF_{2n}^{(-,+)}$ is $\tau_-$-invariant, namely when  the two ideals
$$
(X^{2n}+X^{-2n}-x_{2n}-x_{2n}^{-1}), \qquad
(\tau_-(X^{2n}+X^{-2n}-x_{2n}-x_{2n}^{-1}))~
$$
coincide.
Under the condition \eqref{case1}, the $2n$-th Macdonald polynomial takes the form $P_{2n}=X^{2n}+X^{-2n}+2=(X^n+X^{-n})^2$, and \eqref{Mac-T} yields $\tau_-(1)=1$ and $\tau_-(P_{2n})=t^{-2n}P_{2n}$. For a generic value of $t$, only when $x_{2n}=-1$, we therefore have the $\tau_-$-invariant module $\scF_{2n}^{(-,+)}\cong \PR/(P_{2n})$.
Moreover, since $P_{2n}=(P_{n})^2$ under \eqref{case1}, the quotient of the polynomial representation $\PR$ yields a short exact sequence
\bea\label{SES4}
0\to \scU_n^{(2)}\to \scF_{2n}^{(-,+)}\to \scU_n^{(1)}\to0~.
\eea
The representation $\scF_{2n}^{(-,+)}$ corresponds to the bound state $\brF^{(-,+)}$.  As explained in \S\ref{sec:finite-rep}, the raising operator \eqref{raising} of $\PR$ does not become null because the prefactor $(1-q^{2j}t^2)$ cancels with the denominator of $P_{j+1}$. Consequently, this short exact sequence \eqref{SES4} does not split as a direct sum, but rather is a nontrivial extension of  $\scU_n^{(1)}$  by $\scU_n^{(2)}$. This is analogous to the fact that $\C[X]/(X^{2n})\to \C[X]/(X^{n})$ cannot split as a $\C[X]$-module. As such, when the gradings are chosen such that $\scU_n^{(1,2)}$ are in degree zero, the degree of the corresponding morphism between the $A$-branes is one, and corresponds to the class in~$\Hom^1(\frakB_{\UU_1},\frakB_{\UU_2})$ represented by~$\frakB_{\FF}^{(-,+)}$.\footnote{Often literature in mathematics uses the notation $\textrm{Ext}^1$ instead of $\Hom^1$. Here they have the same meaning.}
Although this paper does not determine the degree of the morphism in the $A$-brane category, the representation category of $\SH$ predicts one. Even in what follows, non-trivial extensions in the representation category give a description of degree-one morphisms (extensions or bound states) of various distinct compact $A$-branes. Determining the degree of the morphisms directly in the $A$-brane category is left for future work.

Since $\Hom^*(\frakB_{\UU_1},\frakB_{\UU_2})$ is two-dimensional, there must be another generator. To identify it, we consider the symmetries. As Figure \ref{fig:I_2} illustrates, $\xi_2$ and $\iota$ exchange the irreducible components $\UU_1$ and $\UU_2$ at the singular fiber. More precisely, $\xi_2$ acts on the singular fiber as the $180^{\circ}$ rotation along the $(0,1)$-circle (longitude) so that the intersection points $p_{1,2}$ are exchanged by $\xi_2$. On the other hand, $\iota$ exchanges $\UU_1$ and $\UU_2$ by fixing $p_{1,2}$. Consequently, the images of the brane $\brF^{(-,+)}$ under the symmetries $\xi_2$ and $\iota$ are non-isomorphic objects in the $A$-brane category. They indeed span the morphism space
\begin{equation}\label{Hom21}
  \Hom^1(\frakB_{\UU_2},\frakB_{\UU_1}) \cong \bC\langle  \xi_2(\brF^{(-,+)})\rangle  \oplus \bC \langle \iota(\brF^{(-,+)})\rangle ~.
\end{equation}
As a result, two irreducible branes can form bound states in more than one
way.
Similarly, the images of the brane $\scF_{2n}^{(-,+)}$ under the symmetries $\xi_2$ and $\iota$ are non-isomorphic in the representation category of $\SH$ for $n>1$.  The image of the short exact sequence \eqref{SES4} under $\xi_2$ becomes
\bea\label{SES1}
0\to  \scU_n^{(1)} \to \xi_2(\scF_{2n}^{(-,+)})\to \scU_n^{(2)}\to0~.
\eea
Likewise, The image of the short exact sequence \eqref{SES4} under $\iota$ becomes
\bea\label{SES1v}
0\to  \scU_n^{(1)} \to \iota(\scF_{2n}^{(-,+)})\to \scU_n^{(2)}\to0~.
\eea
By using the polynomial representation \eqref{poly-rep-sym}, one can read off the action of the generators $x$ and $y$ on these representations as
\begin{align}
 x\Big|_{\xi_2(\scF_{2n}^{(-,+)})}=  \raisebox{-2.2cm}{\includegraphics[width=5.5cm]{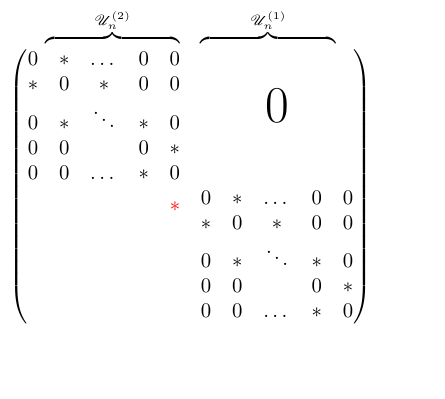}}~,\qquad   x\Big|_{\iota(\scF_{2n}^{(-,+)})}= \raisebox{-2.2cm}{\includegraphics[width=5.5cm]{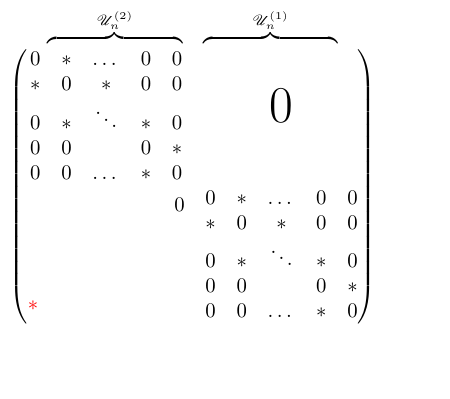}}~,
\end{align}
on the basis where $y$ acts diagonally as $\textrm{diag}(t+t^{-1},qt+q^{-1}t^{-1},\ldots, q^{2n-1}t+q^{1-2n}t^{-1})$. Note that the upper-left block and lower-right matrices of the $x$ actions are the same whereas the lower-left matrices are different. These matrices explicitly show that $\xi_2(\scF_{2n}^{(-,+)})$ and $ \iota(\scF_{2n}^{(-,+)})$ are not isomorphic.

In fact, the composition $\xi_2\circ\iota$ leaves $\frakB_{\UU_1}$ and $\frakB_{\UU_2}$ as they are, respectively. However, it maps $\brF^{(-,+)}$ to a different object. Correspondingly, we have a short exact sequence
\bea\label{SES4v}
0\to\scU_n^{(2)}\to \xi_2\circ\iota(\scF_{2n}^{(-,+)})\to \scU_n^{(1)}\to0~,
\eea
which is not isomorphic to \eqref{SES4}. Therefore, they span the morphism space of two dimensions
\begin{equation}\label{Hom12}
  \Hom^1(\frakB_{\UU_1},\frakB_{\UU_2}) \cong \bC\langle  \brF^{(-,+)}\rangle  \oplus \bC \langle \xi_2\circ\iota(\brF^{(-,+)})\rangle ~,
\end{equation}
which is Poincar\'e-dual to \eqref{Hom21}. In conclusion, when two irreducible branes intersect two points, they can form non-isomorphic bound states with the same support in the $A$-brane category, and these bound states are related to subtleties defining $A$-branes supported on singular submanifolds.

At the other singular fibers $\pi^{-1}(b_{2,3})$, there are similar bound states. As in \eqref{SL2Z-singular}, $\sigma\in\PSL(2,\bZ)$ maps \eqref{Hom21} to $\Hom^1(\frakB_{\UU_3},\frakB_{\UU_4})$. Also,  $\tau_+\in\PSL(2,\bZ)$ maps \eqref{Hom21} to $\Hom^1(\frakB_{\UU_5},\frakB_{\UU_6})$.

\subsubsection{\texorpdfstring{At global nilpotent cone of type $I_0^*$}{At global nilpotent cone of type I0*}}

Next, let us consider the case in which both the $A$-branes $\frakB_{\bfV}$ and $\frakB_{\bfD_i}$ exist. In order for both $\frakB_{\bfV}$ and $\frakB_{\bfD_i}$ to be Lagrangian,  \eqref{Lag-BunG} and  \eqref{Lag-D} need to be satisfied, which implies $\tgamma_p=0$ and $\hbar$ is real whereas $\talpha_p$ and $\tbeta_p$ can be arbitrary. Therefore, the symplectic form must be $\omega_\X=\omega_K/\hbar$.
 In this situation, $\bfF$ and $\bfU_i$ are also Lagrangian with respect to the symplectic form.  Moreover, the quantization conditions, \eqref{dim-D} and \eqref{dim-BunG}, for both $\brDi$ and $\brV$ are
\be \label{2quant-cond}
-c+\frac12=\ell~, \qquad \frac{1}{2\hbar}+2c-1=k+1~,
\ee
which implies that $1/2\hbar=2\ell +k+1$. In other words, the two shortening conditions lead to the other one
$$
\eqref{case3} \ \  \textrm{and}\ \ \iota\textrm{\eqref{case2}} \quad \longrightarrow \quad  \eqref{case1} \ \  \textrm{where} \ \ n=2\ell+k+1~.
$$
Under this condition, there are therefore finite-dimensional representations of three kinds, $\Hom(\Bcc,\frakB_{\bfU_i})$, $\Hom(\Bcc,\frakB_{\bfV})$ and $\Hom(\Bcc,\frakB_{\bfD_i})$. On the representation theory side, the quotient of the polynomial representation yields a short exact sequence
\be\label{SES2}
0\longrightarrow  \iota(\scV_{k+1}) \longrightarrow \scU_n^{(1)}\xrightarrow{f} \scD_\ell^{(1)}\oplus \scD_\ell^{(2)} \longrightarrow 0~.
\ee
We also note that there exist similar short exact sequences for the images of $\scU_n^{(1)}$ under the symmetry $\Xi \times \PSL(2,\Z)$ in \S\ref{sec:Ui} under the same shortening condition.

In a similar fashion, if the branes $\brDi$ and $\brUi$ exist simultaneously, their quantization conditions guarantee the existence of $\brV$. Also, if we assume the presence of the branes $\brV$ and $\brUi$, then the quantization condition for $\brDi$ follows. In fact,
it is straightforward to check that, under the relation $n=k+1+2\ell$, we have
\bea
\eqref{case3}\ \  \textrm{and}\ \ \eqref{case1}& \ \ \longrightarrow \ \ \iota\eqref{case2}~,\cr
\eqref{case1} \ \  \textrm{and}\ \ \iota\eqref{case2}& \ \ \longrightarrow\ \ \eqref{case3}~.
\eea
Subsequently, we have the short exact sequence \eqref{SES2}.

If $\tbeta_p\neq0$, the Hitchin fibration has the three singular fibers of type $I_2$ (Figure \ref{fig:3I_2}), and the Lagrangians $\bfV$ and $\brDi$ are not contained in a Hitchin fiber. Thus, the short exact sequence \eqref{SES2} implies that a Hamiltonian isotopy can deform the brane  $\frakB_{\bfU_1}$ in such a way that it contains $\brV$ as subbranes. The situation becomes much more lucid when $\tbeta_p=0$. As $\tbeta_p\to 0$, the three singular fibers meet simultaneously and transform into the singular fiber of type $I_0^*$, which is the global nilpotent cone. In this process, the $A$-brane $\frakB_{\bfU_1}$ becomes a bound state of $\frakB_{\bfD_1}$, $\frakB_{\bfD_2}$ and  $\frakB_{\bfV}$ because of \eqref{U-homology}. The short exact sequence \eqref{SES2} indeed corresponds to the bound state as illustrated in Figure \ref{fig:branes-reps}. A similar story holds for the other branes $\brUi$ and they become bound states of irreducible branes according to the relation \eqref{U-homology} of the second homology group.

\begin{figure}[ht]\centering
	\includegraphics[width=\textwidth]{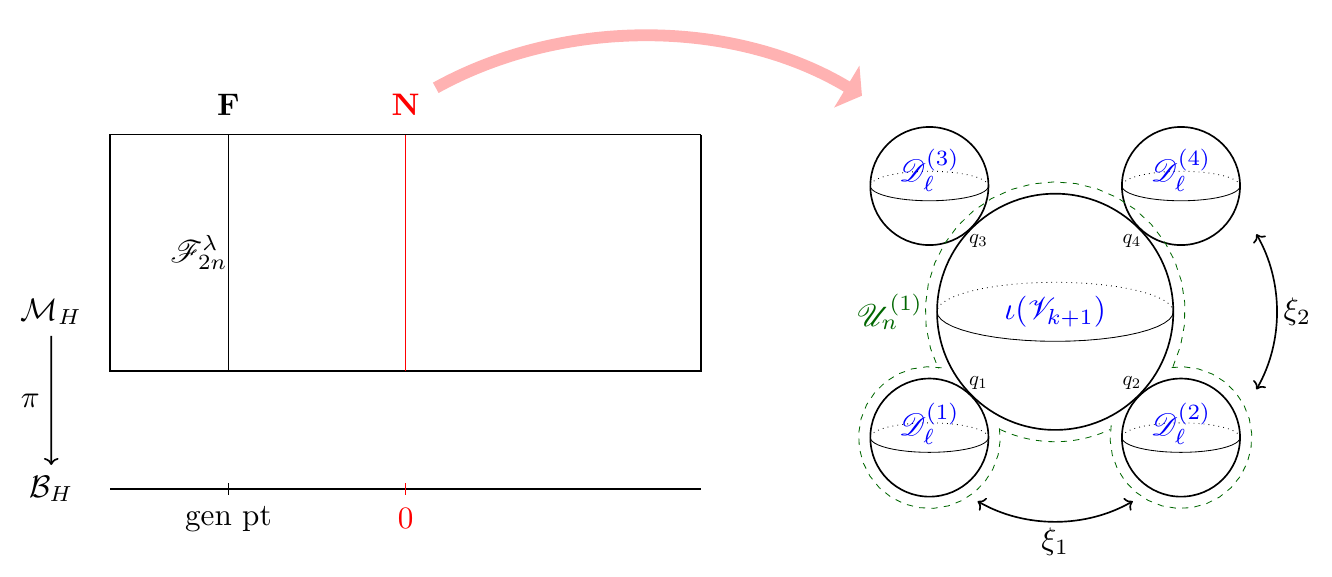}
	\caption{This figure depicts the correspondence between compact supports of $(B,A,A)$-branes and finite-dimensional modules of the spherical DAHA when $\hbar=1/2n$, $\talpha_p/2\hbar=\ell$ and $\tbeta_p=0=\tgamma_p$. Note that $n=2\ell+k+1$.}\label{fig:branes-reps}
\end{figure}

As explained above, the short exact sequence \eqref{SES2} does not split into the direct sum  because the raising operator \eqref{raising} of $\PR$ never becomes null. Geometrically, the choice of the direction of the arrows in  \eqref{SES2} comes from how the support \eqref{brane-poly} of $\brP$ intersects with the global nilpotent cone. As explained in \S\ref{sec:BunG} and \S\ref{sec:divisorTQFT}, the support \eqref{brane-poly} of $\brP$ cuts through real one-dimensional slices of the exceptional divisors $\D_{1,2}$, but it does not intersect with $\V$.  As a result, the brane $\frakB_{\bfV}$ becomes a subbrane of  $\frakB_{\U_1}$ whereas $\frakB_{\bfD_1}\oplus\frakB_{\bfD_2}$ becomes its quotient.

On the other hand, the $\iota$-image $\iota(\brP)$ intersects with $\V$ whereas it does not with exceptional divisors $\D_{i}$. Consequently, there is a short exact sequence
\be\label{SES}
0\longrightarrow  \scD^{(3)}\oplus \scD^{(4)} \longrightarrow \iota(\scU_n^{(1)})\longrightarrow \iota(\scV_{n-2\ell})\longrightarrow 0
~.
\ee
Once we take $\xi_2$-image of this short exact sequence, we have
\be\label{SESv}
0\longrightarrow  \scD_\ell^{(1)}\oplus \scD_\ell^{(2)}\xrightarrow{g}  \xi_2\circ \iota(\scU_n^{(1)})\longrightarrow \iota(\scV_{n-2\ell})\longrightarrow 0
~.
\ee
because $\iota(\scV_{n-2\ell})$ is $\xi_2$-invariant.

Now we are ready to compare the morphism structures of the two categories under the shortening condition $\hbar=1/2n$ and $\talpha_p/\hbar=\ell$. As Figure \ref{fig:branes-reps} illustrates, the supports of branes $\brV$ and $\brane_{\D_1}$ intersect at one point $q_1$ so that the morphism space between them is one-dimensional:
\begin{equation}
\Hom^1(\brane_{\D_1},\brV)\cong\bC\langle q_1 \rangle~.
\end{equation}
 This means that there is one bound state of $\brV$ and $\brane_{\D_1}$. Indeed, we find the corresponding representation from \eqref{SES2}:
\be\label{SES2-f}
0\longrightarrow  \iota(\scV_{k+1}) \longrightarrow f^{-1}(\scD_\ell^{(1)})\longrightarrow \scD_\ell^{(1)} \longrightarrow 0~.
\ee
Its Poincar\'e dual in the representation category can be obtained from \eqref{SESv}
\be\label{SESv-g}
0\longrightarrow  \scD_\ell^{(1)}\longrightarrow  \xi_2\circ \iota(\scU_n^{(1)})/g(\scD_\ell^{(2)})\longrightarrow \iota(\scV_{n-2\ell})\longrightarrow 0
~.
\ee
By using the sign change group $\Xi$, we obtain short exact sequences analogous to \eqref{SES2-f}, which changes from $\scD_\ell^{(1)}$ to $\scD_\ell^{(i)}$ ($i=2,3,4$).
We can further pursue the comparison of the morphism structure. In the $A$-brane category, the morphism space between  $\brV$ and $\brane_{\D_1}\oplus \brane_{\D_2}$ is two-dimensional:
\begin{equation}
\Hom^1(\brane_{\D_1}\oplus \brane_{\D_2},\brV)\cong\bC\langle q_1 \rangle\oplus \bC\langle q_2 \rangle~.
\end{equation}
It is easy to find the corresponding representations
\bea
0\longrightarrow  \iota(\scV_{k+1}) \longrightarrow f^{-1}(\scD_\ell^{(1)})\oplus \scD_\ell^{(2)}  \longrightarrow \scD_\ell^{(1)} \oplus \scD_\ell^{(2)} \longrightarrow 0~,\cr
0\longrightarrow  \iota(\scV_{k+1}) \longrightarrow f^{-1}(\scD_\ell^{(2)})\oplus \scD_\ell^{(1)}  \longrightarrow \scD_\ell^{(1)} \oplus \scD_\ell^{(2)} \longrightarrow 0~.
\eea
In fact, the short exact sequence \eqref{SES2} can be understood as the diagonal element corresponding to $q_1+q_2 \in \Hom^1(\brane_{\D_1}\oplus \brane_{\D_2},\brV)$. More generally, we have
\be
\Hom^1(\oplus_{i\in I}\brane_{\D_i},\brV)\cong \oplus_{i\in I}\bC\langle q_i \rangle~,
\ee
where $I$ is a subset of $\{1,2,3,4\}$. The diagonal element in the representation category is
\be
0\longrightarrow  \iota(\scV_{k+1}) \longrightarrow \scN_{|I|\ell+k+1}^I  \longrightarrow \oplus_{i\in I} \scD_\ell^{(i)} \longrightarrow 0~,
\ee
where $|I|$ is the cardinality of the set $I$. We write the corresponding $A$-brane
\begin{equation}
\brane_{\N^I}\in \Hom^1(\oplus_{i\in I}\brane_{\D_i},\brV)~,
\end{equation}
which is supported on $\N^I:=\cup_{i\in I} \D_i\cup \V$.

If the cardinality $|I|$ is three, the corresponding brane is supported on  $\V$ plus three exceptional divisors, and the representation $\scN^I$ is not obtained by a quotient of the polynomial representation.  Therefore, these are \emph{new finite-dimensional representations}, which do not appear in the theorems of Cherednik~\cite[\S2.8--9]{Cherednik-book}.

When $I=\{1,2,3,4\}$, the support of the corresponding brane is the entire global nilpotent cone $\N$ \eqref{nilconetoy} so that we simply write it as $\brN$. It turns out that this brane gives rise to another interesting bound state in the $A$-brane, which we will see below.

\paragraph{The global nilpotent cone.}
In fact, when $n-k-1$ is odd (or equivalently $c\in \Z_{\le0}$), there is a short exact sequence
\be\label{SES3}
0\longrightarrow  \iota (\scV_{k+1}) \longrightarrow \scF_{2n}^{(+,+)}\longrightarrow \scN_{2n-k-1}\longrightarrow 0~,
\ee
In fact, when both \eqref{case1} and $\iota$\eqref{case2} are satisfied, we have $\pol(\sfL_{2n-k-1})\cdot P_{2n-k-1}=0$. Furthermore, when $n-k-1$ is odd, then $\scN_{2n-k-1}:=\scP/(P_{2n-k-1})$ becomes an irreducible module of dimension $2n-k-1$. The short exact sequence \eqref{SES3}  illustrates that the module $\scN_{2n-k-1}$ can also be obtained by the quotient $\scF_{2n}^{(+,+)}/ \iota (\scV_{k+1}) $.

When $\tbeta_p=0=\tgamma_p$, the Hitchin fibration has one singular fiber of type $I_0^*$, and the entire global nilpotent cone $\bfN$ is Lagrangian with respect to $\omega_\X$.
The short exact sequence \eqref{SES3} indeed depicts the situation where the brane $\frakB_\bfF^{\lambda=(+,+)}$ with the Ramond spin structures enters the global nilpotent cone. Since it has a different spin structure, the brane is \emph{not} decomposed into each irreducible component. As a result, the brane $\frakB_\bfF^{\lambda=(+,+)}$ becomes the bound state of two branes; $\brV$ and $\brN$.
Actually, using the fiber class relation \eqref{fiber-class-rel} with \eqref{dim-generic-fiber} and \eqref{dim-BunG}, one can evaluate the dimension formula for an $A$-brane $\brN$
\bea
\dim \Hom(\Bcc,\frakB_\bfN) = \int_{\bfN}\frac{F+B}{2\pi}= \int_{\bfN} \frac{\omega_I}{2\pi \hbar}=\frac{1}{2\hbar}-2c+1~.
\eea
From \eqref{2quant-cond}, this is equal to $2n-k-1$, and the space of $(\Bcc,\brN)$-strings therefore corresponds to the module $\scN_{2n-k-1}$ in \eqref{SES3}.

One delicate point arises for constructing the Chan-Paton bundle for $\brN$ because $\bfN$ is \emph{not} a manifold. Since $\V$ is linked with the exceptional divisors $\D_i$ in $\brN$, the Chan-Paton bundle for $\brN$ is no longer well-defined at the four joining points of $\V$ and $\D_i$. The Chan-Paton bundle becomes a putative ``line bundle'' $\cL'$ over each exceptional divisor $\D_i$ and the curvature $F'_{\bfN}$ of its connection has a half-integral flux over it \cite{Freed:1999vc}
$$
\int_{\D_i}F'_{\bfN}=-\frac12~,
$$
while it cancels with the $B$-field due to \eqref{deformed-flat}
$$F'_{\bfN}+B\big|_{\bfN}=0~.$$
In other words, $\cL'$ restricted to an exceptional divisor $\D_i$ is a ``square root'' of the $\cO(-1)\to \CP^1$ bundle and the $B$-field flux over it is $1/2$.
As a result, we have
$$
\int_{\D_i}\frac{F+B}{2\pi}= \int_{\D_i} \frac{\omega_I}{2\pi \hbar}=\frac{n-k-1}2\in \frac12+\Z~,
$$
which gives the condition that $n-k-1$ is odd.

Under this circumstance, the line bundle $\cL\to \X$ \eqref{Bcc} for $\Bcc$ is actually the $2n$-th tensor product of the determinant line bundle \cite[\S8]{huybrechts2010geometry} of the Hitchin moduli space. As a result, the geometric quantization of $\bfV$ provides the quantum Hilbert space $\scV_{k+1}$ on a once-punctured torus in Chern-Simons theory \cite{Gukov:2010sw}. The additional series $\scV_{k+1}$ at a primitive $2n$-th root of unity $q=e^{\pi i/n}$ is called \emph{perfect representation} \cite[\S2.9.3]{Cherednik-book}.
Moreover, when $n=k+2$, the additional series  $\scV_{k+1}$ of dimension $k+1$ is isomorphic to the well-known Verlinde formula of $\wh{\fraksl}(2)_{k}$ with level $k$ for a torus (without puncture) \cite{Verlinde:1988sn}.

\bigskip

Let us end this section by commenting briefly on future directions.
There are an enormous number of non-compact Lagrangian submanifolds in the moduli space of Higgs bundles that have been studied in their own right: for example, the image of the Hitchin section, the brane of opers (see \cite{Nekrasov:2010ka,Mikhaylov:2017ngi,Balasubramanian:2017gxc,Gaiotto:2021tsq} in a similar context), or the $A$-polynomial of any knot \cite{Gukov:2003na}. Each of these geometric objects should naturally be associated with an \SH-module whose behavior precisely matches the geometric properties of the object, just as we demonstrate for compact Lagrangians and for the (generalized) polynomial representation. It would be of great interest to further pursue this correspondence for infinite-dimensional representations, even just in the rank-one case.

It would also be interesting to connect explicitly with other mathematical contexts in which algebraic approaches to the Fukaya category or equivalences between Fukaya categories and module categories appear. To give just one example, in \cite{LekiliWF}, Etg\"u and Lekili study the Chekanov-Eliashberg dg-algebra associated with a Legendrian link in a Weinstein four-manifold for a given graph. They show that this algebra is $A_\infty$-quasi-isomorphic to, roughly speaking, the endomorphism algebra of a collection of generating objects of the wrapped Fukaya category of the surface, and go on to recover the multiplicative preprojective algebra studied in~\cite{CBS} in the context of the Deligne--Simpson problem from the Legendrian link. When the graph in question is the affine $D_4$ Dynkin diagram, it is expected that the corresponding preprojective algebra is related to DAHA. (We thank A.~Oblomkov for private communication related to this point.) The computations of the (wrapped) Fukaya category of the above four-manifolds in \cite{LekiliWF} thus may provide an interesting perspective on our Claim~\ref{Th:Claim1} as well as its generalization to other algebras.

\section{3d theories and modularity}
\label{sec:3d}

In this section, we connect the brane setup of the above 2d $A$-model to 3d/3d correspondence and shed light on various modular representations coming from geometry. In particular, we explain the origin for the explicit form of the $S$ and $T$ matrices in Conjectures \ref{conj:rcs} and \ref{conj:MTC-D1}.
The modular action in Conjecture \ref{conj:rcs} turns out to be the one of refined Chern-Simons theory \cite{Aganagic:2011sg}.
On the other hand, the modular action in  Conjecture \ref{conj:MTC-D1} is a ``hidden'' (surprising) one;
it is realized on the vector space spanned by the set of connected components of fixed points under the Hitchin $\U(1)_{\beta}$ action on the moduli space of wild Higgs bundles associated to a certain Argyres-Douglas theory.
Furthermore, we propose how non-standard (\textit{e.g.}\ logarithmic) modular data of $\MTC [M_3]$ can be described in terms of the $A$-model on the Hitchin moduli space associated with the Heegaard decomposition of $M_3$ and discuss possible connections to skein modules of closed oriented 3-manifolds.

One advantage of connecting the 2d $A$-model to the three-dimensional perspective is that all of these modular actions admit a natural categorification. In other words, in all of these instances it makes sense to ask if the space of open strings in the Hitchin moduli space can be realized as the Grothendieck groups of a tensor category (possibly, non-unitary or non-semisimple):
$$
\SL(2,\Z) \ \rotatebox[origin=c]{-90}{$\circlearrowright$} \ K^0 (\MTC)~.
$$
Finally, we will see that, in the opposite direction, the relation to the 2d $A$-model offers a unifying home for the above-mentioned modular data.

\subsection{DAHA and modularity}\label{sec:modularity}

The fivebrane system in M-theory that provides geometric origins of the modular representations on DAHA modules is the following familiar setting for the 3d/3d correspondence
\be\label{3d3d}
\begin{matrix}
	{\mbox{\rm space-time:}} & \quad & S^1 & \times_{q,t} & \big(TN & \times  &T^* M_3\big) \\
	{\mbox{\rm $N$ M5-branes:}} & \quad & S^1 &\times_q& D^2 & \times &  M_3
\end{matrix}
\ee
where $M_3$ is a 3-manifold, $D^2$ is a two-dimensional disk (or a cigar), and $TN \cong \R^4$ is the Taub-NUT space.
Writing the local complex coordinates $(z_1,z_2)$ on $TN$, such that $z_1$ also parametrizes $D^2$, we turn on the $\Omega$-background, \textit{i.e.}\ a holonomy along $S^1$ that provides a twisting of $TN$ via an isometry
\be\label{omega}
(z_1, z_2)\to (qz_1, t^{-1}z_2)~.
\ee
In this setting, the symmetry group of the 6d $(2,0)$ theory on the M5-branes is reduced to
\be\label{M5-sym}
\SO(6)_E\times \SO(5)_R\to \SO(3)_1 \times \SO(3)_2\times \SO(3)_R\times \SO(2)_R~,
\ee
where $\SO(3)_1$ and $\SO(3)_2$ are the space-time symmetry of $S^1\times_q D^2$ and $M_3$, respectively, and $\SO(3)_R$ is the symmetry of a cotangent fiber of $T^*M_3$.  We perform a topological twist by taking the diagonal subgroup $\SO(3)_{\textrm{diag}}$ of $\SO(3)_2\times \SO(3)_R$ so that the resulting theory is partially topological (along $M_3$). After the partial topological twist, the effective theory on $S^1\times_q D^2$ only depends on topology (but not the metric) on $M_3$ and is described by 3d $\cN=2$ theory often denoted $\cT[M_3]$ \footnote{In this section, we restrict ourselves to $\SU(N)$ gauge group so that $\cT[M_3,\SU(N)]=\cT[M_3]$.}, with the $R$-symmetry given by $\SO(2)_R$ in \eqref{M5-sym}. When $M_3$ is a Seifert manifold, there is an extra $\U(1)_S$ symmetry associated with the two directions in the cotangent bundle normal to the Seifert fiber. As a result, the partition function, called the half-index, of the 3d $\cN=2$ theory $\cT[M_3]$ on $S^1\times_q D^2$ with a 2d $\cN=(0,2)$ boundary condition $\cB$ in this setting is defined as
\be\label{refined-index}
Z_{\cT[M_3]}(S^1\times_q D^2, \cB)=\Tr(-1)^F e^{-\beta(\Delta-R-J_3/2)} q^{J_3+S}t^{R-S}~,
\ee
where $S$ and $R$ are charges of $\U(1)_S$ and $\SO(2)_R$, respectively, $\Delta$ is the Hamiltonian, and $J_3$ is an eigenvalue of the Cartan subalgebra of $\SO(3)_1$. The difference between $\U(1)_S$ and $\SO(2)_R$ is customarily denoted $\U(1)_\beta$ in \cite{Gukov:2015sna,Gukov:2016gkn,Gukov:2017kmk}, and its fugacity is the variable $t$ in \eqref{refined-index}.

Notice that the system \eqref{3d3d} does not involve a once-punctured torus which was used to define the Hitchin moduli space and the parameter $t$ as in the previous section. However, for gauge groups of type $A$, the following two physical systems are expected to be closely related:
\begin{itemize}[nosep]
    \item 6d $(2,0)$ theory on $S^1\times C_p$ with $C_p$ being a once-punctured torus.
    \item 4d $\mathcal{N}=2^*$ theory  on $S^1$.
\end{itemize}
Although the two systems would have different spectra,\footnote{For example, many KK modes of the 6d theory on $T^2$ have no counterparts in the 4d theory. Even if one replaces the 4d $\mathcal{N}=2^*$ theory with 6d $(2,0)$ theory on a torus (with the mass parameter replaced by holonomies for a $\U(1)$ subgroup of the R-symmetry on $T^2$) the full spectrum is still different. One way to see this is that the latter theory depends on all three $\U(1)$ holonomies on $T^2\times S^1$ in a periodic way, and they are completely symmetric, while this is not the case for the former theory obtained from a punctured torus.} their BPS sectors are expected to be equivalent. In particular, at low energy both systems realize a 3d sigma-model onto the Hitchin moduli space. The deformation parameters can also be identified as follows.

On one side, the (classical) deformations are parametrized by the triplet $(\alpha_p,\beta_p,\gamma_p)$ of monodromy parameters around the puncture as introduced before.
On the other side, for the 4d $\mathcal{N}=2^*$ theory, the triplet of deformation parameters is given by the complex mass of the adjoint hyper-multiplet in 4d together with the holonomy of the $\U(1)$ flavor symmetry along the circle. In the system \eqref{3d3d}, the 4d $\mathcal{N}=2^*$ theory is obtained by the compactification of the 6d theory on $T^2\subset M_3$ with  holonomy for $\U(1)_\beta$ along $S^1$ \eqref{refined-index}. In particular, the parameter $t$ defined above is identified with the $t$ in DAHA.  In this section, we will be looking at questions whose answers depend holomorphically on $t$, as required by supersymmetry on $M_3$, and the other deformation parameter $\beta_p$ won't play a role. For example, what complex connections on $T^2\subset M_3$ can be extended to the entire $M_3$ is a question that is ``holomorphic in $J$'' (and given by intersections of $(A,B,A)$-branes in the Hitchin moduli space). Notice that this non-trivial relation only holds for a gauge group of type $A$, while for other types the class $\mathcal{S}$ construction of 4d $\mathcal{N}=2^*$ theory is generally unknown, and the once-punctured torus does not lead to either the 4d $\mathcal{N}=2^*$ theory or DAHA.

One statement of the 3d/3d correspondence is the duality between the \textit{non-perturbative} complex $\SL(N,\C)$ Chern-Simons theory on $M_3$ and the 3d $\cN=2$ theory $\cT[M_3]$ on $S^1\times_q D^2$, so that the partition functions of both sides are identified. As explained in \cite{Gukov:2016gkn,Gukov:2017kmk,Gukov:2019mnk}, for a particular class of boundary conditions $\cB_b$ labeled by $b\in (\Spin^c(M_3))^{N-1}$, the partition function of the 3d $\cN=2$ theory $\cT[M_3]$ on $S^1\times_q D^2$ counts BPS states and, therefore, has a $q$-expansions with integer coefficients and integer $q$-powers\footnote{Up to an overall factor $q^{\Delta_b}$ that plays an important role but is not relevant to the present discussion.}
\be\label{half-index}
\widehat { Z }_{\cT[M_3],b } ( q,t ) :=
Z_{\cT[M_3]}(S^1\times_q D^2, \cB_b)~.
\ee
The relation to Chern-Simons theory involves the same space of boundary conditions with a ``dual'' basis, related to $\cB_b$ via  the $\bS$-matrix
$$
\mathbb{S}_{ab}=\frac{\sum_{\sigma\in \frakS_N}e^{2\pi i \sum_{i=1}^{N-1}\lk(a_i,b_{\sigma(i)})}}{|\mathrm{Stab}_{\frakS_N}(a)|\cdot|\Tor H_1(M_3,\Z)|^{(N-1)/2}}~.
$$
In particular, the partition function of the non-perturbative $\SL(N,\C)$ Chern-Simons theory on $M_3$ is given by
\be
\Big( - \frac{\log q}{4\pi i} \Big)^{\frac{N-1}{4}}
\sum_{a,b\in (\Spin^c(M_3))^{N-1}} e^{2\pi i \bar k \cdot  \lk(a,a)}  \mathbb{S}_{a b } \widehat { Z }_{\cT[M_3],b } ( q,t )
\ee
with generic $|q|<1$, and specializes to that of $\SU(N)$ Chern-Simons theory when $q \to e^{2\pi i / \bar k}$ with integer (renormalized) level $\bar k = k + N$. The origin of $\log q$ factors is explained in \cite{Pei:2015jsa}.
Note that the linking pairing $\lk$ on $\Spin^c(M_3)$ is defined by the Pontryagin duality.
We will see shortly that it is the basis of BPS partition functions \eqref{half-index} and the corresponding boundary conditions $\cB_b$ that are most naturally related to DAHA.

Consider a simple example where $M_3 = L(p,1)$ is a Lens space. The lens space $L(p,1)$ can be constructed by gluing two solid tori with a homeomorphism between the boundary tori sending the meridian $(1,0)$ of one torus to a $(1,p)$ cycle of the other.
The corresponding 3d $\cN=2$ $\SU(N)$ gauge theory $\cT[L(p,1)]$ consists of one adjoint chiral multiplet $\Phi$ with $R$-charge 2 and $\cN=2$ Chern-Simons term with level $p$.
Consequently, the factor $\wh Z_{\cT[L(p,1)],b}$ labeled by $b\in (\Spin^c(M_3))^{N-1}$ is defined by
\be\label{Z-hat}
\wh Z_{\cT[L(p,1)],b} (q,t)={\frac{1}{N!}} \int_{|X|=1} \frac{d X}{2\pi i X} \;  \Upsilon(X;q,t) \Theta_{b}^{\bZ^{N-1};p}(X, q) ~,
\ee
where
$$
\Upsilon(X)= \prod_{\alpha\in \sfR} \frac{(X^\alpha; q^2)_\infty }{ (t^2X^\alpha; q^2)_\infty}~, \qquad \Theta_{b}^{\bZ^{N-1};p}(X, q)=\sum_{n\in p\bZ^{N-1}+b}q^{2\sum_{i=1}^{N-1}{n_i^2}/p}\prod_{i=1}^{N-1}X_i^{n_i}~.
$$
Here we impose the Neumann boundary condition at the boundary $\partial (S^1\times_q D^2)$ on the vector multiplet  and adjoint chiral multiplet, which give rise to the numerator and denominator of the Macdonald measure $\Upsilon$ by one-loop determinant \cite{Yoshida:2014ssa} (see also \eqref{Macdonald-product}). In addition,  the boundary partition function $\Theta_{b}^{\bZ^{N-1};p}$ encodes the information about the Chern-Simons term with level $p$, and 2d $\cN=(0,2)$ boundary condition at the boundary $\partial (S^1\times_q D^2)$ is labeled by $b\in (\Spin^c(M_3))^{N-1}$. In fact, $\wh Z_{\cT[L(p,1)],b}$ can be understood as the half-index of the 3d/2d coupled system.  For more detail, we refer to \cite{Gukov:2016gkn,Gukov:2017kmk}.

When the lens space $L(p,1)$ is constructed by gluing two solid tori, we can include a Wilson loop in each solid torus. The \emph{reduced} partition function with boundary condition specified by $\Spin^c$ structure $b$ results in
\be\label{half-index-Wilson}
\wh Z_{\cT[L(p,1)],b}(\lambda,\mu) =\frac{1}{ \wh Z_{\cT[L(p,1)],b} } {\frac{1}{N!}} \int_{|X|=1} \frac{d X}{2\pi i X} \;  \Upsilon(X)\;
\Theta_{b}^{\bZ^{N-1};p}(X) \; P_\lambda(X)\overline{P_\mu(X)}~,
\ee
where the conjugation  $f\mapsto \overline f$ is defined in \eqref{conjugate}.
In particular, when $p=0$, \ie $ L(0,1)\cong S^1\times S^2$, the partition function vanishes unless the total charge of two Wilson loops is zero. This defines the Macdonald inner product  \eqref{Macdonald-norm}
$$\langle P_\lambda,P_\mu\rangle= \wh Z_{\cT[L(0,1)],0}(\lambda,\mu) =\delta_{\lambda,\mu}~g_\lambda(q,t)~.$$
In the case of $M_3=S^3$, this defines the symmetric bilinear pairing \cite{cherednik1995macdonald,etingof1996representation,kirillov1996inner}
\be \label{SUN-S}
\textbf[P_\lambda,P_\mu\textbf]=\wh Z_{\cT[L(1,1)],0}(\lambda,\mu) =P_\lambda(q^{-2\mu}t^{-2\rho})P_\mu(t^{-2\rho})~,
\ee
where $\rho$ is the Weyl vector of $\fraksl(N)$. As in Appendix \ref{app:sym-bilinear}, this pairing $\C_{q,t}[X]^{\frakS_N}\times \C_{q,t}[X]^{\frakS_N}\to \C_{q,t}$ can be defined by transforming the holonomy  $\Tr\,(X)$ along the $(1,0)$-cycle in one solid torus to the holonomy  $\Tr\,(Y)$ along the $(0,1)$-cycle, and it acts on loop operators in the other solid torus via the polynomials representation when they link:
\be\label{sym-blinear-2}
\textbf[f(X),g(X)\textbf]=\mathrm{pol}(f(Y^{-1}))\cdot g(X)\Big|_{X\mapsto t^{-2\rho}}
\ee
for $f,g\in \C_{q,t}[X]^{\frakS_N}$.  In the case of $\SU(2)$, this is indeed \eqref{S-poly}. This can be viewed as a deformed version of the construction of the skein module of type $A_{N-1}$
\be \label{Skein-Heegaard}
\Sk(M_3,\SU(N))=\Sk(M_3^+,\SU(N))\mathop{\otimes}_{\Sk(C,\SU(N))}\Sk(M_3^-,\SU(N))
\ee
of a closed oriented 3-manifold $M_3$ by using a Heegaard splitting $M_3= M_3^{+}\cup_C M_3^{-}$.
 As seen in \S\ref{sec:poly-rep}, the polynomial representation $\repP$ of $\SH$ can be understood as a deformed Skein module of a solid torus $S^1\times D^2$. In \eqref{half-index-Wilson}, $P_\lambda(X)$ (resp. $\overline{P_\mu(X)}$) can be actually regarded as a basis element of the deformed skein module of one (resp. the other) solid torus, and the boundary partition function $\Theta$ glues the two solid tori by the $S$-transformation \eqref{brane-poly-SL2Z}. Thus, the spherical DAHA acts on the left-module via the polynomial representation whereas it acts on the right-module via its $S$-transformation. As a result, the $S$-transformation $\sigma(\repP)$ of the polynomial representation, called the functional representation, can be defined by the symmetric bilinear pairing, which is presented in Appendix~\ref{app:func-rep}.

$$
\begin{tikzpicture}
\node[rounded rectangle,fill=blue!10] at (0,0) {6d $(2,0)$ theory of type $A_{N-1}$ on  $ S^1\times_q D^2 \times S^1_\tau\times_\zeta C$};
\node[rounded rectangle,fill=blue!10] at (4,-2.5) {3d $\cN=2$ theory $\cT[S^1_\tau\times C]$  on $S^1 \times_q D^2$ };
\node[rounded rectangle,fill=blue!10] at (-4,-2.5) {2d $\sigma$-model $S^1_\tau\times \bfI\to\cM_{H} (C_p,\SU(N))$};
\node at(0,-2.5) {$\cong$};
\draw[->] (-2,-.5) to node [left=0.25] {on $S^1\times_q S^1\times C$} (-4,-2);
\draw[->] (2,-.5) to node [right=0.25] {on $ S^1_\tau\times_\zeta C$} (4,-2);
\end{tikzpicture}
$$

\begin{figure}[ht]
	\centering
	\includegraphics[width=10cm]{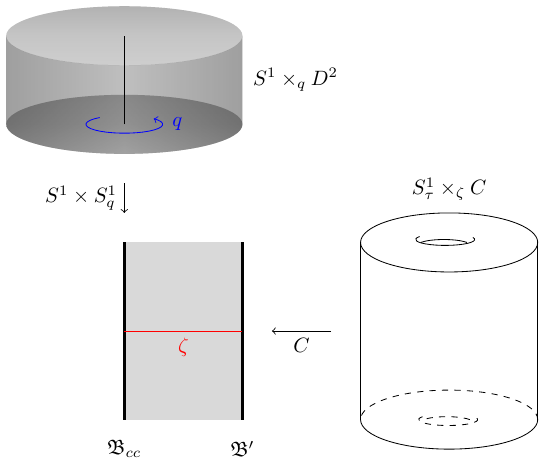}
	\caption{The relation between 3d $\cN=2$ theory $\cT[S^1\times_\zeta C]$ and 2d sigma-model. A mapping torus $S^1\times_\zeta C$ where the top and bottom tori are identified by $\zeta\in \SL(2,\Z)$ gives rise to an $\SL(2,\Z)$ duality wall on the worldsheet of $(\Bcc,\frakB')$-string.}
	\label{fig:6d-to-sigma}
\end{figure}

Moreover, the relation between 3d $\cN=2$ theory $\cT[M_3]$ to the 2d sigma-model explored in \S\ref{sec:2d} becomes manifest from the fivebrane system \eqref{3d3d}. For the sake of brevity, let $M_3= S^1_\tau\times C$ where $C\cong T^2$. As described above, the compactification of the 6d theory on $C$ with $\U(1)_\beta$ holonomy along $S^1$ leads to 4d $\cN=2^*$ theory, and the $t$ parameter in \eqref{3d3d} can be identified with the ramification parameters $(\talpha_p,\tgamma_p)$ via
\be\label{t-ramification}
tq^{-\frac12}=\exp(-\pi  ( \tgamma_p+i \talpha_p))~.
\ee
As in Figure \ref{fig:6d-to-sigma}, we further compactify the 4d $\cN=2^*$ theory on a two-torus $T^2=S^1\times S^1_q\subset S^1\times_q D^2$ to obtain the 2d sigma-model $ S^1_\tau\times \bfI\to \MH(C_p, \SU(N))$ where the interval $\bfI=[0,1]$ is obtained by reducing along $S^1_q \subset D^2$. The canonical coisotropic brane $\Bcc$ arises at the boundary of the strip $S^1_\tau\times \bfI$ corresponding to the center of $D^2$ \cite{Nekrasov:2010ka}. In addition, a boundary condition of 3d $\cN=2$ theory at $\partial(S^1\times_q D^2)$ gives rise to a brane $\frakB'$ at the other boundary of the strip $ S^1_\tau\times \bfI$ in the 2d sigma-model.

The theory $\cT[S^1_\tau\times C]$ consists of three $\cN=2$ adjoint chiral multiplets $Q,\wt Q$ and $\Phi$ where the Neumann boundary condition is imposed on the $\cN=2$ vector multiplet and chiral multiplets $\wt Q$ and $\Phi$, and the Dirichlet boundary condition is imposed on the $\cN=2$ chiral multiplet $Q$ at $\partial (S^1\times_q D^2)$.
Moreover, the form \eqref{refined-index} of the refined index tells us that fermions are periodic and a field $\Psi$ is identified along the time circle $S^1$
\be\label{identification}  q^{(J_3+S)}t^{(R-S)}\Psi(x^0+\beta,z_1)\sim  \Psi(x^0,z_1)~.\ee
The time derivative is replaced as $\partial_t \to \partial_t -R-J_3/2$ due to $e^{-\beta(\Delta-R-J_3/2)}$.
\begin{table}[ht]\centering
	\begin{tabular}{cccc}
		&$\U(1)_R$&$\U(1)_S$&bdry cond.\\
		$\Phi$&$2$&0&N\\
		$Q$&0&$-2$&D\\
		$\wt Q$&0&0&N
	\end{tabular}
\end{table}

One important lesson that we learn in this subsection is that the Hilbert space of a non-perturbative complex Chern-Simons TQFT on a 2-torus is the space of representations of the spherical DAHA at $t=1$. A categorified version of this statement would be a relation between the category of line operators in the $\wh Z$ TQFT and the category of modules of $\SH_{t=1}$,
\be
\MTC (\wh Z) \cong \Rep( \SH_{t=1} )~.
\ee
Again, we remind that here and in other places, $\MTC$ refers to a tensor category where some of the traditional conditions may need to be relaxed, \textit{e.g.}\ it may have an infinite number of simple objects, be non-unitary or non-semisimple. (The latter generalization typically appears when one tries to ``truncate'' a category with infinitely many simple objects to a finite-dimensional structure.) The modular representations that arise from such generalizations are, in general, more delicate and interesting than familiar vector-valued modular forms that describe the space of genus-1 conformal blocks in a rational VOA. Of course, in some special cases, these more interesting and exotic generalizations do not arise, and $\MTC$ is a genuine modular tensor category in its full mathematical sense (justifying the name for generalizations as well); this happens in some of the examples discussed in the following subsections and also in various examples considered in \cite{Gukov:2016gkn,Feigin:2018bkf,Dedushenko:2018bpp}.

\subsubsection{\texorpdfstring{$\SU(2)$: refined Chern-Simons and TQFT associated to Argyres-Douglas theory}{SU(2): refined Chern-Simons and TQFT associated to Argyres-Douglas theory}}\label{sec:SU2}

This connection of 3d theories to the 2d sigma-model clarifies the geometric origin of the modular action.
It was proposed in \cite{Aganagic:2011sg} that the fivebrane system \eqref{3d3d} with $N=2$ M5-branes gives rise to $\SU(2)$ refined Chern-Simons theory on $M_3$ when the parameters are subject to\footnote{The parameters $(q_{\textrm{ours}},t_{\textrm{ours}})$ in this paper are related to the parameters  $(q_{\textrm{AS}},t_{\textrm{AS}})$ in \cite{Aganagic:2011sg} via $q_{\textrm{our}}=q_{\textrm{AS}}^{1/2}$ and $t_{\textrm{our}}=t_{\textrm{AS}}^{1/2}$.}
\be \label{rCS-variables}
q=\exp\Bigl(\frac{\pi i}{k+2c} \Bigr)~,\qquad t=\exp\Bigl(\frac{c\, \pi i}{k+2c} \Bigr)~.
\ee
This condition is equivalent to the existence \eqref{case2} of the brane $\brV$ in the 2d sigma-model \S\ref{sec:BunG} so that the field identification \eqref{identification} under \eqref{rCS-variables} leads to the boundary condition $\frakB'=\brV$ upon the reduction as in Figure \ref{fig:6d-to-sigma}.
Therefore, the module $\Hom(\Bcc,\brV)$ of DAHA in the 2d sigma-model can be identified with the Hilbert space of $\SU(2)$ refined Chern-Simons theory on $T^2$ spanned by $\{| P_j \rangle \}$ $(j=0,\ldots,k)$.
The projective action of $\SL(2,\Z)$ on the Hilbert space is manifest in refined Chern-Simons theory, and the matrix elements can be obtained via the 3d/3d correspondence. In fact, the pairing \eqref{SUN-S} at $N=2$ (which is equal to \eqref{S-poly}) becomes of rank $(k+1)$ when \eqref{rCS-variables} holds; it gives the modular $S$-matrix  in Conjecture \ref{conj:rcs} up to a suitable normalization with the Macdonald norm \eqref{Mac-norm-A1}.
Upon reduction to the sigma-model, it can be interpreted as the $S$-duality wall in the worldsheet of the $(\Bcc,\brV)$-string.
Thus, the gluing of the two states $\lambda,\mu\in\Hom(\Bcc,\brV)$ by the $S$-duality wall in the $(\Bcc,\brV)$-string can be understood as the Hopf link configuration in refined Chern-Simons theory on $S^3$, illustrated in Figure \ref{fig:Hopf-link-brane}.

\begin{figure}[ht]
	\centering
	\includegraphics[width=9cm]{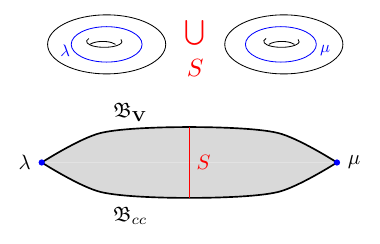}
	\caption{The $(\Bcc,\brV)$-string with the $S$-duality wall gives rise to refined Chern-Simons invariant of the Hopf link in $S^3$.}
	\label{fig:Hopf-link-brane}
\end{figure}

Although the parameters $q$ and $t$ are subject to $t^2q^{k}=-1$, there is one free parameter left.
If $c$ is generic, refined Chern-Simons theory cannot arise from a fusion category due to Ocneanu rigidity (for instance, see \cite{etingof2016tensor}) and, therefore, it is not a modular tensor category ($\MTC$).\footnote{If we further impose the condition that $q$ is a root of unity, the 3d theory on $M_3$ becomes an $\MTC$ \cite{kirillov1996inner}.} Nonetheless, it provides torus link invariants as we will briefly review below.
 In addition, the half index in \eqref{half-index-Wilson} provides the deformation of WRT-invariants of the lens space $L(p,1)$, and moreover $\wh Z_{\cT[L(p,1)],b}(q,t)$ in \eqref{Z-hat} exhibits positivity \cite{Gukov:2016gkn,Gukov:2017kmk}. Despite the failure to be a fusion category, the half indices shed new light on the topology of three-manifolds and link invariants via the 3d/3d correspondence.

\bigskip

The relation between a 3d theory and Conjecture \ref{conj:MTC-D1} is more interesting. It was argued in \cite{Kozcaz:2018usv} that the field identification \eqref{identification} under the condition $t^2q^{2\ell-1}=1$ is equivalent to the class $\cS$ construction for the Argyres-Douglas theory of type $(A_1,A_{2(\ell-1)})$ in \cite{Cecotti:2010fi,Xie:2012hs}, which we briefly review below.
The 4d $\cN=2$ Argyres-Douglas theory of type $(A_1,A_{2(\ell-1)})$ can be geometrically engineered by compactifying two M5 branes on a sphere $C\cong\bC\bfP^1$ with one \emph{wild (irregular) singularity} at infinity.
The theory is specified by the Hitchin system on $C_{\textrm{wild}}$ where the Higgs field has the asymptotic behavior at infinity described by
\be \label{wild-ramification}
\varphi(z_1)dz_1 \sim z_1^{\frac{2\ell-1}2} \sigma_3dz_1~,
\ee
where $z_1$ is the coordinate of $C\backslash \infty$ and $\sigma_3$ is the third Pauli matrix. Thus, we denote this Argyres-Douglas theory by $\cT[C_{\textrm{wild}},\SU(2)]$.
The Hitchin action on the moduli space of Higgs bundles can be identified with the $\U(1)_\beta$ symmetry defined below \eqref{refined-index}
$$
\U(1)_\beta:(A, \varphi) \rightarrow (A, e^{i \theta} \varphi)~.
$$
In the brane setting \eqref{3d3d}, we cannot consider the Hitchin system with \eqref{wild-ramification} on $D^2$ in general. However, when the $\Omega$-deformation parameters are subject to $t^2q^{2\ell-1}=1$, the field identification \eqref{identification} for the Higgs field is consistent along the time circle $S^1$
$$
tq^{\frac{2\ell-1}2}\varphi(x_0+\beta,z_1)=\varphi(x_0+\beta,z_1) \sim \varphi(x_0,z_1)~.
$$
Hence, under \eqref{case3}, it is effectively equivalent to the following brane setting:
\be\label{A1Aeven}
\begin{matrix}
	{\mbox{\rm space-time:}} & \quad & S^1 & \times & T^*C_{\textrm{wild}} & \times  &T^* M_3 \\
	{\mbox{\rm $2$ M5-branes:}} & \quad & S^1 &\times& C_{\textrm{wild}} & \times &  M_3
\end{matrix}
\ee
This system is investigated in detail (including Argyres-Douglas theories of other types) \cite{Fredrickson:2017yka,Fredrickson:2017jcf,Kozcaz:2018usv,Dedushenko:2018bpp}, and remarkably there turns out to be an $\SL(2,\Z)$ representation on the set of connected components of $\U(1)_\beta$ fixed points
\be
\SL(2,\mathbb{Z}) \ \rotatebox[origin=c]{-90}{$\circlearrowright$} \ \Bigl\langle \text{components of $\U(1)_{\beta}$ fixed points in } \MH(C_{\textrm{wild}},G) \Bigr\rangle~.
\ee
Moreover, considering the topologically twisted partition function $Z(S^{1} \times M_{3})$ of the Argyres-Douglas theory $\cT[C_{\textrm{wild}},\SU(2)]$, this $\SL(2,\Z)$ representation can be categorified. Namely, there is a modular tensor category $\MTC[A_1,A_{2(\ell-1)}]$ on $M_3$ whose simple objects are in one-to-one correspondence  with $\U(1)_{\beta}$ fixed points.
In fact, the Argyres-Douglas theory of type $(A_1,A_{2(\ell-1)})$ possesses the discrete global symmetry $\Z_{2\ell+1}$, and if we impose a holonomy $q=e^{-\frac{2\pi\gamma i}{2\ell+1}}$ ($\gamma\in\Z_{2\ell+1}^\times$) of this discrete global symmetry along $S^1$, then the modular matrices in Conjecture \ref{conj:MTC-D1} are those of the corresponding $\MTC[A_1,A_{2(\ell-1)}]$ on $M_3$.
Although the $S$ and $T$ matrices in Conjecture \ref{conj:MTC-D1} satisfy the $\PSL(2,\Z)$ relation even for a generic $q$, the Ocneanu rigidity again forbids them to be those of an $\MTC$. Rather, they connect $\MTC$'s for different values of a holonomy $q=e^{-\frac{2\pi\gamma i}{2\ell+1}}$ with $\gamma\in\Z_{2\ell+1}^\times$ by the one-parameter family with $q$.

When $\gamma=1$, the modular matrices coincide with those of the $(2,2\ell+1)$ Virasoro minimal model \cite{Kozcaz:2018usv,Dedushenko:2018bpp}.
Note that the $(2,2\ell+1)$ Virasoro minimal model is the chiral algebra of the Argyres-Douglas theory of type $(A_1,A_{2(\ell-1)})$ \cite{Cordova:2015nma}.
However, the topologically twisted partition function $Z(S^{1} \times M_{3})$ (therefore $\MTC[A_1,A_{2(\ell-1)}]$) receives the contribution from Coulomb branch operators whereas a vacuum character of the chiral algebra is given by Higgs branch operators \cite{Beem:2013sza}. It is worth noting that there are generally many chiral algebras with the same representation categories \cite{Feigin:2018bkf} so that this coincidence remains very mysterious.  (It is sometimes called ``4d symplectic duality''.)

\subsubsection{\texorpdfstring{$\SU(N)$: higher rank generalization}{SU(N): higher rank generalization}}\label{sec:SUN}

Let us briefly consider a higher rank generalization of the 3d modularity.
The moduli space of flat $G_\bC$-connections over a two-torus $C\cong T^2$ is the quotient space $(T_\bC\times T_\C)/W$ of the product of the two complex maximal tori by the Weyl group. In particular, when $G_\bC=\SL(N,\C)$, the fixed points under the action of the Weyl group $W=\frakS_N$ consist of the center $\Z_N\times\Z_N \subset T_\bC\times  T_\bC$ so that there are $N^2$ torsion points on the moduli space  $\V:=(T\times T)/\frakS_N$ of $\SU(N)$-bundles over a torus. For higher ranks, tame ramifications of Higgs bundles are classified by Levi subgroups of $\SU(N)$ or equivalently partitions of $N$ \cite{Gukov:2006jk}. To obtain the spherical DAHA $\SH(\frakS_N)$ of type $A_{N-1}$ as $\Hom(\Bcc,\Bcc)$, a simple puncture corresponding to the $[1,N-1]$ partition needs to be introduced on $C$. Although we have not understood topology and symplectic geometry of the Hitchin moduli space $\MH(C_p,\SU(N))$ over a torus with a simple puncture (for instance, the number of irreducible components of the global nilpotent cone), we can generalize Conjecture \ref{conj:rcs} and \ref{conj:MTC-D1} to the higher ranks. It is a very interesting problem to generalize the analysis in this paper to arbitrary semi-simple gauge groups.

In refined Chern-Simons theory with $\SU(N)$ gauge group \cite{Aganagic:2011sg}, the parameters $q$ and $t$ are usually expressed in terms of a positive integer $k\in \Z_{>0}$ and the continuous parameter $c$:
\be\label{rCS-variables-N}
q=\exp\Bigl(\frac{\pi i}{k+c\,N} \Bigr)~,\qquad t=\exp\Bigl(\frac{c\, \pi i}{k+c\,N} \Bigr)~,
\ee
so that they are subject to the relation $t^Nq^k=-1$.
Under this condition, the moduli space $\V$ of $\SU(N)$-bundles on $C_p$ is a Lagrangian submanifold in the symplectic manifold $(\MH(C_p,\SU(N)),\omega_\X)$.
As in the $A_1$ case, finite-dimensional representations in the higher rank spherical DAHA $\SH(\frakS_N)$ can be studied by using the raising and lowering operators \cite{KN:1998} in the polynomial representation $\scP$. The Hilbert space $\Hom(\Bcc,\brV)$ of $\SU(N)$ refined Chern-Simons theory is spanned by the basis $P_\lambda$ labeled by Young diagrams $\lambda\subset [k^{N-1}]$ inscribed in the $k\times (N-1)$ rectangle.
The modular action on the Hilbert space is described by $S$ and $T$ matrices of rank $\frac{(N+k-1)!}{(N-1)!k!}$,
\be \label{SUN-modular}
S_{\lambda\mu}=P_{\lambda}(q^{-2\mu}t^{-2\rho})P_{\mu}(t^{-2\rho})~,\qquad T_{\lambda \mu}= \delta_{\lambda \mu}\cdot
q^{\frac{1}{N}|\lambda|^2-||\lambda||^2} t^{||\lambda^{t}||^2-N|\lambda|}~,
\ee
where $\left\Vert \lambda \right\Vert^2 =\sum \lambda_{i}^{2}$, and  $\lambda^{t}$ denotes the transposition of the Young diagram $\lambda$. They indeed compute invariants of a Seifert manifold and a torus link \cite{Aganagic:2011sg,Cherednik:2011nr,cherednik2016daha}. Regarding $P_\lambda(X)$ as an element of $\SH(\frakS_N)$, one can define the invariant of a torus link $T_{m,n}$ by
\be\label{DAHA-Jones}
\theta(\zeta_{m,n}(P_\lambda(X))) ~,\qquad  \zeta_{m,n}=\begin{pmatrix}m&n\\\ast&\ast\end{pmatrix}\in \mathrm{SL}(2,\mathbb{Z})~.
\ee
where $\zeta_{m,n}$ acts projectively on $P_\lambda(X)\in\SH(\frakS_N)$, and $\theta:\SH(\frakS_N)\to \C_{q,t}$ is the evaluation map defined in \eqref{evaluation}. The large $N$ limit is conjectured to be equal to the Poincar\'e polynomial of the HOMFLY-PT homology of a torus link up to a change of variables when colors are labeled by a rectangular Young diagram.

\bigskip

After a simple puncture is added on a two-torus $T^2$, the moduli space becomes smooth and the $N^2$ torsion points turn into the corresponding $N^2$ exceptional divisors. Let us denote them by $\bfD_i^{(N)}$ ($i=1,\ldots,N^2$). They become Lagrangian submanifolds with respect to $\omega_\X$ when $t^N=q^{-M}$, or
\be
q=\exp\left(\frac{2\pi i}{M+cN} \right)~,\qquad t=\exp\left(\frac{2c\pi i}{M+cN} \right)~,
\ee
with coprime $(M,N)$.
In fact, under the shortening condition $t^N=q^{-M}$, there are $N^2$ irreducible $\SH(\frakS_N)$-modules of dimension $\frac{(N+M-1)!}{(N-1)!M!N}$, corresponding to the exceptional divisors.  Among them, only one irreducible component $\bfD_1^{(N)}$ is invariant under $\PSL(2,\Z)$, which is analogous to $\bfD_1$ in the $A_1$ case. We are interested in the modular matrices acting on the corresponding finite-dimensional representation of $\SH(\frakS_N)$.

With the shortening condition $t^N=q^{-M}$, a finite-dimensional module arises as a quotient of the polynomial representation whose basis is spanned by Macdonald polynomials $P_\lambda$ with $\lambda\subset [M^{N-1}]$ inscribed in the $M\times (N-1)$ rectangle. This decomposes into $N$ irreducible modules, and the other $N(N-1)$ irreducible modules can be obtained by their orbits under the symmetry $\Xi\times\PSL(2,\Z)=H^1(C,\Z_N)\times\PSL(2,\Z)$ of $\SH(\frakS_N)$. They correspond to $\Hom(\Bcc,\frakB_{\bfD_i^{(N)}})$. From the brane perspective, the support of the brane of the polynomial representation intersects with the corresponding $N$ exceptional divisors. When $t^N=q^{-M}$, $N$ Macdonald polynomials $P_{\lambda^{(i)}}$ $(i=1, \ldots, N)$ of type $A_{N-1}$, where $\lambda^{(i)}\subset [M^{N-1}]$, are degenerate at each eigenvalue of the Dunkl operator
$$
D(u)=\sum_{r=0}^{n}(-u)^{r} D^{(r)}~, \qquad D^{(r)}=\sum_{\substack{I \subset[1,\ldots ,N]\\ |I|=r}} \prod_{\substack{i \in I\\ j\not\in I}} \frac{t X_{i}-t^{-1}X_{j}}{X_{i}-X_{j}} \varpi_{i} \quad(r=0,1, \ldots, N)~.
$$
Here we write variables of the Macdonald polynomials defined in Appendix \ref{app:poly} as $X_i/X_j:=X^\alpha$ for a root $\alpha=e_i-e_j$ and the $q$-shift operators act as $\varpi_{i} X_j=q^{\delta_{ij}} X_j$. We also note that $D^{(0)}=1=D^{(N)}$.
Out of the $N$ irreducible finite-dimensional modules, only one irreducible representation becomes a $\PSL(2,\Z)$ representation, and its basis is spanned by
\be \label{SUN-base-change}
\Bigl\{\sum_{i=1}^N P_{\lambda^{(i)}}(X)/P_{\lambda^{(i)}}(t^{-\rho})\Bigl\}_{\lambda^{(i)}\subset [M^{N-1}]}~.
\ee
In fact, the modular $S$-matrix $S_{\lambda\mu}$ in \eqref{SUN-modular} becomes of rank $\frac{(N+M-1)!}{(N-1)!M!N}$ with the shortening condition $t^N=q^{-M}$.
As in the $A_1$ case \eqref{base-change-S}, we can make a change of basis to \eqref{SUN-base-change} to obtain a $\frac{(N+M-1)!}{(N-1)!M!N}$-dimensional $\PSL(2,\Z)$ representation on the irreducible $\SH(\frakS_N)$-module explicitly.

By a similar argument to the one above, the fivebrane system \eqref{3d3d} at $t^N=q^{-M}$ is equivalent to the Argyres-Douglas theory of type  $(A_{N-1},A_{M-1})$ \cite{Cecotti:2010fi,Xie:2012hs} on $S^1\times M_3$, which admits a class $\cS$ construction with an $\SU(N)$ Hitchin system on $\CP^1$ with a wild singularity at $z=\infty$ where the eigenvalues of the Higgs field grow as $|\varphi| \sim |z^{M / N} dz|$. Therefore, the modular matrices acting on the module $\Hom(\Bcc,\frakB_{\bfD_1^{(N)}})$ can be understood as those of an $\MTC[A_{N-1},A_{M-1}]$ associated to the $(A_{N-1},A_{M-1})$ Argyres-Douglas theory, which categorifies the $\SL(2,\Z)$ action on fixed points of the $\U(1)_\beta$ action on the corresponding wild Hitchin moduli space \cite{Fredrickson:2017jcf}.
As a higher rank generalization of Conjecture~\ref{conj:MTC-D1}, it is expected that they are related to the modular matrices in the $(N,M+N)$ minimal model of the $W_N$-algebra, which is the chiral algebra of the $(A_{N-1},A_{M-1})$ Argyres-Douglas theory \cite{Cordova:2015nma}. In fact, by normalizing them appropriately with the Macdonald norm  \eqref{Macdonald-norm} of type $A_{N-1}$,
the modular matrices at $q=e^{-2\pi i/(M+N)}$ coincide with those \eqref{WN-MM} of the $W_N(N,M+N)$ minimal model \cite{Beltaos:2010ka}, which are reviewed below.
However, we should keep in mind the same caution as the one given at the end of the previous subsection \S\ref{sec:SU2}.

Remarkably, the space $\Hom(\Bcc,\frakB_{\bfD_1^{(N)}})$ has another intriguing interpretation.
In the limit of the spherical rational Cherednik algebra $\SH^{\mathrm{rat}}_{\hbar,c}(\frakS_{N})$, the target space of the sigma-model becomes the Hilbert scheme of $(N-1)$-points on the affine plane $\C^2$, and the exceptional divisor $\bfD_1^{(N)}$ only remains to be a compact Lagrangian submanifold, called \emph{punctual Hilbert scheme}. (See also Appendix~\ref{sec:3d-N=4}.) It is known that its geometric quantization provides the unique finite-dimensional representation of  $\SH^{\mathrm{rat}}_{\hbar,c}(\frakS_{N})$ \cite{berest2003finite,gordon2005rational,gordon2006rational} and it is furthermore isomorphic to the lowest $a$-degree $\cH_{\textrm{bottom}}(T_{N,M})$ of HOMFLY-PT homology of the $(N,M)$ torus knot $T_{N,M}$ \cite{Gorsky:2012mk}. Thus, we have isomorphisms of the following vector spaces
\be
K^0(\MTC[A_{N-1},A_{M-1}])\cong \Hom(\Bcc,\frakB_{\bfD_1^{(N)}})\cong \cH_{\textrm{bottom}}(T_{N,M})~.
\ee

\bigskip

In what follows, we briefly review the modular matrices of the $W_N(N,M+N)$ minimal model \cite{Beltaos:2010ka}.
These minimal models admit a coset description:
\be\label{coset}
W_N(N,M+N)=\frac{\SU(N)_k\times \SU(N)_1}{\SU(N)_{k+1}}~, \qquad \textrm{with} \quad  k=\frac{N}{M}-N~.
\ee
Therefore, their modular matrices are constructed from those of $\SU(N)_k$ affine Lie algebra \cite{Beltaos:2010ka}. The primary fields in the $\SU(N)_k$ WZW model are classified by
$$
\Phi(N;n):=\Bigl\{\lambda=\left(\lambda_{1}, \ldots, \lambda_{N-1}\right) \in \mathbb{Z}_{>0}^{N-1} \ \vert \   \ \sum_{i=1}^{N-1}\lambda_{i}<n=k+N\Bigr\}
$$
where the vacuum corresponds to $\varrho=(1,\ldots,1)\in\Phi(N;n)$, and the $S$ matrix is given by
$$
S_{\lambda \mu}^{(N ; n)}= \frac{1}{in\sqrt{N}}\exp [2 \pi i\frac{ t(\lambda) t(\mu)}{N n}] \det\Bigl(\exp [-2 \pi i\frac{\lambda[\ell] \mu[m]} {n}]\Bigr)_{1 \leq \ell,m \leq N}
$$
with
$$
\lambda[i]=\sum_{i \leq \ell<N}\left(\lambda_{\ell}+1\right)~, \qquad t(\lambda):=\sum_{j=1}^{N-1} j \lambda_{j}~.
$$
The primary fields of the $W_N(N,M+N)$ minimal model are in one-to-one correspondence with the following set
$$
\Phi[W_N(N,M+N)]=\Bigl\{(\varrho,\lambda)  \ \vert \ \lambda  \in \Phi(N;N+M)~,\ t(\lambda)\equiv 0 \mod N  \Bigr\}
$$
The modular $S$ and $T$ matrices of the  $W_N(N,M+N)$ minimal model  are
\begin{align}\label{WN-MM}
S_{(\varrho,\lambda)(\varrho,\mu)}&=(N (N+M))^{\frac{3-N}2} \exp \left[-2 \pi i \frac{t(\varrho) (t(\mu)+t(\lambda))}{N}\right] S_{\varrho \varrho}^{(N ; N /(N+M))} S_{\lambda \mu}^{(N ;(N+M)/N)}~, \cr
T_{(\varrho,\lambda)(\varrho,\mu)}&=-i\delta_{\lambda\mu}\exp \left[\pi i\frac{(N+M) \varrho-N \lambda)\cdot ((N+M) \varrho-N \lambda)}{(N+M)N}\right]~,
\end{align}
where the inner product is defined by
$$
\lambda \cdot \mu:=\sum_{1 \leq i<N} \frac{i(N-i)}{N} \lambda_{i} \mu_{i}+\sum_{1 \leq i<j<N} \frac{i(N-j)}{N}\left(\lambda_{i} \mu_{j}+\lambda_{j} \mu_{i}\right)~.
$$

\subsection{\texorpdfstring{Relation to skein modules and MTC$[M_3]$}{Relation to skein modules and MTC[M3]}}\label{sec:SL2Z-skein}

In the above discussion, we already encountered the skein modules of 3-manifolds and the algebraic data of line operators $\MTC [M_3,G]$ in 3d $\cN=2$ theory $\cT [M_3,G]$,
$$
\MTC[M_3,G]:=\lat{Line}\bigl[\cT[M_3,G]\bigr]
$$
that also enters ``gluing'' of vertex algebras associated to 4-manifolds \cite{Feigin:2018bkf}, twisted indices of $\cT [M_3,G]$ on general 3-manifolds \cite{Gukov:2016gkn}, and modular properties of $q$-series invariants $\wh Z (M_3)$ \cite{Cheng:2018vpl}.

Since 3d theory $\cT [M_3,G]$ has only $\cN=2$ supersymmetry, it cannot be topologically twisted on a general 3-manifold and, therefore, does not lead to a full 3d TQFT that could have been associated to a tensor category (of its line operators) in a familiar way. Nevertheless, as was pointed out in \cite{Gukov:2016gkn}, the structure of line operators and partially twisted partition functions in $\cT [M_3,G]$ in many ways is close to (and, in some cases, is described by) that of a tensor category. Hence, the name $\MTC [M_3]$, or $\MTC [M_3, G]$. The simple objects of $\MTC [M_3, G]$ are complex $G_{\C}$ flat connections on $M_3$. For example, when $M_3$ is the Poincar\'e sphere and $G = \SU(2)$, there are three simple objects in $\MTC [M_3, G]$ and $K^0 (\MTC [M_3, G])$ has rank 3. In this example, and more generally, when all flat $G_{\C}$-connections on $M_3$ are isolated, they can be identified with the intersection points of two Heegaard branes $\frakB_{H^{\pm}}$ associated with the Heegaard decomposition of $M_3$, illustrated in Figure \ref{fig:Heegaard}.

\begin{figure}[ht]\centering
  \includegraphics[width=\textwidth]{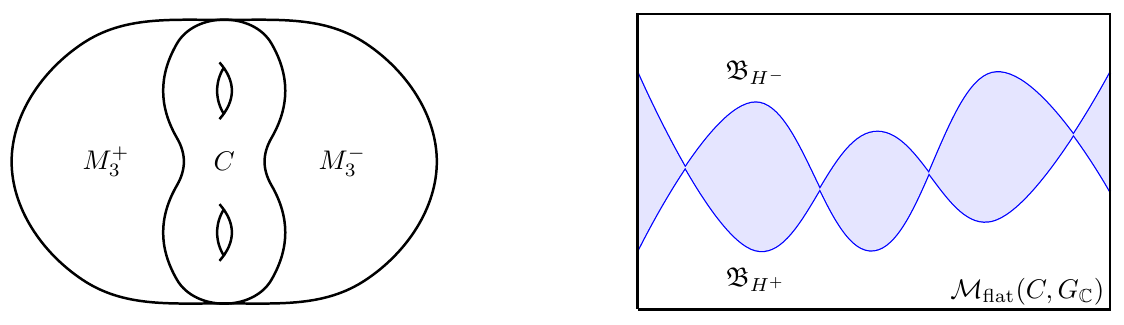}
  \caption{A Heegaard decomposition (left panel) of a closed oriented 3-manifold leads to an interpretation of $K^0 (\MTC [M_3])$ as the space of $(\frakB_{H^{+}},\frakB_{H^{-}})$-strings in $\MF(C,G_\C)$.}\label{fig:Heegaard}
\end{figure}

Specifically, let $M_3= M_3^{+}\cup_C M_3^{-}$ be a Heegaard splitting of a closed oriented 3-manifold $M_3$. As in \eqref{skein-module}, 3-manifolds with boundary $\partial M_3^\pm=C$ define the $(A,B,A)$-branes $\frakB_{H^{\pm}}$ supported on Lagrangian submanifolds $\MF(M_3^\pm,G_\C)$ in $\MF(C,G_\C)$. Hence, $K^0 (\MTC [M_3])$ can be interpreted as the space of open strings between two Heegaard branes $\frakB_{H^{\pm}}$ associated to $M_3^{\pm}$ and illustrated in Figure \ref{fig:Heegaard}.
Furthermore, via a complex analogue of the Atiyah-Floer conjecture (see \textit{e.g.}\ \cite{Gukov:2007ck}), this ring is expected to be isomorphic to the complex $G_\C$ Floer homology $HF^{\textrm{inst}}_{0}(M_3,G_\C)$ of $M_3$:
\be\label{sk-floer}
K^0 (\MTC [M_3]) \cong \Hom^0(\frakB_{H^{+}},\frakB_{H^{-}})\cong HF_0^{\textrm{symp}}(\MF(C,G_\C);H^+,H^-)\cong HF^{\textrm{inst}}_{0}(M_3,G_\C)~.
\ee
Here both symplectic and instanton Floer homology groups are $\Z$-graded, and we take the zeroth degree of the homology groups. Physically, this grading comes from non-anomalous $\U(1)$ R-symmetry.

Indeed, the relevant system here is a stack of M5-branes on $\bR\times T^2\times M_3$, and we are interested in the Hilbert space $\mathcal{H}_{\cT[M_{3} \times T^{2},G]}$. We can interpret this Hilbert space as that of 3d $\cN=2$ theory $\cT[M_{3},G]$ on $T^2$. The Hilbert space is $\Z$-graded by the $\U(1)$ $R$-symmetry of the 3d $\cN=2$ theory.
On the other hand, we can compactify the 6d $\cN=(2,0)$ theory on $T^2$, and perform the topological twist of the 4d $\cN=4$ theory considered in \cite{Yamron:1988qc}. The two types of topological twists of the 4d $\cN=4$ theory in \cite{Yamron:1988qc}, Vafa-Witten twist \cite{Vafa:1994tf} and Marcus/GL-twist \cite{Marcus:1995mq,Kapustin:2006pk}, are equivalent on $\bR\times M_3$, and the BPS equations on $M_3$ are satisfied by complex $G_\C$-flat connections. As a result, the Hilbert space can be understood as complex Floer homology of $M_3$. Consequently, the Hilbert space admits two interpretations \cite{Gukov:2016gkn}:
$$
\mathcal{H}_{\cT[T^2,G]}(M_{3})\cong\mathcal{H}_{\cT[M_{3} \times T^{2},G]}\cong\mathcal{H}_{\cT[M_{3},G]}\left(T^{2}\right)~.
$$
In general, complex Floer homology groups are infinite-dimensional due to the presence of reducible solutions and non-compactness of moduli spaces. Nonetheless, it is graded by the $R$-charges of the 3d $\cN=2$ supersymmetry, and we expect that the zeroth degree piece gives precisely \eqref{sk-floer}.

Note, that for some manifolds, like $M_3 = T^3$, all complex flat connections are reducible. (In this example, simply because $\pi_1 (M_3)$ is abelian.) Such examples illustrate especially well how the infinite-dimensional complex Floer homology of $M_3$ is re-packaged into its finite-dimensional version $K^{0}\left(\MTC[M_{3},G]\right)$. Moreover, half-BPS line operators in $\cT[M_{3},G]$ are in one-to-one correspondence with states of the Hilbert space of $\cT[M_{3},G]$ on $T^2$. The mapping class group of $T^2$ acts on this Hilbert space, justifying the name for $K^0 (\MTC [M_3])$. In practice, this can be a log-modular action, as in \cite{Cheng:2018vpl}.

A somewhat similar ``regularization'' of the complex Floer theory is provided by the skein module $\Sk(M_3,G)$, which was recently shown to be finite-dimensional \cite{Gunningham:2019kac} for any closed oriented 3-manifold $M_3$. Physically, the $\SU(N)$-skein module of $M_3$ is a set of all formal linear combinations of line operators in complex $\SL(N,\C)$ Chern-Simons theory, defined as  \cite{turaev1990conway,przytycki2006skein}:
$$\Sk(M_3,\SU(N)) = \bC[q^\pm] (\textrm{isotopy classes of framed oriented links in }  M_3) / \textrm{skein relations}~.$$
where the skein relations are given by
$$\centering
\includegraphics[width=11cm]{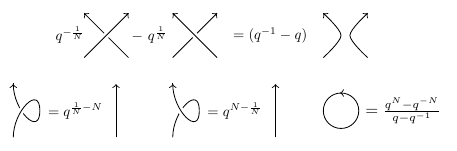}~.
$$
The analogue for Cartan types other than $A$ is not well explored, and would be an excellent direction for future work.

Focusing on $G=\SU(N)$ and $G_\C=\SL(N,\C)$, the above discussion suggests that there may be a relation between $K^{0}(\MTC[M_3,G])$ that describes line operators in $\cT[M_3]$ and the skein module $\Sk(M_3,G)$. This relation cannot be a simple isomorphism because, \textit{e.g.}\ for $M_3 = T^3$ and $G=\SU(2)$, $K^{0}(\MTC[M_3,G])$ has 10 simple objects whereas $\rank \Sk(M_3,G) = 9$ \cite{carrega2017nine,gilmer2018kauffman}. Relegating a better understanding of this relation to future work,\footnote{The above mentioned examples of the Poincar\'e sphere and $M_3 = T^3$ suggest that the general relation for $G=SU(2)$ might be $\rank \Sk(M_3,G) = \rank K^{0}(\MTC[M_3,G]) - 1$. Although we do not know any counterexample to this potential relation, we should stress that the role of ``$-1$'' is likely to be delicate and cannot be simply attributed to, say, reducible flat connections (as in the case of the Poincar\'e sphere). For example, in the case of $M_3 = T^3$, all complex flat connections are reducible, as was already pointed out in the main text.} here we merely conjecture that it commutes with the $\SL(2,\Z)$ action, so that $\Sk(M_3,G)$ also enjoys a (possibly, log-) modular action
$$
\SL(2,\Z) \ \rotatebox[origin=c]{-90}{$\circlearrowright$} \ \Sk(M_3,G)~.
$$

As a next natural step, we now turn our attention to a relation between the skein algebra $\Sk(C)$ of a Riemann surface $C$ and line operators of the 4d $\cN=2$ theory $\cT[C]$, in particular in the case when $C$ is a (punctured) torus.

\section{4d theories, fivebranes, and M-theory}\label{sec:4d}

In this section, we study line operators in the 4d $\cN=2^*$ theories. A 4d $\cN=2$ theory of class $\cS$ arises from a compactification of 6d $\cN=(2,0)$ theory on a once-punctured torus $C$. The spectrum of line operators in the theory depends on additional discrete data, a maximal isotropic lattice $\lat{L}\subset H^1(C,Z(G))$ where line operators must be invariant under the discrete group $\lat{L}$. Therefore, we will show that
a non-commutative algebra of line operators of a 4d $\cN=2^*$ theory on $S^1\times\R\times_q \R^2$ with the $\Omega$-background is the $\lat{L}$-invariant subalgebra of spherical DAHA. Also, we give an explicit geometric relation between Hitchin moduli spaces and an elliptic fibration of the Coulomb branch of a 4d $\cN=2^*$ theory in the rank-one case. Besides, we include a surface operator of Gukov-Witten type in the story, and consider an algebra of line operators on a surface operator to realize the full DAHA instead of the spherical DAHA.
An advantage of the fivebrane system of class $\cS$ is that we can relate line operators of a 4d theory to boundary conditions of a 2d sigma-model by a compactification. Taking this advantage, we propose a canonical coisotropic brane $\wh \frakB_{cc}$ of higher rank which realizes the full DAHA as $\Hom(\wh \frakB_{cc},\wh \frakB_{cc})$.

\subsection{\texorpdfstring{Coulomb branches of 4d $\cN=2^*$ theories of rank one}{Coulomb branches of 4d N=2* theories of rank one}}\label{sec:4dCoulomb}

In this subsection, we study a stack of M5-branes on $C\times S^1 \times \R^3$.
A 4d $\cN=2$ theory of class $\cS$ is constructed by a compactification of the 6d (2,0) theory of type $G$ ($G$ is of Cartan type $ADE$) on a Riemann surface $C$ \cite{Gaiotto:2009we,Gaiotto:2009hg} (generally with punctures) with additional discrete data $\lat{L}$ \cite{Tachikawa:2013hya} (see also \cite{Gukov:2020btk} where such choice is referred to as a ``polarization on $C$''), denoted by $\cT[C,G,\lat{L}]$. The basic information of a theory of class $\cS$ is encoded in a Hitchin system
\begin{equation}
\label{eq:spectral}
\begin{tikzcd}
\Sigma \arrow[r,hook]
& T^*C \arrow[d] \\
& C
\end{tikzcd}\qquad,
\end{equation}
where $\Sigma$ is a Seiberg-Witten curve. The Coulomb branch of the theory on $\R^4$, called the $u$-plane, is an affine space $\cB_u=\bigoplus_{k=1}^r H^0(C,K_C^{\otimes d_k})$ where the exponents $d_k$ depend on $G$. Given a point $u\in \cB_u$, the Seiberg-Witten curve $\Sigma$ is expressed as the characteristic polynomial $\det(xdz-\varphi)=f(x,u(z))$ where $x,z$ are local coordinates of the fiber and base of $T^*C$. To introduce the additional data, we pick a symplectic basis of $H_1(C)$ of $C$ of genus $g$ in terms of intersection numbers
$$(\alpha_1,\ldots,\alpha_g,\beta_1,\ldots,\beta_g)\in H_1(C)~, \qquad \alpha_i\cdot \alpha_j=0=\beta_i\cdot \beta_j~, \ \alpha_i\cdot\beta_j=\delta_{ij}=-\beta_j\cdot\alpha_i~,$$
which yields a symplectic lattice $(H^{1}(C,Z(G)),\omega)$ where $Z(G)$ is the center of $G$.
In fact, the additional data are given by a maximal isotropic sublattice $\lat{L}\subset(H^{1}(C, Z(G)),\omega)$, and they specify an allowed set of charges of line operators that are compatible with the Dirac quantization conditions \cite{Aharony:2013hda}. Given a maximal isotropic sublattice $\lat{L} \subset (H^{1}(C,Z(G)),\omega)$, the Coulomb branch $\cM_C(C,G,\lat{L})$ of the $\cT[C,G,\lat{L}]$ theory on $S^1\times \R^3$ admits an elliptic fibration over the $u$-plane $\cB_u$ \cite{Gaiotto:2010be,Tachikawa:2013hya}
$$
\pi: \cM_C(C,G,\lat{L}) \rightarrow \cB_u~.
$$
This is sometimes called the Donagi-Witten integrable system of class $\cS$. In fact, $H^{1}(C, Z(G))$ freely acts on the Hitchin fibration $\pi:\MH(C,G)\to \cB_H$ fiberwise, and the Coulomb branch can be obtained by the quotient of $\pi:\MH(C,G)\to \cB_H$ by a fiberwise action of $\lat{L}$  so that
$$\cM_C(C,G,\lat{L})=\MH(C,G)/\lat{L} ~.$$
Note that this action can be obtained by twists of a Higgs bundle by a flat $Z(G)$-bundle over $C$ associated to $\lat{L}$, and it acts freely on a generic fiber of the Hitchin fibration.
Therefore, the Coulomb branch $\cM_C(C,G,\lat{L})$ inherits a \HK structure from the Hitchin moduli space  $\MH(C,G)$.

Of our interest are certainly the class $\cS$ theories of type $A_1$ associated to the once-punctured torus $C_p$, namely 4d $\cN=2^*$ theories of rank one \cite{Gorsky:1993dq,Donagi:1995cf}. The lattice $H^{1}(C_p,\Z_2)=\Z_2\oplus\Z_2$ with the natural symplectic form can be identified with the electric and magnetic charges of line operators of the $\cN=2^*$ theory wrapping $S^1$. Line operators with charges $\lambda=(\lambda_{e}, \lambda_{m})$ and $\nu=(\nu_{e}, \nu_{m})$ must be subject to the Dirac quantization condition
$$
\omega(\lambda,\nu)=\lambda_e\nu_m-\lambda_m\nu_e \in 2 \mathbb{Z} ~.
$$
There are three ways to pick a maximal isotropic lattice, corresponding to (0,1) (1,0) and (1,1) $\in H^{1}(C_p,\Z_2)=\Z_2\oplus\Z_2$.
They are known as $\SU(2)$, $\SO(3)_{+}$ and $\SO(3)_{-}$ gauge theories, respectively, where the theta angles of $\SO(3)_\pm$ differ by $2\pi$.
Under the  $\SL(2,\Z)$ transformation on the complexified gauge coupling (electromagnetic duality), these theories are related to each other as follows:
\bea\label{N=4-SL2Z}
\begin{tikzpicture}
  \node (A)  at (0,1.6) {$\SU(2)$};
    \node (B) at (-1.5,0) {$\SO(3)_+$};
  \node (C)  at (1.5,0)  {$\SO(3)_-$};
  \draw[<->] (A)  to node [left,scale=0.8,blue] {$S$} (B);
  \draw[<->] (B)  to node [below,scale=0.8,blue] {$T$} (C);
    \draw[<->] (C)  to node [right,scale=0.8,blue] {$TST$} (A) ;
\path[->] (A) edge  [loop above] node [scale=0.8,blue]{$T$} (A);
\path[->] (B) edge  [out=175,in=185,loop] node [left,scale=0.8,blue] {$TST$} ();
\path[->] (C) edge  [out=5,in=-5,loop] node[right,scale=0.8,blue] {$S$} ();
\end{tikzpicture}
\eea

\begin{figure}\centering
    \includegraphics[width=\textwidth]{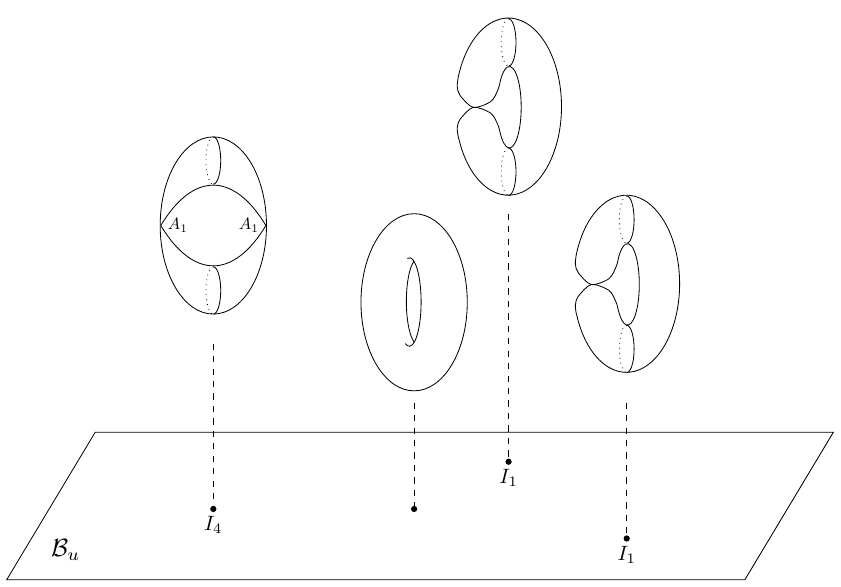}
    \caption{Schematic illustration of elliptic fibration of Coulomb branch $\cM_C(C_p,\SO(3)_+)\to \cB_u$}
    \label{fig:CB}
\end{figure}

\begin{figure}\centering
    \includegraphics[width=\textwidth]{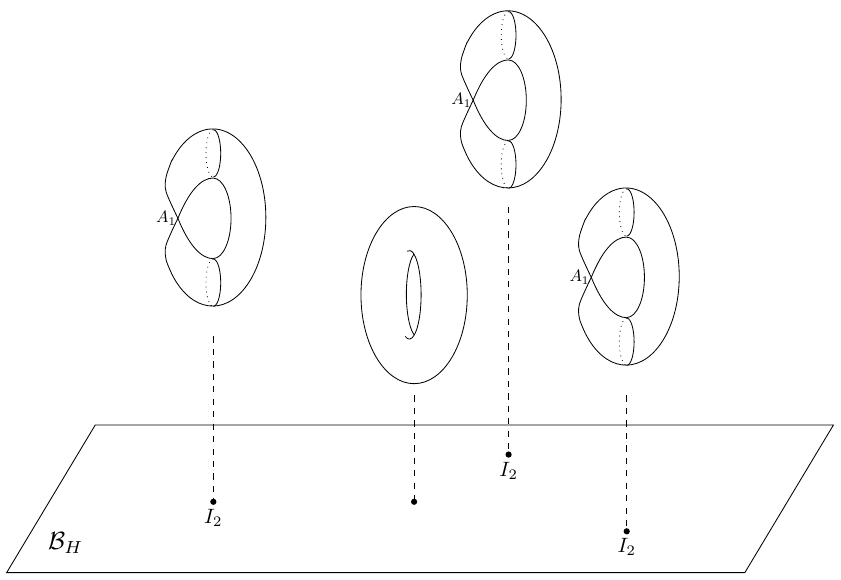}
    \caption{Schematic illustration of the Hitchin fibration of $\MH(C_p,\SO(3))\to \cB_H$}
    \label{fig:MHSO3}
\end{figure}

\bea\label{Z2-Z2-quotient}
\begin{tikzpicture}
  \node (A)  at (0,1.5) {$\MH(C_p,\SU(2))$};
    \node (C)  at (0,0) {$\cM_C(C_p,\SO(3)_+)$};
      \node (B)  at (-4,0) {$\cM_C(C_p,\SU(2))$};
        \node (D)  at (4,0) {$\cM_C(C_p,\SO(3)_-)$};
          \node (E)  at (0,-1.5) {$\MH(C_p,\SO(3))$};
          \draw [->] (A) -- node [inner sep=5mm,left,blue] {$\xi_2$} (B);
          \draw [->] (A) -- node [left,blue] {$\xi_1$} (C);
          \draw [->] (A) -- node [inner sep=5mm,right,blue] {$\xi_3$} (D);
          \draw [->] (B) -- node [inner sep=5mm,left,blue] {$\xi_1$} (E);
          \draw [->] (C) -- node [left,blue] {$\xi_2$} (E);
          \draw [->] (D) -- node [inner sep=5mm,right,blue] {$\xi_1$} (E);
\end{tikzpicture}
\eea

Next, we study the geometry of the Coulomb branches of the 4d $\cN=2^*$ theories of rank one on $S^1\times \R^3$.
The Coulomb branches can be obtained by $\Z_2$ quotients of the Hitchin moduli space $\cM_{H}(C_p,\SU(2))$ by $\xi_{i}\in\Xi=H^1(C_p,\Z_2)$ ($i=1,2,3$) \cite[\S8.4]{Gaiotto:2010be} as in \eqref{Z2-Z2-quotient}.
The ramification parameters ${\frac{1}{2}}(\tbeta_p+i\tgamma_p)$ at the Higgs field $\varphi$ is indeed equivalent to the complex mass of the adjoint hypermultiplet in the 4d $\cN=2^*$ theory. The ramification parameter  $\alpha_p$ is the holonomy along $S^1$ for the $\U(1)$ flavor symmetry.
Let us investigate the action of $\Xi$ on the Hitchin moduli space $\cM_{H}(C_p,\SU(2))$ at a generic ramification more in detail. As in Figure \ref{fig:3I_2}, $\cM_{H}(C_p,\SU(2))$ with a generic ramification has three singular fibers of Kodaira type $I_2$. As described in \S\ref{sec:target}, the action of $\Xi$ on each fiber in the Hitchin fibration $\MH(C_p,\SU(2))\to \cB_H$ is of order two, and the action is moreover free on a generic fiber. Hence, an interesting part is the action on the singular fibers.
Two irreducible components, $\bfU_{2i-1}$ and $\bfU_{2i}$, in the singular fiber $\pi^{-1}(b_i)$ can be understood as two $\CP^1$'s meeting at the north and south pole as double points.
As illustrated in Figure \ref{fig:I_2}, the element $\xi_{1}$ in \eqref{sign-changes} acts on each irreducible component of the singular fiber $\pi^{-1}(b_1)$ as the $180^{\circ}$ rotation around the polar axis of $\CP^1$. Likewise, $\xi_{1}$  acts on a generic fiber $\bfF$ nearby as the $180^{\circ}$ rotation along the $(1,0)$-cycle (meridian) of $\bfF$. As we have seen in \S\ref{sec:Ui}, the singular fiber $\pi^{-1}(b_1)$ is mapped to $\pi^{-1}(b_2)$ by the modular $S$-transformation $\sigma$. Therefore, in the neighborhood of the singular fiber $\pi^{-1}(b_2)$, $\xi_{1}$  acts on a generic fiber $\bfF$ as the $180^{\circ}$ rotation along the $(0,1)$-cycle (longitude) of $\bfF$. Consequently, $\xi_{1}$ exchanges the two irreducible components $\bfU_3$ and $\bfU_4$ by the corresponding rotation on the singular fiber $\pi^{-1}(b_2)$. In a similar fashion, $\tau_+\in \PSL(2,\Z)$ maps the singular fiber $\pi^{-1}(b_1)$ to the other fiber $\pi^{-1}(b_3)$. Therefore, $\xi_{1}$  acts on a generic fiber $\bfF$ as the $180^{\circ}$ rotation along the $(1,1)$-cycle of $\bfF$ around the singular fiber $\pi^{-1}(b_3)$. Moreover, it exchanges the two irreducible components $\bfU_5$ and $\bfU_6$ with additional rotation around the polar axis of $\CP^1$.
The actions of $\xi_{2}$ and $\xi_{3}$ are obtained by the cyclic permutations of $b_{i}$ ($i=1,2,3$).

Since $\xi_i$ acts freely on a generic Hitchin fiber with order two, the quotient of the Hitchin fibration $\MH(C_p,\SU(2))\to \cB_H$ by $\xi_i$ provides the structure of an elliptic fibration of the Coulomb branch. Namely, this double cover is obtained by an isogeny of each elliptic fiber of degree two \cite{Argyres:2015ffa,Argyres:2015gha}.
As illustrated in Figure \ref{fig:I_2}, $\xi_1$ acts on each irreducible component of the singular fiber $\pi^{-1}(b_1)$ by the $180^{\circ}$ rotation so that the quotient by its action turns the double points $p_{1,2}$ into the $A_1$ orbifold points. In fact, the quotient can be understood as a particular limit of the fiber of Kodaira type $I_4$. Generically, the fiber of type $I_4$ consists of four $\CP^1$'s joining like a necklace, or the affine $\widehat A_3$ Dynkin diagram.
The quotient is indeed the zero-volume limit of the two disjoint $\CP^1$'s as in Figure \ref{fig:blowup}.
On the other hand, the quotient of the other singular fibers $\pi^{-1}(b_{2,3})$ by $\xi_1$ identifies the two irreducible components and the two double points by the rotation, yielding the fiber of Kodaira type $I_1$.
Again, the quotients of $\xi_{2}$ and $\xi_{3}$ are obtained by the cyclic permutations of $b_{i}$ ($i=1,2,3$).
As a result, the quotient of the Hitchin moduli space  $\MH(C_p,\SU(2))$ by $\xi_i$ leads to an elliptic fibration $\cM_C\to \cB_u$ of the Coulomb branch with one singular fiber of type $I_4$ at $b_i\in \cB_u$ and two singular fibers of type $I_1$ at $b_{i+1},b_{i+2}\in \cB_u$ ~\cite{Argyres:2015ffa,Argyres:2015gha} as illustrated in Figure \ref{fig:CB}. Hence, an $\cN=2^*$ theory of rank one enjoys a subgroup of $\SL(2,\Z)$ that fixes the singular fiber of type $I_4$. One can easily read off such a subgroup from \eqref{PSL-Ui} that is consistent with a duality group  \eqref{N=4-SL2Z} of an $\cN=2^*$ theory. Note that $\tau_+$ and $\tau_-$ correspond to the $T$ and $TST$ elements, respectively, of the electromagnetic duality of the 4d $\cN=2^*$ theories of rank one, which is different from the matrix assignment in \eqref{tau-pm}.
\begin{figure}[ht]\centering
\includegraphics[width=\textwidth]{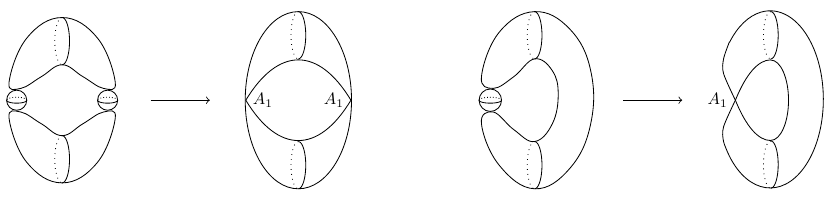}
\caption{(Left) The Coulomb branch $\cM_C$ contains a particular limit of the fiber of type $I_4$ so that there are two $A_1$ singularities. (Right) Each singular fiber of the Hitchin fibration $\MH(C_p,\SO(3))\to \cB_H$ with a generic ramification is a certain limit of the fibers of type $I_2$.}\label{fig:blowup}
\end{figure}

So far we have studied the Coulomb branches with generic ramification parameters $(\talpha_p,\tbeta_p,\tgamma_p)$. When $\tbeta_p=0=\tgamma_p$, the Hitchin fibration $\MH(C_p,\SU(2))\to \cB_H$ has one singular fiber of type $I_0^*$ at the global nilpotent cone, and it is easy from Figure \ref{fig:Hitchin-fibration} to see the quotient by $\xi_i$. For instance, at generic values of $\talpha_p$, the quotient of the global nilpotent cone by $\xi_1$ again leads to the singular fiber of type $I_0^*$, but the volumes of $\D_2$ and $\D_4$ shrink to zero in this case. Therefore, it has two $A_1$ singularities. Similarly, the quotient by another generator $\xi_i$ can be obtained by exchanging non-trivial exceptional divisor $\D_3$ to another one ($\D_2$ ($\xi_2$) or $\D_4$ ($\xi_3$)).

As in \eqref{Z2-Z2-quotient}, the moduli space $\MH(C_p,\SO(3))$ of $\SO(3)$-Higgs bundles over $C_p$ can be obtained by the further quotient of the Coulomb branch by the other generator of $\Xi$. By the further quotient, the two irreducible components and the two $A_1$ singular points are identified in the singular fiber of type $I_4$, and the quotient of each singular fiber of type $I_1$ by the $180^{\circ}$ rotation of around the polar axis turns the double point into the $A_1$ singularity. As a result, all the singular fibers of  $\MH(C_p,\SO(3))\to \cB_H$ can be understood as the limit of a singular fiber of type $I_2$ in which one of the irreducible components shrinks to zero as in Figure \ref{fig:MHSO3}.
When $\tbeta_p=0=\tgamma_p$, the global nilpotent cone is again the singular fiber of type $I_0^*$, but it has three $A_1$ singularities for generic values of $\talpha_p$ \cite[Figure 1]{Gukov:2010sw}.

\subsection{Algebra of line operators}\label{sec:line}

It is known \cite{Gaiotto:2009hg,Gaiotto:2010be} that loop operators along $S^1$ in a 4d $\cN=2$ theory $\cT(C,G,\lat{L})$ on $S^1\times \R^3$ form a commutative algebra that is the coordinate ring $\OO(\cM_C(C,G,\lat{L}))$ of the Coulomb branch holomorphic in complex structure $J$.
Once we introduce the $\Omega$-background $S^1\times \bR\times_q \R^2$, loop operators wrapped on $S^1$ are localized on the axis of the $\Omega$-deformation as depicted in Figure \ref{fig:omega-background}. Consequently, they are forced to come across each other as they exchange their positions, which yields non-commutative deformation of the algebra \cite{Nekrasov:2010ka,Ito:2011ea,Yagi:2014toa,Bullimore:2016nji,Dedushenko:2018icp,Okuda:2019emk}. Thus, an algebra of line operators of a 4d $\cN=2$ theory on the $\Omega$-background provides the deformation quantization $\OO^q(\cM_C)$ of the coordinate ring of its Coulomb branch, a.k.a. \emph{quantized Coulomb branch}.
\begin{figure}[ht]\centering
\includegraphics[width=14cm]{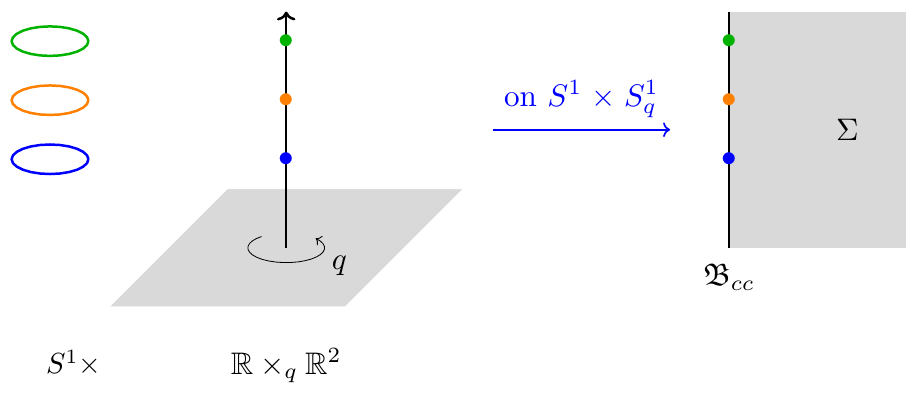}
\caption{An algebra of line operators (colored circles) in a 4d $\cN=2$ theory becomes non-commutative in the $\Omega$-background $S^1\times \bR\times_q \R^2$, which provides deformation quantization of holomorphic coordinate ring of the Coulomb branch. The 4d $\cN=2$ theory compactified on $S^1\times S^1_q$ is described by 2d $A$-model  $\Sigma\to \cM_C$ on the Coulomb branch where the boundary condition at $\partial \Sigma$ is given by $\frakB_{cc}$. Here $\R^2 \supset S^1_q$ is the circle generating the $\Omega$-deformation.}\label{fig:omega-background}
\end{figure}

Now we are ready to discuss quantized Coulomb branches of the 4d $\cN=2^*$ theories of rank one and their relation to spherical DAHA $\SH$. Once we specify a maximal isotropic lattice $\lat{L}\subset H^2(C,Z(G))$, we can read off charges of
line operators in a 4d $\cN=2$ theory subject to the Dirac quantization condition. The cases of the 4d $\cN=2^*$ theories of rank one have been studied in detail \cite{Gaiotto:2010be,Aharony:2013hda}. In fact, the generators $x,y,z$ of $\SH$ in \eqref{SH-gen} correspond to the minimal Wilson (1,0), 't Hooft (0,1) and dyonic  (1,1) line operator, respectively. Therefore, it is natural to expect that the relations of quantized Coulomb branches to $\SH$ are as follows:
\begin{itemize}
\item The $\SU(2)$ theory has line operators of charge $(\lambda_{e}, \lambda_{m})$ with $\lambda_{e} \in \mathbb{Z}, \lambda_{m} \in 2 \mathbb{Z}$, including a Wilson operator with the fundamental representation. Therefore, the quantized Coulomb branch is isomorphic to the $\xi_2$-invariant subalgebra of $\SH$
\be\OO^q(\cM_C(C_p,\SU(2)))\cong \SH^{\xi_2}\ee
generated by
\be\label{SU2-generator} x=(X+X^{-1})\mathbf{e}~, \qquad y^2-1=(Y^2+1+Y^{-2})\mathbf{e}~.\ee
\item The $\SO(3)_+$ theory has line operators of charge $(\lambda_{e}, \lambda_{m})$ with $\lambda_{e} \in 2 \mathbb{Z}, \lambda_{m} \in \mathbb{Z}$, including an 't Hooft operator with the fundamental representation. Therefore, the quantized Coulomb branch is isomorphic to the $\xi_1$-invariant subalgebra of $\SH$
\be\OO^q(\cM_C(C_p,\SO(3)_+))\cong \SH^{\xi_1}\ee
generated by
\be\label{SO3+generator} x^2-1=(X^2+1+X^{-2})\mathbf{e}~, \qquad y=(Y+Y^{-1})\mathbf{e}~.\ee
\item The $\SO(3)_-$ theory has line operators of charge $(\lambda_{e}, \lambda_{m})$ with $\lambda_{e}, \lambda_{m} \in \mathbb{Z}$ such that $\lambda_{e}+\lambda_{m} \in 2 Z$, including a minimal dyonic operator $(\lambda_{e}, \lambda_{m})=(1,1)$.  Therefore, the quantized Coulomb branch is isomorphic to the $\xi_3$-invariant subalgebra of $\SH$
\be\OO^q(\cM_C(C_p,\SO(3)_-))\cong \SH^{\xi_3}\ee
generated by
\be x^2-1=(X^2+1+X^{-2})\mathbf{e}~, \qquad z=(q^{-1/2} Y^{-1}X + q^{1/2} X^{-1}Y)\mathbf{e}~.\ee
\end{itemize}

To see the connection to a 2d sigma-model in \S\ref{sec:2d}, we can employ a trick similar to Figure \ref{fig:6d-to-sigma}. Namely, we can compactify a 4d $\cN=2$ theory on $T^2\cong S^1\times S^1_q$ as illustrated in Figure \ref{fig:omega-background}, which leads to 2d $A$-model $\bR\times \R_+\cong \Sigma \to \cM_{C}(C,G,\lat{L})$ on the Coulomb branch. Here $S^1_q\subset \R^2$ is the circle around the axis of the $\Omega$-background. As a result, the axis of the $\Omega$-background on which loop operators meet each other becomes the boundary $\partial \Sigma$. Therefore, the boundary $\partial \Sigma$ should give rise the quantized Coulomb branch $\OO^q(\cM_C)$ so that the canonical coistoropic boundary condition $\frakB_{cc}$ naturally shows up at $\partial \Sigma$ \cite{Nekrasov:2010ka}. By the state-operator correspondence, a loop operator in the 4d $\cN=2$ theory becomes a state in $\Hom(\Bcc,\Bcc)$ up on the compactification.
In this way, $\Bcc$ arises from ``the axis of the $\Omega$-deformation'' (or a tip of a cigar as in \cite{Nekrasov:2010ka}).

We have seen that the $\SU(2)$ and $\SO(3)_\pm$ theories are related by $\PSL(2,\Z)$ so that the quantized Coulomb branches are indeed isomorphic. We expect the conjectural functor \eqref{eq:functor} exists even when $\X=\cM_C(C_p,\SU(2),\lat{L})$ are the Coulomb branches of the 4d $\cN=2^*$ theories of rank one. Thus, we can compare the representation category $\Rep(\OO^q(\X))$ of the quantized Coulomb branch with the $A$-brane category $\ABrane(\X,\omega_\X)$ of the Coulomb branch as in \S\ref{sec:2d}. In fact, we can construct a polynomial representation of a quantized Coulomb branch by using the two generators, and finite-dimensional modules can be obtained by quotients of the polynomial representation under the corresponding shortening conditions. The geometry of the Coulomb branches is explored in the previous section, and we confirm that there is a one-to-one correspondence between finite-dimensional modules of the quantized Coulomb branch and compact $A$-branes in the Coulomb branch as in \S\ref{sec:2d}.

For illustrative purposes, let us briefly study representation theory of the quantized Coulomb branch $\OO^q(\cM_C(C_p,\SO(3)_+))\cong \SH^{\xi_1}$. As in \eqref{SO3+generator}, it is generated by $x^2$ and $y$. Consequently, the polynomial representation of $\SH$ splits into the $\pm$ eigenspaces of the $\Z_2$ action $\xi_1:X\to -X$ as the $\SH^{\xi_1}$-modules for generic $(q,t)$. Therefore, $\SH^{\xi_1}$ acts on $\wt\scP:=\C_{q,t}[X^{\pm2}]^{\Z_2}$ (resp. $\wt\scP:=(X+X^{-1})\C_{q,t}[X^{\pm2}]^{\Z_2}$) under the polynomial representation, which is spanned by Macdonald polynomials of even (resp.  odd) degrees. We can define a raising and lowering operator as in \eqref{RL} with these generators
\bea\label{RL-tilde}
\wt\sfR_j&:=q^{j-1}t\frac{q^{-1}(x^2-1)y-qy(x^2-1)}{q^2-q^{-2}}+ \frac{q^{2j}(q^2-t^2)(1-t^2)}{1-q^{2+2 j} t^{2}}~,\cr
\wt\sfL_j&:=q^{-1-j}t^{-1}\frac{q^{-1}y(x^2-1)-q(x^2-1)y}{q^2-q^{-2}}-\frac{(q^2-t^2)(1-t^2)}{t^2(q^2-q^{2j} t^2)}~.
\eea
They act on Macdonald polynomials as
\begin{align}
\pol(\wt\sfR_j)\cdot P_{j}(X;q,t)&= (1-q^{2 j-1} t^{2}) P_{j+2}(X;q,t)~,\label{raising2} \\
\pol(\wt\sfL_j)\cdot P_{j}(X;q,t)&= -\frac{q^{-2j}(1-q^{2j})(1-q^{2(j-1)})(1-q^{2(j-2)} t^{4})(1-q^{2(j-1)} t^{4})}{t^2(1-q^{2(j-1)} t^2)^{2}(1-q^{2(j-2)} t^2)}P_{j-2}(X;q,t)~.\label{lowering2}
\end{align}
Thus, using these operations, we can study finite-dimensional representations as quotients of the polynomial representation $\wt\scP$ of $\SH^{\xi_1}$ as in \S\ref{sec:finite-rep}. Focusing on the polynomial representation of even degrees, when $q$ is a $2n$-th root of unity (\textit{i.e.}\ $q^{2n}=1$), the ideal $(X^{2n}+X^{-2n}-x_{2n}-x_{2n})$ is invariant under the action of $\SH^{\xi_1}$ since the $q$-shift operator $\varpi$ acts trivially on $X^{\pm2n}$. Consequently, the quotient by this ideal yields an $n$-dimensional finite-dimensional representation
\be
\wt\scF^{x_{2n},+}_{n}:=\wt\scP/(X^{2n}+X^{-2n}-x_{2n}-x_{2n}) ~.
\ee
An analogous representation can be obtained from the polynomial representation of odd degrees.
They correspond to branes supported on a generic fiber with prescribed holonomy in $\cM_C(C_p,\SO(3)_+)$. Comparing \eqref{Fxm} at $m=2n$, the dimension is a half because it splits into the $\pm$ eigenspaces of the $\Z_2$ action $\xi_1$. Furthermore, when $n$ is even $n=2p$, we have another finite-dimensional representation because one lowering operator \eqref{lowering2} becomes null:
\be
\wt\scU_{p}:=\wt\scP/(P_{n}) ~.
\ee
This corresponds to a brane supported on an irreducible component in the singular fiber $\pi^{-1}(b_1)$ in Figure \ref{fig:CB}. The dimension is a half of \eqref{Un} due to the $\xi_1$ invariance. Under the condition, we have the short exact sequence analogous to \eqref{SES4}
\begin{equation}
0\to \wt\scU_{p} \to \wt\scF^{-,+}_{n} \to \xi_2(\wt\scU_{p}) \to 0~.
\end{equation}
This can be interpreted as a bound state formed by the branes supported on the two irreducible components at the singular fiber $\pi^{-1}(b_1)$ in Figure \ref{fig:CB}. Similarly, we can obtain finite-dimensional irreducible representations analogous to $\scV$ in \eqref{rCS-Hilbert-space} and $\scD$ in \eqref{D-basis} under the same shortening conditions where the dimensions are halved, respectively.

Note that the deformation quantization of the holomorphic coordinate ring of the Hitchin moduli space $\MH(C_p,\SO(3))$ can be understood as the spherical subalgebra of ``$\SO(3)$ DAHA''. It is isomorphic to the $\Xi$-invariant subalgebra of $\SH$
$$\OO^q(\MH(C_p,\SO(3)))\cong\SH^\Xi $$
generated by $x^2-1$ and $y^2-1$. Since the weight and root lattices of $\SU(2)$ and $\SO(3)$ are related by $\sfQ(\SU(2))=\sfP(\SO(3))\subset \sfP(\SU(2))=\sfQ(\SO(3))$, this is consistent with the construction of DAHA from the symplectic lattice $(\sfP\oplus \sfP^\vee,\omega)$ described at the beginning of \S\ref{sec:2d} and the deformation quantization of the moduli space $\MF(C_p,\PSL(2,\C))$ of flat $\PSL(2,\C)$-connections. Again, we can study representation theory of the $\SO(3)$ DAHA from the perspective of brane quantization though the detail is omitted here.

\bigskip

There is yet another way \cite{Kapustin:2006pk,Kapustin:2007wm} to connect the 4d $\cN=4$ theory to a 2d sigma-model and to see a category of line operators in the 4d $\cN=4$ theory. (See also \cite{Dimofte:2019zzj} for a similar analysis in 3d.) Let us consider a line operator supported on $\bR\times \textrm{pt} \subset \bR\times \R^3$ in the 4d $\cN=4$ theory with gauge group $G$. Then, the neighborhood around the line operator at $\textrm{pt}\in \R^3$ consists of two disks glued along with a punctured disk $\raisebox{-.1cm}{\includegraphics[width=1cm]{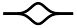}}=\bC\cup_{\C^\times} \bC$, called a ``raviolo'' \cite{Bullimore:2016hdc,Nakajima:2017bdt}. The ``effect'' of a line operator is measured by the modification of field configurations from one disk $\C$ to the other disk $\C$, namely a \emph{Hecke modification}. The compactification of the 4d $\cN=4$ theory on the raviolo leads to a 2d sigma-model on the Hitchin moduli space $\MH(\raisebox{-.1cm}{\includegraphics[width=1cm]{Raviolo}},G)$ of the raviolo, and a line operator gives rise to a boundary condition of the worldsheet as depicted in Figure \ref{local-line}. It was shown \cite{Kapustin:2006pk} that a Wilson operator provides a boundary condition of type $(B,B,B)$ whereas an 't Hooft operator gives a boundary condition of type $(B,A,A)$ in $\MH(\raisebox{-.1cm}{\includegraphics[width=1cm]{Raviolo}},G)$.  Since a boundary condition for a Wilson operator is holomorphic $(B,B,B)$ in every complex structure on $\MH(\raisebox{-.1cm}{\includegraphics[width=1cm]{Raviolo}},G)$, its fusion with another line operator preserves all supersymmetry. Taking into account the fact that the fusion of a Wilson and an 't Hooft operator leads to a dyonic operator, a dyonic operator hence gives rise to a brane of type $(B,A,A)$ in 2d sigma-model on $\MH(\raisebox{-.1cm}{\includegraphics[width=1cm]{Raviolo}},G)$. Thus, upon the compactification, line operators in the 4d $\cN=4$ theory all become $B$-branes of type $I$ on $\MH(\raisebox{-.1cm}{\includegraphics[width=1cm]{Raviolo}},G)$, and an algebra structure can be defined by the convolution product of $B$-branes.

\begin{figure}[ht]
  \centering\includegraphics[width=\textwidth]{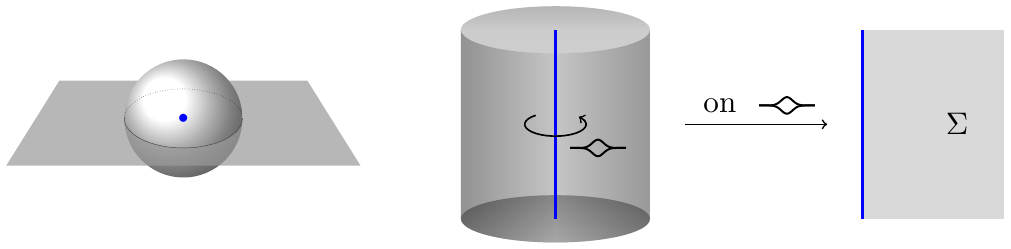}
  \caption{(Left) The  neighborhood around a line operator at $\textrm{pt}\in \R^3$ is a ``raviolo''. (Right) A line operator (blue) gives rise to a boundary condition in the 2d sigma-model upon the compactification of the 4d $\cN=4$ theory on the raviolo.}
  \label{local-line}
\end{figure}

To formulate this idea into a mathematical model \cite{garland1995affine,vasserot2005induced,bezrukavnikov2005equivariant}, let us first consider the moduli space of $G$-bundles over the raviolo. Since the coordinate ring of $\bC$ is  the formal power series ring  $\OO:=\bC\llbracket z\rrbracket$ and that of $\C^\times$ is its
 field $\scK:=\bC(\!( z )\!)$ of fractions, the moduli space of $G$-bundles over $\raisebox{-.1cm}{\includegraphics[width=1cm]{Raviolo}}$ can be expressed as a double coset model, namely the space $G_\C^\scK :=G_\bC(\!( z )\!)$ of transition functions over the punctured disk $\C^\times$ modulo the spaces of gauge transformations $G_\C^\OO :=G_\bC\llbracket z\rrbracket$ over each $\bC$:
\be\label{BunGP1}
\BunG(\raisebox{-.1cm}{\includegraphics[width=1cm]{Raviolo}})=G_\C^\OO\backslash\, G_\C^\scK \, /G_\C^\OO~.
\ee
In fact, if we take only the right quotient by the gauge transformation, the resulting space $\Gr(G_\C):= G_\C^\scK/G_\C^\OO$ is called the \emph{affine Grassmannian}.

To consider the Hitchin moduli space, we need to introduce the Higgs field. This can be achieved by considering the \emph{affine Grassmannian Steinberg variety}
\begin{equation}\label{Grassmannian-Steinberg}
    \cR = \{ (x,[g])\in \frakg_\C^\OO\times \Gr(G_\C) \mid
    \operatorname{Ad}_{g^{-1}}(x) \in \frakg^\OO_\bC\}~.
\end{equation}
The quotient $G_{\mathbb{C}}^{\OO}\backslash\cR$ is the moduli space of a pair of $G$-bundles and sections of its adjoint associated bundle over the raviolo, which can be regarded as the mathematical model of $\MH(\raisebox{-.1cm}{\includegraphics[width=1cm]{Raviolo}},G)$.
Taking the $B$-model viewpoint in complex structure $I$, the category $\lat{Line}$ of line operators in the 4d $\cN=4$ theory with gauge group $G$ and zero theta angle is equivalent to the derived category of $G_{\mathbb{C}}^{\OO}$-equivariant coherent sheaves on $\cR$
$$
\lat{Line}\bigl[\cT[C,G,\lat{L}]\bigr]\cong D^b\mathsf{Coh}^{G_\C^\OO}(\cR)~,
$$
where a maximal isotropic lattice $\lat{L}$ is chosen in such a way that the theta angle is zero.
For instance, it is easy to see that it automatically incorporates the category of Wilson operators as
$$
\mathsf{Coh}^{G_\C^\OO}(\textrm{pt})\cong\mathsf{Coh}^{G}(\textrm{pt})\cong\mathsf{Rep}(G)~.
$$
By taking the Grothendieck ring of this category, we obtain the algebra of line operators in the 4d $\cN=4$ theory \cite{bezrukavnikov2005equivariant}
\be\label{N=4-K}
K^{G_\C^\OO}(\cR)\cong \bC[T_\bC\times T_\C^\vee]^W~,
\ee
which is indeed isomorphic to the coordinate ring of the Coulomb branch
\begin{equation}\label{4d-N=4-MC}
\cM_C(C,G,\lat{L})=\frac{T_\bC\times T_\C^\vee}{W}
\end{equation}
of the 4d $\cN=4$ theory on $S^1\times \R^3$ holomorphic in complex structure $J$ \cite{garland1995affine,vasserot2005induced,bezrukavnikov2005equivariant}. The coordinates $\C[T_\bC]^W$ and $\C[T_\C^\vee]^W$ are spanned by Wilson and 't Hooft line operators, respectively. It is important to note that the Coulomb branch is \emph{not} isomorphic to the moduli space $\cM_{\mathrm{flat}}(C,G_\C)$ of flat $G_\C$-connections on a two-torus $C\cong T^2$ in \eqref{Mflat-T2} as a holomorphic symplectic manifold. It is rather a quotient of $\cM_{\mathrm{flat}}(C,G_\C)$ by $\lat{L}$.

To obtain the algebra of line operators in the 4d $\cN=2^*$ theory, we turn on the equivariant action $\C^\times_t$ on the cotangent fiber of the affine Grassmannian $\Gr(G_\C)$, which is equivalent to switching on the ramification parameters
 \eqref{t-ramification}.
Moreover, its quantization can be further achieved by introducing the equivariant action $\C^\times_q$of the loop rotation $z\mapsto qz$. In this way, we obtain the quantized Coulomb branch of the 4d $\cN=2^*$ theory on $S^1\times \R^3$
\be\label{Gr-SH}
K^{(G_\C^\OO\times \C^\times_t)\rtimes \C_q^\times}(\cR)\cong \SH(W)^{\lat{L}}~.
\ee
As we have seen in the examples of rank one, it is \emph{not} isomorphic to the spherical subalgebra $\SH(W)$ of DAHA associated to $W$. It is rather the $\lat{L}$-invariant subalgebra of $\SH(W)$.

Even with the same gauge group $G$, discrete theta angles provide different theories as in the examples of rank one. Generally, they are distinguished by characteristic classes of Higgs bundles such as the Stiefel-Whitney classes $w_2$ and $w_4$ \cite{Frenkel:2007tx,Aharony:2013hda}. Above we consider only the cases in which the theta angle is zero, but we can generalize it to a theory with non-trivial discrete theta angle by constructing the moduli space of Higgs bundles with non-trivial topological classes over the raviolo.

\subsection{Including surface operator}\label{sec:surface}

So far, we focus on physical realizations of the spherical DAHA $\SH(W)$ and its subalgebras, and it is natural to ask whether we can realize DAHA $\HH(W)$ itself. To see that, let us consider an algebra of line operators on a surface operator of Gukov-Witten type \cite{Gukov:2006jk}.
A surface operator of Gukov-Witten type arises as an intersection of M5-branes at codimension two locus:
$$\begin{matrix}
&{\mbox{\rm space-time:}} & \quad &\R^4 & \times   &T^*C & \times&\R^3 \\
&{\mbox{\rm $N$ M5-branes:}} & \quad  &\R^4 & \times & C & \times  & \textrm{pt}  \\
{\mbox{\rm (surface operator)}} & {\mbox{\rm M5'-brane:}} & \quad&  \R^2 & \times &C & \times &\R^2 \\
\end{matrix}$$
where $C\cong  T^2$ is a two-torus.
Here a surface operator is supported on $\R^2 \times \textrm{pt} \subset \R^4$ in the $\cN=4$ $\SU(N)$ theory $\cT[C,\SU(N),\lat{L}]$.
A half-BPS surface operator breaks the gauge group down to a \emph{Levi subgroup} $L \subset G$, and as in \eqref{tame} the singular behavior of the gauge field around the surface operator is
\bea
A & =  \alpha d \vartheta + \ldots \,, \label{asing}
\eea
where $z=r e^{i \vartheta}$ is a local coordinate of the plane normal to the surface operator. The singular behavior for one $\Phi$ of the adjoint chiral scalars is described by
$$D_{\bar z}\Phi=(\beta+i \gamma) \delta^{(2)}(z,\bar z)~.$$
A surface operator can also be microscopically realized as a 2d $\cN=(4,4)$ theory coupled to the 4d $\cN=4$ theory where the triple $(\alpha,\beta,\gamma)$ can be understood as the $\cN=(4,4)$ Fayet-Iliopoulos parameters.
In addition to the triple $(\alpha,\beta,\gamma)$, they are also labeled by the theta angles $\eta$ of the 2d theory.
The quadruple $(\alpha,\beta,\gamma,\eta)$ takes a value in the $W_L$-invariant part of $T \times \frakt \times \frakt \times T^{\vee}$ where we write $W_L$ for the Weyl group of the Levi subgroup $L$.
We remark that surface operators exist in $\cN=2$ supersymmetric gauge theories where the parameters $\beta$ and $\gamma$ are absent due to the number of supersymmetry.

\begin{figure}[ht]
\centering
\includegraphics[width=\textwidth]{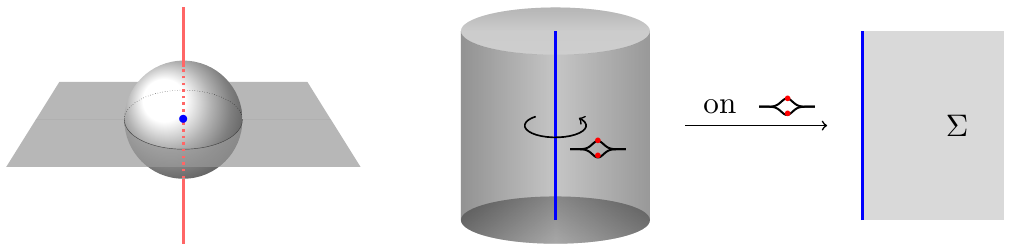}
\caption{The raviolo around a line operator (blue) on the surface operator (red) has tame ramifications at the centers of the two disks. A line operator (blue) gives rise to a boundary condition in the 2d sigma-model upon the compactification of the 4d $\cN=4$ theory on the raviolo.}
\label{fig:Iwahori}
\end{figure}

In the following, we consider a category and algebra of line operators on a surface operator that breaks the gauge group $G$ to its maximal torus $T$, which is often called the \emph{full} surface operator. In this case, the corresponding Weyl group is that of the gauge group $W_T=W$. Since we will eventually consider the 4d $\cN=2^*$ theory, we set $\b=0=\g$ and the surface operator is parametrized by the pair $(\a,\eta)$.
As in the previous subsection, we can study this by compactifying the 4d theory on the ``raviolo''. However, due to the presence of the surface operator, there are ramifications at the centers of the two disks of the raviolo $\raisebox{-.2cm}{\includegraphics[width=1cm]{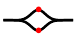}}$ around a line operator (Figure \ref{fig:Iwahori}). We write the resulting Hitchin moduli space by $\MH(\raisebox{-.2cm}{\includegraphics[width=1cm]{Raviolo-ramification}},G)$, and we are interested in a category of $B$-branes of type $I$ on $\MH(\raisebox{-.2cm}{\includegraphics[width=1cm]{Raviolo-ramification}},G)$.

Again, we first consider the moduli space of $G$-bundles over the ramified raviolo $\raisebox{-.2cm}{\includegraphics[width=1cm]{Raviolo-ramification}}$ to formulate this into a mathematical model \cite{vasserot2005induced,varagnolo2009finite,varagnolo2010double}. The full surface operator breaks the space of gauge transformations on a disk from the loop group $G_\C^\OO$ to the \emph{Iwahori} subgroup  \be\label{Iwahori}\scI:=\{ a_0 +a_1 z +a_2 z^2 +\cdots \in G_\C^\OO | a_0\in B  \}\ee  that is the preimage of a Borel subgroup $B$ under the projection $G_\C^\OO\to G_\bC$. Hence, the moduli space of $G$-bundles over $\raisebox{-.2cm}{\includegraphics[width=1cm]{Raviolo-ramification}}$ can be expressed as the double coset space
$$
\BunG(\raisebox{-.2cm}{\includegraphics[width=1cm]{Raviolo-ramification}})=\scI\backslash G_\C^\scK/\scI~.
$$
Actually, $\Fl(G_\C):=G_\C^\scK/\scI$ is called the \emph{affine flag variety},
which is the fiber bundle over the affine Grassmannian with the full flag variety $G_\C/B$ a fiber
$$
\begin{tikzcd}
G_\C/B \arrow[r]
& \Fl(G_\C) \arrow[d]\\
& \Gr(G_\C)
\end{tikzcd}~.
$$
As a mathematical model of the Hitchin moduli space $\MH(\raisebox{-.2cm}{\includegraphics[width=1cm]{Raviolo-ramification}},G)$, we can consider the moduli space $\scI\backslash \cZ$ of a pair of $G$-bundles and sections of its adjoint associated bundles on $\raisebox{-.2cm}{\includegraphics[width=1cm]{Raviolo-ramification}}$ where $\cZ$ is the \emph{affine flag Steinberg variety} defined as
\be\label{flag-Steinberg}
\mathcal{Z} = \{ (x,[g])\in \textrm{Lie}(\scI)\times \Fl(G_\C) \mid
    \operatorname{Ad}_{g^{-1}}(x) \in  \textrm{Lie}(\scI)\}~.\ee
Consequently, the category of line operators on the full surface operator in the 4d $\cN=4$ theory with gauge group $G$ and zero theta angle is equivalent to the derived category of ${\scI}$-equivariant coherent sheaves on $\cZ$
\be\label{line-surface-category}
\mathsf{Line}[\cT[C,G,\lat{L}],T]\cong D^b\mathsf{Coh}^{\scI}(\cZ)~,
\ee
where a maximal isotropic lattice $\lat{L}$ is chosen in such a way that the theta angle is zero.
This includes the category of Wilson operators on the full surface operator
$$
\mathsf{Coh}^{\scI}(\textrm{pt})\cong\mathsf{Coh}^{T}(\textrm{pt})\cong\mathsf{Rep}(T)~,
$$
which sees that the gauge group $G$ is broken to the maximal torus $T$ due to the surface operator.
Clearly, the Grothendieck ring of the category \eqref{line-surface-category} is the algebra of line operators on the full surface operator in the 4d $\cN=4$ theory \cite{vasserot2005induced,varagnolo2009finite,varagnolo2010double}
$$
 K^{\scI}(\mathcal{Z})=\mathbb{C}[T_\bC\times T_\C^\vee]\rtimes \mathbb{C}[W]~.
$$
Unlike \eqref{N=4-K}, this ring is non-commutative because line operators on the surface operators know the order of multiplications even without quantization ($\Omega$-deformation).

By introducing the equivariant actions as in \eqref{Gr-SH}, we obtain an algebra of line operators on the full surface operator in the 4d $\cN=2^*$ theory with gauge group $G$ and zero theta angle   on the $\Omega$-background
\be \label{linealge-surface}
 K^{(\scI \times \C^\times_t)\rtimes \C_q^\times}(\mathcal{Z})\cong \HH(W)^{\lat{L}}~.
\ee
This is isomorphic to the $\lat{L}$-invariant subalgebra of DAHA $\HH(W)$. In the case of $G=\SU(2)$, this is isomorphic to the $\xi_2$-invariant subalgebra of DAHA $\HH$ generated by $X,Y^2,T$. For $G=\SO(3)$, it is isomorphic to the $\xi_1$-invariant subalgebra of DAHA  $\HH$ generated by $X^2,Y,T$. (See \S\ref{sec:DAHA-main}.)

Although we consider the full surface operator that breaks a gauge group $\SU(N)$ to $\mathrm{S}[\U(1)^N]$ here, we can instead include a surface operator of another type associated to a partition of $N$. For this, we replace the Borel subgroup $B$ in \eqref{Iwahori} by a parabolic subgroup $P$ associated to a partition of $N$. In this way, we can obtain a variant of DAHA as an algebra of line operators on a surface operator.

 \subsection*{Canonical coisotropic brane of higher ranks}
In the previous subsection, the canonical coisotropic brane $\frakB_{cc}$ emerges as the boundary condition at the axis of the $\Omega$-deformation by compactifying the 4d $\cN=2^*$ theory on $T^2\cong S^1\times S^1_q$. Moreover, an algebra of line operators can be understood as the algebra of $(\frakB_{cc},\frakB_{cc})$-strings in 2d $A$-model on the Coulomb branch. It is natural to ask how to describe the boundary condition at the axis of the $\Omega$-deformation in the presence of the full surface operator up on the same compactification (Figure \ref{fig:omega-surface-Bcc}).

\begin{figure}[ht]
\centering
\includegraphics[width=14cm]{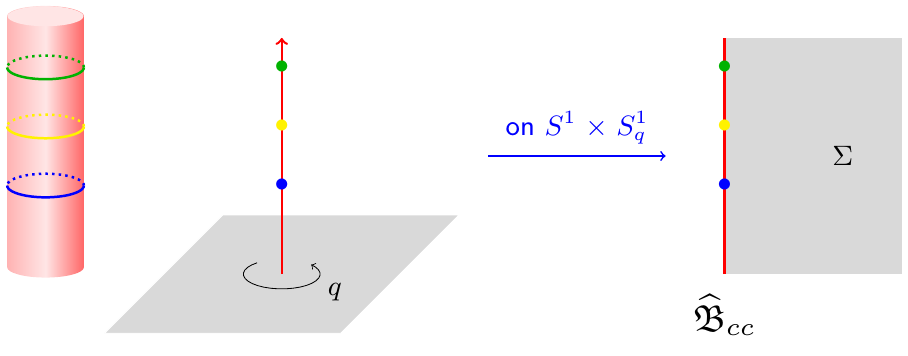}
\caption{The 4d $\cN=2^*$ theory with the Gukov-Witten surface operator on $S^1\times S^1_q$ is described by 2d $A$-model  $\Sigma\to \cM_C(C_p,G,\lat{L})$ where the boundary condition  at $\partial \Sigma$ is described by the canonical coisotropic brane $\widehat{\frakB}_{cc}$ of higher rank.}
\label{fig:omega-surface-Bcc}
\end{figure}

For this purpose, we refer to the idea \cite{haiman2002vanishing,bezrukavnikov2006cherednik} employed in the geometric construction of rational Cherednik algebra. Roughly speaking, spherical DAHA $\SH(W)$ can be interpreted as the subalgebra averaged over the action of the Weyl group $W$. We want to construct the part of the Weyl group by constructing a brane of higher rank since the algebra of line operators on the full surface operator realizes \eqref{linealge-surface}. In fact, there is the natural construction of the Weyl group just by taking projection $T_\bC\times T_\C^\vee\to(T_\bC\times T_\C^\vee)/W$. If we regard the Coulomb branch of the 4d $\cN=2^*$ theory as the resolution  $\eta:\cM_C(C_p,G,\lat{L})\to(T_\bC\times T_\C^\vee)/W$, we can define their fiber-product $\frakY$ via
$$
\begin{tikzcd}
\frakY \arrow[r] \arrow[d, "\rho"]
& T_\bC\times T_\C^\vee \arrow[d] \\
\cM_C(C_p,G,\lat{L}) \arrow[r,  "\eta" ]
& \frac{T_\bC\times T_\C^\vee}{W}
\end{tikzcd}~.
$$
Therefore, $\frakY$ can be understood as the \emph{universal family} of $\cM_C(C_p,G,\lat{L})$.  Following \cite{haiman2002vanishing}, we define the ``unusual'' tautological bundle $\cP := \rho_*\OO(\frakY)$ on $\cM_C(C_p,G,\lat{L})$, which is called \emph{Procesi bundle}, by the push-forward of the sheaf $\OO(\frakY)$ of regular functions (or the trivial bundle)  on  $\frakY$. By construction, the Procesi bundle $\cP$ has rank $|W|$, with  the regular representation of the Weyl group $W$ on every fiber. Then, the canonical coisotropic brane $\wh \frakB_{cc}:=\cP\otimes \frakB_{cc}$ in the presence of the full surface operator is indeed the tensor product of the original line bundle $\cL$ for $\frakB_{cc}$ \eqref{Bcc} with the Procesi bundle $\cP$. Consequently, the algebra of $(\wh\frakB_{cc},\wh\frakB_{cc})$-strings realizes an algebra of line operators on the full surface operator
$$
\Hom(\wh\frakB_{cc},\wh\frakB_{cc})\cong \HH(W)^{\lat{L}}~.
$$

If we replace $\cM_C(C_p,G,\lat{L})$ and $T_\C^\vee$ by the Hitchin moduli space $\MH(C_p,G)$ and $T_\bC$, respectively, in the construction above, we obtain the full DAHA as $\End (\wh\frakB_{cc})\cong \HH(W)$.
In particular, when $q=1$, the bundle $\cP\otimes \cL$ is equivalent to  the vector bundle $E$ constructed in \cite[Corollary 6.1 and 6.2]{Oblomkov:aa}. In fact, the space of $(\wh\frakB_{cc},\frakB_{cc})$-strings can be understood as $\protect\HH(W)$-left and $\protect\SH(W)$-right module $\HH(W)\mathbf{e}$, to which the Procesi bundle $\cP$ is associated (Figure \ref{fig:He}).

\begin{figure}[ht]
\centering
\begin{tikzpicture}
\fill [gray!30] (-6,0) rectangle (-3.5,3);
\fill [gray!30] (-2.5,0) rectangle (0,3);
\fill [gray!30] (1,0) rectangle (3.5,3);
\draw[very thick] (-0.01,0)--(-0.01,3);
\draw[very thick] (3.5,0)--(3.5,3);
\draw[very thick] (.99,0)--(.99,3);
\draw[very thick,red] (-6,0)--(-6,3);
\draw[very thick,red] (-2.5,0)--(-2.5,3);
\draw[very thick,red] (-3.5,0)--(-3.5,3);
\node at (-6,-.5) {$\wh \frakB_{cc}$};
\node at (-0,-.5) {$\frakB_{cc}$};
\node at (-3.5,-.5) {$\wh \frakB_{cc}$};
\node at (-2.5,-.5) {$\wh\frakB_{cc}$};
\node at (1,-.5) {$\frakB_{cc}$};
\node at (3.5,-.5) {$\frakB_{cc}$};
\end{tikzpicture}
\caption{$(\wh\frakB_{cc},\wh\frakB_{cc})$-strings and $(\frakB_{cc},\frakB_{cc})$-strings form DAHA $\protect\HH$ and spherical DAHA $\protect\SH$, respectively. Hence, a $(\wh\frakB_{cc},\frakB_{cc})$-string leads to $\protect\HH$-left and $\protect\SH$-right module $\protect\HH \mathbf{e}$.}
\label{fig:He}
\end{figure}

This has the following remarkable consequence.
In \S\ref{sec:2d}, we have seen that given an $A$-brane $\frakB'$, the space of $(\frakB_{cc},\frakB')$-strings can be understood as a representation of spherical DAHA $\SH(W)$. In fact, given a $(\frakB_{cc},\frakB')$-string, its joining with a $(\wh\frakB_{cc},\frakB_{cc})$-string always yields a $(\wh\frakB_{cc},\frakB')$-string (Figure \ref{fig:Morita-equiv}), which receives the action of the full DAHA $\HH(W)$. In a similar fashion, by reversing a $(\frakB_{cc},\frakB')$-string, one can obtain a $(\frakB_{cc},\frakB')$-string from a $(\wh\frakB_{cc},\frakB')$-string.  This leads to the Morita equivalence of the two representation categories
\be\label{Morita}
\Hom(\wh\frakB_{cc},\frakB_{cc}) :\mathsf{Rep}(\SH(W)) \xrightarrow{\sim} \mathsf{Rep}(\HH(W))~.
\ee
Of course, not every object produces an equivalence of this type. We expect that both $\frakB_{cc}$ and~$\wh\frakB_{cc}$ can be understood as generating objects of the category of $A$-branes.
In general, generating objects are far from unique, and their non-uniqueness is one way that Morita equivalences arise.
For example, any free $R$-module is a generating object, giving rise to the usual Morita equivalences between matrix algebras. Since $\hat\frakB_{cc}$ is, in a sense, analogous to a higher-rank module over $\frakB_{cc}$, we expect a similar story here, but do not pursue this in this paper.

In particular, since the space of $(\wh\frakB_{cc},\frakB_{cc})$-strings is associated to the Procesi bundle $\cP$, the dimension formula for the representation corresponding to the space of $(\wh\frakB_{cc},\brL)$-strings for a compact Lagrangian submanifold $\L$ is obtained by just tensoring $\cP$ in \eqref{dimension2}
\bea\nonumber
\dim \Hom (\wh\frakB_{cc},\brL)&=\dim H^0(\L,\cP\otimes \frakB_{cc}\otimes \brL^{-1})~,\cr
&=|W|  \dim \Hom (\frakB_{cc},\brL)~.
\eea

\begin{figure}[ht]
\centering
\includegraphics[width=5cm]{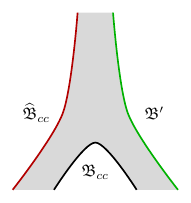}
\caption{Joining of a $(\wh\frakB_{cc},\frakB_{cc})$-string and a $(\frakB_{cc},\frakB')$-string leads to  $(\wh\frakB_{cc},\frakB')$-string.}
\label{fig:Morita-equiv}
\end{figure}

\acknowledgments
We would like to thank
\begin{center}
  D.~Ben-Zvi, M.~Bullimore, I.~Cherednik, M.~Dedushenko, D.~E.~Diaconescu, T.~Dimofte, P.~Etingof, M.~Fluder, D.~Jordan, T.~Kimura, A.~Kirillov Jr., M.~Kontsevich, T.~Q.~T.~Le, Y.~Lekili, M.~Mazzocco, V.~Mikhaylov, A.~Mellit, G.~W.~Moore, H.~Nakajima, A.~Neitzke, A.~Oblomkov, T.~Pantev, P.~Putrov, M.~Romo, P.~Samuelson, A.~Sanders, O.~Schiffmann, Peng~Shan, E.~Sharpe, V.~Shende, Y. ~Soibelman, M.~Sperling, Y.~Tachikawa, A.~Tripathy, E.~Vasserot, B.~Webster, P.~Wedrich, B.~Williams, Wenbin~Yan, Ke~Ye, Y.~Yoshida, R.D.~Zhu
\end{center}
  for valuable discussion and correspondence. A portion of this work was performed
\begin{itemize}[nosep]
  \item at the American Institute of Mathematics supported by a SQuaRE grant ``New connections between link homologies and physics'',
\item at the Aspen Center for Physics supported by National Science Foundation grant PHY-1066293 during the program, ``Boundaries and Defects in Quantum Field Theories'' and PHY-1607611 during the workshop, ``Quantum Knot Homology and Supersymmetric Gauge Theories'',
\item at the Kavli Institute for Theoretical Physics in Santa Barbara supported by the National Science Foundation under Grant No. NSF PHY-1748958 during the workshop ``Quantum Knot Invariants and Supersymmetric Gauge Theories'', and
\item at the International Centre for Theoretical Sciences (ICTS) during the program, ``Quantum Fields, Geometry and Representation Theory'' (Code: ICTS/qftgrt2018/07)~.
\end{itemize}
S.N. thanks  the Center for Quantum Geometry of Moduli Spaces, Max Planck Institute for Mathematics Bonn, and IHES for hospitality.
I.S. thanks Uppsala University, the Center for Quantum Geometry of Moduli Spaces, the Mathematisches Forschungsinstitut Oberwolfach, and Fudan University for hospitality. We are grateful to organizers of seminars, workshops and conferences which have provided us chances to present this work.

The work of S.G., S.N., and D.P. was supported by the Walter Burke Institute for Theoretical Physics and the U.S. Department of Energy, Office of Science, Office of High Energy Physics, under Award No.\ DE{-}SC0011632.
The work of S.G. was also supported by the National Science Foundation under Grant No.~NSF DMS 1664227.
The work of S.N. was also supported by the center of excellence grant ``Center for Quantum Geometry of Moduli Space'' from the Danish National Research Foundation (DNRF95) and by NSFC Grant No.11850410428 and Fudan University Original Project (No. IDH1512092/002).
The work of D.P. was partly supported by the ERC-SyG project, Recursive and Exact New Quantum Theory (ReNewQuantum) which received funding from the European Research Council (ERC) under the European Union's Horizon 2020 research and innovation programme under grant agreement No 810573.
The work of I.S. was supported by the Deutsche Forschungsgemeinschaft (DFG, German Research Foundation) under Germany's Excellence Strategy EXC 2181/1\,--\,390900948 (the Heidelberg STRUCTURES Excellence Cluster), and by the Free State of Bavaria.
P.K. is partially supported by the AMS Simons Travel Grant.

\newpage
\appendix

\section{Glossary of symbols}\label{app:notation}

As a general rule, single symbols in sans-serif type are used to denote lattices or other free $\Z$-modules. Words set in sans-serif type (e.g. \ABrane) refer to categories. Italic symbols may be used for groups, algebras, or other classes of objects.
Calligraphic letters, such as~$\mathcal{M}$ or $\mathcal{B}$, denote objects which are moduli spaces or closely related to moduli spaces.
We reserve bold-face type for distinguished Lagrangian submanifolds of such moduli spaces, in particular for the support of branes (such as $\F$ for a generic fiber of the Hitchin fibration).

Capital gothic letters (e.g.~\X{} for the target space) are used for objects equipped with the structure required by the topological $A$-model. As such, $\brane$ denotes the $A$-brane associated to particular data; for instance, $\brane_\F$ denotes a brane supported on the generic fiber of the Hitchin fibration. Note, though, that extra data on which the brane depends may be left implicit.

Script letters will denote modules of the algebra $\OO^q(\X)$ (precisely which algebra is  intended will be clear from the context). We always suggestively use the same letter for a brane and its corresponding representation, so that (for example) \repF{} will be identified with~\brF{} under the correspondence~\eqref{eq:functor}.

Occasionally, we cannot help using the same symbol for different notions due to the lack of letters. For instance, the symbol $T$ is used for a maximal torus of a compact Lie group, a generator of DAHA, a two-torus and an element of $\SL(2,\Z)$. However, we believe that the context is a sufficient guide.

\bigskip
In the following, we list notations of the paper.
\begin{itemize}[nosep]
  \item $\MH$ is a Hitchin moduli space (a moduli space of Higgs bundles), $\cM_C$ is a Coulomb branch, $\MF$ is a moduli space of flat connections
\item \MS{} is an affine variety over~$\C$ which admits the structure of a non-compact \HK manifold. Depending on context, it may be $\C^\times \times \C^\times$, or the moduli space of flat connections with coefficients in a complex Lie group.
\item $G$ is a compact gauge group, $T$ is a Cartan subgroup of $G$, and $L$ is a Levi subgroup of $G$.
\item $G_\C$ is the complexification of $G$, $T_\C$ is a Cartan subgroup of $G_\C$, $B$ is a Borel subgroup of $G_\C$, and $P$ is a parabolic subgroup of $G_\C$.
\item $\sfR$ denotes a finite root system, and $\dt \sfR:=\sfR\oplus \bZ\delta$ denotes the corresponding affine root system. $\sfR_+$ denotes a set of positive roots.
\item $\sfQ$ and $\sfQ^\vee$ denote the root and coroot lattice of $\sfR$, respectively. Similarly, $\sfP$ and $\sfP^\vee$ denote the weight and coweight lattice of $\sfR$, respectively. $\sfP_+$ denotes the set of dominant weights.
\item $\mathsf{\Lambda}$ and $\mathsf{\Lambda}^\vee$ denote the character and cocharacter lattice, respectively. The center $Z(G)$ and the fundamental group $\pi_1(G)$ of $G$ are given by quotients of lattices $Z(G)=\sfP^\vee/\mathsf{\Lambda}^\vee$ and  $\pi_1(G)=\mathsf{\Lambda}^\vee/\sfQ^\vee$, respectively. Their duals are $Z(G)^\vee=\mathsf{\Lambda}/\sfQ$ and $\pi_1(G)^\vee= \sfP/\mathsf{\Lambda}$.
\item $\{\a_1,\ldots,\a_n\}$ denotes the set of simple roots and  $\{\a_1^\vee,\ldots,\a_n^\vee\}$ denotes the set of simple coroots
\item $\Weyl:=( s_1,\ldots, s_n) $  is the Weyl group of $\sfR$ where $s_i$ denotes the reflection with respect to $\a_i$.
\item $\a_0:=\delta-\theta $ with $\theta$ being the highest short root of $\sfR$ so that $\{\a_0,\a_1,\ldots,\a_n\}$  is the set of simple roots of the affine root system $\dt \sfR$.
\item $\AWeyl:=( s_0,s_1,\ldots, s_n) $ is  the affine Weyl group for $\dt \sfR$. Note that we have an isomorphism
$$
\AWeyl=\Weyl \ltimes \mathsf{t}(\sfQ^\vee)
$$
where $\sfQ^\vee$ acts by affine translation $\mathsf{t}$ $$\mathsf{t}(\a^\vee)(\lambda)=\lambda-(\lambda,\a^\vee)\delta~.$$
\item $\AWeyl^e:=\Weyl\ltimes \mathsf{t}(\sfP^\vee)$ denotes the \emph{extended} affine Weyl group, which can be expressed as
$$
\AWeyl^e=\Pi  \ltimes \AWeyl
$$
where $\Pi := \sfP^\vee/\sfQ^\vee$ acts faithfully on the set of simple roots $\{\a_0,\a_1,\ldots,\a_n\}$. Hence, if $\pi_r(\a_i)=\a_j$ for $\pi_r\in \Pi$, then $\pi_r s_i\pi_r^{-1}=s_j$. (See Figure \ref{fig:extended-affine-Weyl}.)
\item $\BB(W)$,  $\HH(W)$, and $\WW$  denote the double affine braid group, the double affine Hecke algebra and the double affine Weyl group associated to a Weyl group $W$ of a root system.
\end{itemize}

\begin{figure}[ht]
\centering
\includegraphics[width=14cm]{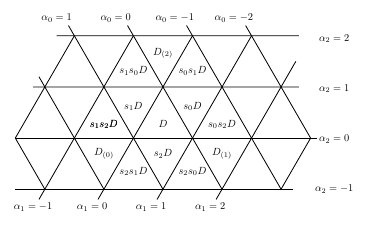}
\caption{The action of the extended affine Weyl group $\protect \AWeyl^e$ of type $\protect\dt{A}_2$ with generators $s_0,s_1,s_2,\pi$. The reflection with respect to the axis $\a_i=0$ is generated by $s_i$ so that $s_i^2=1$. The braid relation can be seen as $D_{(i)}=s_{i+2}s_{i+1}s_{i+2}D=s_{i+1}s_{i+2}s_{i+1}D$, which is equivalent to the reflection with respect to the axis $\a_i=1$. The rotation of the triangle domain $D$ around its center by 120${}^\circ$ is generated by $\pi$ that satisfies $\pi^3=1$. One can also convince oneself $\pi s_i=s_{i+1}\pi$. Here the indices $i=0,1,2$ are taken to be cyclic $i\cong i+3$. If we view the extended affine Weyl group as $\protect \AWeyl^e=\Weyl\ltimes \mathsf{t}(\sfP^\vee)$, $\mathsf{t}(\omega_1)$ and  $\mathsf{t}(\omega_2)$ translate  the domain  $D$ to $s_0s_2D$ and $s_0s_1D$, respectively, where $\omega_1,\omega_2$ are the fundamental weights of $A_2$. }
\label{fig:extended-affine-Weyl}
\end{figure}

\section{Basics of DAHA}\label{app:DAHA}
In this appendix, we will summarize the basics of double affine Hecke algebras. This appendix can be regarded as a concise review of \cite{kirillov1997lectures,macdonald2003affine,Cherednik-book,haiman2006cherednik}.

\subsection{DAHA}\label{app:DAHA-gen}

Let $t := \{t_0,\ldots , t_n\}$ be a collection of formal variables associated to an affine Weyl group $\dt W$ where we identify $t_i = t_j$ if the reflections $s_i$ and $s_j$ are conjugate in $\dt W$. Note that we have at most two distinct $t$'s and, in particular, all the variables $t_i$ are identical for simply-laced types ($A,D,E$). Let $q$ be a formal variable and let $\bC[q^{\pm\frac1m},t_0^{\pm},\ldots , t_n^{\pm}]$ be the ring of Laurent polynomials where $m$ is the minimum positive integer satisfying $( \sfP, \sfP^\vee) = \frac1m \bZ$. Namely, $m=2$ for type $D_{\textrm{even}}$; $m=1$ for type $B_{\textrm{even}}$ and type $C$; otherwise $m=|\Pi|$.
We consider a multiplicative system $M$ in $\bC[q^{\pm\frac1m},t_0^{\pm},\ldots , t_n^{\pm}]$ generated by elements of the form $(q^{\ell} t_i-q^{-\ell} t_i^{-1})$ for any non-negative integer $\ell \in \Z_{\ge0}$ with $i=0,\ldots,n$. We define the coefficient ring \CR{} to be the localization (or formal ``fraction'') of the ring $\bC[q^{\pm\frac1m},t_0^{\pm},\ldots , t_n^{\pm}]$ at~$M$:
$$
\CR =M^{-1}\bC[q^{\pm\frac1m},t_0^{\pm},\ldots , t_n^{\pm}]~.
$$
Moreover, we define $\C_t:=\CR /(q^{\frac1m}-1)$ and $\C_q:=\CR /(t_0-1, \ldots,t_n-1)$.

The DAHA $\HH(W)$ associated to a Lie group $G$ is the $\C_{q,t}$-algebra generated by elements $T_0,\ldots,T_n, \Pi, X^{\sfP}$ with relations
\begin{enumerate}\setlength{\parskip}{-0.1cm}
\item (braid relation) $T_iT_jT_i\cdots = T_jT_iT_j \cdots$ with $m_{ij}$ terms on each side.  ($m_{ij}$ is defined via the Coxeter relations $(s_is_j)^{m_{ij}} = 1$ in the Weyl group $W$.)
\item (quadratic relation) \be\label{quadratic-relation}(T_i-t_i)(T_i+t_i^{-1})=0~.\ee
\item (first affinization) $\pi_r T_i\pi_r^{-1}=T_j$ if $\pi_r(\a_i)=\a_j$.
\item (second affinization) For $i=0,\ldots,n$,
\be\label{XT-relation}\begin{array}{ll}
T_i X^\mu=X^\mu T_i    &\qquad \textrm{if} \quad  (\mu,\a_i^\vee)=0\\
 T_i X^\mu=X^{s_i(\mu)} T_i^{-1}    &\qquad \textrm{if} \quad  (\mu,\a_i^\vee)=1
\end{array}\ee
where $\a_0^\vee:=-\theta^\vee$.
\item $\pi_r X^\mu \pi_r^{-1}=X^{\pi_r(\mu)}$.
\end{enumerate}

The deformation parameters $t_i$ manifestly appear in the quadratic relation \eqref{quadratic-relation} whereas
the other deformation parameter $q$ implicitly shows up in the second relation of \eqref{XT-relation} via $q=X^\delta$. The most important part is $i = 0$ because  $s_0(\mu)=\mu-\a_0 =\mu+\theta-\delta$ for $(\mu,\a^\vee_0)=1$. In this case,  the relation becomes
$$T_0 X^\mu = X^{s_0(\mu)}T_0^{-1}= X^{\mu-\a_0}T_0^{-1} = q^{-1}X^{\mu+\theta} T_0^{-1}~.$$

As the name suggests, DAHA $\HH(W)$ contains two affine Hecke algebras as subalgebras:
\begin{itemize}\setlength{\parskip}{-0.1cm}
\item (affine Hecke algebra for the root system $\sfR^\vee$) $\dt H^X(W):=\C_t ( T_1,\ldots,T_n,X^ \sfP) $
\item  (affine Hecke algebra for the root system $\sfR$) $\dt H^Y(W):= \C_t( T_0,T_1,\ldots,T_n,\Pi) $
\end{itemize}
where both contain finite Hecke algebra $H(W) := \C_t ( T_1,\ldots,T_n)$.
Indeed, the first representation  $\dt H^X(W)$ is called the  \emph{Bernstein presentation} whereas the second one is called the \emph{Coxeter presentation}. The second one $\dt H^Y(W)$ can also be expressed in the Bernstein presentation by using an isomorphism as vector spaces
$$H(W) \otimes \mathbb{C}[Y^{\sfP^\vee}] \xrightarrow{\sim} H^Y(W)~,
$$
where the affine translation in $\dt H^Y(W)$ is generated by $Y$:
\be\begin{array}{ll}
Y^\lambda:=T_{\mathsf{t}(\lambda)}\qquad & \textrm{if} \quad \lambda\in\sfP^\vee_+\cr
Y^\lambda := Y^\mu (Y^\nu)^{-1} \qquad & \textrm{if}\quad  \lambda=\mu-\nu \quad  \mu,\nu\in\sfP^\vee_+~.
\end{array}
\ee

\subsubsection{Double affine braid group and double affine Weyl group}
The double affine braid group $\BB(W)$ is a group generated by elements $T_0,\ldots,T_n, \Pi, X^{\sfP}$ only with the relations 1,3,4,5 above (without the relation 2).  In other words, DAHA can be understood as the $\C_{q,t}$-group algebra of $\BB(W)$ with the quadratic relation \eqref{quadratic-relation}
$$
\HH(W)=\C_{q,t}[\BB(W)]/((T_i-t_i)(T_i+t_i^{-1}))~.
$$
The double affine Weyl group $\WW$ is $\BB(W)$ with the quadratic relations $T_i^2=1$. Therefore, its $\C_{q}$-group algebra is the $t=1$ limit of DAHA $\HH(W)$, which is also introduced at the beginning of \S\ref{sec:2d}
$$
\C_{q}[\WW]=\HH_{t=1}(W)~.
$$
In the Bernstein presentation, the generators $X$ and $Y$ form the Heisenberg group as a subgroup of $\WW$ with the relation
$$
X^{\mu } Y^{\lambda  } = q^{( \mu,\lambda   ) } Y^{\lambda  } X^{\mu } , \qquad \textrm{for} \quad \mu \in \sfP ,\  \lambda \in \sfP^\vee~.
$$

\subsubsection{PBW theorem for DAHA}
First of all, the following Poincar\'e-Birkhoff-Witt theorem for DAHA plays a very important role in the representation theory.
Every element $h \in \HH(W)$ can be \emph{uniquely} written in the form
$$
h=\sum_{r\in\Pi,w \in \dt W,\mu\in \sfP} a_{\mu,w,r}(h)~ X^\mu  T_w \pi_r \qquad a_{\mu,w,r}(h)\in\C_{q,t}
$$
in the Coxeter presentation for $\dt H^Y(W)$, or
\be\label{PBW}
h=\sum_{\lambda\in\sfP^\vee,w\in W,\mu\in\sfP}b_{\mu,w,\lambda}(h) ~  X^\mu T_w Y^\lambda\qquad b_{\mu,w,\lambda}(h)\in\C_{q,t}
\ee
in the Bernstein presentation  for $\dt H^Y(W)$.

\subsubsection{Spherical subalgebra}

For $w \in W$, let $t_w := t_{i_1} \cdots t_{i_k}$, where $w = s_{i_1} \cdots s_{i_k}$ is a reduced decomposition. Then, it is well-defined.
We define an element in the group algebra $\C_t[W]$
\begin{equation}\label{ideomp-gen}
\mathbf{e} :=  \frac{\sum_{w\in W}t_w T_w}{\sum_{w\in W}(t_w)^2}~.
\end{equation}
Then, we have $T_i \mathbf{e}=t_i \mathbf{e}$, and it is moreover an idempotent  $\mathbf{e}^2=\mathbf{e}$. Subsequently,
we can define the spherical subalgebra  as
$\SH(W):=\mathbf{e}\HH(W) \mathbf{e}\subset\HH(W)$, called  \emph{spherical DAHA}. Roughly speaking, the spherical DAHA can be understood as the subalgebra ``averaged over'' the Weyl group symmetry.

At $t=1=q$, the spherical DAHA becomes commutative
$$
\SH(W)\Big|_{t=1=q}=\bC[T_\bC\times T_\bC]^W~.
$$
Indeed, it is isomorphic to the coordinate ring of the moduli space
\be\label{Mflat-T2}\MF(T^2,G_\C)=\frac{T_\bC\times T_\bC}W\ee
of flat $G_\bC$-connections on a torus \cite{Oblomkov:aa}.

\subsubsection{\texorpdfstring{Braid group and $\mathrm{SL}(2,\mathbb{Z})$ action}{Brad group and SL(2,Z) action}}\label{app:SL2-action}

The braid group on  three strands is given by  $B_3= (\tau_\pm\text{ : } \tau_+\tau^{-1}_-\tau_+ = \tau^{-1}_-\tau_+\tau^{-1}_-)$, and the relation to $\SL(2,\Z)$ is given by the short exact sequence
\be
0\to \Z\to B_3\to \SL(2,\Z)\to1
\ee
where the kernel of the projection $B_3\to \SL(2,\Z)$ is $\Z$ generated by $\left(\tau_{+} \tau_{-}^{-1} \tau_{+}\right)^{4}$.

DAHA $\HH(W)$ receives an action of the braid group on three strands $B_3= (\tau_\pm\text{ : } \tau_+\tau^{-1}_-\tau_+ = \tau^{-1}_-\tau_+\tau^{-1}_-)$.
To see the action explicitly, let $\omega_i$, $\omega_i^\vee$ ($i=1,\dots,n$) denote the fundamental weights and coweights respectively, and we define $X_i:=X^{\omega_i}$ and $Y_i=Y^{\omega_i^\vee}$ as the corresponding generators of $\HH(W)$ in the Bernstein presentation. Then, the action reads
\begin{equation}\label{DAHA-T}
\tau_+: \begin{cases}
      X^\mu\mapsto X^\mu\\
      T_j\mapsto T_j\\
      Y_i\mapsto X_iY_i q^{-(\omega_i,\omega_i)}
   \end{cases}~,\qquad\qquad \tau_-: \begin{cases}
      X_i\mapsto Y_iX_i q^{(\omega_i,\omega_i)}\\
      T_j\mapsto T_j\\
      Y^\lambda\mapsto Y^\lambda
   \end{cases}~.
\end{equation}
The element $\sigma=\t_+\t_-^{-1}\t_+=\t_-^{-1}\t_+\t_-^{-1}$ maps the generators as
\be\label{sigma}
\sigma:  \begin{cases}  X^\mu\mapsto Y^{-\mu^\vee}\\
      T_j\mapsto T_j\\
 Y^{\lambda^\vee} \mapsto T_{w_\circ}^{-1}X^{w_\circ(\lambda)}T_{w_\circ}
   \end{cases}~,\qquad\qquad
   \sigma^2:  \begin{cases}  X^\mu\mapsto T_{w_\circ}^{-1}X^{-w_\circ(\mu)}T_{w_\circ}\\
      T_j\mapsto T_j\\
 Y^\lambda\mapsto T_{w_\circ}^{-1}Y^{-w_\circ(\lambda)}T_{w_\circ}
   \end{cases}~,
\ee
where $w_\circ$ is the longest element of the Weyl group $W$.
Moreover, $\sigma^4$ acts as the conjugation by $T_{w_\circ}^2$, namely $\sigma^4(x ) =T_{w_\circ}^{-2}(x)T_{w_\circ}^2$ for any $ x\in \HH(W)$.

Since $T_w$ acts on a generator of $\SH(W)$ as the multiplication by $t_w$, the action of $\sigma^4$ is trivial on the spherical subalgebra $\SH(W)$ in general: $$\sigma^4\Big|_{\SH(W)}\cong \textrm{id}_{\SH(W)}~.$$
Therefore, the $B_3$ action factors through $\SL(2,\Z)$ on $\SH(W)$, and each element corresponds to the following matrix element on $\SH(W)$ \footnote{Although we follow the notation of \cite{Cherednik-book} for the transformations $\tau_\pm$ on the generators of DAHA here, we change matrix assignments \eqref{tau-pm2} to $\tau_\pm$ from \cite{Cherednik-book}.}:
\be\label{tau-pm2} \begin{pmatrix}1 & 0 \\ 1 & 1\end{pmatrix} \leftrightarrow \tau_{+}~, \quad\begin{pmatrix}1 & 1 \\ 0 & 1\end{pmatrix} \leftrightarrow \tau_{-}~, \quad \begin{pmatrix}0 & -1 \\ 1 & 0\end{pmatrix} \leftrightarrow \sigma~.\ee

\subsubsection{Polynomial representation of DAHA}\label{app:poly}
The most basic representation of DAHA, first studied by Cherednik, is called \emph{the polynomial representation}
\bea\label{poly-rep-full}
\pol:\HH(W) &\to \End(\C_{q,t}[X^\sfP]); \\
 T_i&\mapsto t_is_i+(t_i-t_i^{-1})\frac{s_i-1}{X^{\a_i}-1} \qquad \mathrm{for} \ i=0,\ldots, n\\
 \pi_r&\mapsto \pi_r \qquad \mathrm{for} \  \pi_r\in \Pi~.
\eea
Here we denote the group algebra of the weight lattice $\sfP$ by $\C_{q,t} [X^\sfP]$, which contains the group algebra $\C_t[X^{\dt\sfP} ]$ of the affine weight lattice $\dt\sfP  := \sfP\oplus \bZ \delta$ as a subalgebra by setting $X^{\lambda+r\delta} := q^rX^\lambda$. Hence, $X^\mu$ act on $\C_{q,t}[X^\sfP]$ by multiplication, and the action of the extended affine Weyl group $\dt W^e$ on $\C_{q,t}[X^\sfP]$, which appears as $s_i$ in \eqref{poly-rep-full}, is given by
 $$
 w(X^\mu) := X^{w(\mu)} = q^{-(v(\mu), \lambda)}X^{v(\mu)}~,
 $$
 where we write an element $w\in \dt W^e$ as $w=\mathsf{t}(\lambda)v$ with $\lambda\in \sfP^\vee$, $v\in W$. Note that  the polynomial representation is faithful.

 \subsubsection*{Non-symmetric Macdonald polynomials}

We define an involution $f\mapsto f^*$ for $f=\sum_\lambda f_\lambda(q,t) X^\lambda\in \C_{q,t}[X]$ as
$$
f^*=\sum_\lambda f_\lambda(q^{-1},t^{-1}) X^{-\lambda}~.
$$
We also define the weight function
$$
\Xi:=\prod_{\a\in\sfR_+} \frac{(X^\alpha;q^2)_\infty(q^2X^{-\alpha};q^2)_\infty}{(q^{2c_\a}X^\alpha;q^2)_\infty(q^{2c_\a+2}X^{-\alpha};q^2)_\infty}~.
$$
Note that $c_\a$ relates $q$ and $t=\{t_0,\dots, t_n\}$ via $q^{c_\a}=t_\a$ where
$c_\a=c_\b$ if $\a$ and $\b$ are in the same $W$-orbit, \textit{i.e.} there exists $w\in W$ such that $w(\a)=\b$. In particular, there is only one parameter $c$ for simply-laced ADE types.
Then, the product $\textbf(\ ,\ \textbf)$ on $\C_{q,t}[X]$ is defined as the \emph{constant term} (C.T.) of $fg^*\Xi$
$$ \textbf(f,g\textbf)=\textrm{C.T.}(fg^*\Xi)~,$$
which can be taken by the integral
$$
{\rm C.T.} \ h(X) := \oint_{|X_1|=1} \frac{dX_1}{X_1} \ldots \oint_{|X_{\rk(G)}|=1} \frac{dX_{\rk(G)}}{X_{\rk(G)}}~ h(X)~.
$$
Using the product $\textbf(\ ,\ \textbf)$, one can show the existence and uniqueness of \emph{non-symmetric Macdonald polynomials} $E_\lambda(X;q,t)$ as follows.

 For each $\lambda\in \sfP$, there is a unique polynomial $E_\lambda\in \C_{q,t}[X]$ such that
\begin{itemize}\setlength\itemsep{-1pt}
\item $E_\lambda = X^\lambda + \textrm{lower terms}~,$
 \item  $\textbf(E_\lambda,X^\mu\textbf)=0 \quad  \textrm{for} \ {}^\forall \mu<\lambda~.$
\end{itemize}
Under the polynomial representation,
 these are eigenfunctions of the $Y$-operators consequently as
\bea\label{non-sym-Macdonald}
\pol(f(Y))\cdot E_\lambda &= f(q^{-2\lambda-2\rho_c(\lambda)}) E_\lambda
\eea
where
$$
\rho_c(\lambda)=\frac12\sum_{\a\in \sfR_+} \eta((\lambda,\a) ) c_\a\a\qquad \text{with} \qquad \eta(x)=\begin{cases}1&\textrm{if} \ x>0\\ -1&\textrm{if} \ x\le0
\end{cases}~.
$$

 \subsubsection*{Symmetric Macdonald polynomials}

We define another involution $f\mapsto \overline f$ for $f=\sum_\lambda f_\lambda(q,t) X^\lambda\in \C_{q,t}[X]$ as
\be\label{conjugate}
\overline f=\sum_\lambda f_\lambda(q,t) X^{-\lambda}~.
\ee
We also define another product \cite[\S VI.9]{macdonald1998symmetric} on $\C_{q,t}[X]$ as
\be\label{Macdonald-product}\langle f,g\rangle=\frac{1}{|W|}\textrm{C.T.}(f \overline g\, \Upsilon)\qquad
\textrm{where}
\qquad \Upsilon= \prod_{\alpha\in \sfR} \frac{(X^\alpha; q^2)_\infty }{ (t_\a^2X^\alpha; q^2)_\infty}~. \ee
Since  $ \Upsilon=\overline \Upsilon$, it is symmetric $\langle f,g\rangle=\langle g,f\rangle$.
Furthermore, for a dominant weight $\lambda\in \sfP_+$, we define \emph{monomial symmetric functions}
 $$
 m_\lambda=\sum_{\mu\in W(\lambda)} X^\mu~.
 $$
Using these data, one can show the existence and uniqueness of \emph{symmetric Macdonald polynomials} $P_\lambda(X;q,t)$ as follows.

  For each dominant weight $\lambda\in \sfP_+$, there is a unique function $P_\lambda\in \C_{q,t}[X]^W$ such that
 \begin{itemize}\setlength\itemsep{-1pt}
 \item $  P_\lambda = m_\lambda + \textrm{lower terms}~,$
  \item $ \langle P_\lambda,m_\mu\rangle=0 \quad  \textrm{for} \ {}^\forall \mu \in \sfP_+$ such that  $\mu<\lambda~,$
  \end{itemize}
Under the product, the symmetric Macdonald polynomials are orthogonal. In the case of type $A_{N-1}$, the norm is explicitly given by
\be \label{Macdonald-norm}
 \langle P_\lambda,P_{\lambda'}\rangle=g_{\lambda}\delta_{\lambda\lambda'}~, \qquad g_{\lambda}(q,t)=\prod_{(i,j)\in\lambda}\frac{1-q^{2(\lambda_i-j+1)}t^{2(\lambda^T_j-i)}}{1-q^{2(\lambda_i-j)}t^{2(\lambda^T_j-i+1)}}\frac{1-t^{2(N-i+1)} q^{2(j-1)}}{1-t^{2(N-i)} q^{2j}}~,
\ee
where we express a Young diagram $\lambda=\lambda_1\ge\cdots \ge \lambda_{N-1}$.

Let $f$ be a symmetric polynomial $f\in \C_{q,t}[Y^{\sfP^\vee}]^W$. Then, under the polynomial representation,  they are eigenfunctions
\be\label{general-Dunkl}
\pol(f(Y_1,\dots,Y_n))\cdot P_\lambda=f(q^{-2\lambda-2\rho_c})P_\lambda~,
\ee
where  $\rho_c$ is the formal expression
$$\rho_c:=\frac12\sum_{\a\in \sfR_+} c_\a\a~.$$
Note that, for type ADE, it simplifies to $q^{-2\lambda-2\rho_c}=q^{-2\lambda}t^{-2\rho}$  where $\rho$ is the Weyl vector.
The non-symmetric and symmetric Macdonald polynomials are related by the idempotent \eqref{ideomp-gen}
$$
P_\lambda=\Big(\sum_{w\in W}(t_w)^2\Big)\,\mathbf{e}E_\lambda~.
$$

 \subsubsection{Symmetric bilinear form}\label{app:sym-bilinear}

There exists an anti-involution $\phi : \HH(W) \to \HH(W)$ fixing $q, t$ such that
$$
\phi:  \begin{cases}  X^\mu\mapsto Y^{-\mu^\vee}\\
      T_w\mapsto T_{w^{-1}}\\
 Y^{\lambda^\vee}\mapsto X^{-\lambda}
   \end{cases}$$
   It is easy to see $\phi^2=\id$.
If we restrict the action of $\phi$ to the spherical subalgebra $\SH(W)$, then it is the same as $\sigma$ in \eqref{sigma}
\be\label{phi-sigma}
\phi\Big|_{\SH(W)}\cong\sigma\Big|_{\SH(W)}~.
\ee

We define the evaluation map $\theta:\HH(W)\to \C_{q,t}$ as
\be\label{evaluation}
\theta(h):=\pol(h) \cdot 1\Big|_{X\mapsto q^{-2\rho_c}}
\ee
Then, one can define a symmetric $\C_{q,t}$-bilinear form $\textbf[ \ , \ \textbf]\,. : \HH(W)\times \HH(W)\to \C_{q,t}$ as
\be\label{sym-form}
\textbf[h,h'\textbf]:=\theta(\phi(h)h')~.
\ee
In particular, for $f,g\in \C_{q,t}[X]$, we have
\be\label{sym-bilinear}
\textbf[f,g\textbf]=\pol(f(Y^{-1}))\cdot g\Big|_{X\mapsto q^{-2\rho_c}}~.
\ee
For instance, when both are symmetric Macdonald polynomials, $f=P_\lambda(X)$, $g=P_\mu(X)$, \eqref{general-Dunkl} tells us
\be \label{Macdonald-pairing2}
\textbf[P_\lambda(X),P_\mu(X)\textbf]=P_\lambda(q^{2\mu+2\rho_c})P_\mu(q^{-2\rho_c})~,
\ee
which is the refined Chern-Simons invariants of the Hopf link \cite{Aganagic:2011sg}.
In fact, using the PBW theorem, one can bring $\phi(h)h'$ into the form  \eqref{PBW}
$$
\phi(h)h'=\sum_{\lambda\in\sfP^\vee,w\in W,\mu\in\sfP}b_{\mu,w,\lambda}(\phi(h)h') ~  X^\mu T_w Y^\lambda~.
$$
 Then, the bilinear from $\textbf[ \ , \ \textbf]$ is indeed expressed by
 $$
\textbf[h,h'\textbf]=\sum_{\lambda\in\sfP^\vee,w\in W,\mu\in\sfP}b_{\mu,w,\lambda}(\phi(h)h') ~  q^{-2(\mu,\rho_c)}\, t_w \,q^{2(\lambda,\rho_c)}~.
 $$
Since $\phi$ and the generators $\tau_+,\tau_-\in \mathrm{SL}(2,\Z)$ obey the relation $\phi\tau_-\phi=\tau_+$, we have
\be\label{tau-bracket}
 \textbf[\tau_+(h),h'\textbf]=\textbf [h,\tau_-(h')\textbf]~.
\ee

 \subsubsection{Degenerations} \label{app:degeneration}

 There is a two-step degeneration of DAHA:
\bea\nonumber\begin{array}{ccccc}
\textrm{DAHA}&&  \textrm{trigonometric}&&\textrm{rational}\\
\HH(W)  &\rightsquigarrow &\HH^{\textrm{tri}} (W) &\rightsquigarrow &
\HH^{\textrm{rat}} (W)
\end{array}\eea

\subsubsection*{Trigonometric degeneration}

 Let $c_i$, $i = 0,\dots , n$ be formal variables such that $c_i = c_j$ whenever $s_i$ and $s_j$ are conjugate. We will also take commuting variables $\hat x_1, \dots , \hat x_n$ and, for $\mu\in \sfP$, we will denote
$$\hat x_\mu :=\sum( \mu,\a_j^\vee) \hat x_j~.$$
The extended affine Weyl group $\dt W^e$ acts on the space $\bC[c,\hbar][\hat x_1,\dots,\hat x_n]$ where  $s_1, \dots , s_n$ acts in the standard way $s_i(\hat x_\mu) = \hat x_{s_i(\mu)}$ for $i = 1,\dots,n$.  On the other hand, the affine translation involves $\hbar$ as $\mathsf{t}(\lambda)(\hat x_\mu) = \hat x_\mu -(\mu,\lambda)\hbar$ for $\lambda\in \sfP^\vee$.

 The \emph{trigonometric Cherednik algebra} (a.k.a. \emph{graded Cherednik algebra} \cite{oblomkov2016geometric} or \emph{degenerate double affine Hecke algebra} \cite[\S1.6]{Cherednik-book} also \cite{Schiffmann:2012aa}), $\HH^{\textrm{tri}}(W)$ is the $\bC[c, \hbar]$-algebra generated by the extended affine Weyl group $\dt W^e$ and pairwise commuting variables $\hat x_1, \dots , \hat x_n$, subject to the following relations:
 \bea
 s_i\hat x_\mu-\hat x_{s_i(\mu)}s_i &= -c_i(\mu,\a_i^\vee), \qquad \textrm{for} \ i=1,\ldots, n \cr
 s_0\hat x_\mu -s_0(\hat x_{\mu})s_0 &=c_0(\mu,\theta^\vee), \cr
 \pi_r\hat x_\mu &= \hat x_{\pi_r(\mu)}\pi_r.
 \eea
This algebra can be obtained by taking the leading relation in the $\beta$ expansion from DAHA $\HH(W)$ as in Appendix~\ref{app:Trigonometric-A1}.

Since the variables $Y, \hat x$ are not symmetric, the algebra $\HH^{\textrm{tri}}(W)$ admits two polynomial representations.
One is called the \emph{differential polynomial representation} on the group algebra $\bC[c,\hbar][Y^{\sfP^\vee}]$. The generators  $Y$ and $w\in W$ act naturally, and $\hat x_\mu$ acts via the
\emph{trigonometric differential Dunkl operator}
$$ D^{\textrm{tri}}_\mu:=\hbar \partial_\mu+\sum_{\a\in \sfR_+} \frac{c_\a(\mu,\a^\vee)}{1-Y^{-\a^\vee}} (\textrm{id}-s_\a)-(\mu,\rho_c^\vee)~.$$
Here the derivative $ \partial_\mu$ acts on $\bC[c,\hbar][Y]$ of the weight lattice $\sfP^\vee$as
$$ \partial_\mu(Y^\lambda) = (\mu,\lambda)Y^\lambda ~,$$
and $\rho_c^\vee$ is the formal expression
$$\rho_c^\vee:=\frac12\sum_{\a\in \sfR_+} c_\a\a^\vee~.$$

The other is called the \emph{difference-rational polynomial representation} on the group algebra $\bC[c, \hbar ][\hat x]$. The generator $\hat x$ acts by multiplication.
By defining the \emph{Demazure-Lusztig operators}
$$S_i:=s_i+\frac{c_i}{\hat x_{\a_i}}(s_i-\textrm{id})~$$
for $i=0,1,\dots,n$, the action of the extended affine Weyl group $\dt W^e$ is given by $S_w:=\pi_rS_{i_1}\cdots S_{i_j}$ for $\dt W^e\ni w=\pi_rs_{i_1}\cdots s_{i_j}$ a reduced expression.

\subsubsection*{Rational degeneration}
The \emph{rational Cherednik algebra} $\HH^{\textrm{rat}}(W)$ \cite{berest2003finite} is the $\bC[c,\hbar]$-algebra generated by $\bC[\sfP]$, $\bC[\sfP^\vee]$ and $W$ subject to the relations
$$w\hat x=w(\hat x)w~,\quad w\hat y=w(\hat y)w~,\quad  [\hat y,\hat x]=\hbar(\hat x,\hat y) - \sum_{\a\in \sfR_+} c_\a (\a,\hat y) (\hat x,\a^\vee) s_\a$$
 for $w\in W$, $\hat x\in\bC[\sfP]$ and $\hat y\in \bC[\sfP^\vee]$. This algebra can be obtained from trigonometric Cherednik algebra $\HH^{\textrm{tri}}(W)$ by taking the leading relation in the $\beta$ expansion as in Appendix~\ref{app:Rational-A1}.

The rational Cherednik algebra $\HH^{\textrm{rat}}(W)$ admits a polynomial representation on the space $\bC[c,\hbar][\sfP]$ where $\hat x$ and $w$ act in the standard way, and $\hat y$ is assigned to the \emph{rational Dunkl operator}
\be\label{rational-Dunkl} D^{\textrm{rat}}_{\hat y} :=\hbar \partial_{\hat y}- \sum_{\a\in \sfR_+} c_\a\frac{(\a,y)}{\a}(\textrm{id}-s_\a )~,\ee
 where $\partial_{\hat y}(\hat x) = (\hat x,\hat y)$, $\hat x\in\sfP$.
 \bigskip

 \subsubsection*{Spherical subalgebras}

We can also define spherical subalgebras of both trigonometric and rational Cherednik algebras by
$$\SH^{\textrm{tri}}(W):=\mathbf{e} \HH^{\textrm{tri}}(W)\mathbf{e}~, \qquad \SH^{\textrm{rat}}(W):=\mathbf{e} \HH^{\textrm{rat}}(W)\mathbf{e}~,$$
where  $\mathbf{e} $ now is the trivial idempotent of the group $W$
 $$\mathbf{e} = \frac{1}{|W|} \sum_{w\in W}w~.$$

At $c=0=\hbar$, the spherical subalgebras of trigonometric and rational Cherednik algebras become the commutative rings
\be\label{trig-rat-SH}
\SH^{\textrm{tri}}(W)\Big|_{c=0=\hbar}=\bC[\frakt_\bC\times T_\bC]^W~,\qquad \SH^{\textrm{rat}}(W)\Big|_{c=0=\hbar}=\bC[\frakt_\bC\times \frakt_\bC]^W~.
\ee
Therefore, they are the coordinate rings of the \HK manifolds $$\frac{\frakt_\bC\times T_\bC}{W}~, \quad\textrm{and}\quad \frac{\frakt_\bC\times \frakt_\bC}{W}~,$$ respectively \cite{bezrukavnikov2005equivariant} where their complex structures inherit from their elliptic origin \eqref{Mflat-T2}.

\subsection{\texorpdfstring{DAHA of type $A_1$}{DAHA of type A1}}\label{app:DAHA-A1}
Now we will study DAHA $\HH:=\HH(\Z_2)$ of type $A_1$ more in detail. In the case of type $A_1$, the relations $1,\dots,5$ in Appendix~\ref{app:DAHA-gen} amount to
$$\HH=\C_{q,t}\bigl[ T^{\pm 1}, X^{\pm 1}, \pi^{\pm 1} \bigr] \bigg/
\left\{ \begin{array}{cc}TXT=X^{-1}~,
& \pi X\pi^{-1}=qX^{-1}~,\\
\pi^2=1~, & (T-t)(T+t^{-1})=0\end{array}\right\} .$$
Here $T$ and $X$ generate affine Hecke algebra $\dt H^X$ in the Bernstein presentation whereas $T$ and $\pi$ generate the other  affine Hecke algebra $\dt H^Y$ in the Coxeter presentation.  One can write $\dt H^Y$  in the Bernstein presentation by introducing $Y= \pi T$, yielding another representation of  $\HH$
\begin{equation}\label{DAHA-A1-2}
\HH=\C_{q,t}\bigl[ T^{\pm 1}, X^{\pm 1},Y^{\pm 1} \bigr] \bigg/
\left\{ \begin{array}{cc}TXT=X^{-1}~,
 & Y^{-1}X^{-1}YXT^2=q^{-1}~,\\
TY^{-1}T=Y~,& (T-t)(T+t^{-1})=0\end{array}\right\} .
\end{equation}
In this representation, the topological interpretation using the punctured torus becomes apparent as we have seen in Appendix~\ref{sec:DAHA-main}.

\subsubsection{Polynomial representation}\label{app:DAHA-poly}
 The \emph{polynomial representation} of DAHA of type $A_1$
\be\label{poly-rep-nonsym}
\pol:\HH \rightarrow \textrm{End} (\C_{q,t}[X^\pm])
\ee
is explicitly given by
$$T\mapsto ts+\frac{t-t^{-1}}{X^2-1}(s-1)~,\quad \pi \mapsto s\varpi~,\quad
X\mapsto X~,\quad Y\mapsto s\varpi T~,$$
where $s$ is the reflection $s(X)=X^{-1}$, and $\varpi$ is the shift operator $\varpi(X)=qX$.

To study the polynomial representation, we introduce a basis of $\C_{q,t}[X^\pm]$ spanned by \emph{non-symmetric Macdonald polynomials} of type $A_1$
$$
E_{j}(X;q,t)=\begin{cases}  X^{j}\, {}_2\phi_1(q^{-2j+2}, t^2;q^{-2j+2} t^{-2};q^2  ;q^2t^{-2}X^{-2})  & \textrm{if}    \ j>0 \\  X^{-j}\, {}_2\phi_1(q^{2j}, t^2;q^{2j} t^{-2};q^2  ;t^{-2}X^{2}) & \textrm{if}    \ j\le 0 \end{cases}~,
$$
where ${}_2\phi_1(a_1,a_2;b_1;q;X)$ is the basic hypergeometric series.
As we have seen in \eqref{non-sym-Macdonald}, they are eigenfunctions of the $Y$-operator under the polynomial representation
\be\label{Y-action}
\pol(Y)E_j=\begin{cases}q^{-j} t^{-1} E_j&\textrm{if}\ j>0  \\ q^j t E_j&\textrm{if}\ j\le0 \end{cases}~.
\ee
If we introduce a total ordering on the monomials
$$
1\prec X\prec X^{-1} \prec X^2 \prec X^{-2} \prec X^{3}\prec X^{-3}\prec \cdots~,
$$
then they are subject to the condition
$$
E_j(X;q,t)=X^j+ \textrm{lower order terms}~.
$$
The coefficients of the polynomials $E_j$ are rational functions of $q$ and $t$ in general.

For generic $q$ and $t$, the non-symmetric Macdonald polynomials $E_j$ are recursively determined by using the \emph{intertwining} (raising) operators \cite[\S2.6.2]{Cherednik-book}.
Explicitly, they are defined as
$$
\sfA:=XT~,\qquad \sfB:=t\Bigl(T+\frac{t-t^{-1}}{Y^{-2}-1}\Bigr)~,
$$
and they act on non-symmetric Macdonald polynomials as
\bea\label{AB-action}
\pol(\sfA)\cdot E_j&=t^{-\textrm{sign}(j)}  E_{1-j}~,\\
\pol(\sfB)\cdot E_j&=\begin{cases}  E_{-j}& \textrm{for} \ j>0 \\  \frac{t (1-q^{-2j}) (1-q^{-2j} t^4)}{(1-q^{-2j} t^2)^2} E_{-j} & \textrm{for} \ j\le 0 \end{cases}~.
\eea
In fact, these intertwining operators $\sfA,\sfB$ provide an inductive procedure for calculating
the non-symmetric polynomials
\be\label{total-ordering}1=E_0\stackrel{\sfA}{\longleftrightarrow}E_1\stackrel{\sfB}
{\longleftrightarrow}E_{-1}\stackrel{\sfA}{\longleftrightarrow}
E_2\stackrel{\sfB}{\longleftrightarrow}E_{-2}\stackrel{\sfA}
{\longleftrightarrow}\cdots.
\ee
Because of $t\in \C^\times$, the intertwining operator  $\sfA$ is always invertible. Hence, in order to classify finite-dimensional representations, one has to understand when $\sfB$ is not invertible.
Hence, as we have seen in \eqref{lowering} and \eqref{factor-vanish}, finite-dimensional modules arise when the factor
\be\label{factor-vanish2}
 \frac{t (1-q^{-2j}) (1-q^{-j} t^2)(1+q^{-j} t^2)}{(1-q^{-2j} t^2)^2}
\ee
vanishes where $j\le0$. Then, it is easy to see that there is a one-to-one correspondence between finite-dimensional modules of $\HH$ and $\SH$ because the shortening conditions for \eqref{factor-vanish} and \eqref{factor-vanish2} are equivalent. In other words, for finite-dimensional representations, one can easily verify the consequence of the Morita equivalence \eqref{Morita-equiv} explicitly by the action of the intertwining operators. Therefore, the classification of the finite-dimensional representations of $\HH$ \cite[\S2.8]{Cherednik-book} is in one-to-one correspondence with that of $\SH$ in \S\ref{sec:finite-rep}.

The symmetric Macdonald polynomials $P_j$ \eqref{sym-Macdonald-A1} for $j\ge0$ can be obtained by symmetrizing $E_{\pm j}$ with the idempotent $\mathbf{e}=(T+t^{-1})/(t+t^{-1})$
\bea
P_{j}(X;q,t)=
\begin{cases} (1+t^2)\,\mathbf{e} E_j (X;q,t)~, \\ \frac{(1+t^2)(1-q^{2j}t^2)}{(1-q^{2j}t^4)}\, \mathbf{e} E_{-j} (X;q,t)~, \end{cases}
\eea
or by the combination of $E_j$ and $E_{-j}$
\be\label{PEE}
P_{j}(X;q,t)=E_{-j}(X;q,t)+\frac{t^2(1-q^{2j})}{1-q^{2j}t^2}E_{j}(X;q,t)~.
\ee
On the other hand, non-symmetric Macdonald polynomials $E_j$ can also be obtained from symmetric ones $P_j$ as follows. If one expresses
$$
P_j=\sum_{i=0}^j c_{j,i}(q,t) (X^i+X^{-i})~,
$$
then
$$
E_j=\sum_{i=-j}^j d_{j,i}(q,t)X^i   \qquad \textrm{where} \qquad d_{j,i}(q,t):=\frac{1-q^{i+j}}{q^{i-j}-q^{i+j}}c_{j,i}(q,t)~.
$$
Furthermore, one can easily determine $E_{-j}$ for $j>0$ from $P_j$ by \eqref{PEE}. Hence, the polynomial representations for DAHA $\HH$ \eqref{poly-rep-nonsym} and spherical DAHA $\SH$ \eqref{poly-rep-sym} are equivalent, which follows from the Morita equivalence \eqref{Morita-equiv}.

To conclude this subsection, let us analyze how Macdonald polynomials are transformed under $\mathrm{PSL}(2,\Z)$ transformation. In fact, under the $\tau_-$ transformation \eqref{MCGloop}, non-symmetric Macdonald polynomials are subject to
\be\label{tau-e}
\tau_{- } ( E_{j} ) = q^{-\frac{j^{2 } }2}t^{- | j |} E_{j }~,
\ee
for  $j\in \bZ$. This can be proved by the induction in the sequence \eqref{total-ordering}. For $j=0$, it is easy to see that $\tau_{- } ( 1)=1$. Suppose that  $q,t\in \C^\times$ are generic so that all non-symmetric Macdonald polynomials are well-defined, and \eqref{tau-e} holds for $E_j$. Since it follows from \eqref{MCGloop} that the operator $\sfB$ commutes with $\tau_-$, $E_{-j}$ transforms in the same way as $E_j$ under  $\tau_-$. For $j<0$, we have
\bea
\tau_-(E_{1-j})=&t^{-1} q^{\frac12} \pol(YXT) \cdot\tau_-(E_{j}) \\[1.5ex]
\stackrel{\textrm{induction}}{=}&t^{-1} q^{\frac12} q^{- \frac{j^{2 }} 2}t^{j }  \pol(YXT) \cdot E_{j}\\[1.5ex]
\stackrel{\eqref{AB-action}}{=}&q^{\frac12} q^{-\frac{j^{2 }} 2}t^{j }  \pol(Y) \cdot E_{1-j}\\[1.5ex]
\stackrel{\eqref{Y-action}}{=}&q^{- \frac{(1-j)^{2 } }2}t^{-1+ j }  E_{1-j}~.
\eea
This completes the proof.

Furthermore, one can show from \eqref{MCG-quantum} that the raising \eqref{raising} and lowering \eqref{lowering} operators behave under the $\tau_-$ transformation as
$$ \tau_-(\sfR_j)=q^{-j-\frac12}t^{-1}\sfR_j~, \qquad  \tau_-(\sfL_j)=q^{j-\frac12}t\sfL_j~.$$
A similar induction shows that the symmetric Macdonald polynomials are transformed  as
\be\label{tau-p}
\tau_{- } ( P_{j} ) = q^{-\frac{j^{2 } }2}t^{-  j } P_{j }~,
\ee
which is consistent with  \eqref{PEE}.
This formula indeed gives the modular $T$-transformation for representations of $\SH$ in \S\ref{sec:2d}.

\subsubsection{Functional representation}\label{app:func-rep}
Here we shall give an explicit definition of the functional representation of DAHA of rank one \cite[\S2.7.1]{Cherednik-book}. Restricted to the spherical DAHA $\SH$, it can be understood as the $S$-transformation $\sigma(\repP)$ of the polynomial representation $\repP$ of $\SH$ as in \S\ref{sec:poly-rep} and \S\ref{sec:modularity}.

The assignment
$$
f(X)~\mapsto~\textbf[f(X),E_n(X)\textbf]/E_n(t^{-1})=\begin{cases} f(q^{n}t )& \textrm{for} \ n>0\\f(q^{n}t^{-1} )& \textrm{for} \  n\le0 \end{cases}
$$
provides the function defined only over the set $\mathsf{Fun}$ of $X=q^{n}t^{\textrm{sgn}(n-\frac12)}$ for $n\in\bZ$. Thus,  one can obtain the discretization of the polynomial representation \eqref{poly-rep-nonsym}  on this set of functions defined over $X\in \mathsf{Fun}$ for $h\in \HH$
\be\nonumber
\mathrm{disc}:\textbf[f(X),E_n(X)\textbf]/E_n(t^{-1}) ~\mapsto~ \textbf[\pol(h)\cdot f(X)  ,E_n(X)\textbf]/E_n(t^{-1}) ~.
\ee
More explicitly, one can read off the action as
\bea\nonumber
(\mathrm{disc}(X)\cdot f)(q^{n}t^{\textrm{sgn}(n-\frac12)})&=q^{n}t^{\textrm{sgn}(n-\frac12)}f(q^{n}t^{\textrm{sgn}(n-\frac12)})~,\cr
(\mathrm{disc}(\pi)\cdot f)(q^{n}t^{\textrm{sgn}(n-\frac12)})&=f(q^{1-n}t^{\textrm{sgn}(\frac12-n)})~,
\eea
and
\bea\nonumber
&(\mathrm{disc}(T)\cdot f)(q^{n}t^{\textrm{sgn}(n-\frac12)})\cr&=\frac{tq^{2n}t^{2\textrm{sgn}(n-\frac12)} -t^{-1}}{q^{2n}t^{2\textrm{sgn}(n-\frac12)}-1}f(q^{-n}t^{\textrm{sgn}(\frac12-n)})-\frac{t-t^{-1}}{q^{2n}t^{2\textrm{sgn}(n-\frac12)}-1}f(q^{n}t^{\textrm{sgn}(n-\frac12)})~.
\eea
Though $q^{-n}t^{\textrm{sgn}(\frac12-n)}$ is not in the set $\mathsf{Fun}$ when $n=0$, $tq^{2n}t^{2\textrm{sgn}(n-\frac12)} -t^{-1}=0$ in this case. Therefore, it is well-defined.

The functional representation is defined by the symmetric bilinear form \eqref{sym-form}. If we restrict the functional representation to the spherical subalgebra $\SH$, it is the modular $S$-transformation $\sigma$ of the polynomial representation due to \eqref{phi-sigma}.

\subsubsection{\texorpdfstring{Trigonometric Cherednik algebra of type $A_1$}{Trigonometric Cherednik algebra of type A1}}\label{app:Trigonometric-A1}

To find the trigonometric degeneration, we set $X =e^{\beta \hat x}, q=e^{\beta \hbar}, t=e^{\beta \hbar c}$ and $T =s e^{\beta cs}$, where $s\in \Z_2$ is the reflection. Then,  taking the leading order in $\beta$ from \eqref{DAHA-A1-2}, the generators $s,\hat x,Y$ satisfy the following relations:
\be \label{tri-SH-A1}s^2 =1~,\quad sY^{-1}s =Y~,\quad s\hat x+\hat xs=-2c~,\quad Y^{-1}\hat xY-\hat x=\hbar+2cs~.\ee
The algebra generated by $s,\hat x,Y$ with these relations is called the trigonometric Cherednik algebra $\HH^{\textrm{tri}}_{\hbar,c}$ of type $A_1$. It enjoys a $\Z_2$-symmetry
\be \label{tri-xi2}
\xi_2:s \mapsto s~,\ \hat x \mapsto \hat x ~, \  Y\mapsto -Y~, \ \hbar\mapsto \hbar~, \ c\mapsto c~,
\ee
which is reminiscent of \eqref{automorphisms1}.
The idempotent is defined by
\be\label{idem-deg}
\mathbf{e}:=\frac{1+s}{2}~,
\ee
and the spherical subalgebra is defined as $\SH^{\textrm{tri}}_{\hbar,c}:=\mathbf{e}\HH^{\textrm{tri}}_{\hbar,c}\mathbf{e}$, which is generated by
\be v:=\mathbf{e}\hat x^2\mathbf{e}~, \quad \textrm{and} \quad y=\mathbf{e}Y\mathbf{e}\ee

\subsubsection{\texorpdfstring{Rational Cherednik algebra of type $A_1$}{Rational Cherednik algebra of type A1}}\label{app:Rational-A1}

In the trigonometric Cherednik algebra $\HH^{\textrm{tri}}_{\hbar,c}$, we write $Y = e^{\beta \hat y}$ and we rescale $\hat x \mapsto \hat x/\beta$. Then, the generators $s, \hat x, \hat y$ satisfy the following relations
$$s^2 =1~,\quad s\hat x=-\hat xs~,\quad s  \hat y=- \hat y s~,\quad [\hat x,\hat y]=\hbar+2cs~,$$
in the leading order of $\beta$.
The algebra generated by $s, \hat x, \hat y$ with these relations is the rational Cherednik algebra
$\HH^{\textrm{rat}}_{\hbar,c}$.

Let us study the spherical subalgebra $\SH^{\textrm{rat}}_{\hbar,c}:=\mathbf{e}\HH^{\textrm{rat}}_{\hbar,c}\mathbf{e}$ where the idempotent $\mathbf{e}$ is the same as  \eqref{idem-deg}.
The spherical subalgebra $\SH^{\textrm{rat}}_{\hbar,c}$ is a $\bC[c,\hbar]$-algebra generated by
$$E = \frac12\mathbf{e}\hat x^2\mathbf{e}~, \qquad F = -\frac12\mathbf{e}\hat y^2\mathbf{e}~, \qquad   H= -\mathbf{e}\hat x\hat y \mathbf{e}~.$$
with relations
\bea\label{sl2}
[{E},{F}]=\hbar\Bigl({H}+\frac\hbar2+c\Bigr)~,&\qquad [{H},{E}]=2\hbar {E} ~,\qquad [{H},{F}]=-2\hbar {F} ~,\cr
&4{E}{F}={H}(-{H}+\hbar -2c)~.
\eea
Thus, $E,F,H$ can be interpreted as the $\fraksl(2)$-triple.
Writing the Casimir element of $\fraksl(2)$  as
$$\Omega = EF+FE +\frac12 \Bigl(H+\frac\hbar2+c\Bigr)^2~,$$
it is easy to verify that there is an isomorphism \cite[Proposition 8.2]{etingof2002symplectic}
\be\label{SH-rat-A1}
U(\fraksl(2))/\langle \Omega-\tfrac12(c+\tfrac12\hbar)(c-\tfrac32\hbar) \rangle \cong \SH_{\hbar,c}^{\textrm{rat}}~.
\ee

At $c=0$, the rational Cherednik algebra of type $A_1$ is isomorphic to
$$
\HH^{\textrm{rat}}_{\hbar,c=1}=\OO^q(\bC^2)\rtimes \bC[\bZ_2]~,
$$
where $\hat x, \hat y$ are the generators of $\OO^q(\bC^2)$ and $s$ is the generator of $\bZ_2$. If we consider the open set $\bC^\times \times \bC\subset \bC^2$,
the deformation quantization $\OO^q(\bC^\times \times \bC)$ of its coordinate ring is generated by $\hat x^{\pm}, \hat y$. Then, we can define a subalgebra generated by
\bea\label{sl2-2}
E:=\frac{\hat x^2}{2}~, \qquad F:= -\frac{\hat y^2}2 - c \hat x^{-1} \hat y ~,\qquad
H :=  - \hat x\hat y ~,
\eea
which is isomorphic to the spherical subalgebra $\SH^{\textrm{rat}}_{\hbar,c}$ with generic $c$ given in \eqref{SH-rat-A1} with the relations \eqref{sl2}. This construction indeed provides the polynomial representation \eqref{rational-Dunkl} of the spherical subalgebra of $\SH^{\textrm{rat}}_{\hbar,c}$ acting on $\bC[x^{\pm2}]$ by substituting $-\hbar\partial_x$ for $\hat y$.

\section{Quantum torus algebra}
\label{app:qt}

In this appendix, we study representations of quantum torus algebras and their symmetrization. We aim at understanding representations of the simplest quantum torus algebra in terms of 2d $A$-model on the affine variety $\C^\times\times \C^\times$. Therefore, we first review representation theory of the quantum torus algebra, and we subsequently find $A$-branes for two important families of representations of the quantum torus algebra: \emph{cyclic representations} and \emph{polynomial representations}.  The quantum torus algebra and DAHA are closely related, and the relationship becomes clearer when we symmetrize the algebra and representations by the distinguished outer automorphism $\Z_2$. Although representations of the quantum torus algebra and symmetrized quantum torus are rather simple, they play a helpful guide for analogous analysis for DAHA in \S\ref{sec:2d}.

\subsection{Representations of quantum torus algebra}\label{app:rep-QT}

It is well-known that the so-called quantum torus algebra is given by
\begin{equation}
\QT = \C[ q^\pm ]\langle X^\pm, Y^\pm \rangle / ( Y^{-1}X^{-1}YX = q^{-1})~.
\label{eq:Rdef}
\end{equation}
Here the generator $q$ is a formal parameter that commutes with both $X$ and~$Y$. This algebra occurs throughout physics, and it is indeed a close cousin of the most basic ingredient in the whole theory of quantum mechanics. If we allowed arbitrary (rather than integral) powers of $X$ and $Y$, and set $q=\exp(2\pi i \hbar)$, it would be the group algebra of the exponentiated form of the canonical commutation relations.
Here, since only integral powers are allowed, we are dealing with a discrete-valued analogue of canonically conjugate variables.

In fact, $\QT$ is the group algebra of a specific Heisenberg group, associated to a discrete analogue of a symplectic pairing.
As is well-known, the linear representations of a discrete group correspond precisely to modules over its group algebra, so the presence of the scalars is irrelevant: studying $\QT$-modules is the same as studying representations of this Heisenberg group.

To construct it, we introduce Heisenberg groups in somewhat greater generality. Let $S$ be any locally compact topological ring with characteristic not equal to two, and $\lat{V}$ a free, finitely generated $S$-module (so isomorphic to~$S^n$). A symplectic pairing on~$\lat{V}$ is defined to be a function
\begin{equation}
\omega: \lat{V} \times \lat{V} \rightarrow S~,
\end{equation}
which is skew-symmetric and $S$-bilinear. (Note that non-degeneracy condition on~$\omega$ is not necessary here so that we use a pairing rather than a symplectic form. However, in all the examples we consider below, $\omega$ is a non-degenerate symplectic form.)
Then, the following is true:

\begin{theorem}[{\cite[Theorem~7.1]{Kleppner}}]
The group $H^2(\lat{V},S)$, which classifies group extensions of the additive group~$\lat{V}$, is isomorphic to the group of symplectic pairings on~$\lat{V}$.
\end{theorem}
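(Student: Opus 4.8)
The plan is to construct explicit maps in both directions between $H^2(\lat{V}, S)$ (group cohomology with trivial coefficients) and the group $\mathrm{Symp}(\lat{V}, S)$ of symplectic (i.e.\ alternating $S$-bilinear) pairings on $\lat{V}$, and to check they are mutually inverse group homomorphisms. First I would recall the standard cocycle description: $H^2(\lat{V}, S)$ is computed from normalized $2$-cocycles $f: \lat{V} \times \lat{V} \to S$ satisfying the cocycle identity $f(w_1, w_2) + f(w_1 + w_2, w_3) = f(w_1, w_2 + w_3) + f(w_2, w_3)$, modulo coboundaries $f(w_1, w_2) = g(w_1) + g(w_2) - g(w_1 + w_2)$. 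The map to pairings sends the class of $f$ to its \emph{commutator pairing} (a.k.a.\ alternation) $\omega_f(w_1, w_2) := f(w_1, w_2) - f(w_2, w_1)$. One checks that $\omega_f$ is alternating by construction, that it is biadditive using the cocycle identity (expanding $f(w_1 + w_1', w_2) - f(w_2, w_1 + w_1')$ and applying the identity to collapse terms), and that $S$-linearity in each slot follows from biadditivity once one knows it is alternating and... actually here is the first place to be careful: over a general ring, biadditivity plus alternation does \emph{not} automatically give $S$-bilinearity, so I expect the correct statement uses that the cocycle identity together with normalization forces $S$-bilinearity, or the theorem implicitly restricts to those cocycles; I would follow Kleppner's argument here rather than reconstruct it.

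The reverse direction is the substantive construction: given a symplectic pairing $\omega$, I would produce a canonical cocycle representing a class that maps to $\omega$. Since $\lat{V} \cong S^n$ is free, choose a basis $e_1, \dots, e_n$ and define $f_\omega(w, w') := \sum_{i < j} \omega(e_i, e_j)\, a_i b_j$ where $w = \sum a_i e_i$, $w' = \sum b_j e_j$ — i.e.\ the ``upper-triangular half'' of $\omega$. A direct computation verifies $f_\omega$ is a $2$-cocycle (this is routine bilinear bookkeeping: the cocycle identity reduces to biadditivity of $\omega$ in each argument) and that $f_\omega(w, w') - f_\omega(w', w) = \omega(w, w')$ using alternation of $\omega$. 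This shows the alternation map $H^2(\lat{V}, S) \to \mathrm{Symp}(\lat{V}, S)$ is surjective. For injectivity, I would show that if $\omega_f = 0$, i.e.\ $f$ is symmetric, then $f$ is a coboundary: a symmetric normalized $2$-cocycle on a free $S$-module is of the form $f(w, w') = g(w) + g(w') - g(w + w')$ — here one uses symmetry to run the standard argument that symmetric cocycles correspond to quadratic-type maps and split off, exploiting freeness of $\lat{V}$ and $\mathrm{char}\, S \neq 2$ to ``divide by $2$'' where needed (this is precisely where the characteristic hypothesis enters).

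Finally I would check the group-homomorphism property: the Baer sum of extensions corresponds to addition of cocycles, which corresponds to addition of commutator pairings, so alternation is additive; and both sides are abelian groups under these operations. I expect the \textbf{main obstacle} to be the two places flagged above — (i) establishing that the relevant cocycles genuinely yield $S$-\emph{bilinear} (not merely biadditive) pairings, which may require a more careful formulation of which $H^2$ one takes or an appeal to the module structure through a choice of basis, and (ii) the ``divide by $2$'' step showing symmetric cocycles are coboundaries over a general ring of characteristic $\neq 2$, where one cannot literally invert $2$ but must argue that $f(w,w) = g(w) + g(w) - g(2w)$ can be solved, again using freeness. Both are handled cleanly in Kleppner's paper, and I would cite \cite{Kleppner} for these technical points rather than reproduce them, since the conceptual content is entirely the alternation$\leftrightarrow$pairing correspondence sketched above.
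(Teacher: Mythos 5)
Your approach matches the paper's own parenthetical sketch: both sides hinge on the antisymmetrization map $f \mapsto f(\cdot,\cdot) - f(\cdot,\cdot)^{T}$ from cocycles to alternating pairings being well defined on cohomology and invertible, with the reverse direction supplied by exhibiting, for each pairing $\omega$, a bilinear cocycle whose alternation recovers $\omega$ (the paper says ``any bilinear function defines a cochain''; your upper-triangular $f_\omega$ is one explicit such choice), and with injectivity coming from the fact that a cocycle with vanishing alternation is symmetric and hence—using freeness of $\lat{V}$—a coboundary. The two technical obstacles you flag are real and are exactly what the paper's remark ``the antisymmetrization of any cochain is necessarily bilinear'' quietly assumes: the passage from biadditive to $S$-bilinear needs the topological structure (Kleppner works with Borel/continuous cochains on a locally compact ring, where biadditive plus measurable forces bilinear), and the characteristic-$\neq 2$ hypothesis enters to identify skew-symmetric with alternating, so your deferral to \cite{Kleppner} on these points is appropriate and in the same spirit as the paper.
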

(It is a pleasant exercise to check that any bilinear function defines an inhomogeneous group cochain. To obtain the full result, one shows that the antisymmetrization of any cochain is necessarily bilinear, and actually defines a set of unique representatives for the cohomology.)

There is therefore a unique central extension of groups defined by the sequence
\begin{equation}
0\rightarrow S \rightarrow \Heis(\lat{V},\omega)\rightarrow \lat{V} \rightarrow 0~,
\end{equation}
which is called the Heisenberg group associated to the data $(\lat{V},\omega)$. In our example, we take $S$ to be~$\Z$, $\lat{V}$ to be~$\Z^2$, and the pairing to be defined by sending a fixed ordered basis of~$\lat{V}$ to $+1$. It is then easy to see that $\QT$ is the group algebra of~$\Heis(\Z^2,\omega)$, and that we can similarly define the quantum torus algebra
\deq{
\QT(\lat{V},\omega) = \C[\Heis(\lat{V},\omega)]
}
associated to a general finitely generated free $\Z$-module with symplectic pairing. In the sequel, \QT{} without decoration will always refer to the standard module $\Z^2$ with its standard pairing.

Our algebra $\QT$ has an interesting collection of outer automorphisms: in fact,
\begin{equation}\label{Out-QT}
\Out(\QT)  = \Sp(2,\Z) = \SL(2,\Z).
\end{equation}
The inner automorphisms just consist of shifting $X$ and~$Y$ by powers of~$q$, and so make up the group~$\Z^2$. The full automorphism group is a central extension of $\SL(2,\Z)$ by this module:
\begin{equation}
\begin{tikzcd}[column sep = 1.2 em]
0  \ar[r] & \Inn(\QT) \ar[r] & \Aut(\QT) \ar[r] & \Out(\QT) \ar[r] &  0 \\
0 \ar[r] & \Z^2 \ar[r] \arrow{u}{\cong} & \Z^2 \rtimes \SL(2,\Z) \ar[r] \arrow{u}{\cong}& \SL(2,\Z) \ar[r] \arrow{u}{\cong}&  0
\end{tikzcd}~.
\end{equation}
More generally, for the Heisenberg group on a higher-dimensional standard symplectic lattice $\Z^{2n}$, the group of automorphisms is the semidirect product $\Z^{2n} \rtimes \Sp(2n,\Z)$.

Heisenberg groups over more general rings now make starring appearances in numerous branches of mathematics, including even number theory~\cite[for example]{Weil}. For that reason (and to highlight their connection to DAHA), we have chosen to emphasize the generality of the concept here even though the cases of most interest to everyday physics are $S=\R$ or~$\Z$. For some physical applications of the latter algebra in the context of Chern--Simons theory, see for example~\cite{Gukov:2003na} and related work on  the $AJ$-conjecture.

\subsubsection{Unitary representations}\label{app:unitary}

The unitary representation theory of Heisenberg groups is well-understood, thanks to the Stone--von~Neumann theorem and its generalizations due, in particular, to Mackey~\cite{MackeySvN,MackeyBook}. We will just quickly recall enough context to state the main theorem. (This subsection is intended to mathematically classify unitary representations of $\QT$ so that the reader could skip to the three cases at the end for the first reading.)  Let us start quoting the following definition:
\begin{definition}
Let $G$ be a locally compact topological group and $\Gamma$ a Borel space with $G$-action. A \emph{system of imprimitivity based on~$(G,\Gamma)$} consists of a separable Hilbert space with a unitary, strongly continuous representation of~$G$, together with
 a  $G$-equivariant, projection-valued measure, assigning projection operators on the Hilbert space to the Borel sets of~$\Gamma$.
\end{definition}
The basic example of such data is the space of $L^2$ functions with respect to a $G$-invariant measure on~$\Gamma$; $G$ is then represented by pull-back, and the projections are just characteristic functions of subsets. Actually, one can show that essentially all systems of imprimitivity arise in this fashion.

The most interesting examples for physics arise when a group $\hat{G}$ is the semidirect product of an abelian normal subgroup $A$ with some other group $G$ that acts on~$A$: that is, there is a  split short exact sequence
\deq[eq:compseq]{
0 \to A \to \hat{G} \to G \to 0~.
}
Then one chooses $\Gamma$ to be the character space of the abelian normal subgroup $A$, with its obvious $G$-action.
\begin{theorem}
The set of isomorphism classes of unitary irreducible representations of~$\hat{G}$ is in bijection with the  set of systems of imprimitivity based on~$(G, \Gamma)$, up to unitary equivalence.
\end{theorem}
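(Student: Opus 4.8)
The plan is to reduce the statement to the Stone--Naimark--Ambrose--Godement (SNAG) theorem for abelian groups together with a direct bookkeeping argument, using crucially that $\hat{G}$ is a semidirect product with abelian normal subgroup, so that the full machinery of Mackey's general imprimitivity theorem is not required. First I would construct the map from representations to systems of imprimitivity: given a separable Hilbert space $\mathcal{H}$ carrying a strongly continuous unitary representation $\pi$ of $\hat{G}$, restrict $\pi$ to the abelian normal subgroup $A$. By SNAG, $\pi|_A$ is encoded by a unique regular projection-valued measure $P$ on the character space $\Gamma = \hat{A}$, in the sense that $\pi(a) = \int_\Gamma \chi(a)\,dP(\chi)$ for all $a \in A$. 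Setting $U := \pi|_G$, one checks that the pair $(U,P)$ is a system of imprimitivity based on $(G,\Gamma)$: strong continuity of $U$ is inherited from $\pi$, and the equivariance $U(g)P(E)U(g)^{-1} = P(g\cdot E)$ for Borel $E \subset \Gamma$ follows from the covariance relation $\pi(g)\pi(a)\pi(g)^{-1} = \pi(g\cdot a)$ in $\hat{G}$ (with $g\cdot a$ the image of $a$ under the $G$-action) by uniqueness of the spectral measure.

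Next I would construct the inverse map. Given a system of imprimitivity $(U,P)$ based on $(G,\Gamma)$ on $\mathcal{H}$, define $\pi$ on $A$ by $\pi(a) := \int_\Gamma \chi(a)\,dP(\chi)$; this is a strongly continuous unitary representation of $A$, since each character has modulus one, and then set $\pi(g) := U(g)$ for $g \in G$. The equivariance of $P$ translates back into $\pi(g)\pi(a)\pi(g)^{-1} = \pi(g\cdot a)$, which, together with the splitting $\hat{G} = A \rtimes G$ of \eqref{eq:compseq}, means precisely that $\pi$ extends to a strongly continuous unitary representation of $\hat{G}$. It is then immediate that these two constructions are mutually inverse, and that a unitary intertwiner $\mathcal{H}_1 \to \mathcal{H}_2$ of representations of $\hat{G}$ is literally the same datum as a unitary equivalence of the associated systems of imprimitivity; this yields the desired bijection on unitary equivalence classes.

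Finally I would match the irreducibility conditions so that the correspondence restricts to the one in the statement. A closed subspace of $\mathcal{H}$ is invariant under $\pi(\hat{G})$ if and only if it is invariant under $\pi(A)$ and under $U(G)$, since $A$ and $G$ generate $\hat{G}$; and a closed subspace is invariant under the abelian representation $\pi|_A$ if and only if it is invariant under all spectral projections $P(E)$ (one direction is clear, the other because each $P(E)$ belongs to the von Neumann algebra generated by $\pi(A)$). Hence $\pi$ is irreducible exactly when $(U,P)$ has no nontrivial closed subspace invariant under both $U$ and the projections $P(E)$, i.e.\ when $(U,P)$ is an irreducible system of imprimitivity. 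The main obstacle is purely analytic: justifying the SNAG theorem in the required generality and verifying that the operator-valued integrals $\int_\Gamma \chi(a)\,dP(\chi)$ are well defined, unitary, strongly continuous in $a$, and transform correctly under the $G$-action. The algebraic content, by contrast, is the elementary observation that a covariant pair for a semidirect product is nothing but a joint representation of the two factors satisfying the one compatibility relation; once the analysis is in place the bijection and its compatibility with unitary equivalence and irreducibility are formal.
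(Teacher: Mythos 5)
The paper does not prove this theorem; it states it as background, citing Mackey~\cite{MackeySvN,MackeyBook} and, for the consequences drawn from it, Varadarajan~\cite{Varadarajan}, so there is no paper proof to compare against. Your argument is the standard one and it is correct: restrict a unitary representation $\pi$ of $\hat{G}$ to the abelian normal subgroup $A$ and apply the SNAG theorem to obtain the unique regular projection-valued measure $P$ on $\Gamma$; pair this with $U := \pi|_G$ using the splitting of~\eqref{eq:compseq} to form a system of imprimitivity, with the $G$-equivariance of $P$ equivalent, via uniqueness of the spectral measure, to the semidirect-product covariance $\pi(g)\pi(a)\pi(g)^{-1} = \pi(g\cdot a)$; invert by integrating characters against $P$ and adjoining $U$; and match irreducibility by noting that each $P(E)$ lies in the von Neumann algebra $\pi(A)''$, so a closed subspace is $\pi(A)$-invariant precisely when it is invariant under all spectral projections.

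One detail you treat carefully that the paper's statement elides: as written, the theorem pairs \emph{irreducible} unitary representations of $\hat{G}$ against systems of imprimitivity with no irreducibility qualifier, which is not literally a bijection. Your final paragraph supplies the intended correction, establishing a bijection between all representations and all systems of imprimitivity that restricts to irreducible objects on both sides, which is the form Mackey's theorem actually takes.
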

It is further true~\cite{Varadarajan} that when $G$ acts transitively on~$\Gamma$, the inequivalent systems of imprimitivity based on~$(G,\Gamma)$ are in one-to-one correspondence with unitary representations of the point stabilizer subgroup
\begin{equation}
G_x = \{ g \in G | g\cdot x = x \} \quad (x \in \Gamma)~.
\end{equation}
This means that unitary representations of~$\hat{G}$ are classified by the $G$-orbits of~$\Gamma$, together with unitary representations of the corresponding stabilizer subgroups $G_x$.

For the physically minded reader, this can be understood as generalizations of Wigner's technique of the little group~\cite{Wigner}.  There, the decomposition~\eqref{eq:compseq} for the Poincar\'e group is the obvious (and canonical) one,
\begin{equation}
0 \to \R^d \to \R^d \rtimes \SO(d) \to \SO(d) \to 0~.
\end{equation}
The theorem of Stone and von~Neumann is indeed an example of precisely the same very powerful general logic.

To apply this technique to representations of~$\Heis(\lat{V},\omega)$, we must choose a maximal isotropic subspace $\lat{L}$ of~$(\lat{V},\omega)$.
There is then a unique abelian normal subgroup $A_\lat{L}$ of the Heisenberg group determined by~$\lat{L}$, which can be non-canonically identified with $\lat{L} \oplus S$.  In fact, this choice defines a (non-canonical) semidirect product structure
\begin{equation}
0 \rightarrow A_\lat{L} \rightarrow \Heis(\lat{V},\omega) \rightarrow \lat{V}/\lat{L} \rightarrow 0.
\end{equation}
Understanding representations thus reduces to studying the $(\lat{V}/\lat{L})$-orbits in~$\Gamma$. For simplicity, let us reduce to the one-dimensional Heisenberg group, where $\lat{V} = S^2$ and $\omega$ is the standard Darboux pairing. Then $\Gamma = S^\vee \times S^\vee$ is the product of two copies of the Pontryagin dual of~$S$.  If we abuse our exponential notation from~\eqref{eq:Rdef} above, we can write an element of $\Heis(\lat{V},\omega)$ as $q^a X^b Y^c$, where $a,b,c\in  S$. Then, $A_\lat{L}$ is the set of elements of the form $q^a X^b$, and a splitting of the short exact sequence is provided by the subgroup of elements $Y^c$. The action of $\lat{V}/\lat{L}$ is then determined by the formula
\begin{equation}
Y^{-c} \Bigl( q^a X^b \Bigr) Y^c = q^{a-bc} X^b~,
\end{equation}
which also specifies the action on~$\Gamma$ by pullback.

In our example of quantum torus algebra, $S=\Z$, so that $\Gamma = \U(1) \times \U(1)$. If we abuse notation, we can denote the image of $q$ (the choice of central character) simply by $q$. Of course, Schur's lemma ensures that the center must act by a chosen central character in any irreducible representation. This central character is nothing other than the numerical value of~$\hbar$, and plays the same role in this more general context. As such, no confusion should arise, and we will frequently confuse formal and numerical $q$ in what follows. We will denote the image of~$X$ by~$x_1$. Then the group action on $\Gamma$ is just
\begin{equation}
Y : (q,x_1) \mapsto (q,x_1/q)~,
\end{equation}
and of course fixes the central character.

It is straightforward to enumerate the orbits:
\begin{enumerate}[nosep]
\item $q = 1$. An orbit is any single point $x_1 \in \U(1)$. The stabilizer is~$\Z$. A representation is thus determined by $x_1$ and another phase $y_1$. These are the abelian representations of~$\Z^2$.
\item $q\in \U(1)$ is \emph{not} a root of unity. There is a single orbit. The stabilizer is trivial. There is thus a unique irreducible representation up to isomorphism. (This is the Stone--von~Neumann theorem.)
\item $q$ is a primitive $m$-th root of unity $\mu_m$. The space of orbits is $\U(1)/\mu_m$, each orbit consisting of $m$ points. The stabilizer is $m\Z\subset \Z$. A representation is again determined by two phases, one being the $m$-th power of~$x_1$, the other being (in a precise sense) the ``$m$-th power of $y_1$.''
\end{enumerate}
In fact, we can construct the corresponding representations explicitly; we do this (in greater generality) in what follows.

\subsubsection{Non-unitary representations}

Let us first fix $q$ to be a primitive $m$-th root of unity, corresponding to the third case above. There is then an obvious unitary finite-dimensional representation of the quantum torus algebra given by the well-known ``clock'' and ``shift'' matrices, acting on a standard vector space $\C^m$ with basis $e_i$:
\begin{equation}
\xi e_i = e_{i+1}, \quad \varpi^- e_i = q^{-i} e_i.
\end{equation}
It  is clear that mapping $X\mapsto  \xi$ and  $Y \mapsto \varpi^-$ defines the structure of an $m$-dimensional $\QT$-module with central character $q$.  What is  perhaps slightly less obvious is that every such representation is closely related to this one, differing only by a rescaling.

\begin{definition}[cyclic representation]
Fix a pair $\lambda = (x_m,y_m) \in \C^\times \times \C^\times$. Let $x_1$ and~$y_1$ be any $m$-th roots of~$x_m$ and~$y_m$ respectively. Then the \emph{cyclic representation of weight~$\lambda$}, denoted $\repF^\lambda$, is defined on an $m$-dimensional complex vector space by the map
\begin{equation}\label{cyclic-rep}
\QT \rightarrow \End(\repF^\lambda):  X\mapsto x_1 \xi, ~ Y \mapsto y_1 \varpi^-.
\end{equation}
(Note that the isomorphism type does not depend on the choices of~$x_1$ and~$y_1$.) We may sometimes write $\repF^\lambda_m$ to specify the dimension of the representation.
\end{definition}

For an $m$-th primitive root of unity $q$, it is known \cite{Kashaev:1994pj,Bai,Kim} that every representation of~$\QT$ in which $X$ and $Y$ are invertible is a direct sum of irreducible representations of the form~$\repF^\lambda$ for some choices of weights. Two representations $\repF^\lambda$ and~$\repF^{\lambda'}$ are isomorphic precisely when $\lambda=\lambda'$.
Note that the weight $\lambda$ and $q$ together define a central character in this instance. Any representation must factor through  the quotient $\QT/( q^m-1 )$, in which the elements $X^m$ and $Y^m$ are also central.

\bigskip

If $q$ is generic, we have no hope of finding a finite-dimensional representation by the results of the previous subsection, but the construction above generalizes to produce an infinite-dimensional module. Let $\PR = \C[q^\pm,X^\pm]$ be the space of Laurent polynomials in the central generator $q$ and the variable $X$.
We can define an analogue of the operator $\varpi^-$ above by the rule
\begin{equation}\label{shift-operator}
\varpi^- X^i = q^{-i} X^i.
\end{equation}
The role of the matrix $\xi$ will be played simply by multiplication by~$X$.

\begin{definition}[polynomial representation]
Choose a weight $y_1 \in \C^\times$. The \emph{polynomial representation of weight $y_1$}, denoted~$\PR^{y_1}$, is defined by the map
\begin{equation}\label{poly-rep-QT}
\QT \to \End(\PR^{y_1}): X \mapsto X,  ~ Y \mapsto y_1 \varpi^-.
\end{equation}
\end{definition}
It is clear that the shift automorphism (multiplication by~$X$) intertwines the representations $\PR^{y_1}$ and~$\PR^{y_1/q}$. However, this automorphism is outer (it is not \emph{conjugation} by $X$, but rather a generator of $\SL(2,\Z)$), so that these representations are not actually isomorphic.

\subsubsection{Geometric viewpoint}

To obtain a geometric perspective on its representation theory, we will take a slightly different way of looking at the algebra $\QT$. If we set $q=1$ (or, equivalently, taking the quotient of~$\QT$ by~$\C[S]$), we obtain a commutative group algebra. We are free to think of any commutative algebra as the coordinate ring of a certain affine space. For our example, it is indeed the product of two punctured affine lines:
\begin{equation}
\QT \xrightarrow[q\to 1]{} \OO(\C^\times \times \C^\times).
\end{equation}
This space will play an important role in the sequel, and so we will denote it by $\MS$. In this paper, we consider the representation theory of more complicated algebras, and $\MS$ will therefore denote more complicated spaces, depending on the context. (As our main interest, \MS{} in \S\ref{sec:2d} is the moduli space of flat $\SL(2,\C)$-connections on the once-punctured torus.)

What is common between all of these examples are certain key properties of~\MS: First of all, it will always be a non-compact complex manifold, so that it has a large and interesting algebra $\OO(\MS)$ of holomorphic functions with polynomial growth at infinity. (In fact, in this paper, \MS{} will always be an affine variety over~$\C$.)
It will also be a holomorphic symplectic manifold; in our current example, there is a natural holomorphic symplectic form on $\C^\times \times \C^\times$, which we take to be
\begin{equation}
\Omega_J = \frac{1}{2\pi i} \frac{dX}{X} \wedge \frac{dY}{Y}.
\end{equation}
In our examples, \MS{} will be even \HK; for reasons that will become clear later, we refer to the complex structure whose holomorphic symplectic structure is of interest to us now as $J$, motivating the choice of notation above. In the example at hand, we can see the \HK structure explicitly by using logarithmic coordinates,
\begin{equation}
X = \exp(r + i \varphi)~, \qquad Y = \exp(\rho + i \phi)~,
\end{equation}
and observing that both the real and imaginary parts of
\begin{equation}
\Omega_J = \omega_K + i \omega_I = \frac{1}{2\pi}\Bigl( dr \wedge d\phi  + d\varphi \wedge d\rho \Bigr) +  \frac{i}{2\pi} \Bigl(  d\varphi \wedge d\phi -dr\wedge d\rho \Bigr)
\end{equation}
are real symplectic forms. The third such symplectic structure is
\begin{equation}
\omega_J = \frac{1}{2\pi}\Bigl(dr\wedge d\varphi + d\rho\wedge d\phi\Bigr);
\end{equation}
it arises by identifying our space with the cotangent bundle of the two-torus.

Of course,  it is also trivial to see that this space admits an elliptic fibration, which can be written very simply in coordinates as
\deq[eq:easy-fib]{
\begin{aligned}
\pi: \MS &\to \C, \\
(X,Y) &\mapsto (r,\rho)~.
\end{aligned}
}
It is apparent that this map is holomorphic in complex structure $I$, and that its fibers are tori which are Lagrangian with respect to $\omega_J$ and~$\omega_K$. As such, in the language of mirror symmetry for \HK manifolds \cite{Kapustin:2006pk,Gukov:2006jk}, they are appropriate submanifolds to support branes of type $(B,A,A)$, which are the only topologically interesting compact Lagrangian branes in the $A$-model of $(\X,\omega_K)$.

In addition, $\QT$ is actually the deformation quantization $\OO^q(\C^\times\times\C^\times)$ of the coordinate ring with respect to the Poisson bracket defined by~$\Omega_J$. We can thus begin to think about the algebra $\QT$ within the context of brane quantization \cite{Gukov:2008ve} in 2d sigma-model, which is reviewed in \S\ref{sec:Bcc} and \S\ref{sec:Lagrangian}. It is the central idea of this paper, and this rather simple setting in this Appendix is instructive guidance for a similar geometric angle on DAHA in \S\ref{sec:2d}. (See also \cite{soibelman2001quantum} for the geometric approach to the quantum tori.)

\subsection{Branes for quantum torus algebra}
\label{app:branesQT}

We have understood the representation theory of~\QT{}. Also, we recall that the curvature $F$ of the $\Bcc$ line bundle in 2d sigma-model on the symplectic manifold $(\X=\C^\times \times \C^\times, \omega_\X)$ is subject to
$$F+B+i \omega_{\mathfrak{X}}=\frac{\Omega_{J}}{i \hbar}~,$$
to obtain $\QT$ as $\Hom(\Bcc,\Bcc)$. Now, we set up an equivalence between the categories $\Rep(\QT)$ and~$\ABrane(\X, \omega_\X)$  by finding $A$-branes for the cyclic representations \eqref{cyclic-rep} and the polynomial representations \ref{poly-rep-QT}.
It is the essential motivation for the study of our main examples, DAHA, but this builds intuition for the sorts of matching in \S\ref{sec:poly-rep} and \S\ref{sec:generic-fiber}.

\subsubsection{Cyclic representations}\label{app:cyclic-rep-QT}

Our first attempt will be to find $A$-branes for the finite-dimensional representations \eqref{cyclic-rep}. On general grounds, these should arise from branes supported on compact Lagrangians.
These branes are special because, from the geometry viewpoint, compact Lagrangian submanifolds are interesting and somewhat rare. Similarly, finite-dimensional representations are rare and distinctive from the standpoint of representation theory.

Due to the simple topology of $\X\simeq T^*T^2$ in this example, there is essentially one interesting class of compact branes: Lagrangian tori wrapping the unique generator of~$H_2(\MS,\Z)$. The condition that Lagrangians have vanishing Maslov class requires that no cycle in $H_1(\L,\Z)$ bound a disc in~\X; as such, it must map injectively to~$H_1(\X,\Z)\cong \Z^2$ under the obvious map.

These tori are holomorphic in complex structure $I$ and holomorphic Lagrangians with respect to $\Omega_I$ so that they can be Lagrangian submanifolds in a symplectic manifold $(\X,\omega_\X)$ only when $\omega_\X$ is proportional to $\omega_K$. According to \eqref{generic-Bcc}, we need to set $\theta=0$, and the parameter $\hbar$ is real.
It is easy to see that each equivalence class of such tori under Hamiltonian isotopies with respect to~$\omega_K$  contains precisely one torus of the form $\pi^{-1}(r,\rho)$, \textit{i.e.}\ a fiber of the map~\eqref{eq:easy-fib}. It is furthermore clear that no two such tori are equivalent. Each $A$-brane thus corresponds to exactly one special Lagrangian brane of type $(B,A,A)$, and branes are indexed by two real parameters $(r,\rho)$ recording their position. (See Figure~\ref{fig:brane-QT}.)

 Now, as we recalled above, an $A$-brane is a Lagrangian equipped with (among other data) a flat unitary line bundle. To choose such a flat line bundle on the torus, we need to additionally choose two $\U(1)$ holonomies, or equivalently pick a point in $\Jac(T^2) \cong (T^2)^\vee$. Thus, we write the two $\U(1)$ holonomies by $(e^{i\varphi^\vee},e^{i\phi^\vee})$.
 The set of $A$-branes is thus labeled  by two real for the position and two circle-valued parameters for the holonomies. This agrees precisely with the set $\lambda=(x_m,y_m)$ of the cyclic representations $\repF^\lambda$ of $\QT$ in \eqref{cyclic-rep} where the identification is
 \be
(x_m,y_m)=(e^{r+i\varphi^\vee},e^{\rho+i\phi^\vee})~.
 \ee

The last thing to check is the dimension of the space of open strings $\Hom(\Bcc,\brL)$. Since $\theta=0$ in \eqref{generic-Bcc}, the data of $\Bcc$ consist of $F+B=\omega_I/|\hbar|$ and $\omega_\X=-\omega_K/|\hbar|$. Also, the Chan-Paton bundle for the Lagrangian brane is endowed with flat connection \eqref{deformed-flat}. As a result, the dimension formula \eqref{dimension} becomes
\be\label{dimension-torus}
\dim \Hom(\Bcc,\brL)=\int_{T^2} \frac{\omega_I}{2\pi \hbar}=\frac1\hbar~.
\ee
Since this is the dimension of holomorphic sections, $\hbar = 1/m$ for some positive integer $m$. We can interpret this as the Bohr--Sommerfeld quantizability condition on the compact Lagrangian branes, and we therefore recover the condition that $q$ is a primitive $m$-th root of unity for the cyclic representation.

The involution
$$\zeta:\X\to \X;~(r,\varphi,\rho,\phi)\mapsto (-r,\varphi,-\rho,\phi)$$
is holomorphic in complex structure $I$ and antiholomorphic in complex structures $J$ and $K$. The central Lagrangian torus $\pi^{-1}(0,0)$ is exactly the set of fixed points under $\zeta$, and the $A$-brane on $\pi^{-1}(0,0)$ with the trivial holonomies gives rise to the unitary finite-dimensional representation, which is the third case at the end of Appendix \ref{app:unitary}.

\begin{figure}[ht]\centering
    \includegraphics[width=0.9\textwidth]{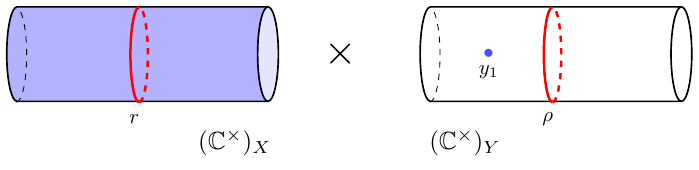}
    \caption{A brane (red) supported on a torus fiber gives the cyclic representation and the deformation parameter consists of its position $(r,\rho)$ and the $\U(1)^2$ holonomy. A brane (blue) supported on $(\C^\times)_X$ provides the polynomial representation and the deformation parameter consists of its position $|y_1|$ and the $\U(1)$ holonomy.}
    \label{fig:brane-QT}
\end{figure}

\subsubsection{Polynomial representations}\label{app:poly-rep-QT}

Next, we will find an $A$-brane corresponding to the polynomial representation $\PR^{y_1}$ in \eqref{poly-rep-QT}. Since it is an infinite-dimensional representation, we expect this brane to be non-compact.

Let us begin by just considering Lagrangian subspaces which are isomorphic to $T^* S^1$, embedded such that the generator of $H_1$ is mapped to the generator of $H^1((\C^\times)_X)$ under the embedding map.
While this choice is not canonical, other classes in $H_1(\C^\times \times \C^\times)$ could be obtained by acting with the outer automorphism group $\SL(2,\Z)$.
We are thus interested in Lagrangians that are graphs $\Gamma_Y$ of topologically trivial maps $Y=(\rho,\phi):(\C^\times)_X\to(\C^\times)_Y$,  which can be represented in the form
\deq[eq:Lag-prep]{
\Gamma_Y:=\{(X,Y(X))=(r,\varphi,\rho(r,\varphi),\phi(r,\varphi)) \subset (\C^\times)_X \times (\C^\times)_Y\}~~.
}
Recall that the symplectic form~\eqref{generic-Bcc} we are using is
\begin{equation}
\omega_\X = -\frac{\sin\theta}{2\pi|\hbar|} (d \varphi \wedge d \phi-d r \wedge d \rho) - \frac{\cos\theta}{2\pi|\hbar|} (d r \wedge d \phi+d \varphi \wedge d \rho)~,
\end{equation}
The condition for~\eqref{eq:Lag-prep} to define a Lagrangian submanifold is therefore
\deq[eq:Lagr-cond]{
 \sin\theta \Bigl( \pdv{\phi}{r}+ \pdv{\rho}{\varphi}  \Bigr)+\cos\theta \Bigl( \pdv{\rho}{r} - \pdv{\phi}{\varphi} \Bigr)  = 0~.
}
The two quantities in parentheses are just the two Cauchy--Riemann equations for the function $Y$. Therefore, the graph of a holomorphic map $(\C^\times)_X$ to~$(\C^\times)_Y$ defines an $A$-brane for any choice of~$\omega_\X$---in fact, an $(A,B,A)$-brane. But the only such holomorphic function that is isotopic to the constant map is itself the constant map!

Let us recall that we consider $A$-branes up to Hamiltonian isotopy. The Hamiltonian vector field associated to a generating function $f(r,\varphi)$ is
\begin{equation}
X_f =2\pi |\hbar|\Bigl[\cos\theta \Bigl( \pdv{f}{r} \pdv{}{\phi} + \pdv{f}{\varphi} \pdv{}{\rho} \Bigr)
- \sin\theta  \Bigl( \pdv{f}{r} \pdv{}{\rho} - \pdv{f}{\varphi} \pdv{}{\phi} \Bigr)  \Bigr],
\end{equation}
which generate the Hamiltonian flow of the form
\deq{
\delta \phi = \cos\theta \pdv{f}{r} + \sin\theta \pdv{f}{\varphi}, \qquad
\delta \rho = \cos\theta \pdv{f}{\varphi} - \sin\theta \pdv{f}{r}~.
}

It is easiest to understand the action of these vector fields if $\theta = 0$. Choosing $f(r,\varphi)=r$ then generates a global rotation of~$\phi$ in $(\C^\times)_Y$, constant over the Lagrangian, and any radially-dependent rotation can also be obtained by a suitable choice of~$f$. As such, the angle $\phi$ can be set to any chosen value anywhere on the Lagrangian; angular displacements do not lead to different objects in the $A$-model of $(\X,\omega_K)$! The other Hamiltonian vector field ensures that we can deform our Lagrangian by a Hamiltonian isotopy to set $\rho(r,\varphi)$ to its average value over $\varphi$, depending (in principle) on~$r$. But the Lagrangian condition~\eqref{eq:Lagr-cond} then ensures that $\rho$ is just constant.

As such, every $A$-brane in $(\X,\omega_K)$ of topological class $\C^\times$ can be brought by Hamiltonian isotopy to a unique $(A,B,A)$-brane which is a fiber of the projection on the second factor of $\C^\times \times \C^\times$.
This makes it easy to see that the Maslov class is zero; we can consider the symplectic universal cover of $\C^\times \times \C^\times$ by~$\C^2$, and such Lagrangians descend from affine subspaces there (which always have zero Maslov class).

There are exactly two real parameters defining an $(A,B,A)$-brane of this form in the category $\ABrane(\X,\omega_\X)$: the modulus $|Y|$ (or $\rho$) and the $\U(1)$ holonomy of the flat bundle over the Lagrangian due to $H_1(\L,\Z)=\Z$.
This matches precisely with the $\C^\times$ parameter $y_1$ for the polynomial representation $\PR^{y_1}$. Indeed, the representation space was just the algebra $\C[q^\pm,X^{\pm}]$ of Laurent polynomials, which is exactly the collection of holomorphic functions (in complex structure $J$) on this brane.

The story for generic values of~$\theta$ is similar. However, the integral curves of the Hamiltonian vector field generated by $\partial f/\partial r$ are no longer closed circles,  but rather spirals in~$\C^\times$ (obtained as the images of lines of fixed slope under the exponential map).

\subsection{Symmetrized quantum torus}
\label{app:SQT}

Now let us move one step forward to DAHA.
Recall from \eqref{Out-QT} that the outer automorphism group of~$\QT$ is~$\SL(2,\Z)$. How much of this symmetry is visible in the classical limit?
Since the classical limit corresponds to a trivial choice of the central character (or, equivalently, to the quotient by the ideal $(q-1)$), all inner automorphisms must act trivially, so that the residual symmetry is by the group $\Out(\QT) = \SL(2,\Z)$. Indeed, this is just the natural action of the group of canonical transformations on symplectic $\Z^2$.
Of  course, these automorphisms act on~\MS{} by  the exponentiated form of this action:
\begin{equation}\label{SL2Z-coord}
\SL(2,\Z)\ni\begin{pmatrix} a & b \\  c &  d  \end{pmatrix} : (X,Y) \mapsto \Bigl(X^a Y^b, X^c Y^d\Bigr)~.
\end{equation}

We will be particularly interested in one distinguished outer automorphism of order two, namely the central element $\kappa$ of~$\SL(2,\Z)$.
Let us study the extension of~$\QT$ by this outer automorphism. In other words, we are interested in the algebra~$\TQT$ defined by the short exact sequence
\begin{equation}
0 \to \QT \to \TQT \to \C[\Z_2] \to 0~,
\end{equation}
with respect to the outer automorphism $\kappa$. (Of course, this is just obtained  by applying the group algebra functor to  a corresponding semidirect product of groups.) We can think of this as adjoining a new generator $T$ of $\Z_2$ to~$\QT$, subject to the relations
\begin{equation}
T^2 = 1~, \quad  TXT = X^{-1}~,  \quad  TY^{-1}T = Y~.
\end{equation}

In the classical limit, $\TQT$ is not commutative. We therefore have no hope of telling a story analogous to the one we have been building for~$\QT$. However, rather than studying $\TQT$---which is like a $\kappa$-\emph{equivariant} version of~$\QT$---directly, we can imagine trying to replace it by a $\kappa$-\emph{invariant} version, whose classical limit should be (the functions on) the quotient of~\MS{} by~$\kappa$.

This is concretely accomplished as follows. The algebra $\TQT$  contains an idempotent element $\mathbf{e} = (1+T)/2$, which one can think of as implementing projection onto $\kappa$-invariants. We can therefore define the subalgebra
\begin{equation}
\SQT = \mathbf{e}\TQT \mathbf{e}~.
\end{equation}
In fact, the algebras $\SQT$ and~$\TQT$ are Morita equivalent; this equivalence is witnessed by the bimodule $\mathbf{e}\TQT$ \cite{Oblomkov:aa,oblomkov2004double}, and it is also discussed in \S\ref{sec:surface}.
We are thus free to study the representation category of either, and in the classical limit, we have that
\begin{equation}\label{flat-moduli-torus}
\SQT  \xrightarrow[q\to1]{} \OO(\MS)^{\Z_2} = \OO[(\C^\times \times \C^\times)/\Z_2]~.
\end{equation}
However, there is one subtlety. Since the geometric ramification parameter $\tilde{t}$ and DAHA parameter $t$ are related in \eqref{t-wtt}, $t=1$ means $\tilde{t}=q^{-\frac12}$. Therefore, $\SH_{t=1}$ can be understood as the deformation quantization of the coordinate ring of $\MF(C_p,\SL(2,\C))$ with ramification \be\label{1/2-shift}\tgamma_p+i\talpha_p=i\hbar~.\ee
Or, we believe that this $\frac12$ shift is related to the \emph{orbifold (Gepner) point} $(\talpha_p, \tbeta_p, \tgamma_p, \teta_p)=(0,0,0,\frac12)$ \cite{Aspinwall:1995zi,Douglas:1996xg,Blum:1997fw,Wendland:2000ry} in the sigma-model on $(\C^\times\times\C^\times)/\Z_2$. In fact, the evaluation of the left-hand side of \eqref{c-ramification} at the orbifold point yields $\frac12$. If $(\omega_\X^{-1}(F+B))^2=-1$ is satisfied, the canonical coisotropic brane $\Bcc$ can exist at the orbifold point. However, $\Bcc$ at the orbifold point is obscure because of the stringy nature so that we leave it for future work. Also, it is worth noting that the skein algebra $\Sk(T^2)$ of a torus discussed in \S\ref{sec:poly-rep} is isomorphic to $\SH_{t=1}$. Hence, $\Sk(T^2)$ is \emph{not} simply the deformation quantization of the coordinate ring of $\MF(T^2,\SL(2,\C))$. This stems from the fact that a Higgs bundle over a genus one curve $C\cong T^2$ is \emph{not} stable. Thus, we need to take into account the quantum correction described above.

Now let us consider representation theory of $\SQT$.
How can we study representations of $\SQT$ using the information we have from the non-equivariant case? In particular, can we make use of our understanding of $\QT$-modules?
Our situation is summarized by the following diagram:
\begin{equation}
\begin{tikzcd}
\SQT
	\arrow[hook, shift left]{r}{} & \TQT \arrow[two heads, shift left]{l}{} \arrow[dashed]{dr}{} & \\
& \QT \arrow[hook]{u}{} \arrow{r}{} & \End(U)\qquad .
\end{tikzcd}
\label{eq:liftdiag}
\end{equation}

The astute reader will notice that there seems to be an obvious arrow missing in~\eqref{eq:liftdiag} above.
Isn't $\SQT$ simply the invariant subset of $\QT$ with respect to the action of~$\kappa$? The answer is yes, but \emph{not canonically}. Indeed, an outer automorphism is an \emph{equivalence class} of automorphisms, and we must choose a particular
 lift to a specific non-inner automorphism in order to even talk about the action of~$\kappa$ on~$\QT$. For example, we could take
\begin{equation}
\hat{\kappa}: X \mapsto X^{-1}, ~ Y\mapsto Y^{-1}~,
\label{eq:klift}
\end{equation}
but any other choice that shifts $X$ and $Y$ by powers of~$q$ is equally valid and does not change the structure of the extension~$\TQT$.

Let us fix the lift~\eqref{eq:klift}.
It is then clear that the data of a lift of a $\QT$-module $U$ to an $\TQT$-module consists of an order-two endomorphism $\tau$ of~$U$
such that conjugation by~$\tau$
and intertwining with any lift of the outer automorphism $\kappa$ to an automorphism of~$\QT$. That is, the diagram
\begin{equation}
\begin{tikzcd}
\QT  \arrow{r}{} \arrow{d}{\hat{\kappa}} & \End(U) \arrow{d}{\tau} \\
\QT  \arrow{r}{} & \End(U)~,
\end{tikzcd}
\end{equation}
should commute, where the action of~$\tau$ on~$\End(U)$ is by conjugation.  We can  therefore lift the classification  of $\QT$-modules to~$\TQT$ by classifying such lifts, and we will do this in the next subsection.

\subsubsection{Representation theory}\label{app:pre-SQT}

In this subsection, we consider each $\QT$-module in turn. A \emph{lifting datum} for a module $U$ consists, as above, of an order-two endomorphism $\tau$ of~$U$ that intertwines with~$\hat{\kappa}$.

\begin{proposition}
\label{prop:liftPR}
Lifting data for the polynomial representation $\PR^{y_1}$ exist precisely when $y_1^2 = q^\ell$ for some integer $\ell$. When this condition holds, there is a unique lift up to a sign.
\end{proposition}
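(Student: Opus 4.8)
The plan is to classify all order-two $\C[q^{\pm}]$-linear endomorphisms $\tau$ of the underlying module $\PR^{y_1}=\C[q^{\pm},X^{\pm}]$ of the polynomial representation that intertwine the $\QT$-action with the lift $\hat\kappa$ of \eqref{eq:klift}; here $\tau$ is automatically $\C[q^{\pm}]$-linear since $T$ commutes with the central generator $q$, and fixing $\hat\kappa$ rather than some other lift of $\kappa$ costs nothing, as a different lift differs by an inner automorphism that can be absorbed into $\tau$. Writing $X$ also for the operator of multiplication by $X$ and $Y:=y_{1}\varpi^{-}$ for the operator in \eqref{poly-rep-QT}, the conditions on $\tau$ are $\tau X\tau^{-1}=X^{-1}$ and $\tau Y\tau^{-1}=Y^{-1}$, together with $\tau^{2}=\mathrm{id}$. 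The first step is to extract a normal form for $\tau$ from the $X$-relation alone: rewriting it as $\tau(X^{i+1})=X^{-1}\tau(X^{i})$ and iterating from $g:=\tau(1)$ gives $\tau(X^{i})=X^{-i}g$ for all $i\in\Z$. Since $\tau$ must be bijective, its image $g\cdot\C[q^{\pm},X^{\pm}]$ is the whole ring, so $g$ is a unit, hence a Laurent monomial $g=\mu X^{-n}$ with $\mu\in\C[q^{\pm}]^{\times}$ and $n\in\Z$. Thus every candidate lift has the form $\tau(X^{i})=\mu X^{-n-i}$.

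Next I would impose the two remaining constraints directly on this family. From $\tau^{2}(X^{i})=\mu^{2}X^{i}$, the condition $\tau^{2}=\mathrm{id}$ is equivalent to $\mu^{2}=1$, i.e.\ $\mu=\pm1$ (these are the only square roots of $1$ in $\C[q^{\pm}]^{\times}=\C^{\times}\times q^{\Z}$). Then $\tau^{-1}(X^{i})=\mu^{-1}X^{-n-i}$, and a one-line computation gives $\tau Y\tau^{-1}(X^{i})=y_{1}q^{\,i+n}X^{i}$, whereas $Y^{-1}(X^{i})=y_{1}^{-1}q^{\,i}X^{i}$; equality for all $i$ is exactly $y_{1}^{2}q^{\,n}=1$, that is, $y_{1}^{2}=q^{-n}$. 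Hence a lifting datum exists if and only if $y_{1}^{2}=q^{\ell}$ for some integer $\ell$ (with $\ell=-n$), and when it does exist the exponent $n=-\ell$ is forced while $\mu$ is still free to be $\pm1$, producing exactly two lifts interchanged by $\tau\mapsto-\tau$. This yields simultaneously the stated existence criterion and uniqueness up to sign.

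There is no serious obstacle in this argument; all of it is elementary, and the only step that requires an argument rather than a computation is the normal-form reduction, which relies on the $X$-intertwining relation being rigid enough to pin $\tau$ down to (inversion of exponents)\,$\circ$\,(multiplication by a unit)---this is where the freeness of $\C[q^{\pm},X^{\pm}]$ over itself is used. One minor point worth flagging is that the uniqueness of $\ell$, hence of $n$, uses that $q$ is treated as a formal parameter, so that $q^{n}=q^{n'}$ forces $n=n'$; at a root of unity the criterion $y_{1}^{2}=q^{\ell}$ is correspondingly weaker and $n$ is only determined modulo the order of $q$, which is immaterial for the proposition as phrased.
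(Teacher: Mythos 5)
Your argument is correct, and it differs from the paper's in how it pins down the shape of $\tau$. The paper imposes only the $X$-intertwining to get $\tau(X^i)=\sum_\ell a_\ell X^{\ell-i}$ with arbitrary finitely many nonzero $a_\ell\in\C[q^\pm]$, then lets the $Y$-intertwining kill all but (at most) one $\ell$ via $a_\ell(y_1^2-q^\ell)=0$ and the formality of $q$, and finally invokes the involution condition to fix $a_\ell=\pm1$. You instead bring bijectivity (hence $\tau^2=\mathrm{id}$) to bear \emph{before} touching $Y$: from $\tau(X^i)=X^{-i}g$ with $g=\tau(1)$, surjectivity forces $g$ to be a unit of $\C[q^\pm,X^\pm]$, hence a monomial $\mu X^{-n}$; the involution condition then pins $\mu=\pm1$, and only at the end does the $Y$-relation produce the criterion $y_1^2=q^{-n}$. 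What your route buys is a structural reason $\tau(1)$ is a monomial that is independent of the central character and of the $Y$-action, with the $Y$-relation serving purely as a compatibility check rather than doing the bookkeeping; it also makes the ``unique up to sign'' claim transparent, since $\mu^2=1$ in $\C[q^\pm]^\times\cong\C^\times\times q^\Z$ literally forces $\mu=\pm1$. What the paper's route buys is that it generalizes with no change to the analysis of reducible representations (Propositions~\ref{prop:liftPR2} and \ref{prop:liftFD2}), where $\tau$ need not send a cyclic vector to a unit of the corresponding ring. Your closing remark about the distinction between formal $q$ and $q$ a root of unity is accurate and matches the role this plays in Proposition~\ref{prop:liftFD}.
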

\begin{proof}
Let us consider the matrix elements of~$\tau$ in the polynomial basis:
\deq{
\tau (X^i) = \tau_{ij} X^j~.
}
We first consider the requirement imposed by conjugating with~$X$:
\deq[eq:conjX]{
\tau X = X^{-1} \tau  ~ \Rightarrow \tau_{(i+1)j} = \tau_{i(j+1)} ~.
}
This means that we can write our map in the form
\deq{
\tau : X^i \mapsto \sum_\ell a_\ell X^{\ell-i}
}
for some constants $a_\ell$. It remains to consider the action on~$Y$:
\begin{equation}
\begin{aligned}
\tau Y = Y^{-1} \tau  ~ &\Rightarrow
y_1 q^{-i} a_\ell X^{\ell - i} = a_\ell y_1^{-1} q^{\ell - i} X^{\ell - i} \\
&\Rightarrow a_\ell(y_1^2-q^{\ell}) = 0~.
\end{aligned}
\end{equation}
From here it follows that $a_\ell = 0$ unless $\ell$ is chosen such that $y_1^2= q^{\ell}$. In order to get an involution, we must choose $a_\ell = \pm 1$. Since $q$ is not a root of unity, this completes the proof.
\end{proof}

Note that the freedom in the choice of~$\ell$ is not very interesting: it is always possible to shift $y_1 \mapsto y_1/q$ by a (non-inner) automorphism of the algebra. Hence, setting $\ell=0$, we may consider $\tau(X)=X^{-1}$, and the polynomial representation can be lifted when $y_1=\pm1$.

\begin{proposition}
\label{prop:liftFD}
Let $q$ be a primitive $m$-th root of unity.
Lifting data for the cyclic representation $\repF^\lambda_m$ exist precisely when both weights are equal to plus or minus one. As above, there is then a unique lift up to a sign for each integer $\ell\pmod m$.
\end{proposition}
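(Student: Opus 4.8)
The plan is to mirror almost verbatim the argument used for Proposition~\ref{prop:liftPR}, now carried out in the explicit ``clock--and--shift'' basis $e_0,\dots,e_{m-1}$ of $\repF^\lambda_m$, on which $X$ acts by $x_1\xi$ and $Y$ by $y_1\varpi^-$. Recall from~\eqref{eq:liftdiag} that a lifting datum for $\repF^\lambda_m$ is an endomorphism $\tau$ with $\tau^2=\id$ which conjugates the $\QT$-action into its image under the chosen lift $\hat\kappa$ of the outer automorphism $\kappa$, i.e.\ $\tau X\tau^{-1}=X^{-1}$ and $\tau Y\tau^{-1}=Y^{-1}$. So the task is to determine exactly when such a $\tau$ exists and to count the choices.

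First I would settle necessity. Since $q$ is a primitive $m$-th root of unity, $\repF^\lambda_m$ is irreducible and every $\QT$-module with $X,Y$ invertible is a direct sum of such, so a lifting datum can exist only if $\repF^\lambda_m$ is isomorphic to its $\hat\kappa$-twist. The twist is itself a cyclic representation, of weight $(x_m^{-1},y_m^{-1})$, as one sees by the change of basis $e_i\mapsto e_{-i}$; equivalently, the central elements $X^m,Y^m$ act on $\repF^\lambda_m$ by $x_m,y_m$ and on the twist by $x_m^{-1},y_m^{-1}$. Comparing central characters (and using that cyclic representations are classified by their weights) forces $x_m=x_m^{-1}$ and $y_m=y_m^{-1}$, that is $x_m,y_m\in\{\pm1\}$.

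For sufficiency I would solve the two intertwining relations in matrix form, $\tau e_i=\sum_j\tau_{ij}e_j$. The relation $\tau X=X^{-1}\tau$ becomes $\tau_{(i+1)j}=x_1^{-2}\tau_{i(j+1)}$ with indices mod $m$; this confines the support of $\tau$ to the ``broken antidiagonals'' $\{i+j\equiv\text{const}\pmod m\}$ and propagates a single scalar around each $\Z/m$-cycle with weight $x_1^{-2}$ per step, so closing up requires $x_1^{2m}=x_m^2=1$ — exactly the condition just obtained. The relation $\tau Y=Y^{-1}\tau$ then gives $\tau_{ij}\,(y_1^2 q^{-(i+j)}-1)=0$, so $\tau$ is supported on the single antidiagonal $i+j\equiv\ell\pmod m$ with $q^\ell=y_1^2$; such an $\ell$ exists precisely because $y_1^2$ is an $m$-th root of unity, equivalently $y_m=\pm1$, and it is the discrete label in the statement — the analogue of the integer $\ell$ with $y_1^2=q^\ell$ in Proposition~\ref{prop:liftPR}, now only well-defined modulo $m$ because $q^m=1$. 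On that antidiagonal $\tau$ is pinned down by one overall scalar, and $\tau^2=\id$ fixes it up to a sign, giving the asserted uniqueness.

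The main obstacle I anticipate is the bookkeeping around $\ell$: one must verify that for each admissible residue the resulting weighted antidiagonal operator actually squares to a nonzero multiple of the identity (so that the normalization $\tau^2=\id$ is achievable), and keep track of how a change in the chosen $m$-th roots $x_1,y_1$ of $x_m,y_m$ — which realizes the same module $\repF^\lambda_m$ — shifts the supporting antidiagonal. A convenient consistency check, and possibly a cleaner route, is to use that $\repF^\lambda_m\cong\PR^{y_1}/(X^m-x_m)$ and to check directly that the lift $\tau$ constructed in Proposition~\ref{prop:liftPR} descends to this quotient exactly when the weights satisfy $x_m,y_m\in\{\pm1\}$.
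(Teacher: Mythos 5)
Your proof is correct, and the sufficiency part is essentially the paper's argument carried out a bit more explicitly in the clock-and-shift basis (the paper's relation $x_1\tau_{(i+1)j}=x_1^{-1}\tau_{i(j+1)}$, with cyclic indices forcing $x_1^{2m}=1$, and then the $Y$-relation pinning support to a single antidiagonal $i+j\equiv\ell$ with $q^\ell=y_1^2$). What you add is a genuinely cleaner route to \emph{necessity}: comparing the central characters of $\repF^\lambda_m$ and of its $\hat\kappa$-twist $\repF^{\lambda^{-1}}_m$ (via $e_i\mapsto e_{-i}$) shows immediately, by the classification of cyclic modules, that $x_m,y_m\in\{\pm1\}$ is required, with no matrix bookkeeping. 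That structural step is worth having, since it cleanly separates the obstruction (module must be self-twist-isomorphic) from the construction (explicit antidiagonal operator). The worry you flag at the end — whether the supported antidiagonal operator squares to a nonzero multiple of the identity — is easily dispatched: writing $\tau_{i,\ell-i}=a\,x_1^{-2i}$ gives $(\tau^2)_{ii}=\tau_{i,\ell-i}\tau_{\ell-i,i}=a^2x_1^{-2\ell}$, independent of $i$, so one normalizes $a$ and picks up the expected sign ambiguity; this is implicit in the paper's "we must choose $a_\ell=\pm1$."
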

\begin{proof}
The calculation proceeds along similar lines. By conjugating with $X$, we obtain the requirement
\deq{
x_1 \tau_{(i+1)j} = x_1^{-1} \tau_{i(j+1)}~.
}
Since the indices are cyclic, this requirement can only be satisfied if $x_1^{2m} = 1$, \textit{i.e.}\ if the weight $x_m = \pm 1$. The calculation for~$Y$ is identical, and produces the restriction
\deq{
a_\ell(y_1^2-q^{\ell}) = 0~.
}
Since $q$ is an $m$-th root of unity, this implies that $y_m = \pm 1$.
\end{proof}

This completes the story as to lifting of irreducible representations of~$\QT$. However, it is important to note that \emph{reducible} representations can admit interesting new choices of lifting datum!

\begin{proposition}\label{prop:liftPR2}
Lifting data for the representation $\PR^{y_1} \oplus \PR^{y_1'}$ exist whenever $y_1 = q^\ell/y_1'$ for some~$\ell$. When this is true, there is a unique lift up to sign. If either $y_1$ or~$y_1'$ satisfies the conditions of Proposition~\ref{prop:liftPR}, additional lifts correspondingly exist.
\end{proposition}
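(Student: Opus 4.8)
The plan is to imitate the proof of Proposition~\ref{prop:liftPR} blockwise. Write $U=\PR^{y_1}\oplus\PR^{y_1'}$ with its two summands indexed by $a=1,2$, and decompose a candidate lift $\tau\colon U\to U$ into a $2\times 2$ array of linear maps $\tau_{ab}$, where $\tau_{ab}$ maps the $b$-th summand to the $a$-th. Since $X$ acts by multiplication on each summand, the relation $\tau X=X^{-1}\tau$ holds separately on each block, so the computation around \eqref{eq:conjX} applies verbatim: every $\tau_{ab}$ must be ``anti-diagonal'' in the $X$-degree, $\tau_{ab}(X^i)=\sum_{\ell}a^{(ab)}_{\ell}X^{\ell-i}$.

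Next I would impose $\tau Y=Y^{-1}\tau$. Since $Y$ acts as $y_1\varpi^-$ on the first summand and $y_1'\varpi^-$ on the second (with $\varpi^-X^i=q^{-i}X^i$; cf.~\eqref{poly-rep-QT} and \eqref{shift-operator}), this relation reads blockwise $\tau_{ab}Y_b=Y_a^{-1}\tau_{ab}$, and a one-line computation exactly as in Proposition~\ref{prop:liftPR} gives
$$a^{(11)}_{\ell}(y_1^2-q^{\ell})=0,\qquad a^{(22)}_{\ell}((y_1')^2-q^{\ell})=0,\qquad a^{(12)}_{\ell}(y_1y_1'-q^{\ell})=0=a^{(21)}_{\ell}(y_1y_1'-q^{\ell}).$$
Because $q$ is not a root of unity, each product vanishes only if the displayed power of $q$ is attained, and then the surviving coefficient (for the unique such $\ell$) is a free scalar. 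Hence the off-diagonal blocks can be switched on exactly when $y_1y_1'=q^{\ell}$ for some $\ell\in\Z$ --- the hypothesis of the proposition --- while a diagonal block $\tau_{aa}$ can be switched on exactly when the corresponding weight satisfies the condition of Proposition~\ref{prop:liftPR}.

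Finally I would impose involutivity $\tau^2=\mathrm{id}_U$ and count. In the generic case, where neither $y_1$ nor $y_1'$ is itself liftable, $\tau$ is forced to be purely off-diagonal, $\tau_{12}(X^i)=aX^{\ell-i}$, $\tau_{21}(X^i)=bX^{\ell-i}$, and one checks $\tau^2=ab\cdot\mathrm{id}$, so the constraint is $ab=1$. The $\QT$-module automorphism group of $U$ is the diagonal torus $\C^\times\times\C^\times$ --- here one must observe that $\PR^{y_1}\not\cong\PR^{y_1'}$, precisely because $y_1'/y_1=q^{\ell}/y_1^2\notin q^{\Z}$ off the self-liftable locus --- and rescaling the two summands identifies all solutions of $ab=1$, leaving the familiar residual sign $\tau\mapsto-\tau$; this is the asserted uniqueness up to sign. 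When $y_1$ is self-liftable (hence so is $y_1'$, since with $y_1^2=q^{\ell_1}$ and $(y_1')^2=q^{\ell_2}$ one has $\ell_1+\ell_2=2\ell$), the blocks $\tau_{11},\tau_{22}$ may also be turned on, and re-solving $\tau^2=\mathrm{id}$ --- a short computation in which the cross-terms $\tau_{11}\tau_{12}$ and $\tau_{12}\tau_{22}$ land in the same $X$-degree thanks to $\ell_1+\ell_2=2\ell$ --- produces strictly more lifts, among them the obvious direct sums $\tau_1\oplus\tau_2$ of individual lifts from Proposition~\ref{prop:liftPR}.

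The only slightly delicate point, and the one I would spell out carefully, is the automorphism-group bookkeeping in this last step: verifying that off the self-liftable locus the two summands are genuinely non-isomorphic $\QT$-modules, so that $\mathrm{Aut}_{\QT}(U)$ is exactly the diagonal torus and nothing larger, and checking that the torus action together with $\tau\mapsto-\tau$ exhausts the remaining ambiguity. Everything else is a routine blockwise repetition of the computations already carried out for Propositions~\ref{prop:liftPR} and~\ref{prop:liftFD}.
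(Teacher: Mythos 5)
Your proof is correct and follows the paper's own approach line for line: decompose $\tau$ into a $2\times 2$ array of blocks, note that the $X$-conjugation relation \eqref{eq:conjX} forces each block into anti-diagonal form with one free constant, impose the $Y$-conjugation relation per block to obtain the four vanishing constraints you display, and impose $\tau^2=\mathrm{id}$. The added bookkeeping you carry out is a genuine strengthening rather than a different route: observing that $y_1$ and $y_1'$ are necessarily \emph{both or neither} self-liftable once $y_1y_1'\in q^{\Z}$ (so the proposition's ``either'' is slightly misleading), and that $\ell_1+\ell_2=2\ell$ is exactly what makes the cross-terms of $\tau^2$ land in a single $X$-degree, are correct and worth spelling out. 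One small wrinkle in the final uniqueness argument: once you note that $\mathrm{Aut}_{\QT}(U)\cong\C^\times\times\C^\times$ sweeps out the full $\C^\times$ of solutions to $ab=1$, the element $\mathrm{diag}(-1,1)$ already sends $\tau$ to $-\tau$, so off the self-liftable locus the lift is genuinely \emph{unique} as a $\TQT$-module structure, not merely unique up to sign; this contrasts with Proposition~\ref{prop:liftPR}, where $\mathrm{Aut}_{\QT}(\PR^{y_1})=\C^\times$ is central and hence acts trivially on lifting data, leaving $\pm\tau$ inequivalent. Your analysis is actually sharper than the paper's ``unique up to sign'' phrasing, and the closing remark about a ``residual sign'' slightly muddies the point you had just established.
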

\begin{proof}
This proceeds by computation, as above. $\tau$ can be decomposed into four blocks, each of which must satisfy~\eqref{eq:conjX} independently; correspondingly, there are four undetermined constants, which we will call $a_\ell$, $b_\ell$, $c_\ell$, and $d_\ell$. The conjugation by $Y$ then produces the conditions
\deq{
a_\ell(y_1^2-q^{\ell}) = d_\ell({y_1'}^2-q^{\ell})  = 0~, \quad
b_\ell(y_1y_1'-q^{\ell}) = c_\ell(y_1y_1'-q^{\ell}) = 0~.
}
The conditions are decoupled and those on $a$ and~$d$ just recover the conditions of Proposition~\ref{prop:liftPR}. The new conditions will also be satisfied for at most one choice of $\ell$, and we must choose $b_\ell c_\ell = 1$ to obtain an involution.
\end{proof}

\begin{proposition}\label{prop:liftFD2}
    Let $q$ be a primitive $m$-th root of unity. Lifting data for the representation $\repF^\lambda_m \oplus \repF_m^{\lambda'}$ consist whenever $x_m x_m'  = 1$ and $y_m y_m'=1$. When this is true, there is a unique lift up to sign. If either $y_1$ or~$y_1'$ satisfies the conditions of Proposition~\ref{prop:liftPR}, additional lifts correspondingly exist.
\end{proposition}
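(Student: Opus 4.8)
The plan is to follow exactly the computational template established in Propositions~\ref{prop:liftPR}, \ref{prop:liftFD}, and especially~\ref{prop:liftPR2}, since the present statement is the finite-dimensional (cyclic) analogue of Proposition~\ref{prop:liftPR2} in the same way that Proposition~\ref{prop:liftFD} is the cyclic analogue of Proposition~\ref{prop:liftPR}. First I would fix the lift $\hat\kappa$ of $\kappa$ as in~\eqref{eq:klift}, so that a lifting datum for $\repF^\lambda_m \oplus \repF_m^{\lambda'}$ is an involution $\tau$ of the $2m$-dimensional space $\repF_m^\lambda \oplus \repF_m^{\lambda'}$ intertwining the $\QT$-action with $\hat\kappa$, i.e.\ satisfying $\tau X = X^{-1}\tau$ and $\tau Y = Y^{-1}\tau$ where $X,Y$ act block-diagonally via $X\mapsto x_1\xi$, $Y\mapsto y_1\varpi^-$ on the first summand and $X\mapsto x_1'\xi$, $Y\mapsto y_1'\varpi^-$ on the second (with $x_1^m=x_m$, etc.).

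Next I would decompose $\tau$ into its four $m\times m$ blocks $\tau^{(11)},\tau^{(12)},\tau^{(21)},\tau^{(22)}$ and impose the conjugation-by-$X$ condition on each block separately. As in~\eqref{eq:conjX}, but now with the weight factors and cyclic indices, the relation $\tau X = X^{-1}\tau$ forces each block to have entries depending only on $i+j \bmod m$; for the diagonal blocks one finds the constraint $x_1^{2m}=1$ (equivalently $x_m=\pm1$) exactly as in Proposition~\ref{prop:liftFD}, while for the off-diagonal blocks $\tau^{(12)},\tau^{(21)}$ the constraint becomes $x_1 x_1'$ times an $m$-th root of unity equals $(x_1x_1')^{-1}$, i.e.\ $(x_1 x_1')^{2m}=1$, which is automatic given only $x_mx_m'=\pm1$ — so I should be careful and check whether the hypothesis is really $x_mx_m'=1$ (I expect the sharp condition on the off-diagonal blocks to be exactly $x_m x_m' = 1$, with $x_m=x_m'=\pm1$ being the extra, non-generic possibilities coming from the diagonal blocks). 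Then imposing the conjugation-by-$Y$ condition on each block yields, just as in the proof of Proposition~\ref{prop:liftPR2}, the four decoupled scalar equations $a_\ell(y_1^2-q^\ell)=d_\ell({y_1'}^2-q^\ell)=0$ for the diagonal blocks and $b_\ell(y_1y_1'-q^\ell)=c_\ell(y_1y_1'-q^\ell)=0$ for the off-diagonal ones; since $q$ is a primitive $m$-th root of unity these have solutions precisely when $y_1^2$, ${y_1'}^2$, or $y_1y_1'$ respectively is an $m$-th root of unity, and $y_1y_1'$ being an $m$-th root of unity is exactly $y_m y_m'=1$. Finally, $\tau^2=1$ forces $b_\ell c_\ell=1$ in the generic case, pinning down the off-diagonal lift up to the scaling $b_\ell\mapsto \mu b_\ell$, $c_\ell\mapsto \mu^{-1}c_\ell$ which is an isomorphism of the lifted module, hence ``a unique lift up to sign''; the diagonal contributions are the extra lifts inherited from Proposition~\ref{prop:liftFD} / \ref{prop:liftPR} when $x_m=x_m'=\pm1$.

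The main obstacle I anticipate is purely bookkeeping rather than conceptual: getting the weight factors $x_1,x_1'$ and the cyclic $\bmod\, m$ index shifts correct so that the ``$x_mx_m'=1$'' versus ``$x_m=x_m'=\pm1$'' dichotomy comes out cleanly, and making sure I have not conflated the generic off-diagonal lift (which needs only $x_mx_m'=1$, $y_my_m'=1$) with the special diagonal lifts. A secondary subtlety is the phrase in the statement referring to ``$y_1$ or $y_1'$ satisfies the conditions of Proposition~\ref{prop:liftPR}'': in the cyclic ($q$ a root of unity) setting this should be read as the conditions of Proposition~\ref{prop:liftFD} instead, so I would note that and interpret accordingly. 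Once the index conventions are nailed down the argument is a line-by-line transcription of the proof of Proposition~\ref{prop:liftPR2} with ``Laurent polynomial basis'' replaced by ``cyclic basis $e_0,\dots,e_{m-1}$'' and ``$q$ not a root of unity'' replaced by ``$q$ a primitive $m$-th root of unity,'' so I do not expect any genuinely new input beyond what the preceding four propositions already supply.
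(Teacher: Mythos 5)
Your strategy is the same as the paper's: decompose $\tau$ into four $m\times m$ blocks, note the diagonal blocks reduce to Proposition~\ref{prop:liftFD}, and read off the off-diagonal constraints from conjugation by $X$ and $Y$ exactly as in Proposition~\ref{prop:liftPR2}. But there is a genuine error in your off-diagonal $X$-constraint: you assert $(x_1 x_1')^{2m} = 1$, which is equivalent to $x_m x_m' = \pm 1$ and is strictly weaker than the claimed $x_m x_m' = 1$. The mistake comes from carrying the \emph{square} from the diagonal case into the off-diagonal case. In the diagonal block the relation $x_1\,\tau_{(i+1)j} = x_1^{-1}\,\tau_{i(j+1)}$ rearranges to $\tau_{i+1,j} = x_1^2\,\tau_{i,j+1}$, and iterating $m$ times around the cyclic index gives $x_1^{2m}=1$; the square appears because the \emph{same} weight $x_1$ enters from both the $X$-action on the source and the $X^{-1}$-action on the target. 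For the off-diagonal block $\tau^{(12)}\colon \repF^{\lambda'}_m\to\repF^{\lambda}_m$, the relation $\tau^{(12)}\circ X^{(2)} = (X^{(1)})^{-1}\circ\tau^{(12)}$ involves two \emph{different} weights and reads $x_1'\,\tau^{(12)}_{i,j+1} = x_1^{-1}\,\tau^{(12)}_{i+1,j}$, i.e.\ $\tau^{(12)}_{i+1,j} = x_1 x_1'\,\tau^{(12)}_{i,j+1}$. Iterating $m$ times around the cycle gives $(x_1 x_1')^m = 1$, which is exactly $x_m x_m' = 1$ with no extra factor of two. Your $Y$-computation correctly produced the single-power condition $(y_1 y_1')^m = 1$; the mismatch between the $X$- and $Y$-exponents in your write-up was a red flag you noticed but did not resolve. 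With this correction the rest of your argument agrees with the paper's (which is essentially just these two block computations plus citing Proposition~\ref{prop:liftFD} for the diagonal parts).

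Your parenthetical remark about the cross-reference is also apt: in the cyclic setting, the ``additional lifts'' coming from the diagonal blocks are governed by Proposition~\ref{prop:liftFD} (the special values $x_m=\pm1$, $y_m=\pm1$, and likewise for the primed weights), so the reference to Proposition~\ref{prop:liftPR} in the statement appears to have been carried over verbatim from Proposition~\ref{prop:liftPR2} and should be read with that substitution in mind.
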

\begin{proof}
As above, the diagonal blocks reduce independently to the previous case (Proposition~\ref{prop:liftFD}). The conjugation by $X$ leads to the condition
\deq{
x_1 \tau_{(i+1)\bar{j}} = x_1^{-1} \tau_{i(\bar{j}+1)}
}
on the off-diagonal block; from cyclicity it then follows that $x_1x_1'$ is an $m$-th root of unity, and therefore that $x_m x_m'  = 1$. The calculation for $Y$ is identical, but reduces as in the proof of Proposition~\ref{prop:liftFD} since $q$ is an $m$-th root of unity.
\end{proof}

After choosing $\hat\kappa$, the algebra $\SQT$ maps injectively into~$\QT$, and we can therefore pull back representations. Consequently, the polynomial representation, the lift of $\PR^{y_1} \oplus \PR^{y_1'}$ with $y_1y_1'=1$, generically becomes an \emph{irreducible} representation of~\SQT{} on Laurent polynomials which are \emph{not} symmetrized. However, when $y_1 = \pm1$, (or, to be precise, $y_1^2=q^\ell$ for $\ell\in\bZ$), the spectrum of the $Y$-operator on $\PR^{\pm1}$ becomes two-fold degenerate under $Y\mapsto Y^{-1}$. Thus, $\PR^{\pm1}$ is compatible with the outer automorphism $\hat\kappa$.  Correspondingly, the representation $\PR^{\pm1}$ splits into two isomorphic irreducible representations on symmetric and antisymmetric Laurent polynomials.

For finite-dimensional representations, the story is identical. The pullback of the lift of $\repF^\lambda_m \oplus \repF_m^{\lambda'}$ to~\SQT{} is generically irreducible, but becomes reducible when $m = 2n$ is even and $y_m = \pm 1$. Precisely in this case, it becomes an $n$-dimensional representation of~\SQT{} on \emph{symmetric} Laurent polynomials, modulo the ideal $(X^n + X^{-n})$.

\subsubsection{Corresponding branes}\label{app:branes-SQT}

While we need to take into account the quantum correction \eqref{1/2-shift}, $\SQT$  is related to the coordinate ring on $(\C^\times\times \C^\times)/\Z_2$ as in \eqref{flat-moduli-torus}. Therefore, let us briefly consider the interpretation in terms of the $A$-brane category on $(\C^\times\times \C^\times)/\Z_2$.

First, we consider $A$-branes corresponding to the polynomial representations. As in Appendix \ref{app:poly-rep-QT}, an $A$-brane for the polynomial representation of $\SQT$ is supported on the constant locus of the symmetrized function $Y+Y^{-1}$, and it stays an $A$-brane at any value of $\hbar$ since it is of type $(A,B,A)$. Its preimages to $\C^\times\times\C^\times$ are generically two distinct branes supported on the $(\C^\times)_X$ planes with the values of $Y$ related by $y_1 y_1'=1$. This corresponds to the representation of $\SQT$ on \emph{non-symmetric} Laurent polynomials corresponding to Proposition \ref{prop:liftPR}, and it has the deformation parameter valued in $\C^\times$ for positions and the $\U(1)$ holonomy. However, at a fixed point of the $\Z_2$ action, only one brane for $\PR^{\pm1}$ can be symmetrized into a brane on $\C^\times/\Z_2$, and it therefore gives rise to the representation on \emph{symmetric} Laurent polynomials as in Proposition \ref{prop:liftPR}.
The extensions of these representations to DAHA are discussed in \S\ref{sec:poly-rep}.

Since the target $(\C^\times\times \C^\times)/\Z_2$ is the moduli space of flat $\SL(2,\C)$-connections over a torus $T^2$, it admits an elliptic fibration called the \emph{Hitchin fibration}, which was discussed in \S\ref{sec:target}. Roughly speaking, the elliptic fibration can be obtained by identifying the elliptic fibers in \eqref{eq:easy-fib} at $(r,\rho)$ and $(-r,-\rho)$ so that there is a singular fiber $T^2/\Z_2$ at $(r,\rho)=(0,0)$. Analogous to Appendix \ref{app:cyclic-rep-QT}, an $A$-brane supported on a generic fiber in the Hitchin fibration gives rise to an $m$-dimensional representation when $\hbar=1/m$ for $m\in \Z_{>0}$. Since its preimage in~$\C^\times\times\C^\times$ under the $\Z_2$ quotient consists of two distinct tori at different positions, it is precisely the pullback of the lift of $\repF_m^\lambda\oplus\repF_m^{\lambda^{-1}}$ in Proposition \ref{prop:liftFD2}. Consequently, there is the deformation parameter $\lambda$ for a position at the base and the $\U(1)\times \U(1)$ holonomy of the $A$-brane on a generic fiber.
The fiber at the center $(r,\rho)=(0,0)$ in \eqref{eq:easy-fib} is fixed as a set under the $\Z_2$ action so that the singular fiber in the Hitchin fibration is doubly covered by the torus. Thus, if the representation $\repF^{(\pm1,\pm1)}_{m}$ is even-dimensional $(m=2n)$, the corresponding $A$-brane degenerates two copies on the singular fiber. The $A$-brane with support on the singular fiber brings about the $n$-dimensional irreducible representation of $\SQT$ on the symmetric Laurent polynomials, obtained by the pullback of the lifting $\repF_{2n}^{(\pm1,\pm1)}$.
 There are no longer any deformation parameters, as this brane does not belong to any continuous family of $A$-branes. As explained in \S\ref{sec:Lagrangian}, a Chan-Paton bundle is generally endowed with a flat $\Spin^c$ structure and the choice of signs corresponds to that of a flat $\Spin^c$ structure with no holonomy (equivalently Spin structure in this case) of $K^{-1/2}_\F$. Indeed, the plus sign $+$ is the Ramond spin structure on a circle (a trivial real line bundle over a circle), and the minus sign $-$ is the Neveu-Schwarz spin structure (see \S\ref{sec:Ui} and \S\ref{sec:bound-state}).

 Although we see similar phenomena in branes for DAHA, finite-dimensional representations of DAHA are much richer because compact Lagrangians in the target for DAHA are more intricate, which is the main subject in \S\ref{sec:2d}.

\section{\texorpdfstring{3d $\cN=4$ theories and Cherednik algebras}{3d N=4 theories and Cherednik algebras}}\label{app:3dN=4}

In this appendix, we shall briefly discuss the relationship between Coulomb branches of 3d $\cN=4$ theories and the trigonometric and rational degeneration of DAHA of type $A_1$. The connection of the deformation quantization of Coulomb branches of 3d $\cN=4$ theories to variants of Cherednik algebras has been studied in \cite{bezrukavnikov2005equivariant,oblomkov2016geometric,Braverman:2016wma,Kodera:2016faj}. Here we shall provide a brief review as well as some implications to representation theory from brane quantization.

\subsection{\texorpdfstring{Coulomb branches of 3d $\cN=4$ theories}{Coulomb branches of 3d N=4 theories}}

If a 3d $\cN=4$ theory admits Lagrangian description, a Lagrangian is described by a 3d $\cN=4$ vector multiplet $\cV$ and a  3d $\cN=4$ hypermultiplet $\cH$ of gauge group $G$ in a representation $R$ of gauge group $G$. Moreover, they can be constructed a 3d $\cN=2$ vector multiplet  $(A_\mu,\sigma,\lambda_\a,d)$ and a  3d $\cN=2$ chiral multiplet $(\varphi,\psi_\a,F)$ where $\phi$ is a complex scalar, and $\sigma$ is a real scalar which can be regarded as the reminiscent of the $A_3$ component of 4d gauge field. In fact, the $\cN=4$ vector multiplet consists of a 3d $\cN=2$ vector multiplet  $(A_\mu,\sigma,\lambda_\a,d)$ and a  3d $\cN=2$ chiral multiplet $(\varphi,\psi_\a,F)$ in the adjoint representation, forming
$$
\cV = (A_{\mu}, \Lambda_{\alpha a\dot a}, \Phi_{ab}, D_{\dot{a}\dot b})~.
$$
The gauge field $A_{\mu}$, gaugino $\Lambda_{\alpha a\dot a}$, scalar $\Phi_{ab}$ and auxiliary field $D_{\dot{a}\dot b}$ transform in the trivial, $(\mathbf{2},\mathbf{2})$, $(\mathbf{3},\mathbf{1})$, and $(\mathbf{1},\mathbf{3})$, respectively, under the $\SU(2)_C\times\SU(2)_H$ $R$-symmetry of a 3d $\cN=4$ theory.
The $\cN=4$ hypermultiplet consists of $\cN=2$ chiral multiplets in the representation $R$ and its conjugate representation $\overline R$, forming
\bea
\cH = (q_{\dot a}, \widetilde q_{\dot a}, \psi_{\alpha a}, \widetilde \psi_{\alpha a}) \label{Hmul}
\eea
The scalars $q_{\dot a}, \widetilde q_{\dot a}$ and their fermionic superpartners $\psi_{\alpha a}, \widetilde \psi_{\alpha a}$ transform as $(\mathbf{1},\mathbf{2})$  and $(\mathbf{2},\mathbf{1})$, respectively, under the $R$-symmetry.

An $\cN=4$ gauge theory generically has a union of moduli spaces of vacua, called Coulomb branch  and  Higgs branch. The Higgs branch is a \HK  manifold, known as a Nakajima quiver variety \cite{Nakajima:1994nid}, parameterized by the expectation values of gauge-invariant operators in hypermultiplets.
 The Coulomb branch $\cM_C^{\mathrm{3d}}$ is a \HK manifold parameterized by the expectation values of gauge-invariant combinations of scalars in vector multiplets and monopole operators.

 In an abelian gauge theory, the classical Coulomb branch is $\R^3\times S^1$ where $\R^2$ is spanned by the expectation values of the complex scalar $\varphi$ and $\bR \times S^1$ is by that of the monopole operators $v^\pm = e^{\pm\frac1{g^2}(\sigma+i\gamma)}$. Note that $\gamma$ is a periodic scalar $\gamma \sim \gamma + 2\pi g^2$ called ``dual photon'' subject to $d\gamma = \ast dA$.
 However, it is well-known that a 3d $\cN=4$ Coulomb branch receives quantum corrections, which deform the classical moduli space. For instance, quantum corrected Coulomb branches in abelian gauge theories have been investigated in \cite[\S3]{Bullimore:2015lsa}.

In a non-abelian gauge $G$ theory, the scalars in the $\cN=4$ vector multiplet take expectation values in the Cartan subalgebra $\frakt \subset \frakg$.  For generic  expectation values, the gauge group is broken to a Cartan subgroup $T\cong \U(1)^{\rk(G)} \subset G$. Therefore, around a generic point, the Coulomb branch is locally $(\R^3\times S^1)^{\rk(G)}$. However, the Weyl group $W$ acts on the scalars as a residual gauge symmetry. In addition, it receives both perturbative and non-perturbative quantum corrections. In the end, the Coulomb branch is birationally equivalent \cite{Braverman:2016wma} to
\be\label{3d-Coulomb-birational}
\cM_C^{\mathrm{3d}}\approx \frac{(\R^3\times S^1)^{\rk(G)}}{W}~.
\ee
Moreover, it admits an interpretation as an integrable system
\be\label{projection}
\pi:\cM_C^{\mathrm{3d}} \to \C^{\rk(G)}
\ee
where a generic fiber is $(\C^\times)^{\rk(G)} $. This projection can be obtained by forgetting about monopole operators, and the base $\C^{\rk(G)}$ is parametrized by the gauge-invariant operators $\Tr(\varphi^n)$ of the complex scalar.

If a theory is the reduction of a 4d $\cN=2$ theory of class $\cS$ on $S^1$, the projection \eqref{projection} can be interpreted as a partial decompactification of the corresponding Coulomb branch $\cM_C(C,G,\lat{L})$. As explained in \S\ref{sec:4dCoulomb}, the Coulomb branch $\cM_C(C,G,\lat{L})$ of a 4d $\cN=2$ theory $\cT[C,G,\lat{L}]$ of class $\cS$ on $S^1\times \R^3$ where $S^1$ is a circle of radius $R$ is a \HK manifold (say, $\dim_\bR \MH(C,G)=4r$). It is moreover a completely integrable system so that a generic Hitchin fiber is a complex tori of volume $1/R^r$ which is homeomorphic to $T^{2r}$. As $S^1$ shrinks $R\to 0$, a generic fiber is decompactified to $(\bR\times S^1)^{r}$, and its volume diverges to infinity. Hence, a generic fiber and the base in \eqref{projection} locally parametrized by the monopole operators  and the complex scalars $\varphi$, respectively, are holomorphic in complex structure $I$.

\subsection{\texorpdfstring{3d $\cN=4$ Coulomb branches and Cherednik algebras}{3d N=4 Coulomb branches and Cherednik algebras}}\label{sec:3d-N=4}

\subsubsection*{Trigonometric Cherednik algebra}

Let us consider the 3d $\cN=8$ theory, which consists of a vector multiplet and  a hypermultiplet in the adjoint representation. It is a reduction of the 4d $\cN=4$ theory on a circle $S^1$. The Coulomb branch of the 3d $\cN=8$ theory with gauge group $G$ is actually
\be\label{3d-N=8-MC}
\cM_C^{\mathrm{3d}}\Big[\!\raisebox{-.35cm}{\begin{tikzpicture}
\node[draw,circle, inner sep=2pt] (a) at (0,0) {$G$};
    \draw  (a)   to [distance=3ex, in=150, out=210, loop] ();
\end{tikzpicture}}~
\Big] =\frac{\frakt_\bC\times T_\C^\vee}{W}~,
\ee
and it is indeed free from quantum corrections. Although the $\SU(2)_C$ symmetry treats all the \HK complex structures $(I,J,K)$ equally, it is  natural to take a viewpoint holomorphic in complex structure $J$ to see the connection to Cherednik algebra. A supersymmetric Wilson loop $\Tr\,\exp\oint_{S^1}(A+\varphi)$ in 4d reduces to the complex scalar $\varphi$ in 3d whereas an 't Hooft operator in 4d becomes a monopole operator in 3d.  As a result, $\cM_C^{\mathrm{3d}}$  is the trigonometric degeneration of \eqref{4d-N=4-MC} by taking the Cartan subgroup $T_\bC$ to the Cartan subalgebra $\frakt_\bC$.\footnote{In \cite[\S 1.6]{Cherednik-book}, the trigonometric limit is taken in the $S$-dual side, $T_\C^\vee \to \frakt_\C^\vee$, instead. However, the trigonometric limit $T_\bC \to \frakt_\bC$ of the Wilson loop is more natural from the physical viewpoint. Therefore, among the two polynomial representations in Appendix~\ref{app:degeneration}, the difference-rational polynomial representation is more natural than the differential polynomial representation for the reduction from 4d to 3d.}

It is clear from \eqref{trig-rat-SH} that the quantized Coulomb branch of the 3d $\cN=8$ theory and its mass deformation are related to the spherical subalgebra of the trigonometric Cherednik algebra $\SH^{\textrm{tri}}(W)$. To see this, let us consider the case of rank one. If the gauge group is either $\SU(2)$ or $\SO(3)$, the Coulomb branch \eqref{3d-N=8-MC} is $(\bC\times \C^\times)/\Z_2$, which has two $A_1$ singularities. However, the dual maximal torus is $T_\C^\vee=\frakt_\bC/\sfQ$ where $\sfQ(\SU(2))=\sfP(\SO(3))\subset \sfP(\SU(2))=\sfQ(\SO(3))$ so that the $\SO(3)$ Coulomb branch is a double cover of the $\SU(2)$ Coulomb branch. Therefore, the quantized Coulomb branch of the $\SO(3)$ theory is isomorphic to the spherical trigonometric Cherednik algebra $\SH^{\textrm{tri}}_{\hbar,c=0}$ at $c=0$. On the other hand, the quantized Coulomb branch of the $\SU(2)$ theory is the $\xi_2$-invariant subalgebra of $\SH^{\textrm{tri}}_{\hbar,c=0}$ generated by
$$v=\mathbf{e}\hat x^2\mathbf{e}~, \quad \textrm{and} \quad y^2-1= (Y^2+1+Y^{-2})\mathbf{e}~,$$
where the notation is the same as Appendix \ref{app:Trigonometric-A1}. In other words, the $\SO(3)$ theory is endowed with the minimal monopole operator whereas the $\SU(2)$ theory is not. This is consistent with what we have seen for 't Hooft operators in the $\SO(3)_+$ and $\SU(2)$ 4d $\cN=4$ theory, respectively, in \S\ref{sec:4dCoulomb}.

Turning on mass parameters breaks a half of supersymmetries and it is called 3d $\cN=4^*$ theory. Correspondingly, the two $A_1$ singularities turn into two exceptional divisors in the $\SO(3)$ Coulomb branch. (Figure \ref{fig:3d-nilcone}.) The three complex structures $(I,J,K)$ of the 3d $\cN=4^*$ $\SO(3)$ Coulomb branch $\cM_C^{\mathrm{3d}}$ inherits from those of the Coulomb branch $\cM_C(C_p,\SO(3)_+)$ in \S\ref{sec:4dCoulomb}. At the classical level, the period integrals of $(\omega_I,\omega_J,\omega_K)$ over an exceptional divisor provide the parameters $(\talpha_p,\tbeta_p,\tgamma_p)$ as in \eqref{integral-D}. Indeed, the complex mass parameter of the adjoint hypermultiplet is $\tbeta_p+i\tgamma_p$ whereas $\talpha_p$ corresponds to the real mass parameter. The triple $(\talpha_p,\tbeta_p,\tgamma_p)$ is transformed as the scalars $\Phi=(\sigma,\varphi)$ in the $\cN=4$ vector multiplet under $\SU(2)_C$. Consequently, the quantized Coulomb branch of the $\SO(3)$ 3d $\cN=4^*$ theory is isomorphic to the spherical trigonometric Cherednik algebra  $\SH^{\textrm{tri}}_{\hbar,c}$.

Therefore, the representation theory of the algebra can be studied in the context of brane quantization analogous to \S\ref{sec:2d}. However,
we should note that there is one difference between 3d $\cN=4$ theories and 4d $\cN=2$ theories. In trigonometric Cherednik algebra \eqref{tri-SH-A1}, one can set $\hbar=1$ by redefinition $\hat x\to \hbar \hat x$, $c\to \hbar c$. Hence, we can consider the 2d $A$-model on the 3d $\cN=4^*$ $\SO(3)$ Coulomb branch $\X$ with a fixed symplectic form $\omega_\X=-\omega_K$ where the parameter of $\SH^{\textrm{tri}}_{1,c}$ is identified by $c=-\talpha_p/2$.
More generally, once we fix a complex structure of any 3d $\cN=4$ Coulomb branch in which deformation quantization is performed, a symplectic form can be set to a particular \K form thanks to the $\SU(2)_C$ symmetry. On the other hand, a 4d $\cN=2$ superconformal theory is endowed with the $\U(1)_r\times \SU(2)_H$ $R$-symmetry, and $\U(1)_r$ rotates only $\omega_J$ and $\omega_K$. As a result, $\hbar$ cannot be fixed to a particular value, and the story becomes more delicate as seen in \S\ref{sec:2d}.

Without loss of generality, we can hence consider that the target $\X$ of the 2d $A$-model is the 3d $\cN=4^*$ $\SO(3)$ Coulomb branch with
\be \label{onlyalpha}
\talpha_p\neq0~,\quad \tbeta_p=0=\tgamma_p
\ee
and the symplectic form is $\omega_\X=-\omega_K$. The target can be understood as a partial decompactification of the Coulomb branch $\cM_C(C_p,\SO(3)_+)$ in \S\ref{sec:4dCoulomb} with the same ramification parameters \eqref{onlyalpha} along the $x$-direction. Upon 3d reduction, the singular fiber of type $I_0^*$ in $\cM_C(C_p,\SO(3)_+)\to \cB_u$ is decompactified to the left of Figure \ref{fig:3d-nilcone}, which consists of a cigar and the two exceptional divisors, denoted by $\D_1$ and $\D_3$.
Consequently, finite-dimensional representations come from branes $\frakB_{\bfD_1}$ and $\frakB_{\bfD_3}$ supported on the two exceptional divisors in Figure \ref{fig:3d-nilcone} that can exist only when $c=-(2\ell-1)/2$ as in \S\ref{sec:divisorTQFT}. Moreover, they are related by the $\xi_2$-reflection $\scD_\ell^{(1)}\to \scD_\ell^{(3)}$.
Indeed, the finite-dimensional representations of $\SH_{1,c}^{\textrm{tri}}$ (and $\SH_{1,c}^{\textrm{rat}}$) are classified by \cite{berest2003finite,etingof2007reducibility,varagnolo2009finite}, and this conclusion is consistent with \cite[Proposition 7.1]{berest2003finite}.

Let us remark the indications from the brane quantization on distinguished infinite-dimensional representations of $\SH^{\textrm{tri}}_{1,c}$ that stem from branes supported on a generic fiber $\C^\times$ of \eqref{projection} and the cigar in $\pi^{-1}(0)$. A brane supported on a generic fiber is labeled by a holonomy $y_0$ of the $\Spin^c$-bundle around $S^1\subset \C^\times$ and a position $v$ of the base in \eqref{projection}. Hence, we denote the corresponding infinite-dimensional representation by $V_{\C^\times}^{y_0,v}$. The $y$-weights of this representation are unbounded below and above. On the other hand, for a brane supported on the cigar, the $y$-weights of the corresponding representation, say $V_{\textrm{cigar}}$, are bounded below. The distinguished infinite-dimensional representations $V_{\C^\times}$ and $V_{\textrm{cigar}}$ are analogous to the principal and discrete series of $\SL(2,\R)$, respectively.

As in \S\ref{sec:bound-state}, we can consider bound states of branes. For instance, when $c=-(2\ell-1)/2$, the bound state of $\frakB_{\bfD_i}$ and the brane supported on the cigar gives rise to the short exact sequence
\be
0\longrightarrow  \iota(V_{\textrm{cigar}}) \longrightarrow V^{(i)}\longrightarrow \scD_\ell^{(i)} \longrightarrow 0~,\qquad i=1,3~.
\ee
This is analogous to \eqref{SES2-f}.
On the other hand, in the case of no holonomy with Ramond spin structure $y_0=+$, it can enter the singular fiber $\pi^{-1}(0)$ when $c\in \Z$, which yields
\be
0\longrightarrow  \iota (V_{\textrm{cigar}}) \longrightarrow  V_{\C^\times}^{+,0}\longrightarrow   V_{\pi^{-1}(0)} \longrightarrow 0~.
\ee
This is analogous to \eqref{SES3}.

If $G=\SU(2)$, the mass deformation develops only one exceptional divisor $\D_1$ (no $\D_3$) and the singular fiber $\pi^{-1}(0)$ has one $A_1$ singularity.
The quantized Coulomb branch of the 3d $\cN=4^*$ $\SU(2)$ theory is the $\xi_2$-invariant subalgebra of $\SH^{\textrm{tri}}_{1,c}$.

\bigskip

We have seen that the Coulomb branch of the 4d $\cN=4$ theory on $S^1\times \R^3$ can be constructed from affine Grassmannian Steinberg variety $\cR$ in \eqref{Grassmannian-Steinberg}.
Indeed, the Coulomb branch of the 3d $\cN=8$ theory can be obtained by taking the spectrum of the $G_\C^\OO$-equivariant Borel-Moore homology of $\cR$ \cite{bezrukavnikov2005equivariant,Braverman:2016wma}
\be
\Spec H^{G_\C^\OO}_*(\cR) = \frac{\frakt_\bC\times T_\C^\vee}W=\cM_C^{\mathrm{3d}}\Big[\!\raisebox{-.35cm}{\begin{tikzpicture}
\node[draw,circle, inner sep=2pt] (a) at (0,0) {$G$};
    \draw  (a)   to [distance=3ex, in=150, out=210, loop] ();
\end{tikzpicture}}~
\Big]~.
\ee
By introducing the same equivariant action as in \eqref{Gr-SH}, we obtain the quantized Coulomb branch $H^{(G_\C^\OO\times\C_t^\times) \rtimes\C^\times_\hbar}_*(\cR)$ of the 3d $\cN=4^*$ theory.
If $G=\SU(N)/\Z_N$, then it is isomorphic to the spherical subalgebra $\SH_{\hbar,c}^{\textrm{tri}}(\frakS_N)$ of the trigonometric Cherednik algebra of type $A_{N-1}$.

Motivated by this construction, the mathematical definition of Coulomb branches of general 3d $\cN=4$ quiver theories has been given in \cite{Braverman:2016wma,Braverman:2016pwk,Braverman:2017ofm}.
In addition, representation theory of a 3d $\cN=4$ quantized Coulomb branch has been studied in the context of boundary conditions of 3d $\cN=4$ gauge theory in physics literature \cite{Bullimore:2016nji,Bullimore:2016hdc,Dedushenko:2018icp}. Their $K$-theoretic version will potentially lead to a vast generalization of DAHA as in \cite{braverman2016cyclotomic} and its representation theory.

\begin{figure}[ht]
\includegraphics[width=\textwidth]{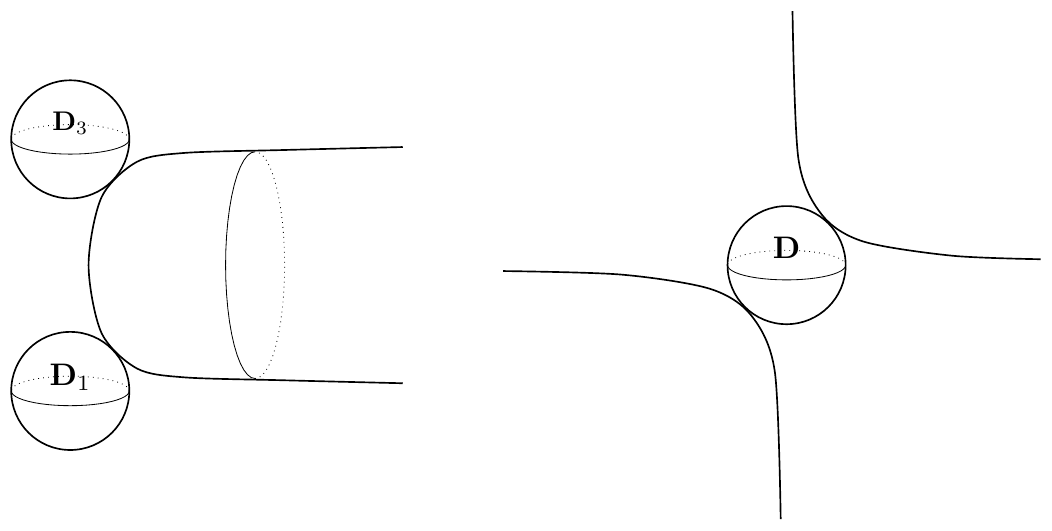}
\caption{(Left) Real locus of the Coulomb branch of 3d $\cN=4^*$ theory with $\SO(3)$ gauge group. \newline (Right) Real locus of the Coulomb branch of 3d $\cN=4$ SU(2) theory with one adjoint and one fundamental hypermultiplet.} \label{fig:3d-nilcone}
\end{figure}

\subsubsection*{Rational Cherednik algebra}

The Coulomb (and Higgs) branch of the 3d $\cN=4$ SQED with massless 2 flavors is $\C^2/\Z_2$. Once the mass parameters are turned on, the $A_1$ singularity of $\C^2/\Z_2$ is resolved, and the resulting manifold is called the Eguchi-Hanson space $T^*\bC\bfP^1$. From its complete \HK metric, it can be regarded as $S^1$ fibration over $\R^3$ where the fiber shrinks at two points in $\R^3$. For instance, the detail can be found in \cite[\S3.7]{Gukov:2008ve}. The deformation quantization of the Eguchi-Hanson space leads to the spherical subalgebra $\SH^{\textrm{rat}}_{\hbar,c}$ of the rational Cherednik algebra that is isomorphic to the universal enveloping algebra of the $\fraksl(2)$ Lie algebra \eqref{sl2}. The representation theory of this algebra has been exclusively  investigated from the viewpoint of brane quantization in \cite{Gukov:2008ve}.\footnote{Precisely speaking, we need the $1/2$ shift in $c$ as in \S\ref{sec:Bcc-SH} to connect to \cite{Gukov:2008ve}.} In particular, a brane supported on $\bC\bfP^1\subset T^* \bC\bfP^1 $ can exist only when $c=-(2\ell-1)/2$ like $\frakB_{\bfD_i}$ in \S\ref{sec:divisorTQFT}, and it gives rise to a finite-dimensional representation. On the other hand, branes supported on a cigar in Figure \ref{fig:3d-nilcone} or $T^*S^1\subset T^* \bC\bfP^1 $ bring about a discrete or principal series of $\SL(2,\R)$. We refer the reader to \cite{Gukov:2008ve} for more details.

Now let us consider the 3d $\cN=4$ $\U(N)$ gauge theory with one adjoint and one fundamental hypermultiplet. If the theory is massless, the Coulomb branch \cite{deBoer:1996ck} is the $N$-th symmetric product of $\bC^2$
$$
\cM_C^{\mathrm{3d}}\Big[\!\raisebox{-.35cm}{\begin{tikzpicture}
\node[draw,circle, inner sep=.22pt,scale=.8] (a) at (0,0){$\U(N)$};
\node[draw,rectangle, inner sep=2pt] (b) at (.8,0) {$1$};
    \draw  (a)   to [distance=3ex, in=150, out=210, loop] ();
    \draw  (a)   to  (b);
\end{tikzpicture}}~
\Big] =\mathrm{Sym}^N\bC^2~.
$$
Consequently, its coordinate ring is indeed isomorphic to the spherical subalgebra of the $\mathfrak{gl}(N)$ rational Cherednik algebra $\SH^{\textrm{rat}}(\mathfrak{gl}(N))$ at $c=0=\hbar$. In this sense, the spherical rational Cherednik algebra $\SH^{\textrm{rat}}(\mathfrak{gl}(N))$ can be interpreted as a ``Lie algebra'' of the spherical  DAHA $\SH(\mathfrak{gl}(N))$. Once mass parameters are turned on, the Coulomb branch is the Hilbert scheme of $N$-points on the affine plane $\C^2$, and the resolution of singularities can be understood as the Hilbert-Chow map
\be\label{Hilbert-Chow}\pi: \mathrm{Hilb}^N\bC^2 \to \mathrm{Sym}^N\bC^2~.\ee
If we remove the center-of-mass coordinate of $\mathrm{Hilb}^N$, we have  the spherical subalgebra $\SH^{\textrm{rat}}(\frakS_N)$ of the rational Cherednik algebra  of type $A_{N-1}$.

In the case of type $A_{N-1}$, it is known that $\textrm{Spec}(\SH^{\textrm{rat}}_{\hbar=0}(\frakS_N))$ has a unique compact holomorphic Lagrangian submanifold, called the punctual Hilbert scheme, which is the preimage $\pi^{-1}(0)$ of zero under the Hilbert-Chow map \eqref{Hilbert-Chow}.
As briefly discussed in \S\ref{sec:SUN}, this will give rise to the unique finite-dimensional representation of $\SH^{\textrm{rat}}(\frakS_N)$ when $c=-M/N$ with coprime $(M,N)$. This is consistent with the classification of finite-dimensional representations of the spherical rational Cherednik algebra \cite[Theorem 1.2 and 1.10]{berest2003finite}. (See also \cite{gordon2005rational,gordon2006rational} for a realization by geometric quantization.) Moreover, the finite-dimensional module is isomorphic to the lowest $a$-degree of the HOMFLY-PT homology of the $(M,N)$ torus knot \cite{Gorsky:2012mk}.

\bigskip

As a remark, we note that the Coulomb branch of 3d $\cN=4$ $\U(N)$ gauge theory with $\ell$ hypermultiplets in the fundamental representation \cite{deBoer:1996ck} is the $N$-th symmetric power of  the hypersurface $\cS_\ell$  determined by the equation $xy=z^{\ell}$ in $\C^3$, which can be expressed as
$$
\cM_C^{\mathrm{3d}}\Big[\!\raisebox{-.35cm}{\begin{tikzpicture}
\node[draw,circle, inner sep=.22pt,scale=.8] (a) at (0,0){$\U(N)$};
\node[draw,rectangle, inner sep=2pt] (b) at (.8,0) {$\ell$};
    \draw  (a)   to [distance=3ex, in=150, out=210, loop] ();
    \draw  (a)   to  (b);
\end{tikzpicture}}~
\Big] =\textrm{Sym}^N(\cS_\ell)~.
$$
The deformation quantization of the coordinate ring of the Coulomb branch leads to the \emph{cyclotomic rational Cherednik algebras} \cite{Kodera:2016faj,Okuda:2019emk}. The $\ell=0$ and $\ell=1$ specializations are the spherical subalgebras of the trigonometric and rational Cherednik algebras of type $\mathrm{GL}(N,\C)$, respectively. Hence, their representation theory can be similarly investigated in the context of brane quantization, which deserves future study.

\newpage

\bibliography{references}
\bibliographystyle{hyperamsalpha}

\end{document}